\documentclass[reqno]{amsart}
\usepackage[lite]{amsrefs}
\usepackage{mathtools,amsmath,amssymb}
\usepackage[pdfencoding=auto, psdextra]{hyperref}
\hypersetup{colorlinks=true,allcolors=[rgb]{0,0,0.6}}
\usepackage[hmargin= 1.3 in,vmargin=1 in]{geometry}
\usepackage{bbm,mathrsfs}
\usepackage{xcolor}
\usepackage[all,cmtip]{xy}
\usepackage{appendix}
\usepackage{mathrsfs}
\DeclareFontFamily{U}{mathx}{}
\DeclareFontShape{U}{mathx}{m}{n}{<-> mathx10}{}
\DeclareSymbolFont{mathx}{U}{mathx}{m}{n}
\DeclareMathAccent{\widehat}{0}{mathx}{"70}
\DeclareMathAccent{\widecheck}{0}{mathx}{"71}

\newcommand{\R}{\mathbb{R}}
\newcommand{\Z}{\mathbb{Z}}
\newcommand{\N}{\mathbb{N}}

\newcommand{\ii}{\mathrm{i}}

\newcommand{\FF}{\mathrm{F}}
\newcommand{\BB}{\mathrm{B}}
\newcommand{\Bolt}{\mathrm{Bolt}}

\usepackage{dsfont}

\numberwithin{equation}{section}

\theoremstyle{plain} 
\newtheorem{theorem}{Theorem}[section]
\newtheorem*{theorem*}{Theorem}
\newtheorem{lemma}[theorem]{Lemma}
\newtheorem*{lemma*}{Lemma}
\newtheorem{corollary}[theorem]{Corollary}
\newtheorem*{corollary*}{Corollary}
\newtheorem{proposition}[theorem]{Proposition}
\newtheorem*{proposition*}{Proposition}

\newtheorem*{conjecture*}{Conjecture}

\theoremstyle{definition} 
\newtheorem{definition}[theorem]{Definition}
\newtheorem*{definition*}{Definition}

\newtheorem*{example*}{Example}
\newtheorem{remark}[theorem]{Remark}
\newtheorem*{remark*}{Remark}
\newtheorem{assumption}[theorem]{Assumption}
\newtheorem*{assumption*}{Assumption}

\newcommand{\ee}{\mathrm{e}}
\newcommand{\dd}{\mathrm{d}}

\usepackage{bm}
\let\vec \bm 

\DeclareMathOperator{\sgn}{sgn}
\DeclareMathOperator{\Tr}{Tr}
\DeclareMathOperator{\id}{id}

\title[large-mass limit of quantum gases in the continuum]{The large-mass limit of interacting quantum gases in the continuum}
\author{Spyros Garouniatis}
\address{Department of Mathematics, Brandeis University, 415 South Street, Waltham, MA, 02453, United States of America.}
\email{garouniatis@brandeis.edu}
\author{Grega Saksida}
\address{University of Warwick, Mathematics Institute, Zeeman Building, Coventry CV4 7AL, United Kingdom.}
\email{Grega.Saksida@warwick.ac.uk}
\author{Vedran Sohinger}
\address{University of Warwick, Mathematics Institute, Zeeman Building, Coventry CV4 7AL, United Kingdom.}
\email{V.Sohinger@warwick.ac.uk.}

\begin{document}

\maketitle

\begin{abstract}
We study the \emph{large-mass} limit of interacting quantum (Bose or Fermi) gases in thermal equilibrium. 
We show that in the suitably-defined large-mass limit, the system gives rise to a gas of classical interacting particles.
The corresponding question for bosons on a lattice was previously addressed by Fr\"{o}hlich, Knowles, Schlein, and the third author in \cite{FKSS_2023}. In this work, we study the continuum regime which requires us to suitably tune the chemical potential. The starting point of our analysis is the \emph{Ginibre loop ensemble} \cite{Ginibre1,Ginibre2,Ginibre3,Ginibre}, which allows one to describe a system of interacting quantum gases in thermal equilibrium in terms of an ensemble of interacting Brownian paths. In a finite volume, our analysis is performed for stable and H\"{o}lder continuous interaction potentials and we are able to obtain explicit rates of convergence. When the interaction potential is nonnegative and satisfies suitable integrability conditions, we study the associated infinite-volume problem by means of cluster expansions.   
\end{abstract}

\section{Introduction}

\subsection{Setup of the problem}
For $d \in \N^*$ and $L \geq 1$, we consider the spatial domain 
\begin{equation*}
\Lambda \equiv \Lambda_{L,d}=[-L/2,L/2]^d=\R^d/L \Z^d\,,
\end{equation*}
which is the  $d$-dimensional torus of sidelength $L$. A system of $n$ spinless bosons or fermions of mass $m>0$ confined to $\Lambda$ is governed by the \emph{$n$-body Hamiltonian}
\begin{equation}
\label{Hamiltonian_H}
\mathbb{H}_n\coloneqq -\sum_{i=1}^{n} \frac{\Delta_i}{2m}+\lambda \sum_{1 \leq i < j \leq n} v(x_i-x_j)\,.
\end{equation}
In \eqref{Hamiltonian_H}, $\Delta_i$ is the Laplacian acting in the variable $x_i$ with periodic boundary conditions, $v \colon \Lambda \rightarrow \R$ is a two-body interaction potential and $\lambda \geq 0$ is a coupling constant.

When working with bosons, the Hamiltonian given in \eqref{Hamiltonian_H} acts on the $n$-particle bosonic Hilbert space, which we denote by 
\begin{equation}
\label{n_particle_space}
\mathcal{H}_n \equiv \mathcal{H}_n^{\BB} \coloneqq P_{n}^{+} L^2(\Lambda^n)\,. 
\end{equation}
In \eqref{n_particle_space}, we write
\begin{equation}
\label{symmetric_projector}
P_{n}^{+}f(x_{1},\ldots,x_{n})\coloneqq \frac{1}{n!}\, \sum_{\pi \in {S_n}} f(x_{\pi(1)}, \ldots, x_{\pi(n)})\,,
\end{equation}
where $S_n$ is the set of permutations of $\{1,\ldots,n\}$. 

When working with fermions, the Hamiltonian in \eqref{Hamiltonian_H} acts on the $n$-particle fermionic Hilbert space, which we denote by
\begin{equation}
\label{n_particle_space_Fermi}
\mathcal{H}_n \equiv \mathcal{H}_n^{\FF} \coloneqq P_{n}^{-} L^2(\Lambda^n)\,,
\end{equation}
where
\begin{equation}
\label{antisymmetric_projector}
P_{n}^{-}f(x_{1},\ldots,x_{n})\coloneqq \frac{1}{n!}\, \sum_{\pi \in {S_n}} \sgn(\pi) f(x_{\pi(1)}, \ldots, x_{\pi(n)})\,.
\end{equation}

We study the interacting quantum gases in the \textbf{\emph{grand-canonical ensemble}} at positive temperature. Mathematically, the latter is a sequence of operators $(\rho_n)_{n \in \mathbb{N}}$ of the form
\begin{equation}
\label{grand_canonical_ensemble_1}
\rho_n\coloneqq \frac{1}{\Xi} \: \ee^{-\beta(\mathbb{H}_{n}-\mu n)}\,,\qquad \Xi\coloneqq 1+\sum_{n=1}^{\infty} \mathrm{Tr}_{\mathcal{H}_n} \ee^{-\beta(\mathbb{H}_{n}-\mu n)}\,.
\end{equation}
In \eqref{grand_canonical_ensemble_1}, $\beta>0$ denotes the inverse temperature. Furthermore, the quantity $\mu<0$ denotes the chemical potential. The quantity $\Xi$ is the \textbf{\emph{grand-canonical partition function}}. 
For the remainder of the paper, we set 
\begin{equation*}
\beta=1\,.
\end{equation*}
The case of general $\beta>0$ can be obtained from this case by a suitable rescaling of the parameters.

\subsubsection{Notation and conventions} 
\label{Notation and conventions}
Throughout the paper, $\N^*=\{1,2,3,\ldots\}$ denotes the set of positive integers and $\N=\{0,1,2, \ldots\}$ denotes the set of nonnegative integers. Given $A,B>0$ and a finite list of parameters $a_1,\ldots,a_k$, we say that 
\begin{equation*}
A \lesssim_{a_1,\ldots,a_k} B
\end{equation*}
if $A \leq C B$ for some quantity $C>0$ depending on $a_1,\ldots,a_k$. Sometimes, we also write $C=C(a_1,\ldots,a_k)$. Alternatively, we also write $A=\mathcal{O}_{a_1,\ldots,a_k}(B)$. We say that
\begin{equation*}
A \sim_{a_1,\ldots,a_k} B
\end{equation*}
if $A \lesssim_{a_1,\ldots,a_k} B$ and $B \lesssim_{a_1,\ldots,a_k} A.$ For $x \in \R^d$, we denote by $\langle x \rangle  \equiv \sqrt{1+|x|^2}$ the Japanese bracket.

\subsection{Statement of our results} \label{section_choice_of_parameters}
For clarity of exposition, in the introduction, we explain the setup and summarise our results for bosons. The analysis for fermions is done similarly; the precise modifications and statements are given in Section \ref{Fermi_gas} below. 
Following \cite{Ginibre}, we also consider the regime of Boltzmann statistics; the analysis here is summarised in Remarks \ref{Remark_Boltzmann_gas_partition_function}, \ref{Remark_Boltzmann_density_matrix}, and \ref{fermions_infinite_volume} below. We emphasise that our main analysis concerns the regime of Bose statistics and that the analysis for Fermi and Boltzmann statistics follows by similar arguments.

\subsubsection{The finite-volume (bosonic) regime}
Let us first explain the precise choice of parameters.
In \eqref{Hamiltonian_H}, we set 
\begin{equation}
\label{nu}
\nu\coloneqq \frac{1}{m}\,, \qquad \lambda=\lambda_0>0\,.
\end{equation}
The regime that we are interested in analysing is $\nu \rightarrow 0$ for suitably chosen $\mu$ as a function of $\nu$ (see Assumption \ref{Choice_of_z} below). By \eqref{nu}, this corresponds to taking the mass going to infinity. 
Throughout the paper, we consider $\nu>0$ small.
For simplicity of notation, we henceforth consider the case
\begin{equation}
\label{lambda_0=1}
\lambda_0=1\,.
\end{equation}
The case of general $\lambda_0$ follows by replacing $v$ with $\lambda_0 v$.

Let us now explain the precise choice of the chemical potential $\mu$ as a function of $\nu$.

\begin{assumption}
\label{Choice_of_z}
Let $\zeta>0$ be given. We choose $\mu \equiv \mu(\nu,\zeta)$ such that 
\begin{equation*}
\ee^{\mu({\nu,\zeta})}\,\nu^{-d/2}=\zeta\,,
\end{equation*}
i.e.\
\begin{equation}
\label{mu_choice}
\mu(\nu,\zeta)=\log \zeta+\frac{d}{2}\log \nu\,.
\end{equation}
For the remainder of the paper, we implicitly assume that $\nu>0$ is chosen sufficiently small so that the chemical potential~\eqref{mu_choice} is negative, i.e.\
\begin{equation}
\label{nu_0}
\nu \in (0,\nu_0]\,,
\end{equation}
for a suitable $\nu_0 \equiv \nu_0(d,\zeta) \in (0,1]$.
\end{assumption}

We henceforth fix $\zeta>0$ in Assumption \ref{Choice_of_z}. With $\mu$ chosen as in \eqref{mu_choice}, the parameter
\begin{equation}
\label{fugacity}
z\equiv z({\nu,\zeta})\coloneqq \ee^{\frac{\mu({\nu,\zeta})}{\nu}}
\end{equation}
then satisfies
\begin{equation}
\label{fugacity_condition}
z^{\nu}\,\nu^{-d/2}=\zeta\,.
\end{equation}
In the rest of the paper, we suppress the $\nu$ and $\zeta$ dependence of $z$ for simplicity of notation.
With the above choice of parameters, we rewrite the $n$-body Hamiltonian \eqref{Hamiltonian_H} as 
\begin{equation}
\label{n_body_Hamiltonian}
H_n^{\nu}\coloneqq -\frac{\nu}{2} \sum_{i=1}^{n} \Delta_i +\sum_{1 \leq i < j \leq n} v(x_i-x_j)
\end{equation}
and the grand-canonical ensemble \eqref{grand_canonical_ensemble_1} as\footnote{In \eqref{n_body_Hamiltonian}--\eqref{reduced_p_particle_density_matrix}, and throughout the paper, we adopt the convention that the $\nu,\zeta$ superscripts in operators and functions such as $H_n^{\nu}$, $\rho_n^{\nu,\zeta}$, $\Xi^{\nu,\zeta}$, $\Xi^{\nu,\zeta}_{(0)}$, $\mathcal{Z}^{\nu,\zeta}$, $\Gamma_p^{\nu,\zeta},\ldots$  are parameters (and not powers), whereas expressions such as $z^{n\nu}$ are always powers of $z$.}
\begin{equation}
\label{grand_canonical_ensemble}
\rho_n \equiv \rho_n^{\nu,\zeta}\coloneqq \frac{1}{\Xi}\, \ee^{-H_{n}^{\nu}}\,z^{n \nu} \,, \qquad \Xi \equiv \Xi^{\nu,\zeta}\coloneqq 1 + \sum_{n=1}^{\infty} \mathrm{Tr}_{\mathcal{H}_n} \bigl(\ee^{-H_{n}^{\nu}}\,z^{n \nu} \bigr) \,.
\end{equation}
Note that \eqref{n_body_Hamiltonian} is a densely-defined operator on \eqref{n_particle_space}.
Furthermore, we  define the \textbf{\emph{free grand-canonical partition function}} as
\begin{equation}
\label{free_grand_canonical_partition_function}
\Xi_{(0)} \equiv \Xi^{\nu,\zeta}_{(0)} \coloneqq \Xi^{\nu,\zeta} \big|_{v=0}\,,
\end{equation}
i.e.\ we set the interaction potential equal to zero in \eqref{grand_canonical_ensemble} (the latter in turn implicitly depends on $v$ through \eqref{n_body_Hamiltonian}).
Using \eqref{grand_canonical_ensemble}--\eqref{free_grand_canonical_partition_function}, the \emph{quantum relative partition function} is defined by
\begin{equation}
\label{quantum_relative_partition_function}
\mathcal{Z} \equiv \mathcal{Z}^{\nu,\zeta}\coloneqq \frac{\Xi^{\nu,\zeta}}{\Xi^{\nu,\zeta}_{(0)}}\,.
\end{equation} 
For a given $p \in \N^*$, recalling \eqref{grand_canonical_ensemble}, the \textbf{\emph{reduced $p$-particle density matrix}} of the grand-canonical ensemble is defined as
\begin{equation}
\label{reduced_p_particle_density_matrix}
    \Gamma_p^{\nu,\zeta}\coloneqq \sum_{n=0}^{\infty} \frac{(n+p)!}{n!}\, \mathrm{Tr}_{p+1,\ldots,n+p}(\rho^{\nu,\zeta}_{n+p})\,,
\end{equation}
where $\mathrm{Tr}_{p+1,\ldots,n+p}$ denotes the partial trace over the coordinates $x_{p+1},\ldots, x_{n+p}$, and $\rho^{\nu,\zeta}_{n+p}$ is defined as in \eqref{grand_canonical_ensemble}. Moreover, for fixed $\vec{x},\vec{y} \in \Lambda^p$, we denote by $\Gamma_p^{\nu,\zeta}(\vec{x},\vec{y})$ the operator kernel of $\Gamma_p^{\nu,\zeta}$. 

In order to precisely state our results, we make the following assumptions about the interaction potential. Throughout the paper, $|\cdot|_{L}$ denotes the periodic Euclidean norm on $\Lambda \equiv \Lambda_L$ which is given by 
\begin{equation}
\label{periodic_Euclidean_norm}
|u|_{L}\coloneqq \mathop{\mathrm{min}}_{m \in \Z^d} |u-Lm|\,,
\end{equation}
where $|\cdot|$ on the right-hand side of \eqref{periodic_Euclidean_norm} is the Euclidean norm on $\R^d$.

\begin{assumption}
\label{Assumption_on_v}
Suppose that $v \colon \Lambda \rightarrow \R$ is even and that it satisfies the following conditions.
\begin{itemize}
\item[(i)] $v$ is a \emph{stable potential}, i.e.\ there exists $B\geq 0$ such that for every $n \in \N^*$ and every $x_1,\ldots,x_n \in \Lambda$, we have
\begin{equation}
\label{stable_potential_condition}
    \sum_{1\leq i<j\leq n} v(x_i-x_j)\geq -Bn\,.
\end{equation}
\item[(ii)] $v$ is $\alpha$-H\"{o}lder continuous for some fixed $\alpha \in (0,1]$, i.e.\  there exists $C>0$ such that for all $x,y \in \Lambda$, we have
\begin{equation*}
|v(x)-v(y)| \leq C |x-y|_{L}^{\alpha}\,.
\end{equation*}
\end{itemize}
\end{assumption}

We can now state our result on the partition function.

\begin{theorem}
\label{partition_function_result}
Let $\mu \equiv \mu (\nu,\zeta)$ be as in Assumption \ref{Choice_of_z} and let $z \equiv z(\nu,\zeta)$ be as in \eqref{fugacity}. Let $v$ be as in Assumption \ref{Assumption_on_v}. Then the grand-canonical partition function $\Xi^{\nu,\zeta}$  defined in \eqref{grand_canonical_ensemble} satisfies
\begin{equation}
\label{partition_function_result_1}
\lim_{\nu \rightarrow 0} \Xi^{\nu,\zeta}=\Xi^{0,\zeta}\,, 
\end{equation}
where in \eqref{partition_function_result_1}, 
\begin{equation}
\label{partition_function_result_2}
\Xi^{0,\zeta}\coloneqq \sum_{n=0}^{\infty} \frac{1}{n!}\, \int_{\Lambda^n} \dd x_1\,\cdots\, \dd x_n\, (2\pi)^{-\frac{nd}{2}}\, \zeta^n\,\mathrm{exp} \Biggl(
- \sum_{1 \leq i<j \leq n} v(x_i-x_j)\Biggr)\
\end{equation}
is the grand-canonical partition function of a gas of classical interacting particles.
Moreover, we can estimate the rate of convergence in \eqref{partition_function_result_1} as
\begin{equation}
\label{partition_function_result_3}
\bigl|\Xi^{\nu,\zeta}- \Xi^{0,\zeta} \bigr| \lesssim_{d,\zeta,v,\alpha,B} \nu^{\alpha/2}\,\exp \left [C_0(d,\zeta,B) L^d \right],
\end{equation}
for a suitable $C_0(d,\zeta,B)>0$.
\end{theorem}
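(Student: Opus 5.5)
We will use the Feynman--Kac / Ginibre loop representation. By the Feynman--Kac formula with bosonic symmetrisation, decompose each permutation into cycles; a cycle of length $k$ becomes a closed Brownian loop of time length $k$ with diffusion constant $\nu$. We obtain
\begin{equation*}
\Xi^{\nu,\zeta}=\sum_{n=0}^\infty\frac{1}{n!}\sum_{k_1,\dots,k_n\ge1}\prod_{i=1}^n\frac{z^{\nu k_i}}{k_i}\int_{\Lambda^n}\dd x_1\cdots\dd x_n\prod_{i=1}^n\int \mathbb{W}^{\nu k_i}_{x_i,x_i}(\dd\omega_i)\,\ee^{-\mathbb{V}(\omega_1,\dots,\omega_n)}\,,
\end{equation*}
where $\mathbb{W}^{\nu k}_{x,x}$ is the (unnormalised) periodic Brownian bridge measure of diffusion $\nu$ over time $k$, with total mass the periodic heat kernel $\sum_{m\in\Z^d}(2\pi\nu k)^{-d/2}\ee^{-|Lm|^2/(2\nu k)}$. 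Here $\mathbb{V}$ collects the interactions $\int v(\omega_i(s)-\omega_j(t))$ along equal times, with $s\equiv t$ mod $1$, including self-interactions within a loop of length $k_i\ge2$.

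We split the sum into two regimes. In the first, all $k_i=1$. Here $z^{\nu}(2\pi\nu)^{-d/2}=(2\pi)^{-d/2}\zeta$ by \eqref{fugacity_condition}, giving exactly the prefactors in \eqref{partition_function_result_2}. In the second, some $k_i\ge2$. There each loop contributes weight $z^{\nu k}\nu^{-d/2}k^{-d/2-1}=\zeta^k\nu^{d(k-1)/2}k^{-d/2-1}$, up to winding corrections. So every long loop carries a factor $\nu^{d/2}$, and its contribution is $O(\nu^{d/2})$, provided we bound the interaction uniformly. Stability \eqref{stable_potential_condition} gives this: a loop of length $k$ consists of $k$ ``particles'' at each time slice, so $\mathbb{V}\ge -B\sum_i k_i$ (average over the time slices). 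Summing $\sum_k (\zeta\ee^{B})^k\nu^{d(k-1)/2}/k$ converges for small $\nu$. Winding terms $m\neq0$ are exponentially small, $\ee^{-cL^2/\nu}$. The resulting $n$-sum has the form $\sum_n (CL^d)^n/n!=\ee^{CL^d}$, which gives the factor $\exp[C_0L^d]$. The constant $C_0$ comes from $\zeta\ee^{B}(2\pi)^{-d/2}$ plus the small loop corrections.

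For the $k_i=1$ terms, we compare $\int\mathbb{W}^{\nu}_{x,x}(\dd\omega)\,\ee^{-\mathbb{V}}$, normalised, with $\ee^{-\sum v(x_i-x_j)}$. We use $|\ee^{-a}-\ee^{-b}|\le|a-b|\,\ee^{-\min(a,b)}$. Both exponents are bounded below by $-Bn$ by stability: for the paths this holds pointwise at each time. By Hölder continuity,
\begin{equation*}
|\mathbb{V}-\textstyle\sum v(x_i-x_j)|\le C\sum_{i<j}\int_0^1\bigl(|\omega_i(t)-x_i|^\alpha+|\omega_j(t)-x_j|^\alpha\bigr)\dd t\,.
\end{equation*}
A normalised Brownian bridge with diffusion $\nu$ has $\mathbb{E}|\omega(t)-x|^\alpha\lesssim\nu^{\alpha/2}$. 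This gives an error of order $n^2\nu^{\alpha/2}$ per term. The factor $n^2$ is absorbed into $\ee^{CL^d}$ after slightly enlarging the constant. The heat-kernel winding corrections on the torus are again negligible.

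The main obstacle is the careful justification of the loop representation with its combinatorics, and the uniform stability bound for loops of arbitrary length with self-interaction. For the latter, I would write $\mathbb{V}$ as $\int_0^1\sum_{a<b}v(X_a(t)-X_b(t))\,\dd t$ over all $N=\sum k_i$ time-slice points, and apply \eqref{stable_potential_condition} at each $t$. The total error is $O(\nu^{\alpha/2}+\nu^{d/2})=O(\nu^{\alpha/2})$, since $\alpha\le1\le d$.
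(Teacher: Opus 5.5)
Your proposal is correct and follows the paper's proof essentially step for step: the Ginibre representation, the time-slice stability bound $\mathcal{V}^{\nu}(\vec{\omega})\geq -B\sum_i k_i$ (Lemma~\ref{lemma_loop_interactions}), discarding configurations with some $k_i\geq 2$ at cost $\mathcal{O}(\nu^{d/2}\ee^{CL^d})$ (Proposition~\ref{Proposition_1}), then replacing duration-$\nu$ loops by their base points via H\"older continuity and the $\nu^{\alpha/2}$ bridge-displacement bound, together with $z^{\nu}\psi^{\nu}(0)=\zeta(2\pi)^{-d/2}+z^{\nu}\eta^{\nu}$ (Proposition~\ref{Proposition_2}). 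One correction: for a loop of duration $k\nu$ the winding terms are not uniformly $\ee^{-cL^2/\nu}$ (they are only small in $L^2/(k\nu)$, and of size $L^{-d}$ once $k\nu\gtrsim L^2$), so for long loops you should use the uniform bound $\psi^t(0)\lesssim_d L^{-d}+t^{-d/2}$ of Lemma~\ref{heat_kernel_estimates}~(i); this still gives each such loop weight $\lesssim (\zeta\ee^{B})^k\nu^{d(k-1)/2}L^d$ and leaves your conclusion unchanged.
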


From Theorem \ref{partition_function_result}, we immediately deduce the following result by considering quotients.
\begin{corollary}
\label{relative_partition_function_convergence}
With the same assumptions as in Theorem \ref{partition_function_result} and with the quantum relative partition function defined as in \eqref{quantum_relative_partition_function}, we have
\begin{equation*}
\lim_{\nu \rightarrow 0} \mathcal{Z}^{\nu,\zeta}=\mathcal{Z}^{0,\zeta}\,, 
\end{equation*}
where
\begin{equation}
\label{relative_partition_function_convergence_1}
\mathcal{Z}^{0,\zeta}\coloneqq \frac{\Xi^{0,\zeta}}{\Xi^{0,\zeta}_{(0)}}\,,\qquad \Xi^{0,\zeta}_{(0)}\coloneqq \Xi^{0,\zeta}\bigl|_{v=0}\,.
\end{equation}
In particular, following~\eqref{partition_function_result_3} we have
\begin{equation}
	\left| \mathcal{Z}^{\nu, \zeta} - \mathcal{Z}^{0, \zeta} \right| \lesssim_{d, \zeta, v, \alpha,B,L} \nu^{\alpha / 2}\,.
\end{equation}
\end{corollary}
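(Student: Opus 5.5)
The corollary follows from Theorem \ref{partition_function_result} applied twice: once to $v$ and once to $v=0$. We then control the quotient.

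First, the zero potential satisfies Assumption \ref{Assumption_on_v}: it is even, it is stable with $B=0$, and it is $\alpha$-H\"older continuous for any $\alpha$. Theorem \ref{partition_function_result} therefore gives
\begin{equation*}
\bigl|\Xi^{\nu,\zeta}_{(0)}-\Xi^{0,\zeta}_{(0)}\bigr|\lesssim_{d,\zeta,\alpha} \nu^{\alpha/2}\exp\bigl[C_0(d,\zeta,0)L^d\bigr].
\end{equation*}
The limit $\Xi^{0,\zeta}_{(0)}$ is explicit: with $v=0$ the series \eqref{partition_function_result_2} sums to $\exp\bigl((2\pi)^{-d/2}\zeta L^d\bigr)$. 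This is at least $1$, so it is strictly positive. The same holds for $\Xi^{0,\zeta}$, whose $n=0$ term already contributes $1$ and whose remaining terms are nonnegative. The same reasoning applied to \eqref{grand_canonical_ensemble} gives $\Xi^{\nu,\zeta}_{(0)}\geq 1$ for every $\nu$, so no smallness of $\nu$ is needed to keep the denominator away from zero.

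Next, we expand the difference of quotients:
\begin{equation*}
\mathcal{Z}^{\nu,\zeta}-\mathcal{Z}^{0,\zeta}=\frac{\Xi^{\nu,\zeta}-\Xi^{0,\zeta}}{\Xi^{\nu,\zeta}_{(0)}}+\Xi^{0,\zeta}\,\frac{\Xi^{0,\zeta}_{(0)}-\Xi^{\nu,\zeta}_{(0)}}{\Xi^{\nu,\zeta}_{(0)}\,\Xi^{0,\zeta}_{(0)}}.
\end{equation*}
Both denominators are at least $1$. For the numerator prefactor, stability \eqref{stable_potential_condition} gives
\begin{equation*}
\Xi^{0,\zeta}\leq \exp\bigl((2\pi)^{-d/2}\zeta\, \ee^{B} L^d\bigr)<\infty.
\end{equation*}
Inserting the two rates from Theorem \ref{partition_function_result} then bounds the difference by $\nu^{\alpha/2}$ times a constant depending on $d,\zeta,v,\alpha,B$ and $L$. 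This yields both the convergence and the rate.

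The only point needing care is the uniform lower bound on the denominators. Since $\Xi\geq1$ is immediate from the $n=0$ term, there is no real obstacle here. The $L$-dependence enters only through the exponential factors, which is why the implied constant in the corollary depends on $L$.
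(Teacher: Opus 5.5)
Your proposal is correct and follows essentially the same route as the paper, which states that the corollary follows from Theorem \ref{partition_function_result} ``by considering quotients''. You have supplied the details the paper leaves implicit: applying the theorem also with $v=0$, noting that both denominators $\Xi^{\nu,\zeta}_{(0)}$ and $\Xi^{0,\zeta}_{(0)}$ are at least $1$, and bounding $\Xi^{0,\zeta}\leq \exp\bigl((2\pi)^{-d/2}\zeta\,\ee^{B}L^d\bigr)$ via stability.
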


In addition, we have the analogous result for reduced $p$-particle density matrices. Let us first introduce the following notation. Let $n,p \in \N^*$ and $\widetilde{\vec{x}} \in \Lambda^n, \vec{x}\in \Lambda^p$ be given. We denote by $\widetilde{\vec{x}}\vec{x}\in \Lambda^{n+p}$ the concatenation of $\widetilde{\vec{{x}}}$ and $\vec{x}$, i.e.\
\begin{equation}
\label{concatenation}
\widetilde{\vec{x}}\vec{x}\coloneqq (\widetilde{x}_1,\ldots,\widetilde{x}_n,x_1,\ldots , x_p)\,.
\end{equation}
Furthermore, for $v$ as in Assumption \ref{Assumption_on_v},  we define $\mathcal{V} \colon \Lambda^n\to [0,\infty)$ as 
\begin{equation}
\label{interaction_function_points}
\mathcal{V}(\vec{x})\coloneqq \sum_{1 \leq i<j \leq n} v(x_i-x_j)\,.    
\end{equation}
From now onwards, we use the convention
\begin{equation}
\label{pi_vec_y}
\pi \vec{y}\coloneqq (y_{\pi(1)},\ldots, y_{\pi(p)})
\end{equation}
for $p \in \N^*, \vec{y} \in \Lambda^p$ and $\pi \in S_p$.
\begin{theorem}
\label{density_matrix_result}
Let $\mu \equiv \mu (\nu,\zeta)$ be as in Assumption \ref{Choice_of_z} and let $z \equiv z(\nu,\zeta)$ be as in \eqref{fugacity}. Let $v$ be as in Assumption \ref{Assumption_on_v}.  Let $p \in \N^*$ be fixed. The kernel of the reduced $p$-particle density matrix \eqref{reduced_p_particle_density_matrix} satisfies 
    \begin{equation}
    \label{density_matrix_result_1}
        \lim_{\nu \to 0} \Gamma_p^{\nu,\zeta}(\vec{x},\vec{y}) = \Gamma_p^{0,\zeta}(\vec{x},\vec{y})\,,
    \end{equation}
    for all $\vec{x}, \vec{y} \in \Lambda^p$ where in \eqref{density_matrix_result_1},
    \begin{equation}
    \label{density_matrix_result_2}
        \Gamma_p^{0,\zeta}(\vec{x},\vec{y})\coloneqq (2\pi)^{-\frac{pd}{2}}\, \zeta^p \sum_{\pi \in S_p} \delta_K(\vec{x}-\pi\vec{y})\,\frac{\Xi^{0,\zeta}(\vec{x})}{\Xi^{0,\zeta}}\,.
    \end{equation}
In \eqref{density_matrix_result_2}, 
\begin{equation}
\label{Kronecker_delta}
\delta_K(\vec{v})=
\begin{cases}
1 &\text{if } \vec{v}=\vec{0} \in \Lambda^p\\
0 &\text{otherwise}
\end{cases}
\end{equation}
denotes the Kronecker delta function on $\Lambda^p$.
Furthermore, 
 $\Xi^{0,\zeta}$ is given by \eqref{partition_function_result_2}, and we define
     \begin{equation}\label{density_matrix_result_3}
         \Xi^{0,\zeta}(\vec{x})\coloneqq \sum_{n=0}^{\infty} \frac{(2\pi)^{-\frac{nd}{2}} \zeta^n}{n!}\, \int_{\Lambda^n} \dd \widetilde{x}_1\,\cdots\, \dd \widetilde{x}_n\,  \exp\left(-\mathcal{V}(\widetilde{\vec{x}}\vec{x})\right),
     \end{equation}
where we recall \textup{\eqref{concatenation}--\eqref{interaction_function_points}}.
Moreover, by denoting
\begin{equation}
\label{Pi_{x,y}}
\Pi_{\vec{x},\vec{y}}\coloneqq \left\{\pi \in S_p,\, \vec{x}=\pi \vec{y}\right\},
\end{equation}
we can bound the rate of convergence in \eqref{density_matrix_result_1} by
\begin{multline}
\label{Theorem_1.5(ii)_rate_of_convergence}
\bigl|\Gamma_p^{\nu,\zeta}(\vec{x},\vec{y})-\Gamma_p^{0,\zeta}(\vec{x},\vec{y})\bigr| 
\\
\lesssim_{p,d,\zeta,v,\alpha,B,L} \Biggl[\nu^{\alpha/2}+\sum_{\tilde{\pi} \in S_p \setminus \Pi_{\vec{x},\vec{y}}} \prod_{i=1}^{p} \Biggl\{\nu^{d/2}+ 
\exp \biggl(-\frac{|y_{\tilde{\pi}(i)}-x_i|_{L}^2}{2\nu}\biggr)\Biggr\}\Biggr]\,.
\end{multline}
\end{theorem}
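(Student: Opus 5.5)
We write $\Gamma_p^{\nu,\zeta}=\widehat\Gamma_p^{\nu,\zeta}/\Xi^{\nu,\zeta}$ and use the Ginibre loop representation (Feynman--Kac with Brownian bridges of diffusion constant $\nu$) for the numerator kernel. Expanding $\mathrm{Tr}_{p+1,\ldots,n+p}(\ee^{-H^\nu_{n+p}}P^+_{n+p})(\vec x,\vec y)$ over permutations of $n+p$ particles, the cycle structure splits into $p$ open paths starting at $x_i$ and ending at some $y_{\tilde\pi(i)}$ (after absorbing closed pieces), together with a collection of closed loops. Each loop of winding number $\ell$ carries the weight $z^{\ell\nu}$ and the heat kernel factor $(2\pi\nu\ell)^{-d/2}$. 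With $z^\nu=\zeta\nu^{d/2}$, loops with $\ell\ge2$ are suppressed by an extra factor $\nu^{d/2(\ell-1)}\zeta^{\ell-1}$. Likewise, open paths passing through intermediate particles (length $\ge2$) are suppressed by powers of $\nu^{d/2}$. This is presumably the same structure already used for Theorem \ref{partition_function_result}, whose proof gives $|\Xi^{\nu,\zeta}-\Xi^{0,\zeta}|\lesssim\nu^{\alpha/2}$. We take that result, and the associated bounds on loop sums (stability via \eqref{stable_potential_condition} giving $\ee^{Bn}$-type control summable against $\zeta^n/n!$), as given.

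\textbf{Step 1 (numerator expansion).} We write
\[
\widehat\Gamma_p^{\nu,\zeta}(\vec x,\vec y)=\sum_{\tilde\pi\in S_p}\sum_{\text{configurations}}\int \prod_i \mathbb W^{\nu}_{x_i\to y_{\tilde\pi(i)}}\,\ee^{-\text{interaction}}\cdots
\]
Each open path of length one contributes $z^\nu(2\pi\nu)^{-d/2}\ee^{-|x_i-y_{\tilde\pi(i)}|_L^2/(2\nu)}$ (up to periodic images) times a normalized bridge expectation, i.e. $\zeta(2\pi)^{-d/2}\ee^{-|x_i-y_{\tilde\pi(i)}|_L^2/2\nu}$ times $O(1)$. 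Longer open paths carry a factor $O(\nu^{d/2})$ after integrating the intermediate points, since the convolution of heat kernels is bounded. This yields, for each $\tilde\pi$, a contribution bounded by $\prod_i\{\nu^{d/2}+\ee^{-|y_{\tilde\pi(i)}-x_i|^2_L/2\nu}\}$ times a uniformly bounded interacting loop sum; here $\ee^{-\text{interaction}}\le \ee^{B(\cdot)}$ by stability. This gives the second term of \eqref{Theorem_1.5(ii)_rate_of_convergence} for $\tilde\pi\notin\Pi_{\vec x,\vec y}$.

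\textbf{Step 2 (main terms, $\tilde\pi\in\Pi_{\vec x,\vec y}$).} Here $x_i=y_{\tilde\pi(i)}$, so each length-one open path is a Brownian bridge from $x_i$ to itself. Its prefactor is exactly $\zeta(2\pi)^{-d/2}$, and the corrections from longer open paths are $O(\nu^{d/2})$. We then replace all bridges (open and closed) by constant paths at their base points. By Hölder continuity, $|v(\omega_i(t)-\omega_j(t))-v(x_i-x_j)|\le C(|\omega_i(t)-x_i|+|\omega_j(t)-x_j|)^\alpha$. Together with $|\ee^{-a}-\ee^{-b}|\le|a-b|\ee^{-\min(a,b)}$ and $\mathbb E\sup_t|\omega(t)-\omega(0)|^\alpha\lesssim\nu^{\alpha/2}$, this gives an error $\nu^{\alpha/2}$ per pair. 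The pair count is absorbed by $n^2\zeta^n/n!$ and the stability bound, with the same bookkeeping as in Theorem \ref{partition_function_result}. Dropping loops of length $\ge2$ costs $O(\nu^{d/2})\le O(\nu^{\alpha/2})$ (as $\alpha\le1\le d$). Hence $\widehat\Gamma_p^{\nu,\zeta}(\vec x,\vec y)=(2\pi)^{-pd/2}\zeta^p|\Pi_{\vec x,\vec y}|\,\Xi^{0,\zeta}(\vec x)+O(\nu^{\alpha/2})+$ the off-diagonal terms of Step 1.

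\textbf{Step 3.} We divide by $\Xi^{\nu,\zeta}$. We have $\Xi^{0,\zeta}\ge1$ and, by Theorem \ref{partition_function_result}, $\Xi^{\nu,\zeta}\ge1/2$ for small $\nu$ with difference $O(\nu^{\alpha/2})$. Since the numerator is bounded in terms of $L$, this yields \eqref{Theorem_1.5(ii)_rate_of_convergence}; the limit \eqref{density_matrix_result_1} follows because each factor $\nu^{d/2}+\ee^{-c/\nu}\to0$ for $\tilde\pi\notin\Pi_{\vec x,\vec y}$. The main obstacle is Step 1's combinatorics: organizing the open-path/loop decomposition so that the paths through intermediate particles are summed with uniform $\nu^{d/2}$ gains. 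This requires bounding the periodized heat kernel $\sum_{m}(2\pi\nu\ell)^{-d/2}\ee^{-|x-y-Lm|^2/2\nu\ell}$ uniformly in $\ell$, while keeping the interaction controlled by stability despite the open paths. My first attempt would be to bound $\ee^{-\text{interaction}}$ by $\ee^{B\cdot(\text{total particle number})}$ pointwise in time averages, and then to sum all configurations as in the free case with $\zeta$ replaced by $\zeta\ee^{B}$.
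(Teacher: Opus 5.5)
Your proposal is correct and follows essentially the same route as the paper. The shared steps are: the Ginibre representation; suppressing closed loops and open paths of duration $k\nu$ with $k\geq 2$ via $z^{\nu}=\zeta\nu^{d/2}$ and the stability bound; the Gaussian-tail heat kernel estimate for $\tilde\pi\notin\Pi_{\vec x,\vec y}$; collapsing duration-$\nu$ paths to their base points via H\"older continuity and Brownian displacement bounds; and dividing by $\Xi^{\nu,\zeta}$ using Theorem~\ref{partition_function_result}.

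The differences are cosmetic. The paper removes long loops at the level of the normalised correlation function $\gamma_p^{\nu,\zeta}$ rather than the unnormalised numerator, and uses a fixed-time displacement estimate with Fubini instead of a supremum bound. Also, the prefactor $z^{\nu}\psi^{\nu}(0)$ equals $\zeta(2\pi)^{-d/2}$ only up to the harmless periodic-image correction $z^{\nu}\eta^{\nu}=\mathcal{O}(\nu^{d/2})$, not exactly as you wrote.
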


\begin{remark}
\label{Theorem_1.5(ii)_rate_of_convergence_remark}
We note that by \eqref{Pi_{x,y}}, the right-hand side of \eqref{Theorem_1.5(ii)_rate_of_convergence} indeed converges to zero as $\nu \rightarrow 0$ since for all $\tilde{\pi} \in S_p \setminus \Pi_{\vec{x},\vec{y}}$, we have that $y_{\tilde{\pi}(i)} \neq x_i$ for some $1 \leq i \leq p$.
\end{remark}

\begin{remark}
\label{tuning_the_particle_density}
When studying the mean-field limit in the continuum in dimensions $d>1$, it is necessary to tune the particle density by means of a renormalisation procedure; see \cite{FKSS_2020_1}. This is not necessary in the context of the large-mass limit (see Theorems \ref{partition_function_result} and \ref{density_matrix_result} above), but it can be done in principle. We explain how tuning the particle density can be implemented in the context of the large-mass limit in Appendix \ref{tuning_particle_density} below. Here, the method is explained in the context of the relative partition function, i.e.\ in the convergence given by Corollary \ref{relative_partition_function_convergence}. We omit the similar analysis for reduced $p$-particle density matrices.
\end{remark}

\begin{remark}
\label{Fermi_gas_remark}
We prove analogues of Theorems \ref{partition_function_result} and \ref{density_matrix_result} in the fermionic regime by using similar methods. These are Theorems \ref{partition_function_result_Fermi} and \ref{density_matrix_result_Fermi} in Section \ref{Fermi_gas} below. The suitable modifications are explained in Section \ref{Fermi_gas}. See also Remarks \ref{Remark_Boltzmann_gas_partition_function} and \ref{Remark_Boltzmann_density_matrix} for the regime of Boltzmann statistics.
\end{remark}

\subsubsection{The infinite-volume (bosonic) regime}
\label{The_infinite volume_(bosonic)_regime}

We now study the infinite-volume regime and prove a large-mass limit for the reduced density matrices in Theorem \ref{Infinite_volume_Theorem_2}. We also prove a large-mass limit for the free energy in Theorem \ref{free_energy_infinite_volume}.
In the analysis, we explicitly include the sidelength $L$ of the box $\Lambda \equiv \Lambda_L$ in the notation.
In all of the other relevant quantities, we \emph{indicate the $L$-dependence with a superscript}.
Our choice of interaction potential will be slightly different than in a finite volume; see Assumption \ref{Assumption_on_v_2} below. 
In particular, we assume that the interactions are pointwise nonnegative. 
Theorems \ref{Infinite_volume_Theorem_2} and \ref{free_energy_infinite_volume} are proved under the assumption of a small interaction potential or a large chemical potential; this is explicitly stated in \eqref{smallness_v_kappa} below.

When studying the infinite-volume limit, we work with an interaction potential constructed as follows.
\begin{assumption}
\label{Assumption_on_v_2}
Suppose that $v \colon \R^d \rightarrow \R$ is even and that it satisfies the following conditions.
\begin{itemize}
\item[(i)] $v$ is differentiable.
\item[(ii)] $v \geq 0$ pointwise.
\item[(iii)] There exists $\varepsilon_0>0$ and $C>0$ such that for all $x \in \R^d$ we have
\begin{equation}
\label{v_assumption_iii}
|v(x)|+ \|\nabla v(x)\| \leq \frac{C}{\langle x \rangle^{d+\varepsilon_0}}\,.
\end{equation} 
\end{itemize}
\end{assumption}
With $v$ given as in Assumption \ref{Assumption_on_v_2}, on $\Lambda_L$ we consider the interaction potential $v^L$ given by
\begin{equation}
\label{v^L}
v^L(x)\coloneqq \sum_{k \in \Z^d} v(x+Lk)\,.
\end{equation}
We note that \eqref{v^L} defines a function on $\Lambda_L$ which satisfies Assumption \ref{Assumption_on_v} with $\alpha=1$; see Lemma \ref{v^L_lemma} (i) below.

\begin{remark}
\label{class_of_v}
The class of $v$ satisfying Assumption \ref{Assumption_on_v_2} contains the even and nonnegative functions in the Schwarz class $\mathscr{S}(\R^d)$.
\end{remark}

In the infinite-volume setting, we assume that the quantity $\zeta$ in Assumption \ref{Choice_of_z} satisfies
\begin{equation}
\label{zeta_infinite_volume}
\zeta \in \left(0,\frac{1}{\ee}\right).
\end{equation}
With $\zeta$ as in \eqref{zeta_infinite_volume}, we define
\begin{equation}
\label{kappa_definition}
\kappa \coloneqq -\log \zeta>1\,.
\end{equation}


We study the free energy in the infinite-volume limit.
Recalling \eqref{grand_canonical_ensemble}, the \emph{free energy (or specific Gibbs potential)} of the interacting Bose gas in finite volume is defined as
\begin{equation}
\label{specific_Gibbs_potential}
g^{\nu,\zeta,L} \coloneqq \frac{1}{|\Lambda_L|}\,\log \Xi^{\nu,\zeta,L}\,.
\end{equation}
We are interested in the limit of \eqref{specific_Gibbs_potential} as $L \rightarrow \infty$. To this end, we prove the following result.

\begin{theorem}[Convergence of the free energy in the infinite volume]
\label{free_energy_infinite_volume}
Suppose that \eqref{smallness_v_kappa} holds. We have the following results.
\begin{itemize}
\item[(i)] Recalling the definition \eqref{specific_Gibbs_potential}, the limit 
\begin{equation}
\label{free_energy_infinite_volume_i}
g^{\nu,\zeta,\infty} \coloneqq \lim_{L \rightarrow \infty} g^{\nu,\zeta,L}
\end{equation}
exists uniformly in $\nu \in (0,\nu_0]$ for $\nu_0$ as in \eqref{nu_0}.
\item[(ii)] We define the classical free energy (or specific Gibbs potential) as 
\begin{equation}
\label{classical_specific_Gibbs_potential}
g^{\mathrm{cl},\zeta,L} \coloneqq \frac{1}{|\Lambda_L|}\,\log \Xi^{0,\zeta,L}\,,
\end{equation}
where as in \eqref{partition_function_result_2}, in \eqref{classical_specific_Gibbs_potential} we take
\begin{equation*}
\Xi^{0,\zeta}\coloneqq \sum_{n=0}^{\infty} \frac{1}{n!}\, \int_{\Lambda_L^n} \dd x_1\,\cdots\, \dd x_n\, (2\pi)^{-\frac{nd}{2}}\, \zeta^n\,\mathrm{exp} \Biggl(
- \sum_{1 \leq i<j \leq n} v^L(x_i-x_j)\Biggr)\,.
\end{equation*}
We then have
\begin{equation}
\label{free_energy_infinite_volume_ii}
\lim_{\nu \rightarrow 0} \lim_{L \rightarrow \infty} g^{\nu,\zeta,L} = \lim_{L \rightarrow \infty} g^{\mathrm{cl},\zeta,L}\,.
\end{equation}
All the limits in \eqref{free_energy_infinite_volume_ii} exist and are finite.
\end{itemize}
\end{theorem}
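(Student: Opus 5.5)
Both parts will come from a cluster (Mayer/polymer) expansion of $\log \Xi^{\nu,\zeta,L}$ in the Ginibre loop representation, with every bound uniform in $\nu\in(0,\nu_0]$ and in $L$. The smallness condition \eqref{smallness_v_kappa} makes the expansion converge.

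\textbf{The expansion.} Using the Feynman--Kac formula, I will write $\Xi^{\nu,\zeta,L}$ as the grand-canonical partition function of a gas of closed Brownian loops $\omega$ on $\Lambda_L$. A loop has a winding number $j\in\N^*$ and time-length $j\nu$, rescaled to time $j$ with diffusion constant $\nu$. Its activity is
\begin{equation*}
z^{j\nu}\,(2\pi\nu j)^{-d/2}\,\frac{1}{j}\, \ee^{-U(\omega)} = \frac{\zeta^j\,\nu^{d(j-1)/2}}{(2\pi)^{d/2} j^{1+d/2}}\,\ee^{-U(\omega)}\,,
\end{equation*}
where I have used \eqref{fugacity_condition}, and $U(\omega)\ge 0$ is the self-interaction. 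Loops interact through the pairwise energies $W(\omega,\omega')=\int_0^{1}\sum_{k,l} v^L(\omega(\cdot)-\omega'(\cdot))\ge 0$. Since $v\ge 0$, I will apply the Kotecký--Preiss (or Penrose tree-graph) criterion to the Mayer weights $|\ee^{-W}-1|\le W$. The weight of a loop with winding $j$ in the criterion is $a(\omega)=c\,j$, chosen proportional to its length.

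\textbf{The main estimate.} The key step is the integrated bound
\begin{equation*}
\int \dd\omega'\, |\text{activity}(\omega')|\,\ee^{c j'}\, W(\omega,\omega') \le c\,j\,,
\end{equation*}
uniformly in $\nu$, $L$ and the starting point. To prove it, I will integrate over the starting point of $\omega'$ first. By translation invariance, $\int_{\Lambda_L} \dd x\, W \le j\,j'\,\|v\|_{L^1(\R^d)}$, and this is finite by \eqref{v_assumption_iii}; the path measure then integrates to one. What remains is
\begin{equation*}
\|v\|_{L^1}\, j \sum_{j'} j'^{-d/2}\, \zeta^{j'}\, \nu^{d(j'-1)/2}\, \ee^{c j'}\,,
\end{equation*}
which converges as soon as $\zeta \ee^{c}<1$, i.e.\ $c<\kappa$. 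It is bounded by $C\zeta\ee^c\|v\|_{L^1}$, uniformly in $\nu\le 1$. The smallness condition \eqref{smallness_v_kappa} is exactly what ensures $C\zeta\ee^{c}\|v\|_{L^1}\le c$. This step is the main obstacle: one must get the uniformity in $\nu$, and note that $\nu^{d(j'-1)/2}\le 1$ only helps.

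\textbf{Part (i).} The expansion gives
\begin{equation*}
\log\Xi^{\nu,\zeta,L}=\sum_{\text{clusters}}\phi^T \prod \text{activities}\,,
\end{equation*}
with clusters summable when one loop is rooted at a fixed point. I will then write
\begin{equation*}
g^{\nu,\zeta,L}=\int_{\text{loops rooted at }0}\sum_{\text{clusters}\ni\omega_0}\frac{1}{|\text{cluster}|}\cdots
\end{equation*}
The cluster sums are dominated by a summable majorant independent of $L$ and $\nu$. The $L$-dependence enters only through the periodisation $v^L$ and loops wrapping the torus. Both corrections vanish as $L\to\infty$, by the decay \eqref{v_assumption_iii} and by Gaussian tails of the loops, with rates uniform in $\nu$. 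Hence each term converges to its $\R^d$ analogue, and dominated convergence gives the limit \eqref{free_energy_infinite_volume_i} uniformly in $\nu\in(0,\nu_0]$. The classical version, which is the usual Mayer expansion with activity $(2\pi)^{-d/2}\zeta$ and the same criterion, gives existence of $\lim_L g^{\mathrm{cl},\zeta,L}$.

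\textbf{Part (ii).} Given the uniformity from part (i), I only need $\lim_{\nu\to0} g^{\nu,\zeta,\infty}=\lim_L g^{\mathrm{cl},\zeta,L}$ term by term in the infinite-volume cluster series, which I will justify by dominated convergence. Loops with $j\ge2$ carry a factor $\nu^{d(j-1)/2}\to 0$. For $j=1$, the loops shrink to points as $\nu\to0$, since the Brownian bridge has size $O(\sqrt{\nu})$. Because $v$ is continuous and bounded, $W(\omega,\omega')\to v(x-x')$ and $U\to0$. Each cluster term therefore converges to the corresponding classical Ursell term, whose sum is the classical pressure, the limit $\lim_L g^{\mathrm{cl},\zeta,L}$ from part (i). Finiteness of every limit follows from the uniform majorant.
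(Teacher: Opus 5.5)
Your proposal is correct. Part (i) is essentially the paper's argument: Ueltschi's criterion with $a(\omega)=T(\omega)/\nu$ (your $c=1$), tree-graph bounds uniform in $\nu$ and $L$, and passage to $\R^d$ by showing that long loops and long-range interaction links contribute $o_L(1)$ uniformly in $\nu$ (Lemmas~\ref{Lemma_5.20*} and~\ref{Lemma_5.21*}). Rooting at $0$ via translation invariance on the torus is a harmless simplification of the paper's boundary-strip Step~1. Two small corrections there:
\begin{itemize}
\item On the torus the activity carries $\psi^{L,j\nu}(0)\le (2\pi j\nu)^{-d/2}+CL^{-d}$. This only costs you the factor $(j')^{-d/2}$ in your key sum, so it is harmless.
\item The $L$-dependence does not enter \emph{only} through $v^L$ and wrapping loops. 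Clusters of extent comparable to $L$ can wind around the torus collectively, and their periodic differences are not the Euclidean ones. The paper reduces these to your two mechanisms by the triangle inequality along a path in the tree.
\end{itemize}

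Part (ii) takes a genuinely different route. The paper never takes $\nu\to0$ in infinite volume. It combines the uniformity in $\nu$ of the limit in (i) with the fixed-$L$ convergence $g^{\nu,\zeta,L}\to g^{\mathrm{cl},\zeta,L}$ from Theorem~\ref{partition_function_result}, which applies since $v^L$ is Lipschitz and nonnegative. It then exchanges the limits by a Moore--Osgood argument \cite[Lemma 5.13]{FKSS_2023}.
\begin{itemize}
\item \emph{What the paper's route buys.} Existence of $\lim_L g^{\mathrm{cl},\zeta,L}$ comes for free, and it needs neither a classical Mayer expansion nor any new small-$\nu$ analysis. However, it depends crucially on the uniformity in (i), because the finite-volume rate $\nu^{\alpha/2}\ee^{C_0L^d}$ is useless as $L\to\infty$.
\item \emph{What your route costs.} You redo the content of Propositions~\ref{Proposition_1}--\ref{Proposition_2} term by term inside the infinite-volume cluster series. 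You also need a separate classical Mayer expansion with its own thermodynamic limit.
\item \emph{What your route buys.} You need no uniformity of the $L$-limit, only a $\nu$-uniform majorant, so your ``given the uniformity'' is superfluous. You need only continuity of $v$ for the small-$\nu$ step. You also identify the limit explicitly as the classical infinite-volume Mayer series.
\end{itemize}

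The one point to make precise is the dominating function for the winding-one terms on $(\R^d)^{n-1}$. The tree bound by $\prod W$ depends on $\nu$, so plain dominated convergence does not apply directly. Pratt's generalised dominated convergence does work: integrating the tree bound leaf by leaf gives the same value for every $\nu$, since $\int_{\R^d}\dd x'\,W(\omega,\omega')=\|v\|_{L^1(\R^d)}$ for winding-one loops. Alternatively, use \eqref{v_assumption_iii} together with Gaussian moments of the bridges' sup norms.
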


We can prove an analogue of Theorem \ref{density_matrix_result} for reduced density matrices in the infinite-volume limit. 
In order to precisely state the result, we need to introduce some notation and conventions. Let $L_0 \geq 1$ and $p \in \N^*$ be given. Given $L \geq L_0$ and an operator $A$ acting on $p$ variables in $\Lambda_{L,d}$ with operator kernel 
defined on $\Lambda_L^p \times \Lambda_L^p$ (e.g.\ $\Gamma^{\nu,\zeta,p}(\vec{x},\vec{y})$ for $\vec{x},\vec{y} \in \Lambda_L^p$), we define the new operator $\mathbf{P}_{L_{0},p}\, A$ through its kernel
\begin{equation}
\label{1.37*}
(\mathbf{P}_{L_{0},p}\, A)(\vec{x},\vec{y}) \coloneqq A(\vec{x},\vec{y})\,\mathbf{1}_{\vec{x} \in \Lambda_{L_0}}\,\mathbf{1}_{\vec{y} \in \Lambda_{L_0}}\,.
\end{equation}
In other words, we project the kernel to $\Lambda_{L_0}^p \times \Lambda_{L_0}^p$.
The above construction is also applied when $L=\infty$, in which case we set $\Lambda_{\infty,d}=\R^d$. By using \eqref{1.37*}, we define
\begin{equation}
\label{L_0_seminorm}
\|A\|_{L_0,p} \coloneqq \|\mathbf{P}_{L_{0},p}\, A\|_{L^{\infty}_{\vec{x}}L^{1}_{\vec{y}}}
\equiv \left\|  \left\| (\mathbf{P}_{L_{0},p}\, A)(\vec{x},\vec{y}) \right\|_{L^1_{\vec{y}}} \right\|_{L^{\infty}_{\vec{x}}}\,.
\end{equation}
Note that $\|\cdot\|_{L_0,p}$ is a seminorm.

We can also prove the following convergence result for the reduced density matrices.
\begin{theorem}[Convergence of reduced density matrices in the infinite volume]
\label{Infinite_volume_Theorem_2}
Suppose that 
\begin{equation}
\label{smallness_v_kappa}
\frac{1}{\kappa-1}\,(\|v\|_{L^1(\R^d)}+1) \leq c_d
\end{equation}
holds for some constant $c_d>0$ small enough depending on $d$.
Then, recalling \eqref{reduced_p_particle_density_matrix} and \textup{\eqref{density_matrix_result_2}--\eqref{density_matrix_result_3}}, we have that for all $p \in \N^*$ and $L_0 \geq 1$
\begin{equation}
\label{Infinite_volume_Theorem_2_claim}
\lim_{\nu \rightarrow 0} \lim_{L \rightarrow \infty} \Gamma^{\nu,\zeta,L}_p=0
\end{equation}

with respect to \eqref{L_0_seminorm}.
\end{theorem}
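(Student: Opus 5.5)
The idea is to show that $\sup_{\vec x}\int|\Gamma^{\nu,\zeta,L}_p(\vec x,\vec y)|\,\dd\vec y$ is $\mathcal{O}(\nu^{pd/2})$ uniformly in $L$. This is consistent with Theorem \ref{density_matrix_result}: the limit kernel \eqref{density_matrix_result_2} is supported where $\vec x=\pi\vec y$, a null set in $\vec y$, so it has zero $L^1_{\vec y}$-norm. Hence vanishing in the seminorm \eqref{L_0_seminorm} is the natural infinite-volume counterpart. The two ingredients are the Ginibre loop representation of $\Gamma^{\nu,\zeta,L}_p$ and the pointwise nonnegativity $v\geq 0$ from Assumption \ref{Assumption_on_v_2}; the latter implies $v^L\ge 0$ by \eqref{v^L}.

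\emph{Step 1 (loop representation).} By the Feynman--Kac formula on $\Lambda_L$, I write the kernel as
\begin{equation*}
\Gamma_p^{\nu,\zeta,L}(\vec x,\vec y)=\frac{1}{\Xi^{\nu,\zeta,L}}\sum_{\pi\in S_p}\sum_{r_1,\dots,r_p\ge1}\prod_{i=1}^p z^{r_i\nu}\int \prod_{i=1}^p\mathbb{W}^{r_i\nu}_{y_{\pi(i)},x_i}(\dd\omega_i)\;\mathcal{N}(\omega_1,\dots,\omega_p)\,.
\end{equation*}
Here $\mathbb{W}^{t}_{a,b}$ is the (unnormalised) periodic Brownian bridge measure of diffusion $\nu$ and duration $t$, with total mass $\psi^t(a-b)$, the periodic heat kernel of $\ee^{t\nu\Delta/2}$; these open paths are what the $p$ open strands become after time rescaling. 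The quantity $\mathcal{N}$ is the Ginibre sum over configurations of closed loops, each weighted by its fugacity and free bridge measure. It also carries $\exp(-W)$, where $W$ is the total interaction among loops and open paths, including the self-interactions of the open paths. Only loop self-interactions, loop--loop interactions and open-path interactions enter $W$, all through time-diagonal integrals of $v^L$.

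\emph{Step 2 (monotonicity from $v\ge0$).} Since $v^L\ge0$, every term of $W$ involving an open path is nonnegative. I drop these terms, bounding their exponential by $1$. What remains of $\mathcal N$ is exactly the loop expansion of $\Xi^{\nu,\zeta,L}$, so $0\le\mathcal N\le\Xi^{\nu,\zeta,L}$. The $\Xi$ cancels and I obtain the free bound
\begin{equation*}
0\le\Gamma_p^{\nu,\zeta,L}(\vec x,\vec y)\le\sum_{\pi\in S_p}\prod_{i=1}^p\sum_{r\ge1}z^{r\nu}\,\psi^{r}\bigl(x_i-y_{\pi(i)}\bigr)\,.
\end{equation*}
This step needs care with the bookkeeping of the open strands in the Ginibre representation; it is the main point to verify, though no cluster expansion is required for it.

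\emph{Step 3 (integration).} For fixed $\vec x$, I integrate over $\vec y\in\Lambda_{L_0}^p$, enlarging the domain to $\Lambda_L^p$. Using $\int_{\Lambda_L}\psi^{r}=1$ together with $z^{\nu}=\ee^{\mu}=\zeta\nu^{d/2}$ from \eqref{fugacity_condition}, I get
\begin{equation*}
\|\Gamma^{\nu,\zeta,L}_p\|_{L_0,p}\le p!\Bigl(\sum_{r\ge1}(\zeta\nu^{d/2})^r\Bigr)^p\le p!\,(2\zeta\nu^{d/2})^p
\end{equation*}
for $\nu$ small, uniformly in $L\ge L_0$. Consequently $\limsup_{L\to\infty}\|\Gamma^{\nu,\zeta,L}_p\|_{L_0,p}\lesssim_p\nu^{pd/2}$, and letting $\nu\to0$ gives \eqref{Infinite_volume_Theorem_2_claim}.

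\emph{Existence of the inner limit.} If the statement is read as asserting that the inner limit in $L$ exists, I would use the cluster expansion under \eqref{smallness_v_kappa}, the same expansion used for Theorem \ref{free_energy_infinite_volume}. It writes $\Gamma_p^{\nu,\zeta,L}$ as an absolutely convergent series of connected loop clusters attached to the open paths. The smallness $(\kappa-1)^{-1}(\|v\|_{L^1}+1)\le c_d$ controls the loop activities $\sum_r \ee^{-\kappa r}\ldots$ uniformly in $L$ and $\nu$, and each term converges as $L\to\infty$ on $\Lambda_{L_0}$. I expect this existence part to be the main technical obstacle. The vanishing itself follows from the uniform bound of Step 3, which passes to the limit.
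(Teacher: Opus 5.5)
Your vanishing argument is correct, and it takes a genuinely different and more elementary route than the paper's. Step 2 follows directly from \eqref{Ginibre_representation_3}--\eqref{Ginibre_representation_4}: since $v^L\ge 0$, one has $\mathcal{V}^{\nu}(\vec{\widetilde{\omega}}\vec{\omega})\ge\mathcal{V}^{\nu}(\vec{\widetilde{\omega}})$, so $0\le\Xi^{\nu,\zeta}(\vec{\omega})\le\Xi^{\nu,\zeta}$. Integrating out $\vec{y}$ with \eqref{heat_kernel_integral_1} and $z^{\nu}=\zeta\nu^{d/2}$ then gives $\|\Gamma^{\nu,\zeta,L}_p\|_{L^{\infty}_{\vec{x}}L^1_{\vec{y}}}\le p!\,\bigl(\zeta\nu^{d/2}/(1-\zeta\nu^{d/2})\bigr)^p$ for every $L$. (The superscript $r_i\nu$ on $\mathbb{W}$ in your Step 1 clashes with $\psi^{r}$ in Step 2, but only the total mass $\int\psi=1$ is used, so this is harmless.)

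The paper argues instead by exchanging limits. In step (I) it shows $\Gamma^{\nu,\zeta,L}_p\to\Gamma^{\nu,\zeta,\infty}_p$ in $\|\cdot\|_{L_0,p}$ uniformly in $\nu$, via the cluster expansion. In step (II), at fixed $L$, it shows $\Gamma^{\nu,\zeta,L}_p\to\Gamma^{0,\zeta,L}_p$ in $\|\cdot\|_{L_0,p}$ by integrating the pointwise rate \eqref{Theorem_1.5(ii)_rate_of_convergence}, and uses $\|\Gamma^{0,\zeta,L}_p\|_{L_0,p}=0$. Because the constants in (II) depend on $L$ and $L_0$, the paper needs uniformity in $\nu$ in (I), and it does not obtain uniformity in $L_0$ (Remark \ref{non_uniformity_in_L_0}). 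Your free bound bypasses Theorem \ref{density_matrix_result} altogether. It gives the explicit rate $\nu^{pd/2}$ uniformly in $L$ and $L_0$, and it shows that \eqref{smallness_v_kappa} plays no role in the vanishing itself.

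What your proposal does not really supply is the existence of the inner limit. That is where \eqref{smallness_v_kappa} is used and where almost all of the paper's proof goes. In your route you only need, for each fixed $\nu$, that $\lim_{L\to\infty}\Gamma^{\nu,\zeta,L}_p$ exists in $\|\cdot\|_{L_0,p}$, with no uniformity in $\nu$; its seminorm is then at most $\limsup_{L}\|\Gamma^{\nu,\zeta,L}_p\|_{L_0,p}\lesssim_p\nu^{pd/2}$.

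However, your claim that ``each term converges as $L\to\infty$ on $\Lambda_{L_0}$'' is exactly the paper's step (I), and it is not automatic. It requires the following:
\begin{itemize}
\item the uniform absolute bounds of Proposition \ref{Proposition_5.3} (ii) and Corollary \ref{operator_norm_bound_corollary}, to justify dominated convergence;
\item discarding clusters containing a path that leaves $\Lambda_{L/3}$, via Lemmas \ref{Lemma_5.20*}--\ref{Lemma_5.21*}, which rest on the periodisation Lemma \ref{Wiener_measure_pi_L};
\item replacing $v^L$ by $v$ and periodic bridges by free ones, using Lemma \ref{v^L_lemma} (iv)--(v).
\end{itemize}
If the statement is read as $\lim_{\nu\to 0}\limsup_{L\to\infty}\|\Gamma^{\nu,\zeta,L}_p\|_{L_0,p}=0$, your Steps 1--3 are already a complete proof. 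Read as an iterated limit, you still have to carry out (or cite) that convergence.
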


\begin{remark}
\label{v_kappa_remark}
The assumption \eqref{smallness_v_kappa} tells us that we have to consider $\|v\|_{L^1(\R^d)}$ small or $\kappa$ large.
\end{remark}

\begin{remark}
\label{non_uniformity_in_L_0_introduction}
The convergence that we obtain in Theorem \ref{Infinite_volume_Theorem_2} is not shown to be uniform in $L_0$, as it relies on \eqref{Theorem_1.5(ii)_rate_of_convergence}; see also Remark \ref{non_uniformity_in_L_0} below.
\end{remark}

\begin{remark}
\label{zero_limit_Gamma}
The limit in \eqref{Infinite_volume_Theorem_2_claim} equals zero as a result of $\|\cdot\|_{L_0,p}$ given in \eqref{L_0_seminorm} and the fact that by 
\eqref{density_matrix_result_2}--\eqref{Kronecker_delta}, we have that $\Gamma_p^{0,\zeta}(\vec{x},\vec{y})=0$ unless $\vec{x}=\pi \vec{y}$ for some $\pi \in S_p$. This is different from the corresponding result on the lattice in \cite[Theorem 1.8 (ii)]{FKSS_2023}.
We refer the reader to \eqref{convergence_item_2}--\eqref{convergence_item_2**} below for details. 
\end{remark}

\subsection{Loop ensembles and the Ginibre representation}
\label{Section_1.3}
Given $T > 0$ and $x,y \in \Lambda$, let $\Omega^{T}_{y,x}$ denote the set of continuous paths $\omega \colon [0,T] \rightarrow \Lambda$ satisfying $\omega(0)=x$ and $\omega(T)=y$. Furthermore, we say that $\omega$ is a \textbf{\emph{loop}} if $\omega(T)=\omega(0)$. We refer to $T \equiv T(\omega)$ as the \textbf{\emph{duration}} of $\omega$. We abbreviate 
\begin{equation*}
\Omega^T\coloneqq \bigcup_{x,y \in \Lambda} \Omega^T_{y,x}\,,\qquad \Omega\coloneqq \bigcup_{T \geq 0}\Omega^T\,.
\end{equation*}
In particular, $\Omega^T$ denotes the space of continuous paths $\omega \colon [0,T] \rightarrow \Lambda$.
Furthermore, given $n \in \N^*$, $\vec{T}=(T_1,\ldots,T_n) \in (0,\infty)^n$, $\vec{x}, \vec{y} \in \Lambda^n$, we let
\begin{equation}
\label{Omega_vec_T}
\Omega^{\vec{T}}\coloneqq \prod_{i=1}^{n} \Omega^{T_i}\,,\qquad \Omega^{\vec{T}}_{\vec{y},\vec{x}}\coloneqq \prod_{i=1}^{n}  \Omega^{T_i}_{y_i,x_i}\,.
\end{equation}
For $t>0$, the heat kernel on $\Lambda \equiv \Lambda_{L,d}$ is given by\footnote{Note that in \eqref{heat_kernel} and in similar expressions from here onwards, the $t$ denotes time-dependence, and it is not a power.}
\begin{equation}
\label{heat_kernel}
\psi^t(x)\coloneqq (\ee^{t\Delta/2})(x,0)=\sum_{n \in \Z^d} \frac{1}{(2\pi t)^{d/2}}\, \ee^{-\frac{|x-Ln|^2}{2t}}\,.
\end{equation}
We have that 
\begin{equation}
\label{heat_kernel_integral_1}
\int_{\Lambda} \dd x\, \psi^t(x)=1\,.
\end{equation}
Given $x \in \Lambda$, let $\mathbb{P}^T_{x}$ denote the law on $\Omega^T$ of the standard Brownian motion on $\Lambda$ equal to $x$ at time $0$, with periodic boundary conditions. 
More precisely, we characterise $\mathbb{P}^T_{x}$ through its finite-dimensional distributions as follows.
For all $n \in \N^*$, $f \colon \Lambda^n \rightarrow \mathbb{C}$ continuous and $0<t_1<\ldots<t_n<T$, we have
\begin{multline}
\label{P^T}
\int \mathbb{P}^T_{x} (\dd \omega) f(\omega(t_1),\ldots,\omega(t_n))
\\
=\int_{\Lambda^n} \dd x_1\,\cdots \,\dd x_n\,\psi^{t_1}(x_1-x)\, \psi^{t_2-t_1}(x_2-x_1)\,\cdots\,\psi^{t_n-t_{n-1}}(x_n-x_{n-1})\,f(x_1,\ldots,x_n)\,.
\end{multline}
Given $x,y \in \Lambda$, let $\mathbb{P}^T_{y,x}$ denote the law of the Brownian bridge equal to $x$ at time $0$ and equal to $y$ at time $T$, with periodic boundary conditions on $\Lambda$. As in \eqref{P^T}, the \emph{Wiener measure}
\begin{equation}
\label{Wiener_measure}
\mathbb{W}^T_{y,x}(\dd \omega)\coloneqq \psi^T(y-x)\,\mathbb{P}^T_{y,x}(\dd \omega)
\end{equation}
is characterised by its finite-dimensional distribution as 
\begin{multline}
\label{W^T}
\int \mathbb{W}^T_{y,x} (\dd \omega) f(\omega(t_1),\ldots,\omega(t_n))
=\int_{\Lambda^n} \dd x_1\,\cdots\, \dd x_n\,\psi^{t_1}(x_1-x)\, \psi^{t_2-t_1}(x_2-x_1)\,\cdots\,
\\
\times \cdots\,
\psi^{t_n-t_{n-1}}(x_n-x_{n-1})\,\psi^{T-t_n}(y-x_n)\,
f(x_1,\ldots,x_n)\,.
\end{multline}
In particular, we can write
\begin{equation*}
\mathbb{W}^T_{y,x}(\dd \omega)=\delta (\omega(T)-y)\,\mathbb{P}^T_{x}(\dd \omega)\,.
\end{equation*}
By \eqref{W^T} with $f=1$, we also have that for all $x,y \in \Lambda$ and $T>0$ 
\begin{equation}
\label{heat_kernel_integral}
\int \mathbb{W}_{y, x}^{T}(\dd \omega)=\psi^{T}(y-x)\,.
\end{equation}
Throughout the paper, we will use several quantitative properties of the heat kernel \eqref{heat_kernel}, which we recall for completeness in Appendix \ref{appendix_heat_kernel_estimates} below; see also \cite[Lemmas 2.2--2.3]{FKSS_2020_1} and \cite[Lemma 5.4]{FKSS_2022} for proofs of these results. A further quantitative property of the heat kernel is given in Lemma \ref{Riemann_sums_estimate} below.

\begin{lemma}[Estimates on the heat kernel]
\label{heat_kernel_estimates}
The following estimates hold for $t>0$.
\begin{itemize}
\item[(i)] For all $x \in \Lambda \equiv \Lambda_L$, we have
\begin{equation}
\label{heat_kernel_estimates_1}
0 \leq \psi^t(x) \lesssim_{d} \frac{1}{L^d}+\frac{1}{t^{d/2}}\,.
\end{equation}
\item[(ii)] We define
\begin{equation}
\label{heat_kernel_estimates_tilde_A}
\eta^t\coloneqq \sum_{n \in \Z^d \setminus \{0\}} \frac{1}{(2\pi t)^{d/2}} \,\ee^{-\frac{|Ln|^2}{2t}}=\psi^t(0)-\frac{1}{(2\pi t)^{d/2}}\,.
\end{equation}
Then, we have that, uniformly in $t > 0$,
\begin{equation}
\label{heat_kernel_estimates_tilde}
0 \leq \eta^t \lesssim_{d} \frac{1}{L^d}\,.
\end{equation}
\item[(iii)] For all $x,y \in \Lambda$, and for all $0 \leq s_1 \leq s_2 \leq t$, we have
\begin{equation}
\label{FKSS_2022_estimate}
\int \mathbb{W}^{t}_{y,x}(\dd \omega)\,|\omega(s_1)-\omega(s_2)|_{L} \lesssim_{d} \biggl(\frac{1}{L^d}+\frac{1}{t^{d/2}}\biggr)\,\biggl((s_2-s_1)^{1/2}+|x-y|_{L} \,\frac{s_2-s_1}{t}\biggr)\,.
\end{equation}
\end{itemize}
\end{lemma}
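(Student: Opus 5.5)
The three bounds are elementary consequences of the explicit method-of-images formula \eqref{heat_kernel}, so I will work directly with the lattice sum, dominating it by a Gaussian integral to separate the scales $t^{1/2}$ and $L$. The inputs are two facts: the elementary Gaussian lattice-sum bound, and the fact that the Brownian bridge increments are Gaussian on the covering space $\R^d$.

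For (i), nonnegativity is obvious termwise. For the upper bound I reduce to $x\in[-L/2,L/2]^d$. Then $|x-Ln|\geq L|n|_\infty/2$ for $n\neq0$, up to dimensional constants. The sum factorises coordinatewise into one-dimensional sums $\sum_{k\in\Z}\ee^{-c L^2 k^2/t}$. Each such sum is bounded by $1+C\sqrt{t}/L$, comparing with an integral, so the total is $\lesssim_d (2\pi t)^{-d/2}\prod(1+\sqrt t/L)\lesssim t^{-d/2}+L^{-d}$. The last step uses $(1+a)^d\lesssim 1+a^d$.

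Part (ii) is the same computation with $x=0$ and the $n=0$ term removed. Each remaining term has $|Ln|\geq L$, so $\eta^t\lesssim t^{-d/2}\sum_{n\neq0}\ee^{-L^2|n|^2/(2t)}$. If $t\leq L^2$, this is $\lesssim t^{-d/2}\ee^{-L^2/(4t)}\lesssim L^{-d}$, using $s^{d/2}\ee^{-s/4}\lesssim1$. If $t\geq L^2$, the integral comparison of (i) gives $\lesssim t^{-d/2}(\sqrt t/L)^d=L^{-d}$.

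For (iii), I lift to $\R^d$. The Wiener measure $\mathbb{W}^t_{y,x}$ on the torus is the sum over $n\in\Z^d$ of the free Wiener measures on $\R^d$ from $x$ to $y+Ln$, each of total mass $(2\pi t)^{-d/2}\ee^{-|x-y-Ln|^2/2t}$. On each lifted path I bound $|\omega(s_1)-\omega(s_2)|_L\leq|\tilde\omega(s_1)-\tilde\omega(s_2)|$. Under the free Brownian bridge from $x$ to $\tilde y=y+Ln$, the increment is Gaussian with mean $(\tilde y-x)(s_2-s_1)/t$ and variance at most $s_2-s_1$. Its expected absolute value is therefore $\lesssim (s_2-s_1)^{1/2}+|\tilde y-x|(s_2-s_1)/t$.

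Summing over $n$ with the Gaussian weights, the first term yields $\psi^t(y-x)(s_2-s_1)^{1/2}$, which (i) bounds as required. The second term produces $\sum_n (2\pi t)^{-d/2}|x-\tilde y|\,\ee^{-|x-\tilde y|^2/2t}$.

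This second sum is the main obstacle: I need to bound it by $(L^{-d}+t^{-d/2})(|x-y|_L+\cdots)$. I choose the representative with $|x-y|=|x-y|_L$ and split $|x-\tilde y|\leq|x-y|_L+L|n|$. The $|x-y|_L$ part gives $\psi^t$ times $|x-y|_L$, as desired. For the $L|n|$ part, $L|n|\,\ee^{-|x-\tilde y|^2/2t}\lesssim \sqrt t\,\ee^{-|x-\tilde y|^2/4t}$ for $n\neq0$, since $|x-\tilde y|\gtrsim L|n|$ there. That piece is then $\lesssim\sqrt t\,(L^{-d}+t^{-d/2})$, and multiplied by $(s_2-s_1)/t\leq ((s_2-s_1)/t)^{1/2}$ it becomes $\lesssim (L^{-d}+t^{-d/2})(s_2-s_1)^{1/2}$. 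So it is absorbed into the first term, and summing all contributions proves (iii).
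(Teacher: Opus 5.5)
Your proof is correct. For (i) and (ii) you follow the paper in substance. The paper also works directly with the image sum \eqref{heat_kernel}: it proves (i) by induction on $d$ via the same coordinatewise factorisation plus a Riemann-sum comparison, and (ii) by counting lattice points in shells $|n|\in[k,k+1)$ followed by a Riemann sum of mesh $t^{-1/2}$.

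Two small remarks on these parts. In (i), the factorisation actually uses the coordinatewise bound $|x_i-Ln_i|\geq L|n_i|/2$ for $n_i\neq0$, not the $\ell^\infty$ bound you state; this is immediate for $x\in[-L/2,L/2]^d$. In (ii), your case split $t\leq L^2$ versus $t\geq L^2$ has an advantage. It treats explicitly the regime $t\ll L^2$, where a Riemann sum of mesh $t^{-1/2}$ no longer approximates $\int_0^\infty y^{d-1}\ee^{-L^2y^2/2}\,\dd y$. There the bound really comes from $t^{-d/2}\ee^{-L^2/(4t)}\lesssim L^{-d}$, as in your argument.

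The genuine difference is (iii). The paper gives no argument there: it cites \cite[Lemma 5.4 (ii)]{FKSS_2022}, which is proved for $L=1$, and asserts that the general-$L$ version follows once (i) is known. You instead prove it directly:
\begin{itemize}
\item You lift to $\R^d$ via $\mathbb{W}^{t}_{y,x}=\sum_{n\in\Z^d}(\pi_L)_{\sharp}\mathbb{W}^{\infty,t}_{y+Ln,x}$. This is Lemma \ref{Wiener_measure_pi_L}, which the paper proves later and independently of the present lemma, so there is no circularity.
\item You use the explicit mean $(\tilde y-x)(s_2-s_1)/t$ and variance at most $s_2-s_1$ of free bridge increments.
\item You handle the winding part of the drift by choosing the minimal representative of $x-y$. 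Then $|x-y-Ln|\geq L|n|/(2\sqrt d)$ for $n\neq0$, exactly as in the proof of Lemma \ref{Riemann_sums_estimate}.
\item You absorb $L|n|\,(s_2-s_1)/t$ into the diffusive term using $(s_2-s_1)/t\leq((s_2-s_1)/t)^{1/2}$ and (i) at time $2t$.
\end{itemize}
This gives a self-contained proof within the paper's own framework. It also shows transparently why only the periodic distance $|x-y|_L$, not the lifted distance $|x-y-Ln|$, survives in the drift term. The paper's route is shorter, but it leaves the reader to check that the $L=1$ argument carries over.
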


We henceforth fix $\nu, \zeta > 0$, and consider $z \equiv z(\nu, \zeta)$ as in \eqref{fugacity}.
In what follows, we work with the measure defined on loops of duration $T$ given by
\begin{equation}
\label{W^T_definition}
\mathbb{W}^T(\dd \omega) \coloneqq \int _{\Lambda} \dd x \, \mathbb{W}^T_{x,x}(\dd \omega)\,.
\end{equation}
Recalling \eqref{fugacity} and using \eqref{W^T_definition}, the \emph{single-loop measure} $\mathbb{L}^{\nu,\zeta}(\dd \omega)$ is defined as\footnote{Our notational choices in \eqref{fugacity}--\eqref{fugacity_condition} were taken so that the single loop measure \eqref{single_loop_measure} has a simple form. The latter is a crucial object in our analysis.}
\begin{equation}
\label{single_loop_measure}
\mathbb{L}^{\nu,\zeta}(\dd \omega)\coloneqq \nu \sum_{T \in \nu \N^*} \frac{z^T}{T} \, \mathbb{W}^T(\dd \omega)\,.
\end{equation}
Given $\omega \in \Omega^{T(\omega)}, \tilde{\omega} \in \Omega^{T(\tilde{\omega})}$ with $T(\omega), T(\tilde{\omega}) \in \nu \N^*$, we define their \emph{interaction} by 
\begin{equation}
\label{loop_interaction}
\mathcal{V}^{\nu}(\omega,\tilde{\omega})\coloneqq \frac{1}{\nu} \sum_{s \in \nu \N} \vec{1}_{s<T(\omega)} \sum_{\tilde{s} \in \nu \N} \vec{1}_{\tilde{s}<T(\tilde{\omega})} \int_{0}^{\nu}\dd t\, v\bigl(\omega(s+t)-\tilde{\omega}(\tilde{s}+t)\bigr)\,.
\end{equation}
Moreover, given $\omega \in \Omega^{T(\omega)}$ with $T(\omega) \in \nu \N$, we define its \emph{self-interaction} by
\begin{equation}
\label{self_interaction}
\mathcal{V}^{\nu}(\omega)\coloneqq \frac{1}{\nu} \sum_{s,s' \in \nu \N} \vec{1}_{s<s'<T(\omega)} \int_{0}^{\nu}\dd t\, v\bigl(\omega(s+t)-\omega(s'+t)\bigr)\,.
\end{equation}
Let us note that, by \eqref{self_interaction} we have 
\begin{equation}
\label{self_interaction_duration_nu}
\mathcal{V}^{\nu}(\omega)=0\,\, \text{ if }\,\, T(\omega)=\nu\,.
\end{equation}
Recalling \eqref{Omega_vec_T}, we consider $n \in \N^*$ and $\vec{\omega}=(\omega_1,\ldots,\omega_n) \in \Omega^{\vec{T}}$ such that $T_i=T(\omega_i) \in \nu \N^*$ for $i = 1, \ldots, n$. Using \eqref{loop_interaction}--\eqref{self_interaction}, we define for such $\vec{\omega}$ the interaction
\begin{equation}
\label{loop_interaction_2}
\mathcal{V}^{\nu}(\vec{\omega})\coloneqq \sum_{1 \leq i<j \leq n} \mathcal{V}^{\nu}(\omega_i,\omega_j)+\sum_{i=1}^{n} \mathcal{V}^{\nu}(\omega_i)\,.
\end{equation}
We define for $n \in \N^{*}, \vec{T}=(T_1,\ldots,T_n) \in (0,\infty)^n$ the following positive measure on  $\Omega^{\vec{T}}$.
\begin{equation}
\label{W_product}
\mathbb{W}^{\vec{T}}_{\vec{y},\vec{x}} (\dd \vec{\omega})\coloneqq \prod_{i=1}^{n}  \mathbb{W}^{T_i}_{y_i,x_i} (\dd \omega_i)\,.
\end{equation}  
The starting point of our analysis is the Ginibre representation \cite{Ginibre1,Ginibre2,Ginibre3,Ginibre}.

\begin{lemma}[Ginibre representation]
\label{Ginibre_representation} 
Let $z \equiv z(\nu,\zeta)$ be given by \eqref{fugacity}. Recalling the single-loop measure given by \eqref{single_loop_measure} and interaction given by \eqref{loop_interaction}, the following representations hold. 
\begin{itemize}
    \item [(i)] The grand-canonical partition function \eqref{grand_canonical_ensemble} can be written as
\begin{equation}
\label{Ginibre_representation_1}
\Xi^{\nu,\zeta}= 1 + \sum_{n=1}^{\infty} \frac{1}{n!} \int \mathbb{L}^{\nu,\zeta}(\dd \omega_1) \cdots\, \mathbb{L}^{\nu,\zeta}(\dd \omega_n) \,\exp\bigl(-\mathcal{V}^{\nu}(\vec{\omega})\bigr)\,.
\end{equation}
\item [(ii)] For $\vec{\omega}=(\omega_1,\ldots,\omega_p)$, where $\omega_i \in \Omega^{T(\omega_i)}$ and $T(\omega_i) \in \nu \N^*$ for $i =1,\ldots,p$, we define
\begin{equation}\label{Ginibre_representation_3}
    \Xi^{\nu,\zeta}(\vec{\omega})\coloneqq \sum_{n=0}^{\infty} \frac{1}{n!} \int \mathbb{L}^{\nu,\zeta}(\dd \widetilde{\omega}_1) \cdots\, \mathbb{L}^{\nu,\zeta}(\dd \widetilde{\omega}_n) \,\exp\bigl(-\mathcal{V}^{\nu}(\vec{\widetilde{\omega}}\vec{\omega})\bigr)\,,
\end{equation}
where in analogy with \eqref{concatenation}, we define the concatenation of the vectors of paths given by $\vec{\widetilde{\omega}}=(\widetilde{\omega}_1,\ldots,\widetilde{\omega}_n)$ and  $\vec{\omega}=(\omega_1,\ldots,\omega_p)$ as
\begin{equation}
\label{concatenation_omega}
\vec{\widetilde{\omega}}\vec{\omega}\coloneqq (\widetilde{\omega}_1,\ldots,\widetilde{\omega}_n,\omega_1,\ldots,\omega_p)\,.
\end{equation}
The kernel of the reduced $p$-particle density matrix \eqref{reduced_p_particle_density_matrix} can be written as
\begin{equation}
\label{Ginibre_representation_4}
\Gamma_p^{\nu,\zeta}(\vec{x},\vec{y})=\sum_{\pi \in S_p}\sum_{\vec{k} \in (\N^{*})^p} \prod_{i=1}^p z^{k_i \nu} 
\int \mathbb{W}^{\nu \vec{k}}_{\pi \vec{y}, \vec{x}}(\dd \vec{\omega}) \,\frac{\Xi^{\nu,\zeta}(\vec{\omega})}{\Xi^{\nu,\zeta}}\,, \quad \vec{x},\vec{y} \in \Lambda^p\,. 
\end{equation}
\end{itemize}
Note that by \eqref{Ginibre_representation_3}, we have $\Xi^{\nu,\zeta} \equiv \Xi^{\nu,\zeta}(\emptyset),$ where $\emptyset$ denotes the empty vector. In \eqref{Ginibre_representation_4}, we also recall \eqref{W_product}.
\end{lemma}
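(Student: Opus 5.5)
We will prove the Ginibre representation by a Feynman--Kac expansion combined with a decomposition of permutations into cycles.

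\textbf{Partition function.} For (i), we first write the trace over $\mathcal{H}_n^{\BB}$ as
\begin{equation*}
\Tr_{\mathcal{H}_n}\bigl(\ee^{-H_n^\nu}\bigr)=\frac{1}{n!}\sum_{\pi\in S_n}\int_{\Lambda^n}\dd\vec{x}\;\ee^{-H_n^\nu}(\vec{x},\pi\vec{x})\,,
\end{equation*}
using $P_n^+$ from \eqref{symmetric_projector}. Next we apply the Trotter product formula with time step $1$ split into $k$ pieces. Equivalently, we apply the Feynman--Kac formula to the kernel of $\ee^{-H_n^\nu}$, with diffusion constant $\nu$: rescaling time by $t\mapsto \nu t$ turns $\ee^{\nu\Delta/2}$ into Brownian motion run for time $\nu$. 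This expresses $\ee^{-H_n^\nu}(\vec{x},\pi\vec{x})$ as an integral over $n$ Brownian paths of duration $\nu$ under $\prod_i \mathbb{W}^\nu_{x_{\pi(i)},x_i}$, weighted by
\begin{equation*}
\exp\Bigl(-\tfrac{1}{\nu}\int_0^\nu \dd t\sum_{i<j}v\bigl(\omega_i(t)-\omega_j(t)\bigr)\Bigr)\,.
\end{equation*}
The Trotter/Feynman--Kac step is justified because $v$ is bounded and continuous, by Assumption \ref{Assumption_on_v}.

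\textbf{Gluing into loops.} We decompose $\pi$ into cycles. A cycle of length $k$ glues $k$ consecutive paths of duration $\nu$ into a single loop of duration $T=k\nu$, using the Markov property \eqref{W^T}, i.e.\ the Chapman--Kolmogorov relation for the heat kernel. The integral over the cycle's starting points then becomes $\mathbb{W}^{T}$ as in \eqref{W^T_definition}. Interactions between pieces $i,j$ become the segment-by-segment interactions \eqref{loop_interaction}--\eqref{self_interaction}: pieces at times $s,\tilde s\in\nu\N$ interact through $\frac1\nu\int_0^\nu v(\omega(s+t)-\tilde\omega(\tilde s+t))\,\dd t$.

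\textbf{Combinatorics.} We count permutations with a prescribed cycle structure. Each $k$-cycle admits $k$ choices of starting point in its loop, which produces the factor $1/k=\nu/T$. Conjugacy-class counting turns $\frac1{n!}\sum_{\pi}$ into $\sum_{N}\frac1{N!}$ over unordered collections of $N$ loops with arbitrary lengths. The fugacity $z^{n\nu}=\prod_{\text{loops}}z^{T}$ completes the single-loop measure \eqref{single_loop_measure}. Summing over $n$ then gives \eqref{Ginibre_representation_1}. Interchanging the sums is legitimate since everything is nonnegative; finiteness follows from stability \eqref{stable_potential_condition} and $z<1$.

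\textbf{Density matrices.} For (ii) we proceed similarly, starting from
\begin{equation*}
\frac{(n+p)!}{n!}\,\Tr_{p+1,\ldots,n+p}\,\rho_{n+p}\,,
\end{equation*}
with kernel
\begin{equation*}
\frac{1}{n!\,\Xi}\sum_{\sigma\in S_{n+p}}\int \ee^{-H^\nu_{n+p}}(\vec{x}\,\widetilde{\vec{x}},\sigma(\vec{y}\,\widetilde{\vec{x}}))\,z^{(n+p)\nu}\,\dd\widetilde{\vec{x}}\,.
\end{equation*}
Following $\sigma$ starting from each $x_i$, we pass through some traced variables and eventually exit at some $y_{\pi(i)}$. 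This produces $p$ open paths in $\Omega^{\nu k_i}_{y_{\pi(i)},x_i}$ together with closed cycles among the remaining traced variables. Those closed cycles are glued into loops exactly as in (i) and give $\Xi^{\nu,\zeta}(\vec\omega)$. For the open paths there is no $1/k$ factor, since their starting points are marked. The $n!$ orderings of the intermediate traced variables cancel the prefactor $1/n!$. This yields \eqref{Ginibre_representation_4}.

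The \textbf{main obstacle} is this combinatorial bookkeeping: checking that the map from $(\sigma,\widetilde{\vec{x}})$ to (open paths, loops) has exactly the claimed multiplicities. We will handle it by an explicit bijection, choosing for each cycle the smallest traced label as a canonical starting point. All analytic issues are mild, as they are covered by boundedness of $v$ and positivity.
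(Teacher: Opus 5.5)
Your argument is correct. In substance it is Ginibre's derivation from \cite[Section 2]{Ginibre}, which the paper invokes but does not reproduce. Your ``marked starting points'' bookkeeping in (ii) is right once made precise. Fix $\pi$, $\vec k$, and a cycle type $(m_j)$ for the $m=n-\sum_i(k_i-1)$ remaining traced labels. There are $\frac{n!}{m!}$ ways to place the intermediate labels and $\frac{m!}{\prod_j j^{m_j}m_j!}$ ways to permute the rest, so after the prefactor $\frac1{n!}$ exactly the loop weights of $\Xi^{\nu,\zeta}(\vec\omega)$ remain.

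The paper's own proof in Appendix \ref{Ginibre_representation_appendix} takes a different route. It imports the representation from \cite[Appendix A.2]{FKSS_2023}, where the Hamiltonian is $\frac12\sum_{i,j}v(x_i-x_j)$ with the diagonal included and a loop carries the self-interaction $\frac12\mathcal{V}^{\nu}(\omega,\omega)$. It then checks that the resulting shifts $\frac N2 v(0)$ and $\frac{T(\omega)}{2\nu}v(0)$ in \eqref{n_body_Hamiltonian_difference} and \eqref{self_interaction_difference} cancel, because $N=\sum_i T(\omega_i)/\nu$.

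Your direct route needs none of this. Starting from $\sum_{i<j}$, you land on the $s<s'$ self-interaction \eqref{self_interaction} with no $v(0)$ terms. It also covers stable, sign-indefinite $v$ directly, since positivity of $\ee^{-\mathcal{V}^{\nu}}$ is all Tonelli requires; the cited summary in \cite{FKSS_2023} is written for $v\geq 0$. The paper's route is shorter, but it outsources the combinatorics to the literature.

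One correction: finiteness does not follow from stability and $z<1$ when $B>0$. Stability only gives $\Xi^{\nu,\zeta}\leq\exp\bigl(\sum_{\ell\geq1}\frac{(z^{\nu}\ee^{B})^{\ell}}{\ell}\,\psi^{\ell\nu}(0)L^d\bigr)$. Since $\psi^{\ell\nu}(0)L^d\geq 1$, this bound is finite only if $z^{\nu}\ee^{B}=\zeta\ee^{B}\nu^{d/2}<1$, i.e.\ for $\nu$ small as in \eqref{nu_small} (compare \eqref{Ginibre_representation_1_new_C}). This does not affect (i), which holds as an identity in $[0,\infty]$ by Tonelli. In (ii), however, you need $\Xi^{\nu,\zeta}<\infty$ for $\rho_n$ and the quotient $\Xi^{\nu,\zeta}(\vec\omega)/\Xi^{\nu,\zeta}$ to be defined at all.
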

The result of Lemma \ref{Ginibre_representation} follows from the analysis in \cite[Section 2]{Ginibre}. In particular, the stability condition given by Assumption \ref{Assumption_on_v} (i) corresponds to condition B) in \cite[(2.4)]{Ginibre}. The analysis there is done in this general setting. 
We refer the reader to \cite[Appendix A.2]{FKSS_2023} for a summary of the proof of Lemma \ref{Ginibre_representation}, under slightly different conventions in \eqref{Hamiltonian_H}, \eqref{n_body_Hamiltonian}, \eqref{self_interaction}
 and \eqref{loop_interaction_2} above. Furthermore, this is done for pointwise nonnegative potentials. For completeness and for the reader's convenience, in Appendix \ref{Ginibre_representation_appendix} below, we explain how to reconcile the conventions from \cite{FKSS_2023} and to obtain Lemma \ref{Ginibre_representation} from the analysis summarised in \cite[Appendix A.2]{FKSS_2023}. For a more general discussion on loop ensembles, we refer the reader to \cite[Section 1]{FKSS_2023}.

\subsubsection{The infinite-volume limit of the free energy for Dirichlet boundary conditions}

In addition to the results stated in Section \ref{The_infinite volume_(bosonic)_regime}, we revisit the subadditivity argument from \cite{Ginibre} used to show the existence of the infinite-volume limit of the free energy in our context when one imposes Dirichlet boundary conditions.  We introduce some more terminology and notation.
Our starting point is the Ginibre representation \eqref{Ginibre_representation_1} from Lemma \ref{Ginibre_representation} (i) above.

Let us define
\begin{equation}
\label{I^{L}}
 \mathscr{I}_L \coloneqq \biggl\{\omega \in \bigcup_{T > 0} \Omega^{\infty,T}\,,\,\omega(t) \in \Lambda_{L}\,\, \forall\, t \in [0,T(\omega)]\biggr\}\,.
\end{equation}
With $\mathscr{I}_L$ given by \eqref{I^{L}}, we modify the definition of the loop measure \eqref{single_loop_measure} and define
\begin{equation}
\label{single_loop_measure_bar}
\overline{\mathbb{L}}^{\,\nu,\zeta,L}(\dd \omega)\coloneqq \nu \sum_{T \in \nu \N^*} \frac{z^T}{T} \, \int_{\Lambda_L} \dd x\,\int \mathbb{W}^{\infty,T}_{x,x}(\dd \omega)\,\mathbf{1}_{\mathscr{I}_L}(\omega)\,.
\end{equation}
In \eqref{single_loop_measure_bar}, we also recalled \eqref{W^T_definition}, but now on $\Lambda=\R^d$. In other words, in \eqref{single_loop_measure_bar}, we integrate with respect to $\mathbb{W}^{\infty,T}_{x,x}(\dd \omega)$ all loops that lie completely in $\Lambda_L$.
Using \eqref{single_loop_measure_bar}, we modify the Ginibre representation \eqref{Ginibre_representation_1} of \eqref{grand_canonical_ensemble} and define
\begin{equation}
\label{Ginibre_representation_1_bar}
\overline{\Xi}^{\,\nu,\zeta,L} \coloneqq  1 + \sum_{n=1}^{\infty} \frac{1}{n!} \int \overline{\mathbb{L}}^{\,\nu,\zeta,L}(\dd \omega_1) \cdots\, \overline{\mathbb{L}}^{\,\nu,\zeta,L}(\dd \omega_n) \,\exp\bigl(-\mathcal{V}^{\nu,\infty}(\vec{\omega})\bigr)\,.
\end{equation}
We emphasise that the bar in \eqref{single_loop_measure_bar}--\eqref{Ginibre_representation_1_bar} does not denote a complex conjugate.
In \eqref{Ginibre_representation_1_bar} and onwards, we use $\mathcal{V}^{\nu,\infty}(\vec{\omega})$ to denote \eqref{loop_interaction_2} with interaction potential $v^{\infty} \equiv v$ in \eqref{loop_interaction}--\eqref{self_interaction}. This is in contrast with $\mathcal{V}^{\nu,L}(\vec{\omega})$ where the interaction potential in \eqref{loop_interaction}--\eqref{self_interaction} is given by \eqref{v^L}.

\begin{remark}
\label{Dirichlet_boundary_condition_remark}
Let us note that \eqref{Ginibre_representation_1_bar} corresponds to the Ginibre representation of the grand-canonical partition function \eqref{grand_canonical_ensemble}, where in \eqref{n_body_Hamiltonian}, we consider \emph{Dirichlet boundary conditions} instead of periodic boundary conditions. Here, one considers $d \geq 2$. This is the context of the analysis in \cite{Ginibre}; see Appendix \ref{Dirichlet_boundary_conditions} below for a precise summary.
\end{remark}

\begin{proposition}
\label{Infinite_volume_Theorem_1}
Suppose that \eqref{smallness_v_kappa} holds. Then, we have the following.
\begin{itemize}
\item[(i)] With $ \overline{\Xi}^{\,\nu,\zeta,L}$ defined as in \eqref{Ginibre_representation_1_bar}, the limit 
\begin{equation}
\label{Theorem_1.16_ii}
\lim_{L \rightarrow \infty} \frac{\log \overline{\Xi}^{\,\nu,\zeta,L}}{|\Lambda_L|}
\end{equation}
exists and is finite.
\item[(ii)] We have
\begin{equation}
\label{Infinite_volume_Theorem_1_Star_1}
\lim_{L \rightarrow \infty} \left(\frac{\log \Xi^{\nu,\zeta,L}}{|\Lambda_L|}-\frac{\log \overline{\Xi}^{\,\nu,\zeta,L}}{|\Lambda_L|}\right)=0\,,
\end{equation}
uniformly in $\nu \in (0,\nu_0]$.
In particular by \eqref{Infinite_volume_Theorem_1_Star_1} and Theorem \ref{free_energy_infinite_volume} (i), the limit \eqref{Theorem_1.16_ii} is uniform in $\nu \in (0,\nu_0]$ and equals \eqref{free_energy_infinite_volume_i}.
\end{itemize}
\end{proposition}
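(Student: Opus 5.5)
Both parts rest on the Ginibre loop representation together with a cluster expansion in the polymer (loop) gas. The interaction is nonnegative, so the Boltzmann factors $\ee^{-\mathcal{V}^\nu(\omega_i,\omega_j)}-1$ lie in $[-1,0]$. This puts us in the setting of the Koteck\'y--Preiss / Penrose tree-graph bound. The approach is as follows.

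First I will check the Koteck\'y--Preiss criterion with weight $a(\omega)=\varepsilon T(\omega)$. We need
\[
\int \mathbb{L}(\dd\tilde\omega)\,\bigl|\ee^{-\mathcal{V}^\nu(\omega,\tilde\omega)}-1\bigr|\,\ee^{a(\tilde\omega)} \le a(\omega),
\]
uniformly in $\nu$ and $L$, and for both the periodic measure and the Dirichlet measure $\overline{\mathbb{L}}^{\nu,\zeta,L}$. To bound the left-hand side, use $|\ee^{-x}-1|\le x$ for $x \ge 0$ and \eqref{loop_interaction}. Translation invariance of $\mathbb{W}^T$ then gives a bound of the form
\[
\mathcal{V}^\nu \text{ averaged over } \tilde\omega \le \frac{T(\omega)\,T(\tilde\omega)}{\nu}\,\frac{\|v\|_{L^1}}{|\Lambda|}\,|\Lambda|\,\psi^{T(\tilde\omega)}(0)\times\nu .
\]
Summing over $T(\tilde\omega)\in\nu\N^*$ with weight $\nu z^{T}/T$, and using $z^{\nu k}=\zeta^k\nu^{dk/2}$ together with $\psi^{\nu k}(0)\lesssim (\nu k)^{-d/2}+L^{-d}$ from Lemma \ref{heat_kernel_estimates}, we obtain $\lesssim T(\omega)\|v\|_{L^1}\sum_k \zeta^k\ee^{\varepsilon\nu k}$. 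With $\kappa>1$ and $\varepsilon\nu\le 1$, this is $\lesssim T(\omega)\|v\|_{L^1}\,\ee^{-(\kappa-1)}/(1-\ee^{-(\kappa-1)})$. This is $\le \varepsilon T(\omega)$ exactly under \eqref{smallness_v_kappa}, after choosing $\varepsilon$ of order one (the $+1$ in \eqref{smallness_v_kappa} absorbs the $\ee^{a}$ factor). The Dirichlet measure is dominated pointwise by the infinite-volume one, so the same bound holds there. The self-interaction $\ee^{-\mathcal{V}^\nu(\omega)}\le 1$ is simply absorbed into the activity.

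With convergence in hand, $\log \Xi$ and $\log\overline{\Xi}$ are given by absolutely convergent sums over connected clusters. Each cluster contributes a translation-invariant term per unit volume, plus corrections from clusters that feel the boundary: loops that wind around the torus, the $L^{-d}$ part $\eta^t$ of the heat kernel, the tail $v^L-v$, and, in the Dirichlet case, loops that are killed near $\partial\Lambda_L$.

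For (i), I would show that $\frac1{|\Lambda_L|}\log\overline\Xi^{\,\nu,\zeta,L}$ converges to the infinite-volume cluster series density. The series is $\sum$ over clusters rooted at a loop through the origin, which converges by the Koteck\'y--Preiss bound. The error comes only from clusters within distance $R$ of $\partial\Lambda_L$, which carry volume $O(RL^{d-1})$, plus clusters of large spatial extent. The latter are controlled by the decay of $v$ from \eqref{v_assumption_iii} and by Gaussian bounds on loop diameters. Alternatively one could use Ginibre's subadditivity, but the cluster route also gives (ii).

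For (ii), I run the same comparison for the periodic partition function $\Xi^{\nu,\zeta,L}$ against the same infinite-volume density, and verify that the errors are $o(1)$ uniformly in $\nu\in(0,\nu_0]$. This is the main obstacle. Uniformity in $\nu$ requires that boundary-sensitive contributions (winding loops, $\eta^{\nu k}\lesssim L^{-d}$, and $v^L-v$, which is small in $L^1$ of the cell by \eqref{v_assumption_iii}) be bounded with $\nu$-independent constants. I would organise this through the weights $\zeta^k\nu^{dk/2}\psi^{\nu k}$, which are summable uniformly in $\nu$. For the probability that a loop of duration $\nu k$ has diameter $\ge R$, I would use Gaussian tail bounds of the type $\ee^{-R^2/(C\nu k)}$, summed against $\zeta^k$. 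This gives errors $\lesssim L^{-1}R+\ee^{-cR}+R^{-\varepsilon_0}$ uniformly in $\nu$, and we then optimise over $R$. The uniform limit statement then follows from Theorem \ref{free_energy_infinite_volume}(i).
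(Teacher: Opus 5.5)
Your Koteck\'y--Preiss check has the wrong $\nu$-scaling, and as stated it fails exactly where uniformity in $\nu$ is needed. In \eqref{loop_interaction}, each of the $T(\omega)/\nu$ time slices of $\omega$ interacts with each of the $T(\tilde\omega)/\nu$ slices of $\tilde\omega$ through an $O(1)$ time average of $v$. Hence
\[
\int \mathbb{W}^{\tilde T}(\mathrm{d}\tilde\omega)\,\mathcal{V}^\nu(\omega,\tilde\omega)=\frac{T(\omega)\,\tilde T}{\nu^{2}}\,\|v\|_{L^1}\,\psi^{\tilde T}(0)\,,
\]
and not $T(\omega)\tilde T\|v\|_{L^1}\psi^{\tilde T}(0)$. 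Sum against $\nu z^{\tilde T}/\tilde T$ and use $z^{\nu k}\psi^{\nu k}(0)\lesssim\zeta^k$. The left-hand side of your criterion is then of order $\frac{T(\omega)}{\nu}\|v\|_{L^1}\sum_k\zeta^k\mathrm{e}^{\varepsilon\nu k}$; compare Lemma \ref{Lemma_5.16*} (i) with $q=0$. This order is genuinely attained. For $T(\omega)=\nu$, the loops $\tilde\omega$ of duration $\nu$ alone contribute a positive amount bounded below uniformly in small $\nu$, roughly $\zeta(2\pi)^{-d/2}\int(1-\mathrm{e}^{-v})$. Your right-hand side, by contrast, is $\varepsilon\nu$. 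So no $\varepsilon$ of order one works.

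The weight must count the particles in the cycle, $a(\omega)=T(\omega)/\nu$, as in the proof of Proposition \ref{gamma_p_cluster_expansion}. Then $\mathrm{e}^{a(\tilde\omega)}=\mathrm{e}^{\tilde k}$ turns the sum into $\sum_k(\mathrm{e}\zeta)^k$. That is where $\kappa>1$ and the factor $(\kappa-1)^{-1}$ in \eqref{smallness_v_kappa} come from, not from an absorption by the ``$+1$''. The ``$+1$'' is only needed for open paths, via $\|v^L\|_{L^\infty}$ in Lemma \ref{Lemma_5.16*} (ii). Your own step ``$\varepsilon\nu\le 1$, so $\mathrm{e}^{\varepsilon\nu k}\le\mathrm{e}^{k}$'' is sharp only when $\varepsilon=1/\nu$, which exposes the inconsistency. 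Everything downstream rests on this criterion: absolute convergence of the rooted series, and smallness of boundary layers, long loops, winding loops and tails of $v$, all uniformly in $\nu\in(0,\nu_0]$. So it must be redone with $a=T/\nu$. After that, all weights are sums of $(\mathrm{e}\zeta)^k$ type, uniform in $\nu$ and $L$.

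With that fix, your strategy works, but it is not the paper's. You identify both $\lim|\Lambda_L|^{-1}\log\overline{\Xi}^{\,\nu,\zeta,L}$ and $\lim|\Lambda_L|^{-1}\log\Xi^{\nu,\zeta,L}$ with the rooted infinite-volume cluster series. For the periodic case this is exactly Steps 1--6 of the proof of Theorem \ref{free_energy_infinite_volume}; for the Dirichlet case you must redo the boundary-layer and large-cluster analysis. That redo is easy, since for roots at distance $\ge R$ from $\partial\Lambda_L$ and clusters of diameter $<R$ all indicators $\mathbf{1}_{\mathscr{I}_L}$ equal $1$.

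The paper instead proves (i) softly for fixed $\nu$, using the subadditivity you mention only as an alternative. It first truncates $v$ to $v_{(R)}=v\mathbf{1}_{|x|<R}$. The cluster expansion shows this changes $|\Lambda_L|^{-1}\log\overline{\Xi}^{\,\nu,\zeta,L}$ by $O(\|v-v_{(R)}\|_{L^1}+\|v-v_{(R)}\|_{L^\infty})$, uniformly in $L$. It then uses $\overline{\Xi}^{\,\nu,\zeta,\Lambda^{(1)}\cup\Lambda^{(2)}}_{(R)}\ge\overline{\Xi}^{\,\nu,\zeta,\Lambda^{(1)}}_{(R)}\,\overline{\Xi}^{\,\nu,\zeta,\Lambda^{(2)}}_{(R)}$ for $R$-separated regions, together with a Fekete-type argument. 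For (ii), the paper passes through $\widecheck{\Xi}^{\nu,\zeta,L}$ (periodic loops with unperiodised $v$), which Steps 1--4 handle. It then compares the single-loop measures $\widecheck{\mu}^L$ and $\overline{\mu}^L$ directly via Lemma \ref{Wiener_measure_pi_L}. Non-winding loops that leave $\Lambda_L$ give $o_L(1)$ once a boundary strip of relative volume $O(\delta)$ is discarded, and winding loops give $O(L^{-2})$, uniformly in $\nu$.

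Your route yields an explicit formula for the limit and gets (i), (ii) and Theorem \ref{free_energy_infinite_volume} (i) in one sweep. The paper's route obtains existence in (i) without identifying the limit, and uses the cluster expansion only for perturbative comparisons.
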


We give the proof of Proposition \ref{Infinite_volume_Theorem_1} in Appendix \ref{Dirichlet_boundary_conditions} below. The proof of Proposition \ref{Infinite_volume_Theorem_1} (i) is based on a variant of Fekete's subadditivity lemma. This is possible because the interaction potential on $\Lambda_L$ is given by $v$ instead of \eqref{v^L}.  Proposition  \ref{Infinite_volume_Theorem_1} (ii) allows us to connect it with our original free energy \eqref{specific_Gibbs_potential} as stated above. In particular, this result tells us that in the infinite-volume limit, it does not matter whether we consider periodic or Dirichlet boundary conditions in the finite volume. For completeness, we present the details in Appendix \ref{Dirichlet_boundary_conditions}.

\subsection{Main ideas of the proof}

Let us illustrate the heuristic ideas of our method on the proof of the convergence \eqref{partition_function_result_1} in Theorem \ref{partition_function_result}. We simplify the analysis and consider $v \geq 0$, instead of $v$ of stable type as in Assumption \ref{Assumption_on_v} above. The latter generalisation requires additional considerations, but does not change the argument in a substantial way. For simplicity, we set $L=1$.
In what follows, we use the shorthand $A \thickapprox B$ for quantities $A$ and $B$ depending on $\nu$ to mean that $A-B \rightarrow 0$ as $\nu \rightarrow 0$.
Starting from the Ginibre representation \eqref{Ginibre_representation_1} in Lemma \ref{Ginibre_representation} (i) and recalling \eqref{single_loop_measure},  we write
\begin{equation}
\label{Heuristic_1}
\Xi^{\nu,\zeta} = \sum_{n=0}^{\infty} \frac{1}{n!} \left[\prod_{k=1}^{n} \nu \sum_{T_k \in \nu \N^*} \frac{z^{T_k}}{T_k}\,\int_{\Lambda} \dd x_k \int \mathbb{W}^{T_k}_{x_k,x_k}(\dd \omega_k)\right]\, \exp\bigl(-\mathcal{V}^{\nu}(\vec{\omega})\bigr) \,.
\end{equation}
We interpret 
\begin{equation}
\label{Heuristic_1B}
\left[\prod_{k=1}^{n} \nu \sum_{T_k \in \nu \N^*} \frac{z^{T_k}}{T_k}\,\int_{\Lambda} \dd x_k \int \mathbb{W}^{T_k}_{x_k,x_k}(\dd \omega_k)\right]\, \exp\bigl(-\mathcal{V}^{\nu}(\vec{\omega})\bigr) 
\end{equation}
as a (formal) Riemann sum in $\R^n$ of mesh size $\nu$. It does not correspond to a convergent integral though (even when $v=0$). Using \eqref{heat_kernel_integral}, Lemma \ref{heat_kernel_estimates} (i) and $v \geq 0$, we get that

\begin{equation}
\label{Heuristic_2}
\eqref{Heuristic_1B} \lesssim \sum_{n=0}^{\infty} \frac{1}{n!} \left[\prod_{k=1}^{n} \nu \sum_{T_k \in \nu \N^*} z^{T_k}\,\left(1+\frac{1}{T_k^{1+d/2}}\right) \right].
\end{equation}
By Lemma  \ref{heat_kernel_estimates} (ii), we have that the opposite inequality in \eqref{Heuristic_2} holds when $v=0$.
We note that the function 
\begin{equation*}
t \in (0,\infty) \mapsto 1+\frac{1}{t^{1+d/2}}
\end{equation*}
is not integrable near the origin. Therefore, (at least when $v=0$) \eqref{Heuristic_1B} does not correspond to the Riemann sum of a convergent integral. We expect the main contribution to be the one coming from the most singular terms, i.e.\ when $T_k=\nu$ for all $k=1,\ldots,n$. This is made precise in Proposition \ref{Proposition_1} below, where it is shown that 
\begin{equation}
\label{Heuristic_1C}
\Xi^{\nu,\zeta} \thickapprox \widehat{\Xi}^{\nu,\zeta}\coloneqq  \sum_{n=0}^{\infty} \frac{1}{n!} \left[\prod_{k=1}^{n} z^{\nu}\,\int_{\Lambda} \dd x_k \int \mathbb{W}^{\nu}_{x_k,x_k}(\dd \omega_k)\right]\, \exp\bigl(-\mathcal{V}^{\nu}(\vec{\omega})\bigr) \,,
\end{equation}
together with a bound on the rate of convergence as $\nu \rightarrow 0$.

The next step is to note that for $\vec{\omega}=(\omega_1,\ldots,\omega_n)$ with $\omega_k \in \Omega^{\nu}_{x_k,x_k}$ for $k=1,\ldots,n$, we have
\begin{equation}
\label{Heuristic_2B}
\mathcal{V}^{\nu} (\omega_i,\omega_j) \thickapprox v(x_i-x_j)\,,
\end{equation}
for all $1 \leq i<j \leq n$. This follows formally from \eqref{loop_interaction} by considering averages in $t$. In particular, recalling \eqref{loop_interaction_2}, substituting \eqref{Heuristic_2B} into \eqref{Heuristic_1C}, and using  \eqref{heat_kernel_integral}, we get 

\begin{equation}
\label{Heuristic_1D}
\widehat{\Xi}^{\nu,\zeta} \thickapprox \sum_{n=0}^{\infty} \frac{1}{n!} \left[\prod_{k=1}^{n} z^{\nu}\,\psi^{\nu}(0)\,\int_{\Lambda} \dd x_k \right]\, \exp \left(-\sum_{1 \leq i < j \leq n} v(x_i-x_j)\right).
\end{equation}
We rigorously prove \eqref{Heuristic_1D} in \eqref{Proposition_2_5} of the proof of Proposition \ref{Proposition_2}  below. Here, we also prove a bound on the rate of convergence in \eqref{Heuristic_1D} as $\nu \rightarrow 0$. 
The main idea in this step is to use quantitative estimates on the displacement of Brownian motion given in Lemmas \ref{Lemma_2}--\ref{Lemma_3} below. Heuristically, we are replacing Brownian loops of duration $\nu$ by their basepoints.

Using Lemma \ref{heat_kernel_estimates} (ii), and recalling \eqref{fugacity_condition}, we get that 
\begin{multline}
\label{Heuristic_3}
\eqref{Heuristic_1D}
\thickapprox \sum_{n=0}^{\infty} \frac{1}{n!}\,\int_{\Lambda^n} \dd x_1\,\cdots\, \dd x_n\ \,(2\pi)^{-\frac{nd}{2}}\, \left(z^{\nu} \nu^{-d/2}\right)^n\, \exp \left(-\sum_{1 \leq i < j \leq n} v(x_i-x_j)\right) 
\\
=\sum_{n=0}^{\infty} \frac{1}{n!}\, \int_{\Lambda^n} \dd x_1\,\cdots\, \dd x_n\, (2\pi)^{-\frac{nd}{2}}\, \zeta^n \,\mathrm{exp} \Biggl(
- \sum_{1 \leq i<j \leq n} v(x_i-x_j)\Biggr) = \Xi^{0,\zeta}\,.
\end{multline}
The step \eqref{Heuristic_3} is done precisely in \eqref{Proposition_2_6} in the proof of Proposition \ref{Proposition_2} below. For the convergence in \eqref{Heuristic_3}, we also give an explicit bound on the rate.
We note that \eqref{fugacity_condition} was crucially used in \eqref{Heuristic_3}.
Putting everything together, we get the quantitative estimate \eqref{partition_function_result_3} in Theorem \ref{partition_function_result}, and in particular we get \eqref{partition_function_result_1}.

The analysis of the reduced $p$-particle density matrices in Theorem \ref{density_matrix_result} is done in a similar way, except that we now start from the Ginibre representation \eqref{Ginibre_representation_4} in Lemma \ref{Ginibre_representation} (ii). This requires us to eliminate both loops and open paths of duration greater than $\nu$ (similarly to \eqref{Heuristic_1C} above). These steps are made precise in Propositions \ref{Proposition_4_3_A} and \ref{Proposition_4_3} respectively. 

Let us comment on the conditions on $v$ given in Assumption \ref{Assumption_on_v}.
Since in Assumption \ref{Assumption_on_v} (i) we are in general working with $v$ which are \emph{stable} in the sense of Assumption \ref{Assumption_on_v} (i) (and not pointwise nonnegative), some additional care is needed to estimate the exponential term $\exp\bigl(-\mathcal{V}^{\nu}(\vec{\omega})\bigr)$ that occurs in the Ginibre representation. A lower bound on $\mathcal{V}(\vec{\omega})$ that is used to this end is shown in Lemma \ref{lemma_loop_interactions} below. The quantitative bounds on the rates of convergence that we obtain are obtained using the H\"{o}lder continuity of $v$ given in Assumption \ref{Assumption_on_v} (ii) (see Proposition \ref{Proposition_2} below).

When studying the infinite-volume limit, we use the method of \emph{cluster expansions}. We follow the setup of \cite{Ueltschi_2004}, which was also used on the lattice in \cite[Section 5]{FKSS_2023}. The setup of the cluster expansion is given in Proposition \ref{gamma_p_cluster_expansion} below. We study the cluster expansion by graphical methods. 
The starting point is Lemma \ref{tree_bound_estimate} which gives us a tree bound for the terms appearing in the expansion.
We then integrate over the trees using the integration algorithm obtained from Lemmas \ref{Lemma_5.16*} and \ref{integration_algorithm} below. As a result, we obtain uniform bounds in the volume on the free energy and reduced density matrices in Proposition \ref{Proposition_5.3}. In order to pass to the infinite-volume limit, we see that long paths and large-mass Brownian paths in the Ginibre representation give negligible contributions in the limit; this is explained precisely in Lemmas \ref{Lemma_5.20*} and \ref{Lemma_5.21*} below. As we are working with periodic boundary conditions, the proof of the latter results requires an additional step connecting Brownian motion with periodic boundary conditions with Brownian motion on all of $\R^d$. This is explained in Lemma \ref{Wiener_measure_pi_L}.

\subsection{Previously known results} Most of the existing literature has been devoted to the study of the \emph{mean-field limit}. The scaling in this setting is chosen differently. In particular, one takes $\lambda=\nu^2$ in \eqref{nu} and the chemical potential is fixed (i.e.\ one does not need to tune it as in \eqref{fugacity_condition} above). 
Instead of obtaining a classical particle theory as in the large-mass limit, in this setting one obtains a classical field theory, which is a nonlocal or local $\Phi^4$ theory.
When one is working in the continuum in dimensions $d>1$, the expectation of the rescaled particle number in the grand-canonical ensemble blows up as $\nu \rightarrow 0$. In order to obtain the suitable limit, one needs to apply a renormalisation procedure both on the quantum and classical level. 

In the mean-field limit, quantum Gibbs states give rise to classical Gibbs states. The latter have been studied in the context of Gibbs measures for nonlinear Schr\"{o}dinger equations, following the pioneering results of Bourgain in the 1990s \cite{Bourgain_1994,Bourgain_1996,Bourgain_1997}.
The first result in the study of the mean-field limit was the work of Lewin, Nam, and Rougerie \cite{LNR15} using variational methods and the quantum de Finetti theorem. Soon afterwards, an expansion method combined with graphical methods and Borel resummation was used by Fr\"{o}hlich, Knowles, Schlein, and the third author in \cite{FKSS17}. In \cite{FKSS17}, one could study translation-invariant interactions when $d=2,3$, albeit with a suitably modified quantum Gibbs state. The latter modification was removed later in the independent works \cite{LNR21} and \cite{FKSS_2020_1}. The methods of \cite{LNR21} build on those from \cite{LNR15} in a nontrivial way. The methods of \cite{FKSS_2020_1} are based on a functional integral approach. In \cite{FKSS_2022}, the authors could obtain the $\Phi^4_2$ theory in the mean-field limit. In subsequent work \cite{Nam_Zhu_Zhu} the $\Phi^4_3$ was obtained in the mean-field limit. We also refer the reader to the related works \cite{AFP24,AR21,CKRT,Dinh_Rougerie,FKSS18,FKSS_2020_2,Jougla_Rougerie,LNR18,LNR18b,LNR19,Nam_Zhu_Zhu_2,RS22,RS23,Sohinger_2019}.

The large-mass limit was studied on the lattice in \cite{FKSS_2023}. Here, the rescaling of the chemical potential is simpler than that given by \eqref{fugacity_condition}, as the heat kernel is bounded. The methods in \cite{FKSS_2023} allow one to study the mean-field limit as well. The setup of both limits is given in terms of convergence of loop ensembles. We also refer the reader to \cite[Section 3]{Knowles} and \cite{Salmhofer} for related results.

\subsection{Organisation of the paper}
\label{Organisation_of_the_paper}

We prove Theorem \ref{partition_function_result} in Section \ref{Proof_of_Theorem_1.3}. In Section \ref{section_density_matrix_Bose}, we prove Theorem \ref{density_matrix_result}. Both of these sections are devoted to the study of bosons in a finite volume. We prove analogous results for fermions in Theorems \ref{partition_function_result_Fermi} and \ref{density_matrix_result_Fermi} in Section \ref{Fermi_gas}. Section \ref{section_infinite_volume} is devoted to the study of the infinite-volum limit. Here, we prove Theorems \ref{Infinite_volume_Theorem_2} and \ref{free_energy_infinite_volume}. In Appendix \ref{appendix_heat_kernel_estimates}, we record estimates on the heat kernel that we use in the analysis. In particular, we prove Lemmas \ref{heat_kernel_estimates} and \ref{Riemann_sums_estimate}. In Appendix \ref{Ginibre_representation_appendix}, we comment on the difference on the conventions that we are using for the Ginibre representation in comparison to \cite{FKSS_2023}. In Appendix \ref{tuning_particle_density}, we explain how one can tune the particle density in the context of the large-mass limit of bosons. Finally, in Appendix \ref{Dirichlet_boundary_conditions}, we justify Remark \ref{Dirichlet_boundary_condition_remark}  and prove Proposition \ref{Infinite_volume_Theorem_1} stated above.

\subsubsection*{Acknowledgements} The authors would like to thank Zied Ammari, J\"{u}rg Fr\"{o}hlich, Antti Knowles, Rapha\"{e}l Lefevere, Tommaso Rosati, Benjamin Schlein, Roger Tribe, Daniel Ueltschi, and Oleg Zaboronski for helpful discussions during various stages of the project. 
They would also like to acknowledge the late Stefan Adams, who was the PhD supervisor of S.G., for sharing his insights. 
The authors would like to thank the anonymous referee for their careful reading of our manuscript and for their helpful comments.
G.S.\ is supported by the Warwick Mathematics Institute Centre for Doctoral Training, and gratefully acknowledges funding from the University of Warwick. S.G. was supported by the Warwick Mathematics Institute Centre for Doctoral Training, and gratefully acknowledges funding from the UK Engineering and Physical Sciences Research Council (Grant number: EP/W523793/1).
No Generative AI tools were used during the preparation of this work.

\section{Convergence of partition functions in finite volume: proof of Theorem \ref{partition_function_result}}
\label{Proof_of_Theorem_1.3}
In this section, we prove Theorem \ref{partition_function_result} on the convergence of the grand-canonical partition function of the interacting Bose gas to that of a classical gas in finite volume.  
Before proceeding with the proof, we note the following lower bound on $\mathcal{V}^\nu (\vec{\omega})$ defined in \eqref{loop_interaction_2}, which will be useful throughout the paper.
\begin{lemma}[Lower bound on $\mathcal{V}^{\nu}(\vec{\omega})$]
\label{lemma_loop_interactions}
	Recall \eqref{Omega_vec_T}. Let $\vec{\omega} = (\omega_1, \ldots, \omega_n) \in \Omega^{\vec{T}}$ be such that  for all $i=1,\ldots, n$, we have that $T_i=T(\omega_i) \in \nu \N^*$. 
	Then, with notation as in \eqref{loop_interaction_2}, we have
    \begin{equation}\label{loop_interaction_3}
        \mathcal{V}^{\nu}(\vec{\omega})\geq-\frac{B}{\nu} \sum_{i=1}^nT(\omega_i)\,,
    \end{equation}
    where $B$ is given by Assumption \ref{Assumption_on_v} (i).
\end{lemma}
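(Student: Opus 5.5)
The idea is to rewrite the total interaction $\mathcal{V}^{\nu}(\vec{\omega})$ as a time average of point-particle interactions, and then apply the stability condition \eqref{stable_potential_condition} pointwise in time.

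For each $i$ write $T_i = \nu k_i$ with $k_i \in \N^*$. For $t \in [0,\nu)$, consider the finite collection of points
\begin{equation*}
\mathcal{P}(t)\coloneqq \bigl\{\omega_i(s+t)\,:\, 1 \leq i \leq n,\ s \in \nu\N,\ s<T_i\bigr\}\,,
\end{equation*}
counted with multiplicity. Each loop $\omega_i$ contributes $k_i$ points, so there are $N \coloneqq \sum_{i=1}^n k_i = \nu^{-1}\sum_{i=1}^n T(\omega_i)$ points in total. We label them by pairs $(i,s)$.

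Next we compare with the definitions \eqref{loop_interaction}--\eqref{loop_interaction_2}. There are two kinds of contributions.
\begin{itemize}
\item[(a)] For $i<j$, the term $\mathcal{V}^{\nu}(\omega_i,\omega_j)$ sums $v$ over all pairs $(i,s)$, $(j,\tilde{s})$ with one label from each loop.
\item[(b)] The self-interaction $\mathcal{V}^{\nu}(\omega_i)$ sums over pairs $(i,s)$, $(i,s')$ with $s<s'$.
\end{itemize}
Together, these cover every unordered pair of distinct labels exactly once. Since all terms share the same prefactor $\nu^{-1}\int_0^{\nu}\dd t$, and all sums are finite so we may exchange sum and integral, we obtain
\begin{equation*}
\mathcal{V}^{\nu}(\vec{\omega}) = \frac{1}{\nu}\int_0^{\nu} \dd t \sum_{\{a,b\}} v\bigl(P_a(t)-P_b(t)\bigr)\,,
\end{equation*}
where the sum runs over unordered pairs of distinct labels $a,b$, and $P_a(t)$ denotes the point of $\mathcal{P}(t)$ with label $a$. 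Because $v$ is even, the orientation of each pair does not matter.

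For each fixed $t$, Assumption \ref{Assumption_on_v} (i) applied to the $N$ points of $\mathcal{P}(t)$ gives that the inner sum is at least $-BN$. Note that coincident points are allowed, since \eqref{stable_potential_condition} holds for arbitrary $x_1,\ldots,x_N$. Averaging over $t$ preserves this bound, so
\begin{equation*}
\mathcal{V}^{\nu}(\vec{\omega}) \geq -BN = -\frac{B}{\nu}\sum_{i=1}^n T(\omega_i)\,,
\end{equation*}
which is \eqref{loop_interaction_3}.

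The main point requiring care is the bookkeeping in the rearrangement. One must check that inter-loop and intra-loop pairs are each counted exactly once with the same weight $\nu^{-1}\int_0^{\nu}\dd t$. In particular, the self-interaction uses $s<s'$ rather than ordered pairs, so there is no factor $1/2$ to worry about. Measurability in $t$ is immediate from the continuity of paths and of $v$.
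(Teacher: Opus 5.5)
Your proposal is correct and follows the paper's proof. The paper relabels the $K=\sum_{i=1}^n k_i$ time-shifted points as $x^{(t)}_\ell$, ordered loop by loop, and rewrites $\mathcal{V}^{\nu}(\vec{\omega})=\frac{1}{\nu}\int_0^{\nu}\dd t\sum_{1\le i<j\le K}v\bigl(x^{(t)}_i-x^{(t)}_j\bigr)$; it then applies the stability bound $-BK$ for each fixed $t$, just as you do.
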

\begin{proof}
    For $i=1,\ldots,n$, let 
    \begin{equation*}
    k_i \coloneqq  \frac{T(\omega_i)}{\nu} \in \N^*\,.
    \end{equation*}
     In view of \eqref{loop_interaction}$-$\eqref{loop_interaction_2}, we have 
    \begin{multline}
    \label{loop_interaction_4}
        \mathcal{V}^{\nu}(\vec{\omega})=\frac{1}{\nu}\int_0^\nu \dd t\Bigg(\sum_{1\leq i<j\leq n}\sum_{\ell_i=0}^{k_i-1}\sum_{\ell_j=0}^{k_j-1}  v(\omega_i(t+\ell_i\nu)-\omega_j(t+\ell_j\nu))\\
        + \sum_{i=1}^n \sum_{0\leq\ell_i<\tilde{\ell}_i\leq k_i-1}  v(\omega_i(t+\ell_i\nu)-\omega_i(t+\tilde{\ell}_i\nu))\Bigg).
    \end{multline}
 In order to apply \eqref{stable_potential_condition}, we relabel the points on the paths $\omega_i$ occurring in \eqref{loop_interaction_4}. 
    To this end, we first write
    \begin{equation*}
    K\coloneqq\sum_{i=1}^n k_i=\sum_{i=1}^{n} \frac{T(\omega_i)}{\nu}\,.
   \end{equation*}
   Note that $K$ is the total number of segments of duration $\nu$ in the paths $\omega_1,\ldots,\omega_n$ which occur as components of $\vec{\omega}$.
   Let us consider $\ell \in [K]=\{1, \ldots, K\}$ and $t \in [0, \nu]$ arbitrary. 
   By construction, there exists a unique $m \in \{0,1,\ldots,n-1\}$ 
  such that 
   \begin{equation}
   \label{m_choice_decomposition}
   \sum_{i = 0}^{m} k_{i} < \ell \leq \sum_{i = 0}^{m + 1} k_{i}\,,
  \end{equation} 
  where by convention, we take $k_{0} \coloneqq 0$. 
  Using \eqref{m_choice_decomposition}, we can write 
   \begin{equation*}
   \ell = \sum_{i = 0}^{m} k_{i} + r\,,
   \end{equation*}
   for some $r \in \{1, \ldots, k_{m+1}\}$. We now define
    \begin{equation}\label{loop_interaction_5}
	    x^{(t)}_{\ell} \coloneqq \omega_{m+1}(t+(r-1)\nu)\,.
    \end{equation}
    Our labelling of the points is given by \eqref{loop_interaction_5}. For illustration, we have
\begin{align*}
 &x_1^{(t)}=\omega_1(t),\,\, x_2^{(t)}=\omega_1(t+\nu)\,,\,\ldots\,,\,x_{k_1}^{(t)}=\omega_1(t+(k_1-1)\nu),\\
 &x_{k_1+1}^{(t)}=\omega_2(t),\,\,x_{k_1+2}^{(t)}=\omega_2(t+\nu),\,\, \ldots, \,\, x_{k_1+k_2}^{(t)}=\omega_2(t+(k_2-1)\nu)\,,\,
\ldots
\\
&x_{k_1+\cdots+k_{n-1}+1}^{(t)}=\omega_n(t)\,,x_{k_1+\cdots+k_{n-1}+2}^{(t)}=\omega_n(t+\nu)\,,\ldots\,,x_{K}^{(t)}=\omega_n(t+(k_{n}-1)\nu)\,.
\end{align*}
 Using the labelling \eqref{loop_interaction_5} and integrating over $t \in [0,\nu]$ in \eqref{loop_interaction_4}, we have 
    \begin{equation}\label{loop_interaction_6}
        \mathcal{V}^{\nu}(\vec{\omega})=\frac{1}{\nu}\int_0^\nu \dd t\sum_{1\leq i<j\leq K}v(x^{(t)}_i-x^{(t)}_j)\geq -BK\,.
    \end{equation}
    In  \eqref{loop_interaction_6}, we used Assumption \ref{Assumption_on_v} (i) for each $t \in [0,\nu]$ to deduce the lower bound. 
\end{proof}

Let us now introduce some more notation. We first rewrite \eqref{single_loop_measure} as
\begin{equation}
\label{single_loop_measure_2}
\mathbb{L}^{\nu,\zeta}(\dd \omega)= \sum_{k=1}^{\infty} \frac{z^{k\nu}}{k} \, \mathbb{W}^{k \nu}(\dd \omega)\,.
\end{equation}
By using \eqref{single_loop_measure_2}, we can rewrite \eqref{Ginibre_representation_1} as
\begin{equation}
\label{Ginibre_representation_2}
\Xi^{\nu,\zeta}=\sum_{n=0}^{\infty} \frac{1}{n!} \sum_{\vec{k} \in (\N^{*})^n} \prod_{i=1}^{n} \frac{z^{k_i\nu}}{k_i} \int \mathbb{W}^{\nu \vec{k}} (\dd \vec{\omega}) \, \exp \bigl(-\mathcal{V}^{\nu}(\vec{\omega})\bigr)\,.
\end{equation}
In \eqref{Ginibre_representation_2}, we define for $n \in \N^{*}, \vec{T}=(T_1,\ldots,T_n) \in (0,\infty)^n$ the following positive measure on 
$\Omega^{\vec{T}}$.
\begin{equation}
\label{W_product_2}
\mathbb{W}^{\vec{T}} (\dd \vec{\omega})\coloneqq \prod_{i=1}^{n}  \mathbb{W}^{T_i} (\dd \omega_i)\,.
\end{equation}
Here, we recall \eqref{W^T_definition}.
Given $n \in \N^*$, we let 
\begin{equation}
\label{vec_1_n}
\vec{1}_n\coloneqq (1,1,\ldots,1) \in (\mathbb{N}^*)^n\,.
\end{equation}

We prove Theorem \ref{partition_function_result} in two steps. The first step is the following proposition, which asserts that the contribution to the right-hand side of \eqref{Ginibre_representation_2} of families of loops having the form $\vec{\omega}=(\omega_1,\ldots,\omega_n)$ with $\omega_i \in \Omega^{k\nu}$ for some $k\geq 2$ and $1\leq i \leq n$ vanishes in the limit as $\nu \rightarrow 0$.
\begin{proposition}[The contribution of loops of duration greater than $\nu$ to \eqref{Ginibre_representation_2} vanishes as $\nu \rightarrow 0$]
\label{Proposition_1}
The partition function \eqref{Ginibre_representation_2} satisfies
\begin{equation}
\label{Proposition_1_1}
\lim_{\nu \rightarrow 0} \bigl(\Xi^{\nu,\zeta}- \widehat{\Xi}^{\nu,\zeta} \bigr)=0\,,
\end{equation}
where
\begin{equation}
\label{Xi_hat}
\widehat{\Xi}^{\nu,\zeta}\coloneqq \sum_{n=0}^{\infty} \frac{z^{n\nu}}{n!}  \int \mathbb{W}^{\nu \vec{1}_n} (\dd \vec{\omega}) \, \exp \bigl(-\mathcal{V}^{\nu}(\vec{\omega})\bigr)\,,
\end{equation}
with notation as in \eqref{loop_interaction}--\eqref{loop_interaction_2}.
Moreover, we have
\begin{equation}
\label{Proposition_2_1_convergence_rate}
 \bigl|\Xi^{\nu,\zeta}- \widehat{\Xi}^{\nu,\zeta} \bigr| \lesssim_{d,\zeta,B} \nu^{d/2}\,\exp \bigl[C_1(d, \zeta,B) L^d\bigr]\,,
\end{equation}
for some $C_1(d,\zeta,B)>0$.
\end{proposition}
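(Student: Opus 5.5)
We compare \eqref{Ginibre_representation_2} with \eqref{Xi_hat} term by term in $n$. The difference $\Xi^{\nu,\zeta}-\widehat{\Xi}^{\nu,\zeta}$ is the sum over $n\ge 1$ and over $\vec{k}\in(\N^*)^n\setminus\{\vec{1}_n\}$ of the corresponding terms. Since the integrand may have either sign only through the exponential, we bound absolutely. By Lemma \ref{lemma_loop_interactions}, $\exp(-\mathcal{V}^{\nu}(\vec\omega))\le \exp(B\sum_i k_i)$. Using \eqref{heat_kernel_integral} and \eqref{W^T_definition}, $\int \mathbb{W}^{k\nu}(\dd\omega)=L^d\psi^{k\nu}(0)$. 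Therefore
\begin{equation*}
\bigl|\Xi^{\nu,\zeta}-\widehat{\Xi}^{\nu,\zeta}\bigr|\le \sum_{n\ge1}\frac{1}{n!}\Bigl[\bigl(a+b\bigr)^n-a^n\Bigr],
\end{equation*}
where
\begin{equation*}
a\coloneqq z^{\nu}\ee^{B}L^d\psi^{\nu}(0),\qquad b\coloneqq \sum_{k\ge2}\frac{z^{k\nu}\ee^{Bk}}{k}L^d\psi^{k\nu}(0).
\end{equation*}
The right-hand side equals $\ee^{a}(\ee^{b}-1)\le b\,\ee^{a+b}$.

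Next we estimate $a$ and $b$ using Lemma \ref{heat_kernel_estimates} (i) and \eqref{fugacity_condition}. From $\psi^{k\nu}(0)\lesssim_d L^{-d}+(k\nu)^{-d/2}$ and $z^{\nu}=\zeta\nu^{d/2}$, we get
\begin{equation*}
a\lesssim_d \zeta\ee^{B}(\nu^{d/2}+L^d)\le C\,L^d,
\end{equation*}
since $\nu\le 1\le L$. For $b$, write $z^{k\nu}\ee^{Bk}=(\zeta\ee^{B}\nu^{d/2})^k$. For $\nu\le\nu_0$ small enough (depending on $d,\zeta,B$), we have $q\coloneqq\zeta\ee^{B}\nu^{d/2}\le 1/2$. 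Then
\begin{equation*}
b\lesssim_d \sum_{k\ge2} q^k\bigl(1+L^d(k\nu)^{-d/2}\bigr)\lesssim q^2+L^d\sum_{k\ge2}(\zeta\ee^B)^k\nu^{(k-1)d/2}k^{-d/2}.
\end{equation*}
The series is geometric-like, and its first term $k=2$ gives $\nu^{d/2}$. Hence $b\lesssim_{d,\zeta,B} \nu^{d/2}L^d$, and in particular $b\le C L^d$.

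Combining these, $\bigl|\Xi^{\nu,\zeta}-\widehat{\Xi}^{\nu,\zeta}\bigr|\lesssim \nu^{d/2}L^d\ee^{CL^d}\lesssim \nu^{d/2}\ee^{C_1L^d}$, after absorbing $L^d$ into the exponential. This gives \eqref{Proposition_2_1_convergence_rate}, and \eqref{Proposition_1_1} follows.

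The only delicate point is the smallness of $q$. The stability bound contributes a factor $\ee^{Bk}$ per loop, which must be beaten by $z^{k\nu}$. This is possible only because $z^\nu=\zeta\nu^{d/2}\to0$. It may require shrinking $\nu_0$ (now depending on $B$), or, if $\nu_0$ is fixed, treating the finitely many large-$\nu$ cases trivially. Another place to watch is the $k=1$ term $a$: here $\psi^{\nu}(0)\sim\nu^{-d/2}$ exactly cancels $z^{\nu}$, which is what makes $a$ bounded uniformly in $\nu$ rather than small.
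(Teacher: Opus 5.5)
Your proposal is correct and is essentially the paper's proof: the stability bound from Lemma~\ref{lemma_loop_interactions}, the reduction to the single-loop quantities $a$ and $b$ (your estimates reproduce \eqref{fin-vol-bound2} and \eqref{fin-vol-bound1}), and the smallness requirement on $\zeta\ee^{B}\nu^{d/2}$ as in \eqref{nu_small}, which indeed forces $\nu_0$ to depend on $B$. Your alternative of handling larger $\nu$ ``trivially'' is neither needed nor clearly available, since $\nu$ ranges over an interval. The only difference is cosmetic: you use the exact identity $\sum_{\vec k\neq\vec 1_n}\prod_i(\cdots)=(a+b)^n-a^n$, whereas the paper in \eqref{fin-vol-telescoping-bound} uses the cruder bound $n\,b\,(a+b)^{n-1}$, and both sum to at most $b\,\ee^{a+b}$.
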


\begin{proof}
Using \eqref{Ginibre_representation_2}, \eqref{Xi_hat}, and recalling \eqref{vec_1_n}, we have
\begin{equation}
\label{Proposition_2_1_1}
\Xi^{\nu,\zeta} - \widehat{\Xi}^{\nu,\zeta}  = \sum_{n = 1}^{\infty} \frac{1}{n!} \sum_{\substack{\vec{k} \in (\N^{*})^n \\ \vec{k} \neq \vec{1}_{n}}} \prod_{i = 1}^{n} \frac{z^{k_{i}\nu}}{k_{i}} \int \mathbb{W}^{\nu \vec{k}} (\dd \vec{\omega}) \, \exp \left(-\mathcal{V}^{\nu}(\vec{\omega})\right).
\end{equation}
Applying Lemma~\ref{lemma_loop_interactions} in \eqref{Proposition_2_1_1}, it follows that
\begin{equation}
\label{Proposition_2_1_2}
| \Xi^{\nu,\zeta} - \widehat{\Xi}^{\nu,\zeta} | \leq \sum_{n = 1}^{\infty} \frac{1}{n!} \sum_{\substack{\vec{k} \in (\N^{*})^n \\ \vec{k} \neq \vec{1}_{n}}} \prod_{i = 1}^{n} \frac{z^{k_{i}\nu}}{k_{i}} \int \mathbb{W}^{\nu \vec{k}} (\dd \vec{\omega})\exp\!\left(B\sum_{i=1}^nk_i\right).
\end{equation}
We recall \eqref{fugacity_condition} and define
\begin{equation}
\label{definition_of_z_tilde}
\tilde{z} \coloneqq z\,\ee^{B/\nu}=(\zeta \nu^{d/2} \mathrm{e}^B)^{1/\nu}\,,
\end{equation}
and observe that for all $k \in \N^*$, we have
\begin{equation}
\label{def_of_z_tilde_consequence}
\tilde{z}^{k \nu} = z^{k \nu}\mathrm{e}^{Bk}\,. 
\end{equation}
Using \eqref{def_of_z_tilde_consequence}, we can rewrite the right-hand side of \eqref{Proposition_2_1_2} as 
\begin{equation}
\label{(2.16)}
	\sum_{n = 1}^{\infty} \frac{1}{n!} \sum_{\substack{\vec{k} \in (\N^{*})^n \\ \vec{k} \neq \vec{1}_{n}}} \prod_{i = 1}^{n} \frac{\tilde{z}^{k_{i}\nu}}{k_{i}} \int \mathbb{W}^{\nu \vec{k}} (\dd \vec{\omega})\,.
\end{equation}
Recall \eqref{W_product_2} and note that there is at least one factor in \eqref{(2.16)} where $k_{i} \geq 2$. This allows us to write
\begin{multline}
	\eqref{(2.16)} \leq \sum_{n = 1}^{\infty} \frac{n}{n!} \left( \sum_{\ell=2}^{\infty} \frac{\tilde{z}^{\ell \nu}}{\ell} \int_{\Lambda} \dd x \,\int \mathbb{W}_{x, x}^{\ell \nu}(\dd \omega) \right) \\
					    \mathrel{\phantom{=}} \times \left( \tilde{z}^{\nu} \int_{\Lambda} \dd x\,\int \mathbb{W}_{x, x}^{\nu} (\dd \omega) + \sum_{\ell=2}^{\infty} \frac{\tilde{z}^{\ell \nu}}{\ell} \int_{\Lambda} \dd x \int \mathbb{W}_{x, x}^{\ell \nu}(\dd \omega) \right)^{n-1}\,. \label{fin-vol-telescoping-bound}
\end{multline}
We now proceed to bound the right-hand side of \eqref{fin-vol-telescoping-bound}. 
Using \eqref{heat_kernel_integral} and Lemma \ref{heat_kernel_estimates} (i), followed by \eqref{fugacity_condition} and \eqref{def_of_z_tilde_consequence}, we have that
\begin{multline}
\label{fin-vol-two-sums-split_A}
\sum_{\ell=2}^{\infty} \frac{\tilde{z}^{\ell \nu}}{\ell} \int_{\Lambda} \dd x \int \mathbb{W}_{x, x}^{\ell \nu}(\dd \omega) = \sum_{\ell=2}^{\infty} \frac{\tilde{z}^{\ell \nu}}{\ell} \psi^{\ell \nu}(0) L^d \lesssim_d \sum_{\ell=2}^{\infty} \frac{\tilde{z}^{\ell \nu}}{\ell} \left(\frac{1}{L^d} + \frac{1}{(\ell \nu)^{d/2}} \right) L^d \\= \sum_{\ell=2}^{\infty} \frac{(\zeta \mathrm{e}^B)^{\ell} \,\nu^{d \ell/2}}{\ell} + \sum_{\ell=2}^{\infty} \frac{1}{\ell^{1 + d/2}} \,(\zeta \mathrm{e}^B)^{\ell} \,\nu^{d(\ell - 1)/2}\,L^d\,. 
\end{multline}
Since $\ell \geq 2$ in both sums on the right-hand side of \eqref{fin-vol-two-sums-split_A}, this expression is
\begin{equation}
\label{fin-vol-two-sums-split}
\leq \sum_{\ell=2}^{\infty} \left( \zeta \mathrm{e}^B \nu^{d/2} \right)^{\ell} + \sum_{\ell=2}^{\infty} \frac{1}{\ell^{1 + d/2}} \left( \zeta\mathrm{e}^B \nu^{d/4} \right)^{\ell}\,L^d
\eqqcolon (*) + (**)\,.
\end{equation}
We choose $\nu>0$ small enough so that 
\begin{equation}
\label{nu_small}
\zeta \mathrm{e}^B\nu^{d/2} \leq 1/2 \quad \text{and} \quad \zeta \mathrm{e}^B \nu^{d/4} \leq 1\,.
\end{equation}
Recalling \eqref{fin-vol-two-sums-split} and using \eqref{nu_small}, we have
\begin{equation}
	(*) = \sum_{\ell=2}^{\infty} \left(\zeta\mathrm{e}^B \nu^{d/2} \right)^{\ell} 
		 = \frac{\left( \zeta\mathrm{e}^B \nu^{d/2} \right)^{2}}{1 - \zeta\mathrm{e}^B \nu^{d/2}} 
		 \leq 2 \zeta^{2}\mathrm{e}^{2B} \nu^{d} \label{fin-vol-star1}
\end{equation}
and
\begin{equation}
	(**) \leq \zeta^{2}\mathrm{e}^{2B} \nu^{d/2} \,\sum_{\ell=2}^{\infty} \frac{1}{\ell^{1 + d/2}}\,L^d
		 \lesssim_{d}\zeta^{2} \mathrm{e}^{2B} \nu^{d/2}\,L^d\,. \label{fin-vol-star2}
\end{equation}
Substituting \eqref{fin-vol-star1}--\eqref{fin-vol-star2} into \eqref{fin-vol-two-sums-split_A}--\eqref{fin-vol-two-sums-split} gives us
\begin{align}
	\sum_{\ell=2}^{\infty} \frac{\tilde{z}^{\ell \nu}}{\ell} \int_{\Lambda} \dd x \int \mathbb{W}_{x, x}^{\ell \nu}(\dd \omega) \lesssim_{d} \zeta^{2} \mathrm{e}^{2B}\nu^{d} + \zeta^{2}\mathrm{e}^{2B} \nu^{d/2}\,L^d \leq C(d,\zeta,B)\, \nu^{d/2}\,L^d\,. \label{fin-vol-bound1}
\end{align}
By analogous arguments, we have 
\begin{equation}
\tilde{z}^{\nu} \int_{\Lambda}  \dd x\, \int \mathbb{W}_{x, x}^{\nu} (\dd \omega) = \tilde{z}^{\nu} \psi^{\nu}(0) L^d \lesssim_d \tilde{z}^{\nu} \left( \frac{1}{L^d} + \frac{1}{\nu^{d/2}} \right) L^d= \zeta\mathrm{e}^B \bigl(\nu^{d/2} + L^d \bigr)\,. \label{fin-vol-bound2}
\end{equation}
Substituting \eqref{fin-vol-bound1}--\eqref{fin-vol-bound2} into~\eqref{fin-vol-telescoping-bound} gives us
\begin{align}
\label{zeta-zeta_hat_1}
		| \Xi^{\nu,\zeta} - \widehat{\Xi}^{\nu,\zeta} | & \leq \sum_{n = 1}^{\infty} \frac{1}{(n - 1)!}\, C(d, \zeta, B) \,\nu^{d/2}\, L^d\,\left[ C(d, \zeta,B) \nu^{d/2} L^d + \zeta \mathrm{e}^B L^d +\zeta \ee^{B} \,\nu^{d/2} \right]^{n - 1} \\
					    & 
					    \label{zeta-zeta_hat_1B}
					    \leq  C(d, \zeta,B) \,\nu^{d/2} \,L^d\,\exp \left[ \tilde{C}(d, \zeta,B) L^d +\zeta \ee^{B} \right] 
\\
&\lesssim_{d,\zeta,B} \nu^{d/2}\,L^d\,\exp \left[\tilde{C}(d, \zeta,B)  L^d\right]
\,. \label{zeta-zeta_hat}
\end{align}
In \eqref{zeta-zeta_hat_1B}--\eqref{zeta-zeta_hat}, we took $\tilde{C}(d, \zeta,B)\coloneqq  C(d, \zeta,B) + \zeta\mathrm{e}^B$, following \eqref{fin-vol-star1}--\eqref{fin-vol-star2}.
For fixed $\zeta > 0, B\geq 0$, and $L \geq 1$, the right-hand side of~\eqref{zeta-zeta_hat} tends to $0$ as $\nu \to 0$. We thus obtain \eqref{Proposition_1_1}. Furthermore, we obtain the convergence rate given by \eqref{Proposition_2_1_convergence_rate} by taking $C_1(d,\zeta,B)>\tilde{C}(d,\zeta,B)$. This concludes the proof.

\end{proof}

The second step in proving Theorem \ref{partition_function_result} consists in noting that the loop components of $\vec{\omega}$ in \eqref{Xi_hat} collapse to points when $\nu \rightarrow 0$, thus giving \eqref{partition_function_result_2} in the limit. We show this in a quantitative way in the following proposition.

\begin{proposition}[Letting $\nu \rightarrow 0$ in \eqref{Xi_hat}, the loops collapse to points]
\label{Proposition_2}
Let $\alpha \in (0,1]$ be given and let $v$ be as in Assumption \ref{Assumption_on_v}.
Recalling \eqref{partition_function_result_2} and \eqref{Xi_hat}, we have that 
\begin{equation}
\label{Proposition_2_1}
\lim_{\nu \rightarrow 0} \widehat{\Xi}^{\nu,\zeta}=\Xi^{0,\zeta}\,.
\end{equation}
Moreover, we have
\begin{equation}
\label{Proposition_2_convergence_rate}
\bigl|\widehat{\Xi}^{\nu,\zeta}- \Xi^{0,\zeta} \bigr| \lesssim_{d,\zeta,v,\alpha,B} \nu^{\alpha/2}\, \exp\left[C_2(d,\zeta,B) L^d\right],
\end{equation}
for some $C_2(d,\zeta,B)>0$.
\end{proposition}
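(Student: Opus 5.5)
The plan is to follow the heuristic \eqref{Heuristic_1C}--\eqref{Heuristic_3} and split the difference $\widehat{\Xi}^{\nu,\zeta}-\Xi^{0,\zeta}$ through the intermediate quantity
\begin{equation*}
\Xi_{\ast}^{\nu,\zeta}\coloneqq \sum_{n=0}^{\infty}\frac{z^{n\nu}}{n!}\int_{\Lambda^n}\dd x_1\cdots\dd x_n \int \mathbb{W}^{\nu\vec{1}_n}_{\vec{x},\vec{x}}(\dd\vec{\omega})\,\exp\bigl(-\mathcal{V}(\vec{x})\bigr)=\sum_{n=0}^{\infty}\frac{(z^{\nu}\psi^{\nu}(0))^n}{n!}\int_{\Lambda^n}\dd\vec{x}\,\exp\bigl(-\mathcal{V}(\vec{x})\bigr)\,,
\end{equation*}
where we used \eqref{heat_kernel_integral}. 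Note that since all loops in \eqref{Xi_hat} have duration $\nu$, the self-interactions vanish by \eqref{self_interaction_duration_nu}. Hence $\mathcal{V}^{\nu}(\vec{\omega})=\sum_{i<j}\frac1\nu\int_0^\nu \dd t\, v(\omega_i(t)-\omega_j(t))$, which is an average over $t$ of $\mathcal{V}$ evaluated at the configuration $(\omega_1(t),\ldots,\omega_n(t))$. By stability \eqref{stable_potential_condition}, applied for each fixed $t$ (exactly as in Lemma \ref{lemma_loop_interactions}), and by stability of $\mathcal{V}(\vec{x})$ itself, both exponents are bounded below by $-Bn$.

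\textbf{Step 1: replacing loops by basepoints.} Using $|\ee^{-a}-\ee^{-b}|\le \ee^{-\min(a,b)}|a-b|\le \ee^{Bn}|a-b|$ and the H\"older continuity in Assumption \ref{Assumption_on_v} (ii), we get
\begin{equation*}
\bigl|\mathcal{V}^{\nu}(\vec{\omega})-\mathcal{V}(\vec{x})\bigr|\le \frac{C}{\nu}\sum_{i<j}\int_0^\nu\dd t\,\bigl(|\omega_i(t)-x_i|_L^{\alpha}+|\omega_j(t)-x_j|_L^{\alpha}\bigr)\,.
\end{equation*}
After integrating against the product Wiener measure, each term factorises. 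It contains one factor $\int\mathbb{W}^{\nu}_{x_i,x_i}(\dd\omega)|\omega(t)-x_i|_L^{\alpha}$ and factors $\psi^{\nu}(0)$ for the other loops. By Jensen applied to the finite measure $\mathbb{W}^\nu_{x,x}$ of mass $\psi^\nu(0)$, combined with Lemma \ref{heat_kernel_estimates} (iii) with $s_1=0$, $s_2=t$, $x=y$, we obtain
\begin{equation*}
\int\mathbb{W}^{\nu}_{x,x}(\dd\omega)|\omega(t)-x|_L^\alpha\le \psi^\nu(0)^{1-\alpha}\Bigl(\int \mathbb{W}^{\nu}_{x,x}(\dd\omega)|\omega(t)-x|_L\Bigr)^{\alpha}\lesssim_d \psi^{\nu}(0)\,\nu^{\alpha/2}\,,
\end{equation*}
using $\psi^\nu(0)\sim \nu^{-d/2}$ for small $\nu$ (Lemma \ref{heat_kernel_estimates} (i)--(ii), with $L\ge1$). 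Hence
\begin{equation*}
|\widehat{\Xi}^{\nu,\zeta}-\Xi^{\nu,\zeta}_\ast|\lesssim \sum_n \frac{n^2}{n!}(\ee^{B}z^\nu\psi^\nu(0)L^d)^n\,\nu^{\alpha/2}\,.
\end{equation*}
Since $z^{\nu}\psi^{\nu}(0)\le \zeta(2\pi)^{-d/2}+\zeta\nu^{d/2}\eta^{\nu}\lesssim\zeta$ by \eqref{fugacity_condition} and \eqref{heat_kernel_estimates_tilde}, this is $\lesssim \nu^{\alpha/2}\exp[C L^d]$, with the $n^2$ absorbed into the exponential constant.

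\textbf{Step 2: replacing $z^\nu\psi^\nu(0)$ by $(2\pi)^{-d/2}\zeta$.} By \eqref{heat_kernel_estimates_tilde_A} and \eqref{fugacity_condition}, $a\coloneqq z^{\nu}\psi^{\nu}(0)=(2\pi)^{-d/2}\zeta+\zeta\nu^{d/2}\eta^{\nu}$, with $0\le a-b\lesssim_d\zeta\nu^{d/2}L^{-d}$, where $b\coloneqq(2\pi)^{-d/2}\zeta$. Then $|a^n-b^n|\le n a^{n-1}(a-b)$. Together with $\int_{\Lambda^n}\ee^{-\mathcal{V}}\le \ee^{Bn}L^{dn}$, this gives $|\Xi^{\nu,\zeta}_\ast-\Xi^{0,\zeta}|\lesssim \nu^{d/2}\exp[CL^d]$, which is dominated by $\nu^{\alpha/2}$ since $\alpha\le1\le d$ and $\nu\le1$. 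Combining the two steps yields \eqref{Proposition_2_convergence_rate}, and hence \eqref{Proposition_2_1}.

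The main obstacle is Step 1. We must make sure that Lemma \ref{heat_kernel_estimates} (iii) gives the right normalisation, namely a factor of order $\nu^{-d/2}\nu^{1/2}$, so that after multiplying by $z^\nu$ the gain is exactly $\nu^{1/2}$ per loop (upgraded to $\nu^{\alpha/2}$ via Jensen). We must also handle the stable but sign-indefinite potential: the lower bound $-Bn$ has to hold uniformly for both exponentials, which the fixed-$t$ stability argument provides.
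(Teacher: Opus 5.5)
Your proposal is correct and follows the paper's proof essentially step by step. The paper introduces the same intermediate quantity $\widetilde{\Xi}^{\nu,\zeta}$ and bounds $\widehat{\Xi}^{\nu,\zeta}-\widetilde{\Xi}^{\nu,\zeta}$ using the mean-value bound with the stability factor $\ee^{Bn}$ and a H\"older/Brownian-displacement estimate. That estimate is packaged there as Lemmas~\ref{Lemma_2}--\ref{Lemma_3}, via the auxiliary function $g_{x,y}$, and it is exactly your H\"older-inequality argument combined with Lemma~\ref{heat_kernel_estimates}~(iii). The paper then handles $\widetilde{\Xi}^{\nu,\zeta}-\Xi^{0,\zeta}$ through $z^{\nu}\psi^{\nu}(0)=z^{\nu}\eta^{\nu}+\zeta(2\pi)^{-d/2}$ and the mean value theorem for $x\mapsto x^{n}$, giving the same $\nu^{d/2}$ rate as you obtain.
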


Unlike the proof of Proposition \ref{Proposition_1}, which relied only on Assumption \ref{Assumption_on_v} (i) for the interaction potential $v$ (i.e.\ on its stability), the proof of Proposition \ref{Proposition_2} additionally uses the H\"{o}lder continuity properties of $v$ given by Assumption \ref{Assumption_on_v} (ii). Before proceeding with the proof of Proposition \ref{Proposition_2}, we note two facts about Brownian motion which are given by Lemmas \ref{Lemma_2} and \ref{Lemma_3} below. We first observe the following quantitative estimate for Brownian motion on $\Lambda$.
\begin{lemma} 
\label{Lemma_2}
Let $\alpha \in (0,1]$ be given. Let the function $g \colon \Lambda \times \Lambda \rightarrow \mathbb{C}$ be $\alpha$-H\"{o}lder continuous in the sense that there exists $C>0$ such that for all $(p,q), (\tilde{p},\tilde{q}) \in \Lambda \times \Lambda$, we have 
\begin{equation}
\label{g_Holder_continuous}
|g(p,q)-g(\tilde{p},\tilde{q})| \leq C\bigl(|p-\tilde{p}|_{L}^{\alpha}+|q-\tilde{q}|_{L}^{\alpha}\bigr)\,.
\end{equation}
Here, we recall \eqref{periodic_Euclidean_norm}. 
Then, uniformly in $x,y \in \Lambda$ and $t \in [0,\nu]$, we have that 
\begin{multline}
\label{Lemma_2_1}
z^{2\nu} \int \mathbb{W}^{\nu}_{x,x}(\dd \omega_1)\,\int \mathbb{W}^{\nu}_{y,y}(\dd \omega_2)\,g \bigl(\omega_1(t),\omega_2(t)\bigr)
\\
=
g(x,y)\, z^{2\nu} \int \mathbb{W}^{\nu}_{x,x}(\dd \omega_1)\,\int \mathbb{W}^{\nu}_{y,y}(\dd \omega_2)+\mathcal{O}_{d,\zeta,g,\alpha}\bigl(\nu^{\alpha/2}\bigr)\,,
\end{multline}
where the $g$ dependence of the error term in \eqref{Lemma_2_1} is given in terms of the $\alpha$-H\"{o}lder constant $C$ in \eqref{g_Holder_continuous}.
\end{lemma}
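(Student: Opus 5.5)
We bound the difference directly. By the Hölder assumption \eqref{g_Holder_continuous}, for every pair of paths $\omega_1,\omega_2$ we have
\begin{equation*}
\bigl|g(\omega_1(t),\omega_2(t))-g(x,y)\bigr| \leq C\bigl(|\omega_1(t)-x|_L^{\alpha}+|\omega_2(t)-y|_L^{\alpha}\bigr).
\end{equation*}
Integrating this against the positive product measure $z^{2\nu}\,\mathbb{W}^{\nu}_{x,x}(\dd\omega_1)\,\mathbb{W}^{\nu}_{y,y}(\dd\omega_2)$ bounds the error term in \eqref{Lemma_2_1} by
\begin{equation*}
C z^{2\nu}\left[\int \mathbb{W}^{\nu}_{x,x}(\dd\omega_1)\,|\omega_1(t)-\omega_1(0)|_L^{\alpha}\;\psi^{\nu}(0) + \psi^{\nu}(0)\int \mathbb{W}^{\nu}_{y,y}(\dd\omega_2)\,|\omega_2(t)-\omega_2(0)|_L^{\alpha}\right].
\end{equation*}
Here we used \eqref{heat_kernel_integral} to evaluate the total mass of the other factor as $\psi^{\nu}(0)$. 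By symmetry, it suffices to estimate one of the two terms.

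To treat the $\alpha$-th moment, we use Jensen's (or Hölder's) inequality with respect to the normalised bridge measure $\mathbb{P}^{\nu}_{x,x}=\mathbb{W}^{\nu}_{x,x}/\psi^{\nu}(0)$, which gives
\begin{equation*}
\int \mathbb{W}^{\nu}_{x,x}(\dd\omega)\,|\omega(t)-x|_L^{\alpha} \leq \psi^{\nu}(0)^{1-\alpha}\left(\int \mathbb{W}^{\nu}_{x,x}(\dd\omega)\,|\omega(t)-\omega(0)|_L\right)^{\alpha}.
\end{equation*}
We now apply Lemma \ref{heat_kernel_estimates} (iii) with $s_1=0$, $s_2=t\le\nu$, and endpoints $x=y$, so that the $|x-y|_L$ term vanishes. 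This bounds the inner integral by $\lesssim_d (L^{-d}+\nu^{-d/2})\,t^{1/2}\lesssim_d \nu^{-d/2}\nu^{1/2}$, where we used $L\ge1$ and $\nu\le1$. Combining this with Lemma \ref{heat_kernel_estimates} (i), $\psi^{\nu}(0)\lesssim_d \nu^{-d/2}$, the whole expression is
\begin{equation*}
\lesssim_d \nu^{-d/2}\,\nu^{\alpha/2}.
\end{equation*}

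It then remains to account for the prefactors. The error term is bounded by
\begin{equation*}
C\,z^{2\nu}\,\psi^{\nu}(0)\cdot \nu^{-d/2}\nu^{\alpha/2}\lesssim_d C\,(z^{\nu}\nu^{-d/2})^2\,\nu^{\alpha/2}=C\zeta^2\nu^{\alpha/2}
\end{equation*}
by \eqref{fugacity_condition}. This gives $\mathcal{O}_{d,\zeta,g,\alpha}(\nu^{\alpha/2})$ uniformly in $x,y$ and $t\in[0,\nu]$, as claimed.

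The only delicate point is the normalisation bookkeeping. The Wiener measure is not a probability measure, so Jensen must be applied to the normalised bridge measure, and each factor of $\psi^{\nu}(0)\sim\nu^{-d/2}$ must be paired with exactly one factor of $z^{\nu}$. This pairing works because the powers of $\psi^\nu(0)$ combine as $\psi^{\nu}(0)^{1-\alpha}\cdot\psi^{\nu}(0)^{\alpha}=\psi^{\nu}(0)$ per loop, and that is what makes the $\nu^{-d/2}$ blow-up cancel against $z^{\nu}$. Alternatively, one can avoid Lemma \ref{heat_kernel_estimates} (iii) and compute the bridge's Gaussian moment directly, using $\mathbb{E}|\omega(t)-x|^2\lesssim t$ for the bridge.
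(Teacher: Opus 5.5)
Your proof is correct and follows essentially the same route as the paper. Both arguments use the H\"older bound on $g$, factor off the total mass $z^{\nu}\psi^{\nu}(0)\lesssim_d \zeta$ of the other loop, pass from the $\alpha$-th moment to the first moment by H\"older/Jensen, and conclude with Lemma~\ref{heat_kernel_estimates}~(iii) (taking $s_1=0$, $s_2=t\le\nu$) together with \eqref{fugacity_condition}. The only cosmetic difference is that you bound $\psi^{\nu}(0)\lesssim_d \nu^{-d/2}$ before pairing it with $z^{\nu}$, whereas the paper pairs each factor $z^{\nu}$ with its heat-kernel factor directly.
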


\begin{proof}
Fix $x,y\in \Lambda$ and $t \in [0,\nu]$. By \eqref{g_Holder_continuous}, we have 
\begin{multline}
\label{lipschitz_g_1}
    \biggl|z^{2\nu} \int \mathbb{W}^{\nu}_{x,x}(\dd \omega_1)\,\int \mathbb{W}^{\nu}_{y,y}(\dd \omega_2)\,\Bigl[g \bigl(\omega_1(t),\omega_2(t)\bigr)-g(x,y)\Bigr]\biggr| \lesssim_g
    \\
 z^{2\nu} \int \mathbb{W}^{\nu}_{x,x}(\dd \omega_1)\,\int \mathbb{W}^{\nu}_{y,y}(\dd \omega_2)\,|\omega_1(t)-x|_L^{\alpha}
 +  z^{2\nu} \int \mathbb{W}^{\nu}_{x,x}(\dd \omega_1)\,\int \mathbb{W}^{\nu}_{y,y}(\dd \omega_2)\,|\omega_2(t)-y|_L^{\alpha}\,.
\end{multline}
Let us analyse the two terms on the right-hand side of \eqref{lipschitz_g_1} separately. For the first term, we have
\begin{multline}\label{lipschitz_g_2}
z^{2\nu} \int \mathbb{W}^{\nu}_{x,x}(\dd \omega_1)\,\int \mathbb{W}^{\nu}_{y,y}(\dd \omega_2)\,\left|\omega_1(t)-x\right|_L^{\alpha}
        =\biggl(z^{\nu} \int \mathbb{W}^{\nu}_{x,x}(\dd \omega_1)\,\left|\omega_1(t)-x\right|_L^{\alpha}\biggr)\,
\\        
\times \biggl(z^{\nu}\int \mathbb{W}^{\nu}_{y,y}(\dd \omega_2)\biggr)\,.
\end{multline}
We estimate the second factor on the right-hand side of \eqref{lipschitz_g_2}, by noting that by \eqref{heat_kernel_integral}, Lemma \ref{heat_kernel_estimates} (i), and \eqref{fugacity_condition}, we have
\begin{equation}
\label{lipschitz_g_2b}
z^{\nu}\int \mathbb{W}^{\nu}_{y,y}(\dd \omega_2)\lesssim_d z^{\nu}\left(\frac{1}{L^d}+\frac{1}{\nu^{d/2}}\right)\lesssim_{d} \zeta\,,
\end{equation}
where we recalled that $L \geq 1$ by assumption.
In order to estimate the first factor on the right-hand side of \eqref{lipschitz_g_2}, we use H\"{o}lder's inequality, followed by Lemma \ref{heat_kernel_estimates} (iii), \eqref{fugacity_condition}, \eqref{lipschitz_g_2b} (with $y,\omega_2$ replaced with $x,\omega_1$ respectively), and the condition $t \in [0,\nu]$ to deduce that
\begin{multline}
\label{lipschitz_g_2c}
z^{\nu} \int \mathbb{W}^{\nu}_{x,x}(\dd \omega_1)\,\left|\omega_1(t)-x\right|_L^{\alpha} \leq \biggl(z^{\nu} \int \mathbb{W}^{\nu}_{x,x}(\dd \omega_1)\,\left|\omega_1(t)-x\right|_L \biggr)^{\alpha}\, \biggl(z^{\nu}\int \mathbb{W}^{\nu}_{x,x}(\dd \omega_1)\biggr)^{1-\alpha}
\\
\lesssim_{d,\zeta,\alpha}
\biggl[z^{\nu}\left(\frac{1}{L^d}+\frac{1}{\nu^{d/2}}\right)\sqrt{t}\,\biggr]^{\alpha} \lesssim_{d,\zeta,\alpha} \nu^{\alpha/2}\,.
\end{multline}
From \eqref{lipschitz_g_2}--\eqref{lipschitz_g_2c}, 
we obtain 
\begin{equation}
\label{lipschitz_g_3}
 z^{2\nu} \int \mathbb{W}^{\nu}_{x,x}(\dd \omega_1)\,\int \mathbb{W}^{\nu}_{y,y}(\dd \omega_2)\,|\omega_1(t)-x|_L^{\alpha}\lesssim_{d,\zeta,\alpha} \nu^{\alpha/2}\,.
\end{equation}
By symmetry, we have
\begin{equation}
\label{lipschitz_g_4}
z^{2\nu} \int \mathbb{W}^{\nu}_{x,x}(\dd \omega_1)\,\int \mathbb{W}^{\nu}_{y,y}(\dd \omega_2)\,|\omega_2(t)-y|_L^{\alpha}\lesssim_{d,\zeta,\alpha} \nu^{\alpha/2}\,.
 \end{equation}
Note that \eqref{lipschitz_g_3}--\eqref{lipschitz_g_4} hold uniformly in $x,y \in \Lambda$ and $t \in [0,\nu]$.
Using \eqref{lipschitz_g_3}--\eqref{lipschitz_g_4} in \eqref{lipschitz_g_1}, we obtain the claim.
\end{proof}

The following consequence of Lemma \ref{Lemma_2} will be useful in the proof of Proposition \ref{Proposition_2}.

\begin{lemma}
\label{Lemma_3} 
Let $\alpha \in (0,1]$ be given. For $v$ as in Assumption \ref{Assumption_on_v}, we have
\begin{equation}
\label{Lemma_3_1}
z^{2\nu} \int \mathbb{W}^{\nu}_{x,x}(\dd \omega_1)\,\int \mathbb{W}^{\nu}_{y,y}(\dd \omega_2) \, \frac{1}{\nu} \int_{0}^{\nu} \dd t \left| v(\omega_1(t)-\omega_2(t)) - v(x-y) \right| = \mathcal{O}_{d,\zeta,v,\alpha}(\nu^{\alpha/2})
\end{equation}
uniformly in $x, y \in \Lambda$.
\end{lemma}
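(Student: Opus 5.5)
The plan is to deduce \eqref{Lemma_3_1} from Lemma~\ref{Lemma_2} by taking the absolute value inside and using Fubini. Since the integrand is nonnegative and jointly measurable in $(t,\omega_1,\omega_2)$, Tonelli's theorem lets us rewrite the left-hand side of \eqref{Lemma_3_1} as
\begin{equation*}
\frac{1}{\nu}\int_0^\nu \dd t\; z^{2\nu}\int \mathbb{W}^{\nu}_{x,x}(\dd \omega_1)\int \mathbb{W}^{\nu}_{y,y}(\dd \omega_2)\,\bigl|v(\omega_1(t)-\omega_2(t))-v(x-y)\bigr|\,.
\end{equation*}
It therefore suffices to bound the inner expression by $\mathcal{O}_{d,\zeta,v,\alpha}(\nu^{\alpha/2})$ uniformly in $x,y\in\Lambda$ and $t\in[0,\nu]$. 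Averaging such a uniform bound over $t$ then gives the claim.

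For fixed $t$, I would not apply the statement \eqref{Lemma_2_1} directly, because it controls the signed difference rather than its absolute value. Instead I would repeat its first step, which already handles the absolute value. By Assumption~\ref{Assumption_on_v}(ii) and the triangle inequality for $|\cdot|_L$,
\begin{equation*}
\bigl|v(\omega_1(t)-\omega_2(t))-v(x-y)\bigr| \leq C\,\bigl|(\omega_1(t)-x)-(\omega_2(t)-y)\bigr|_L^{\alpha}\leq C\bigl(|\omega_1(t)-x|_L^{\alpha}+|\omega_2(t)-y|_L^{\alpha}\bigr)\,.
\end{equation*}
Here we use $(a+b)^\alpha\leq a^\alpha+b^\alpha$ for $\alpha\in(0,1]$. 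This is exactly the right-hand side of \eqref{lipschitz_g_1} with $g(p,q)=v(p-q)$, and this $g$ satisfies \eqref{g_Holder_continuous} with the Hölder constant of $v$.

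The two resulting terms are then bounded by \eqref{lipschitz_g_3}--\eqref{lipschitz_g_4}. Those bounds were established uniformly in $x,y$ and $t\in[0,\nu]$, using Lemma~\ref{heat_kernel_estimates}(i),(iii), Hölder's inequality and \eqref{fugacity_condition}. Altogether this gives the desired $\mathcal{O}_{d,\zeta,v,\alpha}(\nu^{\alpha/2})$ bound, and the dependence on $v$ enters only through its Hölder constant.

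I expect no real obstacle here. The only point needing care is that the absolute value must sit inside the integral, so one has to invoke the pointwise estimate \eqref{lipschitz_g_1} together with \eqref{lipschitz_g_3}--\eqref{lipschitz_g_4}, rather than the conclusion \eqref{Lemma_2_1} itself. The measurability needed for Tonelli follows from the continuity of $v$ and of the paths.
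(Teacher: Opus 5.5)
Your proof is correct and is essentially the paper's argument. The paper also uses Fubini and the Hölder-type displacement bounds behind Lemma~\ref{Lemma_2}, but it packages them by applying Lemma~\ref{Lemma_2} directly to $g_{x,y}(p,q)\coloneqq|v(p-q)-v(x-y)|$, which is $\alpha$-Hölder uniformly in $x,y$ and vanishes at $(x,y)$; this sidesteps your concern about the signed difference, and your unwrapping via \eqref{lipschitz_g_1} and \eqref{lipschitz_g_3}--\eqref{lipschitz_g_4} gives the same estimate.
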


\begin{proof}
By Fubini's theorem we have
\begin{multline}
\label{Lemma_3_2}
        z^{2\nu} \int \mathbb{W}^{\nu}_{x,x}(\dd \omega_1)\,\int \mathbb{W}^{\nu}_{y,y}(\dd \omega_2) \,\frac{1}{\nu} \int_0^{\nu} \dd t \,\bigl|v(\omega_1(t)-\omega_2(t))- v(x-y)\bigr|\\
        = \frac{1}{\nu} \int_0^{\nu} \dd t\,\left( z^{2\nu} \int \mathbb{W}^{\nu}_{x,x}(\dd \omega_1)\,\int \mathbb{W}^{\nu}_{y,y}(\dd \omega_2) \, \,\bigl|v(\omega_1(t)-\omega_2(t))- v(x-y)\bigr|\right).
\end{multline}
We define $g_{x,y} \colon\Lambda\times\Lambda\to [0,\infty)$ as 
\begin{equation}
\label{function_g}
g_{x,y}(p,q)\coloneqq |v(p-q)-v(x-y)|\,.
\end{equation}
By Assumption \ref{Assumption_on_v} (ii), we obtain that for all $(p,q), (\tilde{p},\tilde{q}) \in \Lambda \times \Lambda$
\begin{multline}
\label{function_g_2}
|g_{x,y}(p,q)-g_{x,y}(\tilde{p},\tilde{q})|\leq |v(p-q)-v(\tilde{p}-\tilde{q})| \lesssim_v |(p-q)-(\tilde{p}-\tilde{q})|_{L}^{\alpha}
\\
\leq 
\bigl(|p-\tilde{p}|_{L}+ |q-\tilde{q}|_{L}\bigr)^{\alpha}
\lesssim_{\alpha} |p-\tilde{p}|_{L}^{\alpha}+ |q-\tilde{q}|_{L}^{\alpha}\,.
\end{multline}
From \eqref{function_g_2}, it follows that the function \eqref{function_g} is $\alpha$-H\"{o}lder continuous in the sense of \eqref{g_Holder_continuous} above, uniformly in $x,y \in \Lambda$. (By the latter, we mean that the H\"{o}lder constant $C>0$ in \eqref{g_Holder_continuous} can be chosen uniformly in $x,y \in \Lambda$.)

Using \eqref{function_g}--\eqref{function_g_2}, and Lemma \ref{Lemma_2}, we write
\begin{align}
\eqref{Lemma_3_2} & =\frac{1}{\nu} \int_0^{\nu} \dd t\,\left( z^{2\nu} \int \mathbb{W}^{\nu}_{x,x}(\dd \omega_1)\,\int \mathbb{W}^{\nu}_{y,y}(\dd \omega_2) \, \,\bigl|g_{x,y}(\omega_1(t),\omega_2(t))\bigr|\right)\\
		  & =\frac{1}{\nu} \int_0^{\nu} \dd t\,\left( z^{2\nu} \int \mathbb{W}^{\nu}_{x,x}(\dd \omega_1)\,\int \mathbb{W}^{\nu}_{y,y}(\dd \omega_2) \, \,\left|g_{x,y}(x,y)\right|\right)+\mathcal{O}_{d,\zeta,v,\alpha}(\nu^{\alpha/2})\,. \label{Lemma_3_3}
\end{align}
Observe that $g_{x,y}(x,y)=0$ by~\eqref{function_g_2}, and hence we obtain \eqref{Lemma_3_1} from \eqref{Lemma_3_3}.

\end{proof}

We now have the necessary tools to prove Proposition \ref{Proposition_2}.

\begin{proof}[Proof of Proposition \ref{Proposition_2}]
We define 
\begin{equation}
\label{Proposition_2_2}
    \widetilde{\Xi}^{\nu,\zeta}\coloneqq \sum_{n=0}^{\infty} \frac{z^{n\nu}}{n!} \int_{\Lambda^n} \dd x_1\,\cdots \,\dd x_n\, \int \mathbb{W}^{\nu}_{x_1,x_1}(\dd \omega_1)\,\cdots\,\mathbb{W}^{\nu}_{x_n,x_n}(\dd \omega_n)\,\exp\Biggl(-\sum_{1\leq i<j \leq n} v(x_i-x_j)\Biggr)\,.
\end{equation}
We note that $\widetilde{\Xi}^{\nu,\zeta}$ defined in \eqref{Proposition_2_2} differs from $\widehat{\Xi}^{\nu,\zeta}$ defined in \eqref{Xi_hat} in the exponential term. Namely, the interaction $\mathcal{V}^{\nu}(\vec{\omega})$ of the loops $\omega_i \in \Omega^{\nu}_{x_i,x_i}$ is replaced by the interaction $\mathcal{V}(\vec{x})$ of their base points $x_i$, given by \eqref{interaction_function_points}. 
Let us first show that
\begin{equation}
\label{Proposition_2_3_a}
\widehat{\Xi}^{\nu,\zeta}=\widetilde{\Xi}^{\nu,\zeta}+\mathcal{O}_{d,\zeta,v,\alpha,B}\left(\nu^{\alpha/2}\,\exp(C_{2,1}(d,\zeta,B)\,L^d)\right),
\end{equation}
for suitable $C_{2,1}(d,\zeta,B)>0$.
We start by observing that for $a, b \in \R$ with $a,b \leq M$ we have 
\begin{equation}\label{MVT_exponential}
  |\ee^{a} - \ee^{b}| \leq \ee^{M} |a-b|\,.
\end{equation}
The estimate \eqref{MVT_exponential} is a consequence of the mean value theorem. Observe that Assumption~\ref{Assumption_on_v} (i) implies that for all $(\omega_1,\ldots,\omega_n)$ with $\omega_i \in \Omega^{\nu}$ for all $i=1,\ldots,n$ and for all $t \in [0,\nu]$, we have 
\begin{equation}
\label{assumption_v_i_on_loops_A}
    \sum_{1\leq i<j\leq n}v(\omega_i(t)-\omega_j(t))\ge -Bn\,,
\end{equation}
and therefore by taking averages in $t \in [0,\nu]$ in \eqref{assumption_v_i_on_loops_A}, we have 
\begin{equation}
\label{assumption_v_i_on_loops}
    \sum_{1\leq i<j\leq n}\frac{1}{\nu}\int_0^\nu \dd t\, v(\omega_i(t)-\omega_j(t))\ge -Bn\,.
\end{equation}
Combining \eqref{stable_potential_condition}, \eqref{assumption_v_i_on_loops}, and \eqref{MVT_exponential} with $M=Bn$, and using the triangle inequality, we obtain
\begin{multline}\label{telescoping1}
    \left|\exp\Biggl(-\sum_{1\leq i<j \leq n} \frac{1}{\nu}\int_0^\nu \dd t\, v(\omega_i(t)-\omega_j(t))\Biggr)-\exp\Biggl(-\sum_{1\leq i<j \leq n} v(x_i-x_j)\Biggr)\right|\\
    \leq \ee^{Bn} \sum_{1\leq i<j \leq n} \frac{1}{\nu} \int_{0}^{\nu} \dd t \left| v(\omega_i(t)-\omega_j(t)) - v(x_i-x_j) \right|.
\end{multline}
Consequently,
\begin{multline}
\label{Proposition_2_3}
|\widehat{\Xi}^{\nu,\zeta}-\widetilde{\Xi}^{\nu,\zeta}|\leq \sum_{n=0}^{\infty} \frac{z^{n \nu}}{n!} \,\ee^{Bn} \sum_{1 \leq i<j \leq n}\int_{\Lambda^n}\, \dd x_1\,\cdots \,\dd x_n\, \int \mathbb{W}^{\nu}_{x_1,x_1}(\dd \omega_1)\,\cdots\,\mathbb{W}^{\nu}_{x_n,x_n}(\dd \omega_n)\\
        \times \frac{1}{\nu} \int_{0}^{\nu} \dd t \left| v(\omega_i(t)-\omega_j(t)) - v(x_i-x_j) \right|.
\end{multline}
Fix $n\in \mathbb{N}^*$ and $i,j \in \{1,\dots,n\}$ with $i<j$. For any $k\neq i,j$, as in \eqref{lipschitz_g_2b}, we have that
\begin{equation}
\label{Proposition_2_4_a}
z^{\nu}\int_\Lambda \dd x_k\int \mathbb{W}_{x_k,x_k}^{\nu}(\dd \omega_k) \lesssim_{d,\zeta} L^d\,.
\end{equation}
Moreover, according to Lemma~\ref{Lemma_3}, we have
\begin{multline}
\label{Proposition_2_4}
z^{2\nu} \int_{\Lambda^2} \dd x_i\, \dd x_j\, \int \mathbb{W}^{\nu}_{x_i,x_i}(\dd \omega_i)\,\int \mathbb{W}^{\nu}_{x_j,x_j}(\dd \omega_j)\, \frac{1}{\nu} \int_{0}^{\nu} \dd t \left| v(\omega_i(t)-\omega_j(t)) - v(x_i-x_j) \right|
\\
\lesssim_{d,\zeta,v,\alpha} \nu^{\alpha/2}\,L^{2d}\,.
\end{multline}
Substituting \eqref{Proposition_2_4_a}--\eqref{Proposition_2_4} into \eqref{Proposition_2_3}, we obtain for suitable $C_{d,\zeta}>0$ the following estimate.
\begin{equation}
\label{Proposition_2_5}
\bigl|\widehat{\Xi}^{\nu,\zeta}-\widetilde{\Xi}^{\nu,\zeta}\bigr|\lesssim_{v,\alpha} \sum_{n=0}^{\infty} \frac{1}{n!} \, \ee^{Bn} \,C_{d,\zeta}^{n-2}\,n(n-1)\, \nu^{\alpha/2}\,(L^d)^{n}\lesssim_{d,\zeta,v,\alpha,B} \nu^{\alpha/2}\,\exp \left[\ee^B\,C_{d,\zeta}\,L^d\right].
\end{equation}
We hence deduce \eqref{Proposition_2_3_a} with $C_{2,1}(d,\zeta,B)=\ee^{2B}\,C_{d,\zeta}$ from \eqref{Proposition_2_5}.

Next, we show that
\begin{equation}
\label{Proposition_2_convergence_rate*} 
\widetilde{\Xi}^{\nu,\zeta}=\Xi^{0,\zeta}+\mathcal{O}_{d,\zeta,B} \left(\nu^{d/2}\exp(C_{2,2}(d,\zeta,B) L^{d})\right),
\end{equation}
for suitable $C_{2,2}(d,\zeta,B)>0$.
We start by using \eqref{heat_kernel_integral} to rewrite \eqref{Proposition_2_2} as 
\begin{equation}
\label{Proposition_2_2_B}
    \widetilde{\Xi}^{\nu,\zeta}=\sum_{n=0}^{\infty} \frac{(z^{\nu} \psi^{\nu}(0))^n}{n!} \int_{\Lambda^n} \dd x_1\,\cdots \,\dd x_n\,\exp\Biggl(-\sum_{1\leq i<j \leq n} v(x_i-x_j)\Biggr)\,.
\end{equation}
Recalling~\eqref{heat_kernel_estimates_tilde_A}, followed by~\eqref{fugacity_condition}, we can write
\begin{equation}
\label{eta_1}
z^{\nu} \psi^{\nu}(0) = z^{\nu} \eta^{\nu} + \frac{\zeta}{(2\pi)^{d/2}} \,.
\end{equation}
Using \eqref{partition_function_result_2}, \eqref{Proposition_2_2_B}--\eqref{eta_1}, and Assumption \ref{Assumption_on_v} (i), we have
\begin{equation}
|\widetilde{\Xi}^{\nu,\zeta}-\Xi^{0,\zeta}|
\leq \sum_{n=0}^{\infty}\, \frac{\ee^{Bn}L^{dn}}{n!} \,\left|\left(z^{\nu}\eta^{\nu} + \frac{\zeta}{(2\pi)^{d/2}}\right)^n-\left(\frac{\zeta}{(2\pi)^{d/2}}\right)^n\right|\,.
\end{equation}
We now use the mean-value theorem for the function $x \mapsto x^{n}$ in the $n$-th term of the sum above, for every $n \in \N^*$. Note that each derivative is increasing on the nonnegative real line. We therefore get
\begin{align}
|\widetilde{\Xi}^{\nu,\zeta}-\Xi^{0,\zeta}|
& \leq \sum_{n=0}^{\infty}\, \frac{\ee^{Bn}L^{dn}}{n!} \, n \, z^{\nu} \eta^{\nu} \left(z^{\nu}\eta^{\nu} + \frac{\zeta}{(2\pi)^{d/2}}\right)^{n-1}
\\
& = \ee^{B} L^{d} z^{\nu} \eta^{\nu} \exp\left(\ee^{B} L^{d} \left( z^{\nu} \eta^{\nu} + \frac{\zeta}{(2\pi)^{d/2}} \right) \right) \,.
\label{Proposition_2_6}
\end{align}
By Lemma~\ref{heat_kernel_estimates}~(ii) we have that $L^{d} \eta^{\nu} \leq C(d)$ for some constant $C(d) > 0$ uniformly in $\nu$. Applying this and \eqref{fugacity_condition} in \eqref{Proposition_2_6} we get
\begin{equation}
	|\widetilde{\Xi}^{\nu,\zeta}-\Xi^{0,\zeta}| \leq C(d) \ee^{B} \zeta \nu^{d/2} \exp\!\left(\!C(d)\ee^{B}\zeta\nu^{d/2} + \ee^{B}L^{d}\frac{\zeta}{(2\pi)^{d/2}}\!\right) \lesssim_{d, \zeta, B} \nu^{d/2} \exp(C_{2,2}(d,\zeta,B) L^{d}) \,.
\end{equation}

Let us note that \eqref{Proposition_2_convergence_rate} now follows from \eqref{Proposition_2_3_a} and \eqref{Proposition_2_convergence_rate*} with 
\begin{equation*}
C_2(d,\zeta,B)=\max\left\{C_{2,1}(d,\zeta,B),C_{2,2}(d,\zeta,B)\right\}.
\end{equation*}
In particular, claim \eqref{Proposition_2_1} follows.


\end{proof}

We can now prove Theorem \ref{partition_function_result}.
\begin{proof}[Proof of Theorem \ref{partition_function_result}]
The result follows immediately from Proposition \ref{Proposition_1} and Proposition \ref{Proposition_2}. The bound on the rate of convergence given by \eqref{partition_function_result_3} is deduced from \eqref{Proposition_2_1_convergence_rate} and \eqref{Proposition_2_convergence_rate} with 
\begin{equation*}
C_0(d,\zeta,B)=\max\left\{C_1(d,\zeta,B),C_2(d,\zeta,B)\right\}.
\end{equation*}
\end{proof}

\begin{remark}[The grand-canonical partition function of the Boltzmann gas]
	\label{Remark_Boltzmann_gas_partition_function}
	In Boltzmann statistics, the $n$-particle Hamiltonian acts on the entire $L^2(\Lambda^n)$, not just the subspace of symmetric functions. In particular, one replaces $P_{n}^{+}$ in \eqref{n_particle_space} with a multiplication operator $M_{n} \colon f \mapsto \frac{1}{n!} f$. See \cite[Section 2]{Ginibre} for more details. 
When deriving the Ginibre representation of the Bose gas partition function in \eqref{Ginibre_representation_1}, permutations in \eqref{symmetric_projector} are decomposed into cycles. In the Boltzmann ensemble, we only have the identity permutation, which decomposes into cycles of length $1$. Therefore, the grand-canonical partition function of a Boltzmann gas is
\begin{equation}
\label{Boltzmann_partition_function}
\Xi^{\nu,\zeta;\Bolt}=\widehat{\Xi}^{\nu,\zeta}\,,
\end{equation}
where $\widehat{\Xi}^{\nu,\zeta}$ is defined in~\eqref{Xi_hat}. In particular, the analogue of Proposition \ref{Proposition_1} automatically holds for the Boltzmann statistics. We then argue as in the Bose gas regime to deduce that the analogue of Theorem \ref{partition_function_result} holds for Boltzmann gases:
	\begin{equation*}
		\label{partition_function_result_3_Boltzmann}
		\bigl|\Xi^{\nu,\zeta;\Bolt}- \Xi^{0,\zeta} \bigr| \lesssim_{d,\zeta,v,\alpha,B} \nu^{\alpha/2}\,\exp \left [C(d,\zeta,B) L^d \right]
	\end{equation*}
	for some $C(d,\zeta,B)>0$, and consequently,
	\begin{equation*}
		\lim_{\nu \rightarrow 0} \Xi^{\nu,\zeta;\Bolt}=\Xi^{0,\zeta}\,. 
	\end{equation*}
\end{remark}

\section{Convergence of $p$-particle reduced density matrices in finite volume: proof of Theorem~\ref{density_matrix_result} }
\label{section_density_matrix_Bose}

In this section, we prove Theorem \ref{density_matrix_result} on the convergence of the reduced $p$-particle density matrices in finite volume. The starting point is the Ginibre representation from Lemma \ref{Ginibre_representation}  (ii). Before proceeding with the analysis, let us first reformulate the latter result.
We henceforth fix $p \in \N^*$. 
Using \eqref{W_product}, we rewrite the Ginibre representation of the kernel of the reduced $p$-particle density matrix 
given by \eqref{Ginibre_representation_4} as 
\begin{equation}
\label{ploopkernel}
    \Gamma_p^{\nu,\zeta}(\vec{x},\vec{y})
    =\sum_{\pi \in S_p}\sum_{\vec{k}\in (\N^*)^p}\prod_{i=1}^p z^{k_i \nu}\int \mathbb{W}^{\nu \vec{k}}_{\pi \vec{y},\vec{x}}(\dd \vec{\omega})\,\gamma_p^{\nu,\zeta}\,(\vec{\omega})\,,
\end{equation}
where 
    \begin{equation}\label{p_loop_correlation_function}
        \gamma_p^{\nu,\zeta}\,(\vec{\omega})\coloneqq \frac{1}{\Xi^{\nu,\zeta}}\sum_{n=0}^{\infty} \frac{1}{n!} \sum_{\vec{k} \in (\N^{*})^n} \prod_{i=1}^{n} \frac{z^{k_i\nu}}{k_i} \int \mathbb{W}^{\nu \vec{k}} (\dd \vec{\widetilde{\omega}}) \, \exp \bigl(-\mathcal{V}^{\nu}(\vec{\widetilde{\omega}}\vec{\omega})\bigr)\,.
    \end{equation}
In \eqref{ploopkernel}, we recall the convention \eqref{pi_vec_y}.    
In \eqref{p_loop_correlation_function}, we recall \eqref{grand_canonical_ensemble_1}, \eqref{loop_interaction_2}, and \eqref{concatenation_omega}. We refer to $\gamma_p^{\nu,\zeta}(\vec{\omega})$ defined in \eqref{p_loop_correlation_function} as the \textbf{\emph{$p$-loop correlation function}}. Let us note that, on the right-hand side of \eqref{ploopkernel}, for $i=1,\ldots,p$ and fixed $\pi \in S_p$, the $\omega_i$ are open paths of duration $k_i \nu$ joining $x_i$ to $y_{\pi(i)}$.

Let us now proceed with the proof Theorem~\ref{density_matrix_result}. We follow the pattern of the proof of Theorem~\ref{partition_function_result}. As in the analysis of the grand-canonical partition function $\Xi^{\nu,\zeta}$ (given in Proposition \ref{Proposition_1}), we eliminate from the Ginibre representation of $\Gamma_p^{\nu,\zeta}(\vec{x},\vec{y})$ loops and open paths where at least one component has duration greater than $\nu$. This is achieved in two steps: In Proposition \ref{Proposition_4_3_A} below, we eliminate closed loops $\tilde{\omega}_i$ of duration greater than $\nu$; then, in Proposition \ref{Proposition_4_3} below, we eliminate open paths $\omega_i$ of duration greater than $\nu$. 

Let us now give the precise details of the analysis.
The first step is to show that only closed paths of duration $\nu$ give the leading contribution in \eqref{ploopkernel}--\eqref{p_loop_correlation_function} as $\nu \rightarrow 0$.
 \begin{proposition}[Eliminating loops of duration greater than $\nu$]
 \label{Proposition_4_3_A}
 Let $p \in \N^*$ be given. For $\vec{\omega}=(\omega_1,\ldots,\omega_p) \in \Omega^{\vec{T}}$, such that $T_i=T(\omega_i) \in \nu \N^{*}$ for all $i \in \{1,\ldots,p\}$, we define
    \begin{equation}\label{gamma_hat_definition}
        \widehat{\gamma}_p^{\nu,\zeta}(\vec{\omega})\coloneqq \frac{1}{\Xi^{\nu,\zeta}}\sum_{n=0}^{\infty} \frac{z^{n \nu}}{n!}  \int \mathbb{W}^{\nu \vec{1}_n} (\dd \vec{\widetilde{\omega}}) \, \exp \bigl(-\mathcal{V}^{\nu}(\vec{\widetilde{\omega}}\vec{\omega})\bigr)\,.
    \end{equation}
    Using \eqref{gamma_hat_definition}, we let
     \begin{equation}
     \label{Gamma_hat_nu_zeta}
         \widehat{\Gamma}_p^{\nu,\zeta}(\vec{x},\vec{y})\coloneqq \sum_{\pi \in S_p}\sum_{\vec{k}\in (\N^{*})^p}\prod_{i=1}^p z^{k_i \nu}\int \mathbb{W}^{\nu \vec{k}}_{\pi \vec{y}, \vec{x}}(\dd \vec{\omega})\,\widehat{\gamma}_p^{\nu,\zeta}\,(\vec{\omega})\,.
     \end{equation}
     Then, 
     \begin{equation}\label{Gamma_minus_Gamma_hat_estimate}
         |\Gamma_p^{\nu,\zeta}(\vec{x},\vec{y})-\widehat{\Gamma}_p^{\nu,\zeta}(\vec{x},\vec{y})|\lesssim_{p,d,\zeta,v,\alpha,B,L} \nu^{d/2}\,.
     \end{equation}
 \end{proposition}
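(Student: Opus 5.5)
The plan is to mirror the proof of Proposition \ref{Proposition_1}, now with the $p$ open paths $\vec{\omega}$ held as external, fixed-duration objects. Subtracting \eqref{Gamma_hat_nu_zeta} from \eqref{ploopkernel} gives
\begin{equation*}
\Gamma_p^{\nu,\zeta}(\vec{x},\vec{y})-\widehat{\Gamma}_p^{\nu,\zeta}(\vec{x},\vec{y})=\frac{1}{\Xi^{\nu,\zeta}}\sum_{\pi \in S_p}\sum_{\vec{k}\in (\N^*)^p}\prod_{i=1}^p z^{k_i\nu}\int \mathbb{W}^{\nu\vec{k}}_{\pi\vec{y},\vec{x}}(\dd\vec{\omega})\,D(\vec{\omega})\,,
\end{equation*}
where $D(\vec{\omega})$ is the sum over $n$ and $\vec{\tilde k}\in(\N^*)^n\setminus\{\vec{1}_n\}$ of $\prod_i \tilde z$-type weights times $\int\mathbb{W}^{\nu\vec{\tilde k}}(\dd\vec{\widetilde{\omega}})\exp(-\mathcal{V}^{\nu}(\vec{\widetilde\omega}\vec\omega))$. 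First I will take care of the denominator: since $\Xi^{\nu,\zeta}\to\Xi^{0,\zeta}>0$ by Theorem \ref{partition_function_result} (and $\Xi^{0,\zeta}\geq 1$), for $\nu\le\nu_0$ small we have $\Xi^{\nu,\zeta}\geq 1/2$, so $1/\Xi^{\nu,\zeta}$ is bounded by a constant depending on $d,\zeta,v,\alpha,B,L$.

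Next I apply Lemma \ref{lemma_loop_interactions} to the concatenated configuration $\vec{\widetilde\omega}\vec\omega$ (all durations are in $\nu\N^*$), giving $\exp(-\mathcal{V}^{\nu}(\vec{\widetilde\omega}\vec\omega))\le \ee^{B\sum_j\tilde k_j}\,\ee^{B\sum_i k_i}$. This fully decouples the closed loops from the open paths. The closed-loop part is then literally the quantity \eqref{(2.16)} bounded in the proof of Proposition \ref{Proposition_1}, which yields $\lesssim_{d,\zeta,B}\nu^{d/2}L^d\exp[\tilde C L^d]$ via \eqref{fin-vol-telescoping-bound}--\eqref{zeta-zeta_hat}. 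The open-path part becomes $\sum_{\pi}\prod_{i=1}^p\sum_{k_i\ge1}\tilde z^{k_i\nu}\psi^{k_i\nu}(y_{\pi(i)}-x_i)$ using \eqref{heat_kernel_integral}, \eqref{def_of_z_tilde_consequence}.

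The remaining step, which I expect to be the only real obstacle, is to show this open-path factor is bounded uniformly in $\nu$ (it is the open-path analogue of the single-loop count). By Lemma \ref{heat_kernel_estimates} (i), $\tilde z^{k\nu}\psi^{k\nu}\lesssim_d (\zeta\ee^B\nu^{d/2})^k(1+(k\nu)^{-d/2})$. The $k=1$ term gives $\lesssim\zeta\ee^B$, and for $k\ge2$ the sum is $\lesssim\sum_k(\zeta\ee^B)^k\nu^{d(k-1)/2}k^{-d/2}\lesssim \nu^{d/2}$ under \eqref{nu_small}. Hence each factor is $\lesssim_{d,\zeta,B}1$ and the sum over $\pi$ contributes $p!$. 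Multiplying these bounds gives \eqref{Gamma_minus_Gamma_hat_estimate} with an $L$-dependent constant. One point to check is that the pointwise bound on $\psi$ (not its integral) is what is needed here, since $\vec x,\vec y$ are fixed; Lemma \ref{heat_kernel_estimates} (i) supplies this uniformly in the endpoints.
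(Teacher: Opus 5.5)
Your proposal is correct and matches the paper's proof. The paper first isolates the closed-loop estimate as Lemma~\ref{Proposition4_1}, using Lemma~\ref{lemma_loop_interactions} on $\vec{\widetilde\omega}\vec\omega$, the bound \eqref{(2.16)}--\eqref{zeta-zeta_hat}, and $\Xi^{\nu,\zeta}\sim 1$; it then integrates over the open paths exactly as you do, bounding each $\sum_{k}z^{k\nu}\ee^{Bk}\psi^{k\nu}(y_{\pi(i)}-x_i)$ by $\nu^{d/2}+\zeta\ee^B$ via Lemma~\ref{heat_kernel_estimates}~(i). The only difference is cosmetic: you obtain the lower bound on $\Xi^{\nu,\zeta}$ from Theorem~\ref{partition_function_result}, but it holds trivially as $\Xi^{\nu,\zeta}\geq 1$ by positivity of the integrand in \eqref{Ginibre_representation_1}.
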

 
 For the second step in the analysis, we eliminate open paths from $\vec{x}$ to $\pi\vec{y}$ with at least one component of duration greater than $\nu$. The interaction $\mathcal{V}^{\nu}(\vec{\widetilde{\omega}}\vec{\omega})$ includes vectors of closed loops $\vec{\widetilde{\omega}}$, and vectors of open paths $\vec{\omega}$ linking $\vec{x}$ to $\pi \vec{y}$. In order to explicitly analyse the open path dependence, we use \eqref{gamma_hat_definition} and \eqref{Gamma_hat_nu_zeta} to write
\begin{equation}\label{Gamma_hat_expansion}
    \widehat{\Gamma}_{p}^{\nu,\zeta}(\vec{x},\vec{y})=
    \frac{1}{\Xi^{\nu,\zeta}}\sum _{\pi \in S_p}\sum_{\vec{k}\in (\N^{*})^p}\sum_{n=0}^\infty\frac{z^{n\nu}}{n!}\prod_{i=1}^p z^{k_i \nu}
\int \mathbb{W}^{\nu \vec{1}_n} (\dd \widetilde{\vec{\omega}})\,
\int \mathbb{W}^{\nu \vec{k}}_{\pi \vec{y},\vec{x}}(\dd \vec{\omega})
\, \exp \bigl(-\mathcal{V}^{\nu}(\vec{\widetilde{\omega}}\vec{\omega})\bigr)\,.
\end{equation}
The next step is to show that only open paths of duration $\nu$ give the leading contribution in \eqref{Gamma_hat_expansion} as $\nu \rightarrow 0$.
\begin{proposition}[Eliminating open paths of duration greater than $\nu$]
\label{Proposition_4_3}
    Let $p \in \N^*$ be given. Define 
    \begin{equation}\label{Gamma_check_definition}
    \widecheck{\Gamma}_{p}^{\nu,\zeta}(\vec{x},\vec{y})\coloneqq 
    \frac{1}{\Xi^{\nu,\zeta}}\sum_{\pi \in S_p}\sum_{n=0}^\infty\frac{z^{(n+p) \nu}}{n!}
    \int \mathbb{W}^{\nu \vec{1}_n}(\dd \vec{\widetilde{\omega}}) 
    \int \mathbb{W}^{\nu \vec{1}_p}_{\pi \vec{y}, \vec{x}}(\dd \vec{\omega}) 
   \exp \bigl(-\mathcal{V}^{\nu}(\vec{\widetilde{\omega}}\vec{\omega}))\,.
\end{equation}
Then 
\begin{equation}\label{proposition_4_3_1}
    |\widehat{\Gamma}_{p}^{\nu,\zeta}(\vec{x},\vec{y})-\widecheck{{\Gamma}}_{p}^{\nu,\zeta}(\vec{x},\vec{y})|\lesssim_{p,d,\zeta,v,\alpha,B,L} \nu^{d/2}\,.
\end{equation}
\end{proposition}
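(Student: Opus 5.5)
The plan is to write $\widehat{\Gamma}_{p}^{\nu,\zeta}-\widecheck{\Gamma}_{p}^{\nu,\zeta}$ as the part of \eqref{Gamma_hat_expansion} with $\vec{k}\neq\vec{1}_p$, and to bound it crudely, term by term, in absolute value, following the pattern of the proof of Proposition \ref{Proposition_1}.

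First, I would bound the interaction from below. Applying Lemma \ref{lemma_loop_interactions} to the concatenated family $\vec{\widetilde{\omega}}\vec{\omega}$ gives
\[
\exp\bigl(-\mathcal{V}^{\nu}(\vec{\widetilde{\omega}}\vec{\omega})\bigr)\leq \ee^{Bn}\prod_{i=1}^{p}\ee^{Bk_i}.
\]
Here I am using that the proof of that lemma only uses the relabelling of the time-$\nu$ segments, so it applies verbatim to open paths. With $\tilde z$ as in \eqref{definition_of_z_tilde}, this gives
\[
|\widehat{\Gamma}_{p}^{\nu,\zeta}-\widecheck{\Gamma}_{p}^{\nu,\zeta}|\leq \frac{1}{\Xi^{\nu,\zeta}}\sum_{\pi\in S_p}\Biggl(\sum_{n\ge0}\frac{\tilde z^{n\nu}}{n!}\bigl(\psi^{\nu}(0)L^d\bigr)^n\Biggr)\sum_{\vec{k}\neq\vec{1}_p}\prod_{i=1}^p\tilde z^{k_i\nu}\,\psi^{k_i\nu}(y_{\pi(i)}-x_i).
\]
This uses \eqref{heat_kernel_integral} for both the loops and the open paths.

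Next come the pieces that do not depend on $\vec{k}$. The loop sum equals $\exp(\tilde z^{\nu}\psi^{\nu}(0)L^d)\leq \exp(C\zeta\ee^B L^d)$ by \eqref{fin-vol-bound2}. For the normalisation, I would use $\Xi^{\nu,\zeta}\geq 1$, since the $n=0$ term of \eqref{Ginibre_representation_1} equals $1$ and all other terms are nonnegative. So both factors are controlled by constants depending on $d,\zeta,B,L$.

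The main step is the sum over $\vec{k}\neq\vec{1}_p$. It factorises as
\[
\prod_{i=1}^p\bigl(a_i+b_i\bigr)-\prod_{i=1}^p a_i,\qquad a_i\coloneqq\tilde z^{\nu}\psi^{\nu}(y_{\pi(i)}-x_i),\qquad b_i\coloneqq\sum_{k\ge2}\tilde z^{k\nu}\psi^{k\nu}(y_{\pi(i)}-x_i).
\]
Expanding the product, every remaining term contains at least one factor $b_i$, so the difference is at most $2^p\max(1,\max a_i)^{p-1}\max b_i$. By Lemma \ref{heat_kernel_estimates} (i) and \eqref{fugacity_condition},
\[
a_i\lesssim_d \zeta\ee^B\bigl(\nu^{d/2}L^{-d}+1\bigr)\lesssim \zeta\ee^B.
\]
This pointwise bound on the heat kernel, uniform in the endpoints, is what replaces the spatial integral used in Proposition \ref{Proposition_1}. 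For $b_i$ I would use
\[
b_i\lesssim_d\sum_{k\ge2}(\zeta\ee^B)^k\nu^{dk/2}\bigl(L^{-d}+(k\nu)^{-d/2}\bigr)\lesssim\sum_{k\ge2}(\zeta\ee^B)^k\nu^{d(k-1)/2}\lesssim_{d,\zeta,B}\nu^{d/2},
\]
valid once $\nu$ is small as in \eqref{nu_small}. This is the same computation as in \eqref{fin-vol-two-sums-split}, but without the factor $L^d$.

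Putting these together and summing over the $p!$ permutations gives \eqref{proposition_4_3_1}. I expect the only delicate point to be checking that the open-path version of Lemma \ref{lemma_loop_interactions} really is valid, since the self-interaction of an open path uses the same segment structure as a loop. I do not expect a real obstacle.
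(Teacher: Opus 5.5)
Your proposal is correct and is essentially the paper's proof. You bound $\exp(-\mathcal{V}^{\nu}(\vec{\widetilde{\omega}}\vec{\omega}))$ by Lemma~\ref{lemma_loop_interactions}, integrate out the loops to get $\exp(C\zeta\ee^{B}L^d)$, and use $\Xi^{\nu,\zeta}\geq 1$. You then isolate one open path with $k_i\geq 2$, whose sum is $\lesssim\nu^{d/2}$, while bounding the other factors uniformly in the endpoints as in \eqref{proposition4_2_4}. The paper isolates that path by a union bound (a factor $p$ and $k_1\geq 2$); your expansion of $\prod_i(a_i+b_i)-\prod_i a_i$ amounts to the same thing.

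Two small remarks. Lemma~\ref{lemma_loop_interactions} is already stated for arbitrary paths in $\Omega^{\vec{T}}$, so no open-path adaptation is needed. In your expansion, terms containing several factors $b_i$ need the bound $\max(1,\max_i a_i,\max_i b_i)^{p-1}$ rather than $\max(1,\max_i a_i)^{p-1}$; this is harmless since $b_i\lesssim\nu^{d/2}$.
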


We now prove Propositions \ref{Proposition_4_3_A} and \ref{Proposition_4_3}.
Before proceeding to the proof of Proposition \ref{Proposition_4_3_A}, we note the following related quantitative result comparing the $p$-loop correlation functions \eqref{p_loop_correlation_function} with the quantity defined in \eqref{gamma_hat_definition}.


\begin{lemma}
\label{Proposition4_1}
 Let $p \in \N^*$ be given. Given $\vec{\omega}=(\omega_1,\ldots,\omega_p) \in \Omega^{\vec{T}}$, such that $T_i=T(\omega_i) \in \nu \N^{*}$ for all $i \in \{1,\ldots,p\}$, we have
    \begin{equation}
    \label{Proposition4_1_bound}
        |\gamma_p^{\nu,\zeta}(\vec{\omega})-\widehat{\gamma}_p^{\nu,\zeta}(\vec{\omega})|\lesssim_{d,\zeta,v,\alpha,B,L}\nu^{d/2}\exp\left(\frac{B}{\nu}\sum_{i=1}^p T(\omega_i)\right )\,.
    \end{equation}
Here, we recall \textup{\eqref{p_loop_correlation_function}--\eqref{gamma_hat_definition}}.
\end{lemma}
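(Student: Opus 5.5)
The plan is to mimic the proof of Proposition \ref{Proposition_1}, with the fixed paths $\vec{\omega}$ carried along as spectators whose interactions are controlled only through Lemma \ref{lemma_loop_interactions}.

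Subtracting \eqref{gamma_hat_definition} from \eqref{p_loop_correlation_function} gives
\begin{equation*}
\gamma_p^{\nu,\zeta}(\vec{\omega})-\widehat{\gamma}_p^{\nu,\zeta}(\vec{\omega})=\frac{1}{\Xi^{\nu,\zeta}}\sum_{n=1}^{\infty}\frac{1}{n!}\sum_{\substack{\vec{k}\in(\N^*)^n\\ \vec{k}\neq\vec{1}_n}}\prod_{i=1}^n\frac{z^{k_i\nu}}{k_i}\int\mathbb{W}^{\nu\vec{k}}(\dd\vec{\widetilde{\omega}})\,\exp\bigl(-\mathcal{V}^{\nu}(\vec{\widetilde{\omega}}\vec{\omega})\bigr)\,.
\end{equation*}
Apply Lemma \ref{lemma_loop_interactions} to the concatenation $\vec{\widetilde{\omega}}\vec{\omega}$. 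This is legitimate because the lemma only uses that all durations lie in $\nu\N^*$, not that the paths are closed. It gives
\begin{equation*}
\exp\bigl(-\mathcal{V}^{\nu}(\vec{\widetilde{\omega}}\vec{\omega})\bigr)\le \exp\Bigl(B\sum_{i=1}^n k_i\Bigr)\exp\Bigl(\frac{B}{\nu}\sum_{i=1}^pT(\omega_i)\Bigr).
\end{equation*}
The second factor is exactly the $\vec{\omega}$-dependent factor in \eqref{Proposition4_1_bound}, and it pulls out of all the sums.

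What remains, apart from the prefactor $1/\Xi^{\nu,\zeta}$, is precisely the expression \eqref{(2.16)} with $\tilde{z}$ as in \eqref{definition_of_z_tilde}. The chain \eqref{fin-vol-telescoping-bound}--\eqref{zeta-zeta_hat} then bounds it by $C(d,\zeta,B)\,\nu^{d/2}L^d\exp[\tilde C(d,\zeta,B)L^d]$, once $\nu$ is small as in \eqref{nu_small}. I will simply reuse this computation rather than redo it.

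The remaining step is a lower bound on $\Xi^{\nu,\zeta}$ that is uniform in small $\nu$. I see two routes:
\begin{itemize}
\item[(a)] Use the trivial bound $\Xi^{\nu,\zeta}\ge1$. This holds because every term in \eqref{Ginibre_representation_1} is nonnegative and the $n=0$ term equals $1$ (equivalently, \eqref{grand_canonical_ensemble} is $1$ plus traces of positive operators).
\item[(b)] Use Theorem \ref{partition_function_result}, which gives $\Xi^{\nu,\zeta}\to\Xi^{0,\zeta}\ge1$.
\end{itemize}
Route (a) is simplest, and I will take it. Combining the pieces gives
\begin{equation*}
|\gamma_p^{\nu,\zeta}-\widehat{\gamma}_p^{\nu,\zeta}|\lesssim_{d,\zeta,B,L}\nu^{d/2}\exp\Bigl(\frac{B}{\nu}\sum_{i=1}^p T(\omega_i)\Bigr).
\end{equation*}
The allowed dependence on $v$ and $\alpha$ is then harmless: they are listed among the constants in \eqref{Proposition4_1_bound} but are not actually needed here.

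The only point I expect needs care is the bookkeeping in the stability bound. The factor $\exp(Bk_i)$ coming from the closed loops must be absorbed into $\tilde z$. The open-path contribution, by contrast, must be kept separate and explicit, since it is not summable in $T(\omega_i)$ at this stage. In Proposition \ref{Proposition_4_3_A} it will later be paired with $z^{k_i\nu}$ to produce $\tilde z^{k_i\nu}$.
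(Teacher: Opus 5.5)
Your proposal is correct and follows the paper's proof. Both subtract the two series, apply Lemma \ref{lemma_loop_interactions} to the concatenation $\vec{\widetilde{\omega}}\vec{\omega}$, pull out the factor $\exp\bigl(\frac{B}{\nu}\sum_{i=1}^p T(\omega_i)\bigr)$, and reuse the bound \eqref{Proposition_2_1_2}--\eqref{zeta-zeta_hat} from Proposition \ref{Proposition_1}. The only difference is in handling $1/\Xi^{\nu,\zeta}$: the paper invokes \eqref{Xi_bounds} from Theorem \ref{partition_function_result}, whereas your trivial bound $\Xi^{\nu,\zeta}\ge 1$ suffices and shows that the implicit constant does not depend on $v$ or $\alpha$.
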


\begin{proof}[Proof of Lemma \ref{Proposition4_1}]
    We first observe that by Theorem~\ref{partition_function_result}, the quantities $\Xi^{\nu,\zeta}$ and $\Xi^{0,\zeta}$ given by \eqref{grand_canonical_ensemble} and \eqref{partition_function_result_2} respectively satisfy
\begin{equation}
\label{Xi_bounds}
\Xi^{\nu,\zeta} \sim_{d,\zeta,v,\alpha,B,L} 1
\,,\qquad \Xi^{0,\zeta} \sim_{d,\zeta,B,L} 1\,.
\end{equation}
Furthermore, by \eqref{p_loop_correlation_function}--\eqref{gamma_hat_definition}, we have
    \begin{equation}\label{proposition_4_1_1}
    \Xi^{\nu,\zeta}(\gamma_p^{\nu,\zeta}(\vec{\omega})-\widehat{\gamma}_p^{\nu,\zeta}(\vec{\omega}))=
    \sum_{n = 1}^{\infty} \frac{1}{n!} \sum_{\substack{\vec{k} \in (\N^{*})^n \\ \vec{k} \neq \vec{1}_n}} \prod_{i = 1}^{n} \frac{z^{k_{i}\nu}}{k_{i}} 
    \int \mathbb{W}^{\nu \vec{k}} (\dd \vec{\widetilde{\omega}}) \, \exp \left(-\mathcal{V}^{\nu}(\vec{\widetilde{\omega}}\vec{\omega})\right) \,.
    \end{equation}
    By Lemma \ref{lemma_loop_interactions}, we have
    \begin{equation}\label{stability_condition_on_correlation_function_1}
        \exp \left(-\mathcal{V}^{\nu}(\vec{\widetilde{\omega}}\vec{\omega})\right)\leq \exp\left(\frac{B}{\nu}\sum _{i=1}^nT(\widetilde{\omega}_i)+\frac{B}{\nu}\sum_{i=1}^p T(\omega_i)\right).
    \end{equation}
    Applying \eqref{stability_condition_on_correlation_function_1} in \eqref{proposition_4_1_1}, we obtain
    \begin{multline}
    \label{proposition_4_1_2}
        \Xi^{\nu,\zeta}(\gamma_p^{\nu,\zeta}(\vec{\omega})-\widehat{\gamma}_p^{\nu,\zeta}(\vec{\omega}))
        \\
        \leq \exp\left(\frac{B}{\nu}\sum_{i=1}^p T(\omega_i)\right) \sum_{n = 1}^{\infty} \frac{1}{n!} \sum_{\substack{\vec{k} \in (\N^{*})^n \\ \vec{k} \neq \vec{1}_n}} \prod_{i = 1}^{n} \frac{z^{k_{i}\nu}}{k_{i}} 
    \int \mathbb{W}^{\nu \vec{k}} (\dd \vec{\widetilde{\omega}}) \, \exp \left(\sum_{i=1}^{n} Bk_i\right). 
    \end{multline}
Recalling the analysis \eqref{Proposition_2_1_2}--\eqref{zeta-zeta_hat} in the proof of Proposition \ref{Proposition_1}, we have 
    \begin{equation}
    \label{proposition_4_1_3}
        \sum_{n = 1}^{\infty} \frac{1}{n!} \sum_{\substack{\vec{k} \in (\N^{*})^n \\ \vec{k} \neq \vec{1}_n}} \prod_{i = 1}^{n} \frac{z^{k_{i}\nu}}{k_{i}} 
    \int \mathbb{W}^{\nu \vec{k}} (\dd \vec{\widetilde{\omega}})\, \, \exp{\left(\sum_{i=1}^{n} Bk_i\right)}\lesssim_{d,\zeta ,B,L} \nu^{d/2}\,,
    \end{equation}
    and \eqref{Proposition4_1_bound} follows from \eqref{proposition_4_1_2}--\eqref{proposition_4_1_3} by recalling \eqref{Xi_bounds}.
\end{proof}

We now apply Lemma \ref{Proposition4_1} in \eqref{ploopkernel} and \eqref{Gamma_hat_nu_zeta} to prove Proposition \ref{Proposition_4_3_A}.

 \begin{proof}[Proof of Proposition \ref{Proposition_4_3_A}]
     For $\pi \in S_p$, we define
     \begin{equation}
     \label{Gamma_pi_def}
         \Gamma_{p,\pi}^{\nu,\zeta}(\vec{x},\vec{y})\coloneqq \sum_{\vec{k}\in (\N^{*})^p}\prod_{i=1}^p z^{k_i \nu}\int
         \mathbb{W}^{\nu \vec{k}}_{\pi \vec{y}, \vec{x}} (\dd \vec{\omega})\,\gamma_p^{\nu,\zeta}\,(\vec{\omega})\,,
     \end{equation}
     and 
     \begin{equation}
     \label{hat_Gamma_pi_def}
         \widehat{\Gamma}_{p,\pi}^{\nu,\zeta}(\vec{x},\vec{y})\coloneqq \sum_{\vec{k}\in (\N^{*})^p}\prod_{i=1}^p z^{k_i \nu}\int \mathbb{W}^{\nu \vec{k}}_{\pi \vec{y}, \vec{x}} (\dd \vec{\omega})\,\widehat{\gamma}_p^{\nu,\zeta}\,(\vec{\omega})\,,
     \end{equation}
where we recall \eqref{p_loop_correlation_function}--\eqref{gamma_hat_definition}.
Note that by \eqref{ploopkernel}, \eqref{Gamma_hat_nu_zeta}, and \eqref{Gamma_pi_def}--\eqref{hat_Gamma_pi_def}, we can write 
\begin{equation}
\label{permutation_sum}
\Gamma_p^{\nu,\zeta}(\vec{x},\vec{y})=\sum_{\pi \in S_p} \Gamma_{p,\pi}^{\nu,\zeta}(\vec{x},\vec{y})\,,\qquad 
 \widehat{\Gamma}_p^{\nu,\zeta}(\vec{x},\vec{y})=\sum_{\pi \in S_p} \widehat{\Gamma}_{p,\pi}^{\nu,\zeta}(\vec{x},\vec{y})\,.
\end{equation}
By \eqref{Gamma_pi_def}--\eqref{permutation_sum}, we can deduce \eqref{Gamma_minus_Gamma_hat_estimate} provided that we show     
     \begin{equation}\label{proposition4_2_1}
         |\Gamma_{p, \pi}^{\nu,\zeta}(\vec{x},\vec{y})-\widehat{\Gamma}_{p,\pi}^{\nu,\zeta}(\vec{x},\vec{y})|\lesssim_{p,d,\zeta,v,\alpha,B,L} \nu^{d/2}
     \end{equation}
     for a fixed $\pi \in S_p$.

 We henceforth consider $\pi \in S_p$ fixed. Using \eqref{Gamma_pi_def}--\eqref{hat_Gamma_pi_def} and Lemma~\ref{Proposition4_1}, we have
     \begin{multline}\label{proposition4_2_2}
         |\Gamma_{p, \pi}^{\nu,\zeta}(\vec{x},\vec{y})-\widehat{\Gamma}_{p,\pi}^{\nu,\zeta}(\vec{x},\vec{y})|\lesssim_{d,\zeta,v,\alpha,B,L} \nu^{d/2}\sum_{\vec{k}\in (\N^*)^p}\prod_{i=1}^p z^{k_i \nu}\int  \mathbb{W}^{\nu \vec{k}}_{\pi \vec{y}, \vec{x}} (\dd \vec{\omega})\,\exp\left(\frac{B}{\nu}\sum_{i=1}^p T(\omega_i)\right)\\
         =\nu^{d/2}\sum_{\vec{k}\in (\N^*)^p}\prod_{i=1}^p z^{k_i \nu}\ee ^{B k_i}\int \mathbb{W}^{\nu \vec{k}}_{\pi \vec{y}, \vec{x}} (\dd \vec{\omega})=\nu^{d/2}\sum_{\vec{k}\in (\N^*)^p}\prod_{i=1}^p z^{k_i \nu}\ee ^{B k_i}\psi^{k_i\nu}(y_{\pi(i)}-x_i)\,,
     \end{multline}
     where above, we recalled \eqref{heat_kernel_integral}. Consider $i \in \{1,\ldots,p\}$ fixed. Using Lemma \ref{heat_kernel_estimates} (i) along with \eqref{fugacity_condition}, we estimate
     \begin{multline}\label{proposition4_2_3}
         \sum_{k=1}^{\infty}z^{k \nu}\ee^{Bk} \psi^{k \nu}(y_{\pi(i)}-x_i)\lesssim_{d} \sum_{k=1}^{\infty}z^{k \nu}\ee^{Bk}\left(\frac{1}{L^d}+\frac{1}{(k \nu)^{d/2}}\right) 
\\
\leq \sum_{k=1}^{\infty}\left( ( \zeta \ee^B \nu^{d/2})^{k}+\frac{(\zeta \ee^B)^{k}\nu^{d(k-1)/2}}{ k^{d/2}} \right)
\leq \sum_{k=1}^{\infty} \bigl(\zeta \ee^{B} \nu^{d/2}\bigr)^{k}+\zeta \ee^{B}+\sum_{k=2}^{\infty} \bigl(\zeta \ee^{B} \nu^{d/4}\bigr)^{k}\,.
     \end{multline}
Choosing $\nu$ sufficiently small such that $\zeta \ee^B\nu^{d/4}<\frac{1}{2}$, it follows that 
     \begin{equation}\label{proposition4_2_4}
        \eqref{proposition4_2_3} \lesssim_{d,\zeta,B} \nu^{d/2} +\zeta \ee^B\,.
     \end{equation}
     Substituting \eqref{proposition4_2_4} into \eqref{proposition4_2_2}, we obtain
     \begin{equation}\label{proposition4_2_5}
         |\Gamma_{p, \pi}^{\nu,\zeta}(\vec{x},\vec{y})-\widehat{\Gamma}_{p,\pi}^{\nu,\zeta}(\vec{x},\vec{y})|\lesssim_{p,d,\zeta,B,L} \nu^{d/2}(\nu^{d/2} +\zeta \ee^B)^p\,,
     \end{equation}
     and \eqref{proposition4_2_1} follows.
     \end{proof}

\begin{proof}[Proof of Proposition \ref{Proposition_4_3}]
Let $\pi \in S_p$ be given. 
Rewriting \eqref{hat_Gamma_pi_def} in analogy with \eqref{Gamma_hat_expansion}, we have
\begin{equation*}
	\widehat{\Gamma}_{p,\pi}^{\nu,\zeta}(\vec{x},\vec{y})=\frac{1}{\Xi^{\nu,\zeta}}\sum_{\vec{k}\in (\N^{*})^p}\sum_{n=0}^\infty\frac{z^{n\nu}}{n!}\prod_{i=1}^p z^{k_i \nu}\int \mathbb{W}^{\nu \vec{1}_n}(\dd \vec{\widetilde{\omega}})\,
\int \mathbb{W}^{\nu \vec{k}}_{\pi \vec{y}, \vec{x}} (\dd \vec{\omega}) \,\exp \bigl(-\mathcal{V}^{\nu}(\vec{\widetilde{\omega}}\vec{\omega})\bigr)\,.
\end{equation*}
Furthermore, let us  define $\widecheck{\Gamma}_{p,\pi}^{\nu,\zeta}(\vec{x},\vec{y})$ as
\begin{equation}\label{Gamma_check_pi_definition}
    \widecheck{\Gamma}_{p,\pi}^{\nu,\zeta}(\vec{x},\vec{y})\coloneqq \frac{1}{\Xi^{\nu,\zeta}}\sum_{n=0}^\infty\frac{z^{(n+p) \nu}}{n!} \int \mathbb{W}^{\nu \vec{1}_n}(\dd \vec{\widetilde{\omega}})\,
\int \mathbb{W}^{\nu \vec{1}_p}_{\pi \vec{y}, \vec{x}} (\dd \vec{\omega}) \,\exp \bigl(-\mathcal{V}^{\nu}(\vec{\widetilde{\omega}}\vec{\omega})\bigr)\,.
\end{equation}
It suffices to show that
 \begin{equation}\label{Gamma_hat_pi_Gamma_check_estimation}
     |\widehat{\Gamma}_{p,\pi}^{\nu,\zeta}(\vec{x},\vec{y})-\widecheck{{\Gamma}}_{p,\pi}^{\nu,\zeta}(\vec{x},\vec{y})|\lesssim_{p,d,\zeta,B,L} \nu^{d/2}\,,
 \end{equation}
 and \eqref{proposition_4_3_1} follows by summation over $\pi \in S_p$.

By \eqref{Gamma_check_pi_definition}--\eqref{Gamma_hat_pi_Gamma_check_estimation} and 
Lemma \ref{lemma_loop_interactions}, we have that
\begin{align}
	|\widehat{\Gamma}_{p,\pi}^{\nu,\zeta} (\vec{x},\vec{y})&-\widecheck{\Gamma}_{p,\pi}^{\nu,\zeta}(\vec{x},\vec{y})|
	\nonumber \\
    & =\frac{1}{\Xi^{\nu,\zeta}}\sum_{n=1}^\infty\sum_{\substack{\vec{k} \in (\N^{*})^p \\ \vec{k} \neq \vec{1}_{p}}}\frac{z^{n\nu}}{n!}\prod_{i=1}^p z^{k_i \nu} \int \mathbb{W}^{\nu \vec{1}_n}(\dd \vec{\widetilde{\omega}})\,
\int \mathbb{W}^{\nu \vec{k}}_{\pi \vec{y}, \vec{x}} (\dd \vec{\omega}) \,\exp \bigl(-\mathcal{V}^{\nu}(\vec{\widetilde{\omega}}\vec{\omega})\bigr)
\label{proposition_4_3_2_A} \\
    & \leq \frac{1}{\Xi^{\nu,\zeta}}\sum_{n=1}^\infty\sum_{\substack{\vec{k} \in (\N^{*})^p \\ \vec{k} \neq \vec{1}_{p}}}\frac{z^{n\nu}}{n!}\prod_{i=1}^p z^{k_i \nu}    \int \mathbb{W}^{\nu \vec{1}_n}(\dd \vec{\widetilde{\omega}})\,
\int \mathbb{W}^{\nu \vec{k}}_{\pi \vec{y}, \vec{x}} (\dd \vec{\omega})
    \, \exp\left[B\left(\sum_{i=1}^p k_i+n\right)\right]
    \nonumber \\
    & =\frac{1}{\Xi^{\nu,\zeta}}\sum_{n=1}^\infty\sum_{\substack{\vec{k} \in (\N^{*})^p \\ \vec{k} \neq \vec{1}_{p}}}\frac{z^{n\nu}\ee^{Bn}}{n!}\prod_{i=1}^p z^{k_i \nu}\ee^{B k_i}
\int \mathbb{W}^{\nu \vec{1}_n}(\dd \vec{\widetilde{\omega}})\,
\int \mathbb{W}^{\nu \vec{k}}_{\pi \vec{y}, \vec{x}} (\dd \vec{\omega})\,.
\label{proposition_4_3_2}
\end{align}
Let us now integrate in $\vec{\widetilde{\omega}}$ in \eqref{proposition_4_3_2}. In order to do this, we first recall
\eqref{heat_kernel_integral}, \eqref{W^T_definition}, \eqref{W_product_2}, and then apply Lemma  \ref{heat_kernel_estimates} (i) together with \eqref{fugacity_condition} to deduce that 
    \begin{multline}
     \label{proposition_4_3_2_A_bound}
    \eqref{proposition_4_3_2}
    \lesssim_{d,\zeta} \sum_{n=0}^\infty\frac{(\zeta \ee^B)^{n}}{n!}L^{dn}\sum_{\substack{\vec{k} \in (\N^{*})^p \\ \vec{k} \neq \vec{1}_{p}}}\prod_{i=1}^p z^{k_i \nu}\ee^{B k_i}\int \mathbb{W}^{\nu \vec{k}}_{\pi \vec{y}, \vec{x}} (\dd \vec{\omega})\\
    \leq p\, \mathrm{exp} \left(\zeta\ee^{B} L^d \right) \sum_{k_1=2}^{\infty}\sum_{k_2=1}^{\infty}\cdots\sum_{k_p=1}^{\infty}\prod_{i=1}^p z^{k_i \nu}\ee^{B k_i}\int\prod_{i=1}^p \mathbb{W}^{k_i \nu}_{y_{\pi(i)},x_i}(\dd \omega_i)\,. 
\end{multline}
In \eqref{proposition_4_3_2}, we also recalled \eqref{W_product}.
Arguing as in \eqref{proposition4_2_4}, we have that 
\begin{equation}
\label{proposition_4_3_2_B_bound}
\sum_{k_2=1}^{\infty}\cdots\sum_{k_p=1}^{\infty}\prod_{i=2}^p z^{k_i \nu}\ee^{B k_i}\int\prod_{i=2}^{p} \mathbb{W}^{k_i \nu}_{y_{\pi(i)},x_i}(\dd \omega_i)\lesssim_{d,\zeta,B}(\nu^{d/2}+\zeta\ee^B)^{p-1}\,.
\end{equation}
Substituting \eqref{proposition_4_3_2_B_bound} into \eqref{proposition_4_3_2_A_bound}, and recalling \eqref{proposition_4_3_2}, we get
\begin{equation}\label{proposition_4_3_3}
    \left|\widehat{\Gamma}_{p,\pi}^{\nu,\zeta}(\vec{x},\vec{y})-\widecheck{\Gamma}_{p,\pi}^{\nu,\zeta}(\vec{x},\vec{y})\right|
    \lesssim_{p,d,\zeta,B,L} \sum_{k_1=2}^{\infty} z^{k_1 \nu}\ee^{k_1 B}\int\mathbb{W}^{k_1 \nu}_{y_{\pi(1)},x_1}(\dd \omega_1)\,.
\end{equation}
Note that we are taking the sum on right hand side of \eqref{proposition_4_3_3} over all indices $k_1\geq 2$, and therefore, arguing as in \eqref{fin-vol-two-sums-split}--\eqref{fin-vol-star2}, we have   
\begin{equation}
\label{proposition_4_3_3_B}
    \sum_{k_1=2}^{\infty} z^{k_1 \nu}\ee^{k_1 B}\int\mathbb{W}^{k_1 \nu}_{y_{\pi(1)},x_1}(\dd \omega_1)\lesssim_{d,\zeta,B} \nu^{d/2}\,,
\end{equation}
which concludes the proof. 
\end{proof}

Having proved Propositions \ref{Proposition_4_3_A}--\ref{Proposition_4_3} above, the final step of the proof of Theorem~\ref{density_matrix_result}
reduces to comparing  \eqref{Gamma_check_definition} and \eqref{density_matrix_result_2}.
In order to proceed with this step, we will need the following refinement of Lemma \ref{heat_kernel_estimates} (i), which is shown in Appendix \ref{appendix_heat_kernel_estimates} below.
\begin{lemma}
\label{Riemann_sums_estimate}
Let $x \in \Lambda $ be fixed. Recalling \eqref{periodic_Euclidean_norm}, we have that 
\begin{equation}
\label{Riemann_sums_estimate_1}
\psi^t(x) \lesssim_{d}\frac{1}{L^d}+\frac{1}{t^{d/2}}\,\ee^{-\frac{|x|_{L}^2}{2t}} 
\end{equation}
for all $t>0$. 
In particular, if we take $n_0 \equiv n_0(x) \in \Z^d$ to be such that 
\begin{equation}
\label{n_0_choice}
|x|_{L}=|x-L n_0|\,,
\end{equation}
we have
\begin{equation}
\label{Riemann_sums_estimate_1B}
\psi^t(x) \lesssim_{d}\frac{1}{L^d}+\frac{1}{t^{d/2}}\,\ee^{-\frac{|x-Ln_0|^2}{2t}}\,.
\end{equation}
\end{lemma}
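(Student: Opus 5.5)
We work directly from the image-sum representation \eqref{heat_kernel}. Choose $n_0\in\Z^d$ with $|x|_L=|x-Ln_0|$, as in \eqref{n_0_choice}. Then split
\[
\psi^t(x)=\frac{1}{(2\pi t)^{d/2}}\,\ee^{-\frac{|x-Ln_0|^2}{2t}}+\sum_{n\in\Z^d\setminus\{n_0\}}\frac{1}{(2\pi t)^{d/2}}\,\ee^{-\frac{|x-Ln|^2}{2t}}\,.
\]
The first term is exactly the second term on the right-hand side of \eqref{Riemann_sums_estimate_1B}, with constant $(2\pi)^{-d/2}$. Since $|x-Ln_0|=|x|_L$, the bounds \eqref{Riemann_sums_estimate_1} and \eqref{Riemann_sums_estimate_1B} are the same statement. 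So it suffices to show that the remaining sum over $n\neq n_0$ is $\lesssim_d L^{-d}$ uniformly in $t>0$ and $x\in\Lambda$.

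For the tail, use the minimality of $n_0$. We may take the representative $x-Ln_0\in[-L/2,L/2]^d$, i.e.\ each coordinate satisfies $|(x-Ln_0)_j|\le L/2$. For $m=n-n_0\neq0$, the coordinate bound gives $|x-Ln|=|(x-Ln_0)-Lm|\ge L|m|_\infty/2$. Hence the elementary estimate $|x-Ln|\ge c_d L|m|$ holds for all $m\ne0$. The tail is therefore bounded by
\[
\sum_{m\in\Z^d\setminus\{0\}}\frac{1}{(2\pi t)^{d/2}}\,\ee^{-\frac{c_d^2L^2|m|^2}{2t}}\,,
\]
which has the same form as $\eta^t$ in \eqref{heat_kernel_estimates_tilde_A}, with $L$ replaced by $c_dL$. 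Lemma~\ref{heat_kernel_estimates}~(ii), applied with sidelength $c_dL$, or simply rerunning its proof, then gives the bound $\lesssim_d (c_dL)^{-d}\lesssim_d L^{-d}$ uniformly in $t$.

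If one prefers not to invoke (ii) with a rescaled $L$, the tail can be bounded directly in two regimes. For $t\le L^2$, the sum is dominated by $t^{-d/2}\sum_{m\neq0}\ee^{-cL^2|m|^2/t}\lesssim_d t^{-d/2}\ee^{-cL^2/(2t)}$. Since $s^{d/2}\ee^{-cs/2}$ is bounded for $s=L^2/t$, this is $\lesssim_d L^{-d}$. For $t\ge L^2$, compare the sum with the Gaussian integral $\int_{\R^d}(2\pi t)^{-d/2}\ee^{-c^2L^2|u|^2/(2t)}\,\dd u=(cL)^{-d}$ (standard Riemann-sum comparison), which gives again $\lesssim_d L^{-d}$.

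The only delicate point is the geometric inequality $|x-Ln|\ge c_dL|n-n_0|$ for $n\ne n_0$. In the worst case $x$ lies on the boundary of the fundamental cell and several images are equidistant. This is harmless, since we only need a lower bound on the nonminimal images, not strict separation. One can also take $c_d=1/(2\sqrt d)$ via $|m|_\infty\ge|m|/\sqrt d$.
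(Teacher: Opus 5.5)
Your proof is correct and follows essentially the same route as the paper: split off the nearest image $n_0$, show $|x-Ln_0-Lm|\ge L|m|/(2\sqrt d)$ for $m\neq 0$, and reduce the tail to $\eta$. Your rescaling $L\mapsto c_dL$ is the same as the paper's bound by $(4d)^{d/2}\eta^{4dt}$. The only difference is that you derive the geometric inequality coordinatewise. This is legitimate, since every Euclidean minimizer $n_0$ automatically satisfies $x-Ln_0\in[-L/2,L/2]^d$ because $|\cdot|^2$ separates over coordinates. The paper instead uses a two-case argument based on $|x-Ln_0|\le L\sqrt d/2$ and $|x-Ln_0-Ln|\ge L/2$ for $n\neq 0$.
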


We now have the necessary tools to prove Theorem~\ref{density_matrix_result}.
\begin{proof}[Proof of Theorem~\ref{density_matrix_result}]

Let us define
\begin{equation}
\label{diagonal_contribution}
\Delta_p^{0,\zeta}(\vec{x})\coloneqq (2\pi)^{-\frac{pd}{2}}\,\zeta^p\,\frac{\Xi^{0,\zeta}(\vec{x})}{\Xi^{0,\zeta}}\,.
\end{equation}
Using \eqref{density_matrix_result_2}--\eqref{Kronecker_delta}, \eqref{diagonal_contribution}, and recalling \eqref{Pi_{x,y}}, the claim \eqref{density_matrix_result_1} is equivalent to showing
\begin{equation}
\label{Theorem_1.5(ii)}
\lim_{\nu \rightarrow 0} \Gamma_p^{\nu,\zeta}(\vec{x},\vec{y})=|\Pi_{\vec{x},\vec{y}}|\,\Delta_p^{0,\zeta}(\vec{x})\,.
\end{equation}
Moreover, \eqref{Theorem_1.5(ii)_rate_of_convergence} is equivalent to showing
\begin{multline}
\label{Theorem_1.5(ii)_rate_of_convergence_rewritten}
\Bigl|\Gamma_p^{\nu,\zeta}(\vec{x},\vec{y})-|\Pi_{\vec{x},\vec{y}}|\,\Delta_p^{0,\zeta}(\vec{x})
\Bigr| 
\\
\lesssim_{p,d,\zeta,v,\alpha,B,L} \Biggl[\nu^{\alpha/2}+\sum_{\tilde{\pi} \in S_p \setminus \Pi_{\vec{x},\vec{y}}} \prod_{i=1}^{p} \Biggl\{\nu^{d/2}+ 
\exp \biggl(-\frac{|y_{\tilde{\pi}(i)}-x_i|_{L}^2}{2\nu}\biggr)\Biggr\}\Biggr]\,.
\end{multline}

In order to prove Theorem~\ref{density_matrix_result}, we consider two cases for the set $\Pi_{\vec{x},\vec{y}}$ defined in \eqref{Pi_{x,y}}.\\
   \textbf{Case 1}: $\Pi_{\vec{x},\vec{y}}=\emptyset$.
   

 We recall \eqref{Gamma_check_definition}, the first bound in \eqref{Xi_bounds}, and use 
 Assumption \ref{Assumption_on_v} (i) combined with Lemma \ref{lemma_loop_interactions} to estimate
   \begin{equation}\label{check_Gamma_estimation_case_1}
      0 \leq \widecheck{\Gamma}_{p}^{\nu,\zeta}(\vec{x},\vec{y})\lesssim_{d,\zeta,v,B,L} \sum_{\pi \in S_p} \sum_{n=0}^\infty\frac{z^{(n+p) \nu}\ee^{(n+p)B}}{n!}
\int \mathbb{W}^{\nu \vec{1}_n} (\dd \vec{\tilde{\omega}}) \, \int \mathbb{W}^{\nu \vec{1}_p}_{\pi \vec{y}, \vec{x}} (\dd \vec{\omega})\,.  
 \end{equation}
We note the following bound for all $\pi \in S_p$ and for all $i=1,\ldots,p$. 
\begin{multline}
\label{check_Gamma_estimation_case_1_A}
 z^{\nu} \int \mathbb{W}^{\nu}_{y_{\pi(i)},x_i}(\dd \omega)=z^{\nu} \psi^{\nu}(y_{\pi(i)}-x_i) \lesssim_d z^{\nu} \left(
\frac{1}{L^d}+\frac{1}{\nu^{d/2}}\,\ee^{-\frac{|y_{\pi(i)}-x_i|_{L}^2}{2\nu}} \right)
\\
=\zeta \nu^{d/2} L^{-d}+\zeta\,\ee^{-\frac{|y_{\pi(i)}-x_i|_{L}^2}{2\nu}}\,.
\end{multline}
In order to deduce \eqref{check_Gamma_estimation_case_1_A}, we used \eqref{heat_kernel_integral}, Lemma \ref{Riemann_sums_estimate}, followed by \eqref{fugacity_condition}.
We now recall \eqref{W_product}, \eqref{W_product_2}, and apply \eqref{lipschitz_g_2b}, \eqref{check_Gamma_estimation_case_1_A} to deduce that
\begin{multline}
\label{check_Gamma_estimation_case_1_B}
\eqref{check_Gamma_estimation_case_1} \leq \sum_{\pi \in S_p} \sum_{n=0}^{\infty} \frac{\ee^{(n+p)B}\,C_d^{n+p}\,L^{dn}\,\zeta^{n+p}}{n!}\, \prod_{i=1}^{p} \left\{\nu^{d/2}+\exp\left({-\frac{|x_i-y_{\pi(i)}|^2_L}{2\nu}}\right)\right\}
\\
\lesssim_{p,d,\zeta,B,L} \sum_{\pi \in S_{p}} \prod_{i=1}^p \left\{\nu^{d/2}+\exp\left({-\frac{|x_i-y_{\pi(i)}|^2_L}{2\nu}}\right)\right\}.
\end{multline}
We hence deduce \eqref{Theorem_1.5(ii)}--\eqref{Theorem_1.5(ii)_rate_of_convergence_rewritten} by combining \eqref{check_Gamma_estimation_case_1_B} with Propositions \ref{Proposition_4_3_A} and \ref{Proposition_4_3}.
   

\textbf{Case 2}: $\Pi_{\vec{x},\vec{y}} \neq \emptyset$. 

We recall \eqref{Gamma_pi_def} and note that, by the arguments as in Case 1 (in particular by using also Propositions \ref{Proposition_4_3_A} and \ref{Proposition_4_3}), we have

\begin{equation}
\label{x=pi(y)_1}
0 \leq \sum_{\tilde{\pi} \in S_p \setminus \Pi_{\vec{x},\vec{y}}} \Gamma_{p,\tilde{\pi}}^{\nu,\zeta}(\vec{x},\vec{y}) \lesssim_{p,d,\zeta,v,\alpha,B,L} \Biggl[\nu^{d/2}+\sum_{\tilde{\pi} \in S_p \setminus \Pi_{\vec{x},\vec{y}}} \prod_{i=1}^{p} \Biggl\{\nu^{d/2}+ 
\exp \biggl(-\frac{|y_{\tilde{\pi}(i)}-x_i|_{L}^2}{2\nu}\biggr)\Biggr\}\Biggr]\,.
\end{equation}
Furthermore, we recall \eqref{Pi_{x,y}}, \eqref{Gamma_check_pi_definition}, \eqref{diagonal_contribution}, and argue as in the proof of Proposition \ref{Proposition_2} to deduce that for all $\pi \in \Pi_{\vec{x},\vec{y}}$, we have
\begin{equation}
\label{x=pi(y)_2}
\bigl|\widecheck{\Gamma}_{p,\pi}^{\nu,\zeta}(\vec{x},\vec{y})- \Delta_p^{0,\zeta}(\vec{x}) \bigr| \lesssim_{p,d,\zeta,v,\alpha,B,L} \nu^{\alpha/2}\,.
\end{equation}
In order to deduce \eqref{x=pi(y)_2}, we also recalled Lemma \ref{Lemma_3} and the fact that \eqref{Lemma_3_1} holds uniformly in $x, y \in \Lambda$. Therefore, the proof of Proposition \ref{Proposition_2} goes through without the need to integrate in $x_1,\ldots,x_p$. Combining \eqref{proposition4_2_1}, \eqref{Gamma_hat_pi_Gamma_check_estimation}, and \eqref{x=pi(y)_2}, we obtain that for all $\pi \in \Pi_{\vec{x},\vec{y}}$
\begin{equation}
\label{x=pi(y)_3}
\bigl|\Gamma_{p,\pi}^{\nu,\zeta}(\vec{x},\vec{y})- \Delta_p^{0,\zeta}(\vec{x}) \bigr| \lesssim_{p,d,\zeta,v,\alpha,B,L} \nu^{\alpha/2}\,.
\end{equation}
We hence conclude \eqref{Theorem_1.5(ii)}--\eqref{Theorem_1.5(ii)_rate_of_convergence_rewritten} in this case from \eqref{x=pi(y)_1} and \eqref{x=pi(y)_3} after summing over all permutations in $S_p$.

\end{proof}

\begin{remark}[The reduced $p$-particle density matrices of the Boltzmann gas]
	\label{Remark_Boltzmann_density_matrix}
	Let us recall Remark \ref{Remark_Boltzmann_gas_partition_function} concerning the partition function of the Boltzmann gas. Replacing $P_{n}^{+}$ in \eqref{n_particle_space} with $M_{n} \colon f \mapsto \frac{1}{n!} f$, the Boltzmann reduced $p$-particle matrix is defined as in \eqref{reduced_p_particle_density_matrix}. To derive the Ginibre representation of its kernel, recall the Ginibre representation of its Bose counterpart in \eqref{Ginibre_representation_4}, and recall that we only have to consider cycles of length $1$ in the Boltzmann case. We therefore have
	\begin{equation}
	\label{Boltzmann_reduced_density_matrix}
		\Gamma_{p}^{\nu,\zeta;\Bolt}(\vec{x},\vec{y}) = \frac{\Xi^{\nu,\zeta}}{\widehat{\Xi}^{\nu,\zeta}} \, \widecheck{\Gamma}_{p,\id}^{\nu,\zeta}(\vec{x},\vec{y})\,,
	\end{equation}
	where $\widecheck{\Gamma}_{p,\id}^{\nu,\zeta}(\vec{x},\vec{y})$ is defined in \eqref{Gamma_check_pi_definition}. Note that the factor $\Xi^{\nu,\zeta} / \widehat{\Xi}^{\nu,\zeta}$ ensures that \eqref{Boltzmann_reduced_density_matrix} is properly normalised.

	We have the following analogue of Theorem \ref{density_matrix_result}. Under the same assumptions as in Theorem \ref{density_matrix_result}, for all $\vec{x}, \vec{y} \in \Lambda^p$,
	\begin{gather}
		\bigl|\Gamma_{p}^{\nu,\zeta;\Bolt}(\vec{x},\vec{y}) - \Gamma_{p}^{0,\zeta;\Bolt}(\vec{x},\vec{y})\bigr| \lesssim_{p,d,\zeta,v,B,L} \prod_{i=1}^p \left\{\nu^{d/2}+\exp\left({-\frac{|x_i-y_i|^2_L}{2\nu}}\right)\right\},
		\label{density_matrix_result_2_Boltzmann}
		\shortintertext{where}
		\Gamma_p^{0,\zeta;\Bolt}(\vec{x}, \vec{y}) \coloneqq \zeta^p(2\pi)^{-\frac{pd}{2}} \, \delta_{K}(\vec{x}-\vec{y}) \, \frac{\Xi^{0,\zeta}(\vec{x})}{\Xi^{0,\zeta}} \,,
		\label{density_matrix_result_1_Boltzmann}
	\end{gather}
	and $\Xi^{0,\zeta}$, $\Xi^{0,\zeta}(\vec{x})$, and $\delta_{K}$ are defined in \eqref{partition_function_result_2}, \eqref{density_matrix_result_3}, and \eqref{Kronecker_delta}, respectively. In particular,
	\begin{equation*}
		\lim_{\nu\to 0} \Gamma_{p}^{\nu,\zeta;\Bolt}(\vec{x},\vec{y}) = \Gamma_{p}^{0,\zeta;\Bolt}\,.
	\end{equation*}

	To prove \eqref{density_matrix_result_2_Boltzmann}, we consider two cases. If $\vec{x} \neq \vec{y}$ we have by the same reasoning as in Case 1 of the previous proof that
	\begin{equation}
		\label{Boltzmann_Gamma_case_1}
		\bigl| \Gamma_{p}^{\nu,\zeta;\Bolt}(\vec{x},\vec{y}) \bigr| \lesssim_{p,d,\zeta, v, B,L} \prod_{i=1}^p \left\{\nu^{d/2}+\exp\left({-\frac{|x_i-y_i|^2_L}{2\nu}}\right)\right\}.
	\end{equation}
	If, on the other hand, $\vec{x} = \vec{y}$, we have
	\begin{align}
		\bigl| \Gamma_{p}^{\nu,\zeta;\Bolt}(\vec{x},\vec{x}) - \Delta_p^{0,\zeta}(\vec{x}) \bigr| & = 
		\left| \frac{\Xi^{\nu,\zeta}}{\widehat{\Xi}^{\nu,\zeta}} \, \widecheck{\Gamma}_{p,\id}^{\nu,\zeta}(\vec{x},\vec{x}) - \Delta_p^{0,\zeta}(\vec{x}) \right| 
		\notag \\
		& = \left|\frac{\Xi^{\nu,\zeta}}{\widehat{\Xi}^{\nu,\zeta}} \left( \widecheck{\Gamma}_{p,\id}^{\nu,\zeta}(\vec{x},\vec{x})- \Delta_p^{0,\zeta}(\vec{x}) \right) 
		+ \left(\frac{\Xi^{\nu,\zeta}}{\widehat{\Xi}^{\nu,\zeta}} - 1 \right) \Delta_p^{0,\zeta}(\vec{x}) \right| 
		\notag \\
		& \leq \left|\frac{\Xi^{\nu,\zeta}}{\widehat{\Xi}^{\nu,\zeta}} \right| \, \left| \widecheck{\Gamma}_{p,\id}^{\nu,\zeta}(\vec{x},\vec{x})
	        - \Delta_p^{0,\zeta}(\vec{x}) \right| + \left|\frac{\Xi^{\nu,\zeta}}{\widehat{\Xi}^{\nu,\zeta}} - 1 \right| \, \bigl|\Delta_p^{0,\zeta}(\vec{x})\bigr| 
	        \notag \\
		& \lesssim_{p,d,\zeta,v,\alpha,B,L} \nu^{\alpha/2}\,,
		\label{Boltzmann_Gamma_case_2}
	\end{align}
	where $\Delta_{p}^{0,\zeta}(\vec{x})$ is defined in \eqref{diagonal_contribution}. To get to the last line, we recalled \eqref{Proposition_2_1_convergence_rate}, \eqref{Xi_bounds}, and \eqref{x=pi(y)_2}, we used $\Delta_{p}^{0,\zeta}(\vec{x}) \lesssim_{p,d,\zeta,v,B,L} 1$ for all $\vec{x} \in \Lambda^{p}$, and we recalled \eqref{Xi_hat}, which implies $\widehat{\Xi}^{\nu,\zeta} \geq 1$. We now deduce \eqref{density_matrix_result_1_Boltzmann} from \eqref{Boltzmann_Gamma_case_1} and \eqref{Boltzmann_Gamma_case_2}.
\end{remark}

\section{Convergence in finite volume for a Fermi gas}
\label{Fermi_gas}

In this section we show the results of the previous sections for Fermi gases in a finite volume $\Lambda \equiv \Lambda_L$. The analysis will be done under the same assumptions as in the previous sections. Most of the analysis is the same as for Bose gases, so we will only point out the differences, and not go into the details. For clarity, throughout this section we use the superscripts $\FF$ for objects that relate to the Fermi gas, and $\BB$ for objects that relate to the Bose gas.

Recalling the setup from the introduction, we work with the $n$-body fermionic Hamiltonian \eqref{Hamiltonian_H} acting on the $n$-particle fermionic Hilbert space \eqref{n_particle_space_Fermi}. Here, we also recall \eqref{antisymmetric_projector}.
Note that replacing $P_{n}^{+}$ with $P_{n}^{-}$ is the only difference in the setup from Bose gases. We define the grand-canonical ensemble of a Fermi gas as a sequence of operators $(\rho_n^{\FF})_{n \in \mathbb{N}}$, where
\begin{equation}
\label{grand_canonical_ensemble_1_Fermi}
\rho_n^{\FF}\coloneqq \frac{1}{\Xi^{\FF}} \: \ee^{-\beta(\mathbb{H}_{n}^{\FF}-\mu n)}\,,\qquad \Xi^{\FF}\coloneqq 1+\sum_{n=1}^{\infty} \mathrm{Tr}_{\mathcal{H}_n^{\FF}} \ee^{-\beta(\mathbb{H}_{n}^{\FF}-\mu n)}\,.
\end{equation}
Here, $\beta>0$ denotes the inverse temperature, and $\mu<0$ denotes the chemical potential. The quantity $\Xi^{\FF}$ is the Fermi gas grand-canonical partition function. We now set $\beta = 1$ for simplicity.

Setting $\nu, \zeta, z > 0$ as in Section \ref{section_choice_of_parameters}, we rewrite the fermionic $n$-body Hamiltonian in \eqref{Hamiltonian_H} as
\begin{equation}
\label{n_body_Hamiltonian_Fermi}
H_n^{\nu; \FF}\coloneqq -\frac{\nu}{2} \sum_{i=1}^{n} \Delta_i +\sum_{1 \leq i < j \leq n} v(x_i-x_j)
\end{equation}
and the fermionic grand-canonical ensemble \eqref{grand_canonical_ensemble_1_Fermi} as
\begin{equation}
\label{grand_canonical_ensemble_Fermi}
\rho_n^{\FF} \equiv \rho_n^{\nu,\zeta; \FF}\coloneqq \frac{1}{\Xi^{\FF}}\, \ee^{-H_{n}^{\nu; \FF}}\,z^{n \nu} \,, \qquad \Xi^{\FF} \equiv \Xi^{\nu,\zeta; \FF}\coloneqq 1 + \sum_{n=1}^{\infty} \mathrm{Tr}_{\mathcal{H}_n^{\FF}} \bigl(\ee^{-H_{n}^{\nu; \FF}}\,z^{n \nu} \bigr) \,.
\end{equation}
Note that \eqref{n_body_Hamiltonian_Fermi} is a densely-defined operator on \eqref{n_particle_space_Fermi}. 
Furthermore, we  define the fermionic free grand-canonical partition function as
\begin{equation}
\label{free_grand_canonical_partition_function_Fermi}
\Xi_{(0)}^{\FF} \equiv \Xi^{\nu,\zeta; \FF}_{(0)} \coloneqq \Xi^{\nu,\zeta; \FF} \big|_{v=0}\,.
\end{equation}
For a given $p \in \N^*$ we define the reduced $p$-particle density matrix of the fermionic grand-canonical ensemble as
\begin{equation}
\label{reduced_p_particle_density_matrix_Fermi}
	\Gamma_p^{\nu,\zeta; \FF} \coloneqq \sum_{n=0}^{\infty} \frac{(n+p)!}{n!}\, \mathrm{Tr}_{p+1,\ldots,n+p}(\rho^{\nu,\zeta; \FF}_{n+p})\,, \\
\end{equation}
where $\Tr_{p+1, \ldots, n+p}$ denotes the partial trace over the coordinates $x_{p+1}, \ldots, x_{n+p}$ in the fermionic Hilbert space \eqref{n_particle_space_Fermi}.

The following two theorems are analogues of Theorems \ref{partition_function_result} and  \ref{density_matrix_result} respectively.

\begin{theorem}
\label{partition_function_result_Fermi}
Let $\mu \equiv \mu (\nu,\zeta)$ be as in Assumption \ref{Choice_of_z} and let $z \equiv z(\nu,\zeta)$ be as in \eqref{fugacity}. Let $v$ be as in Assumption \ref{Assumption_on_v}. Then the grand-canonical partition function $\Xi^{\nu,\zeta; \FF}$  defined in \eqref{grand_canonical_ensemble_Fermi} satisfies
\begin{equation}
\label{partition_function_result_1_Fermi}
\lim_{\nu \rightarrow 0} \Xi^{\nu,\zeta; \FF}=\Xi^{0,\zeta; \FF}\,, 
\end{equation}
where in \eqref{partition_function_result_1_Fermi}, $\Xi^{0,\zeta; \FF}\coloneqq \Xi^{0,\zeta}$ is the grand-canonical partition function of a gas of classical interacting particles from \eqref{partition_function_result_2}.
Moreover, we can estimate the rate of convergence in \eqref{partition_function_result_1_Fermi} as
\begin{equation}
\label{partition_function_result_3_Fermi}
\bigl|\Xi^{\nu,\zeta; \FF}- \Xi^{0,\zeta; \FF} \bigr| \lesssim_{d,\zeta,v,\alpha,B} \nu^{\alpha/2}\,\exp \left [C_0(d,\zeta,B) L^d \right],
\end{equation}
for some $C_0(d,\zeta,B)>0$.
\end{theorem}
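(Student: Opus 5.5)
We follow the two-step structure of the proof of Theorem \ref{partition_function_result}, starting from the fermionic Ginibre representation. When the antisymmetriser $P_n^-$ is used in place of $P_n^+$ and permutations are decomposed into cycles, a cycle of length $k$ contributes the sign $(-1)^{k-1}$. The fermionic analogue of \eqref{Ginibre_representation_2} should therefore read
\begin{equation*}
\Xi^{\nu,\zeta;\FF}=\sum_{n=0}^{\infty} \frac{1}{n!} \sum_{\vec{k} \in (\N^{*})^n} \prod_{i=1}^{n} \frac{(-1)^{k_i-1}\,z^{k_i\nu}}{k_i} \int \mathbb{W}^{\nu \vec{k}} (\dd \vec{\omega}) \, \exp \bigl(-\mathcal{V}^{\nu}(\vec{\omega})\bigr)\,.
\end{equation*}
We will justify this exactly as in Appendix \ref{Ginibre_representation_appendix} (following \cite[Section 2]{Ginibre}), tracking the factor $\sgn(\pi)=\prod_{\text{cycles}}(-1)^{k-1}$. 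This formula is absolutely convergent, because its absolute value is dominated term by term by the bosonic series estimated in Proposition \ref{Proposition_1}.

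The first step is to compare $\Xi^{\nu,\zeta;\FF}$ with $\widehat{\Xi}^{\nu,\zeta}$ from \eqref{Xi_hat}. The terms with $\vec{k}=\vec{1}_n$ carry sign $+1$, so they coincide exactly with $\widehat{\Xi}^{\nu,\zeta}$. This holds for any $\nu\in(0,\nu_0]$, and $\widehat{\Xi}^{\nu,\zeta}$ is also the Boltzmann partition function of Remark \ref{Remark_Boltzmann_gas_partition_function}. The remainder $\Xi^{\nu,\zeta;\FF}-\widehat{\Xi}^{\nu,\zeta}$ is the sum over $\vec{k}\neq\vec{1}_n$ with signs. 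We bound it in absolute value by discarding the signs and applying Lemma \ref{lemma_loop_interactions}, which yields exactly the expression \eqref{Proposition_2_1_2}. The chain \eqref{(2.16)}--\eqref{zeta-zeta_hat} then gives, verbatim, $|\Xi^{\nu,\zeta;\FF}-\widehat{\Xi}^{\nu,\zeta}|\lesssim_{d,\zeta,B}\nu^{d/2}\exp[C_1L^d]$.

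The second step is Proposition \ref{Proposition_2}, applied unchanged since it concerns only $\widehat{\Xi}^{\nu,\zeta}$. It gives $|\widehat{\Xi}^{\nu,\zeta}-\Xi^{0,\zeta}|\lesssim\nu^{\alpha/2}\exp[C_2L^d]$. The triangle inequality, together with $\nu^{d/2}\le\nu^{\alpha/2}$ for $\nu\le1$ (since $\alpha\le 1\le d$), yields \eqref{partition_function_result_3_Fermi} with $C_0=\max\{C_1,C_2\}$, and hence \eqref{partition_function_result_1_Fermi}.

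The only genuine obstacle is establishing the signed Ginibre representation with the same stable-potential hypotheses. In particular, we must confirm that the sign enters only through the cycle lengths and not through the interaction term. This follows from the Feynman--Kac formula applied to $\ee^{-H_n^{\nu;\FF}}$ composed with the permutation operator, where the permutation acts only on the endpoints of the paths. Once that is checked, everything reduces to absolute-value bounds already proved for bosons.
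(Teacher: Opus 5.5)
Your proposal is correct and follows the paper's proof. The paper likewise notes that the $\vec{k}=\vec{1}_n$ terms reproduce $\widehat{\Xi}^{\nu,\zeta}$. It then bounds the signed remainder, via the triangle inequality, by the nonnegative bosonic remainder $\Xi^{\nu,\zeta;\BB}-\widehat{\Xi}^{\nu,\zeta;\BB}$ and concludes with Propositions~\ref{Proposition_1} and~\ref{Proposition_2}.
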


Recalling the notation in \eqref{concatenation}--\eqref{pi_vec_y}, we have the analogous result for reduced $p$-particle density matrices.
\begin{theorem}
\label{density_matrix_result_Fermi}
Let $\mu \equiv \mu (\nu,\zeta)$ be as in Assumption \ref{Choice_of_z} and let $z \equiv z(\nu,\zeta)$ be as in \eqref{fugacity}. Let $v$ be as in Assumption \ref{Assumption_on_v}.  Let $p \in \N^*$ be fixed. The kernel of the Fermi reduced $p$-particle density matrix \eqref{reduced_p_particle_density_matrix_Fermi} satisfies 
    \begin{equation}
    \label{density_matrix_result_1_Fermi}
        \lim_{\nu \to 0} \Gamma_p^{\nu,\zeta; \FF}(\vec{x},\vec{y}) = \Gamma_p^{0,\zeta; \FF}(\vec{x},\vec{y})\,,
    \end{equation}
    for all $\vec{x}, \vec{y} \in \Lambda^p$ where in \eqref{density_matrix_result_1},
    \begin{equation}
    \label{density_matrix_result_2_Fermi}
    	\Gamma_p^{0,\zeta; \FF}(\vec{x},\vec{y})\coloneqq \zeta^p(2\pi)^{-\frac{pd}{2}}\sum_{\pi \in S_p} \sgn(\pi) \,\delta_K(\vec{x}-\pi\vec{y})\,\frac{\Xi^{0,\zeta; \FF}(\vec{x})}{\Xi^{0,\zeta; \FF}}\,.
    \end{equation}
In \eqref{density_matrix_result_2_Fermi} $\sgn(\pi)$ denotes the sign of $\pi \in S_p$, and $\delta_K$ denotes the Kronecker delta function on $\Lambda^p$ given by \eqref{Kronecker_delta} above.
    Furthermore, 
 $\Xi^{0,\zeta; \FF} \equiv \Xi^{0,\zeta}$ is given by \eqref{partition_function_result_2}, and we take
     \begin{equation*}
         \Xi^{0,\zeta; \FF}(\vec{x})\coloneqq \Xi^{0,\zeta}(\vec{x})\,,
     \end{equation*}
     where we recall \eqref{density_matrix_result_3}. 
     By recalling \eqref{Pi_{x,y}},   
we can bound the rate of convergence in \eqref{density_matrix_result_1_Fermi} by
\begin{multline}
\label{Theorem_1.5(ii)_rate_of_convergence_Fermi}
\bigl|\Gamma_p^{\nu,\zeta; \FF}(\vec{x},\vec{y})-\Gamma_p^{0,\zeta; \FF}(\vec{x},\vec{y})\bigr| 
\\
\lesssim_{p,d,\zeta,v,\alpha,B,L} \Biggl[\nu^{\alpha/2}+\sum_{\tilde{\pi} \in S_p \setminus \Pi_{\vec{x},\vec{y}}} \prod_{i=1}^{p} \Biggl\{\nu^{d/2}+ 
\exp \biggl(-\frac{|y_{\tilde{\pi}(i)}-x_i|_{L}^2}{2\nu}\biggr)\Biggr\}\Biggr]\,.
\end{multline}
\end{theorem}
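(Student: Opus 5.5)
The plan is to run the bosonic argument of Section~\ref{section_density_matrix_Bose} again, with signs inserted. The starting point is the fermionic Ginibre representation, obtained exactly as in Lemma~\ref{Ginibre_representation} (ii): replacing $P_n^{+}$ by $P_n^{-}$ only introduces $\sgn$ of the permutations. Decomposing the permutations into cycles, a cycle of length $k$ contributes $(-1)^{k-1}$. For the open paths from $\vec{x}$ to $\pi\vec{y}$ the sign combines into $\sgn(\pi)\prod_{i}(-1)^{k_i-1}$. Thus
\begin{equation*}
\Gamma_p^{\nu,\zeta;\FF}(\vec{x},\vec{y})=\sum_{\pi \in S_p}\sgn(\pi)\sum_{\vec{k}\in(\N^*)^p}\prod_{i=1}^p (-1)^{k_i-1}z^{k_i\nu}\int \mathbb{W}^{\nu\vec{k}}_{\pi\vec{y},\vec{x}}(\dd\vec{\omega})\,\gamma_p^{\nu,\zeta;\FF}(\vec{\omega})\,,
\end{equation*}
where $\gamma_p^{\nu,\zeta;\FF}$ is as in \eqref{p_loop_correlation_function}, with $\Xi^{\nu,\zeta}$ replaced by $\Xi^{\nu,\zeta;\FF}$ and the closed-loop weights $z^{k\nu}/k$ replaced by $(-1)^{k-1}z^{k\nu}/k$. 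In particular the $k=1$ terms carry no sign, and all the absolute-value estimates used for bosons apply verbatim, since they only used upper bounds on $|\exp(-\mathcal{V}^\nu)|$ from Lemma~\ref{lemma_loop_interactions}.

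First I will establish Theorem~\ref{partition_function_result_Fermi}, because the normalisation $\Xi^{\nu,\zeta;\FF}$ appears in every correlation function. The argument of Proposition~\ref{Proposition_1} bounds $|\Xi^{\nu,\zeta;\FF}-\widehat{\Xi}^{\nu,\zeta}|$ after taking absolute values, so the signs are irrelevant there. Note also that $\widehat{\Xi}^{\nu,\zeta}$ is identical for both statistics, since only $k=1$ loops occur in it. Proposition~\ref{Proposition_2} then gives the convergence $\widehat{\Xi}^{\nu,\zeta}\to\Xi^{0,\zeta}$ with rate $\nu^{\alpha/2}$. In particular $\Xi^{\nu,\zeta;\FF}\sim 1$ for small $\nu$ (depending on $L$), because $\Xi^{0,\zeta}\geq 1$. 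This gives the analogue of \eqref{Xi_bounds}; it is needed because, unlike in the bosonic case, positivity of $\Xi^{\nu,\zeta;\FF}$ is not automatic from the loop expansion. Positivity does hold, since $\Xi^{\nu,\zeta;\FF}$ is a trace of positive operators, but a lower bound uniform in $\nu$ comes from the convergence.

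Next comes the analogue of Lemma~\ref{Proposition4_1} and Proposition~\ref{Proposition_4_3_A}. Here $\gamma_p^{\nu,\zeta;\FF}-\widehat{\gamma}_p^{\nu,\zeta;\FF}$ is controlled by the same sum \eqref{proposition_4_1_3} in absolute value, divided by $\Xi^{\nu,\zeta;\FF}$. The only wrinkle is that $\widehat{\gamma}_p$ is now normalised by $\Xi^{\nu,\zeta;\FF}$ rather than $\Xi^{\nu,\zeta}$, which is harmless: I keep the fermionic normaliser throughout. After this, the analogue of Proposition~\ref{Proposition_4_3} removes the open paths with $k_i\geq 2$, again via the absolute bounds \eqref{proposition_4_3_2}--\eqref{proposition_4_3_3_B}. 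What remains is $\sum_\pi \sgn(\pi)\widecheck{\Gamma}_{p,\pi}^{\nu,\zeta}$, with $\widecheck{\Gamma}_{p,\pi}$ as in \eqref{Gamma_check_pi_definition} but normalised by $\Xi^{\nu,\zeta;\FF}$.

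Finally, I split into the two cases of the proof of Theorem~\ref{density_matrix_result}. For $\tilde{\pi}\notin\Pi_{\vec{x},\vec{y}}$, each term is bounded in absolute value via \eqref{check_Gamma_estimation_case_1_A}, producing the product of $\nu^{d/2}+\exp(-|y_{\tilde\pi(i)}-x_i|_L^2/(2\nu))$. For $\pi\in\Pi_{\vec{x},\vec{y}}$, the argument of Proposition~\ref{Proposition_2} together with Lemma~\ref{Lemma_3} gives $\widecheck{\Gamma}_{p,\pi}\to\Delta_p^{0,\zeta}(\vec{x})$ at rate $\nu^{\alpha/2}$, using $\Xi^{\nu,\zeta;\FF}\to\Xi^{0,\zeta}$ at the same rate for the denominator. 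Summing these terms with weights $\sgn(\pi)$ yields \eqref{density_matrix_result_2_Fermi} and \eqref{Theorem_1.5(ii)_rate_of_convergence_Fermi}.

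The main obstacle I expect is the careful derivation of the signed Ginibre representation, in particular checking that the sign factor attached to the open paths is exactly $\sgn(\pi)\prod_i(-1)^{k_i-1}$, consistent with the partial-trace convention in \eqref{reduced_p_particle_density_matrix_Fermi}. A related point is verifying that the signed series still converge absolutely, which follows from the same bounds as in \eqref{fin-vol-telescoping-bound}.
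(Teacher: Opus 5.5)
Your proposal is correct and follows essentially the paper's route: the signed Ginibre representation with weights $\sgn(\pi)\prod_i(-1)^{k_i-1}$, then the triangle inequality to reduce every error term to the bosonic absolute bounds, then the same two cases $\pi\notin\Pi_{\vec x,\vec y}$ and $\pi\in\Pi_{\vec x,\vec y}$. The only cosmetic difference is that the paper recycles the bosonic estimates through the ratio $\Xi^{\nu,\zeta;\BB}/\Xi^{\nu,\zeta;\FF}$ (Lemma \ref{difference_partition_fun_Bose_Fermi}) rather than normalising by $\Xi^{\nu,\zeta;\FF}$ directly, and the uniform lower bound you obtain from convergence is also immediate from the trace definition, which gives $\Xi^{\nu,\zeta;\FF}\geq 1$.
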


We now state the fermionic analogue of the Ginibre representation given by Lemma \ref{Ginibre_representation}. Here, we follow \cite[Section 2]{Ginibre} and refer the reader to the latter work for the full details. Recall the definition of $\mathbb{L}^{\nu,\zeta; \BB}(\dd \omega)$ in \eqref{single_loop_measure}, and the Ginibre representation of $\Xi^{\nu,\zeta; \BB}$ in \eqref{Ginibre_representation_1}. To derive the analogue of \eqref{Ginibre_representation_1}, the permutations that appear in \eqref{symmetric_projector} are decomposed into disjoint cycles. We then sum over the number of cycles $n$, with notation as in \eqref{Ginibre_representation_1}. We implicitly sum over the lengths of the cycles in \eqref{single_loop_measure}, where a loop of duration $T \in \nu \N^*$ corresponds to a permutation cycle of length $T / \nu$. In the fermionic case, we replace $P_n^{+}$ with $P_n^{-}$; see \eqref{n_particle_space}, \eqref{grand_canonical_ensemble_1}, and \eqref{n_particle_space_Fermi}, \eqref{grand_canonical_ensemble_1_Fermi} respectively. 
Consequently, when decomposing a permutation into cycles, we additionally have to decompose its sign. Given a permutation $\pi$ and its decomposition $\pi=C_1 \cdots C_m$ into cycles $C_1, \ldots, C_m$, we have
\begin{equation}
\label{sgn_pi}
	\sgn(\pi) = \prod_{i = 1}^{m} \sgn(C_i) = (-1)^{\sum_{i = 1}^{m} (k_i - 1)}\,,
\end{equation}
where $k_i$ denotes the length of $C_i$. Note that by convention, the length of a transposition is $2$. This suggests the following definition of the single-loop measure for a Fermi gas:
\begin{equation}
	\label{single_loop_measure_Fermi}
	\mathbb{L}^{\nu,\zeta; \FF}(\dd \omega)\coloneqq \nu \sum_{T \in \nu \N^*} (-1)^{\frac{T}{\nu} - 1} \, \frac{z^T}{T} \, \mathbb{W}^T(\dd \omega)\,,
\end{equation}
where $z \equiv z(\nu,\zeta)$ is given by \eqref{fugacity}. We then have the following analogue of Lemma \ref{Ginibre_representation}.
\begin{lemma}[Ginibre representation of a Fermi gas]
\label{Ginibre_representation_Fermi}
Let $\mathcal{V}^{\nu}$ be the interaction given by \eqref{loop_interaction}. The following representations hold.
\begin{itemize}
    \item [(i)] The Fermi grand-canonical partition function \eqref{grand_canonical_ensemble_Fermi} can be written as
\begin{equation}
\label{Ginibre_representation_1_Fermi}
\Xi^{\nu,\zeta, \FF}= 1 + \sum_{n=1}^{\infty} \frac{1}{n!} \int \mathbb{L}^{\nu,\zeta, \FF}(\dd \omega_1) \cdots\, \mathbb{L}^{\nu,\zeta; \FF}(\dd \omega_n) \,\exp\bigl(-\mathcal{V}^{\nu}(\vec{\omega})\bigr)\,.
\end{equation}
\item [(ii)]
For $\vec{\omega}=(\omega_1,\ldots,\omega_p)$, where $\omega_i \in \Omega^{T(\omega_i)}$ and $T(\omega_i) \in \nu \N^*$ for $i =1,\ldots,p$, we define
\begin{equation}\label{Ginibre_representation_3_Fermi}
    \Xi^{\nu,\zeta; \FF}(\vec{\omega})\coloneqq \sum_{n=0}^{\infty} \frac{1}{n!} \int \mathbb{L}^{\nu,\zeta; \FF}(\dd \widetilde{\omega}_1) \cdots\, \mathbb{L}^{\nu,\zeta; \FF}(\dd \widetilde{\omega}_n) \,\exp\bigl(-\mathcal{V}^{\nu}(\vec{\widetilde{\omega}}\vec{\omega})\bigr)\,,
\end{equation}
where for $\vec{\widetilde{\omega}}=(\widetilde{\omega}_1,\ldots,\widetilde{\omega}_n)$, $\vec{\widetilde{\omega}}\vec{\omega}$ is the concatenation given by \eqref{concatenation_omega}.
Then, the kernel of the Fermi reduced $p$-particle density matrix \eqref{reduced_p_particle_density_matrix_Fermi} can be written as
    \begin{equation}\label{Ginibre_representation_4_Fermi}
	\Gamma_p^{\nu,\zeta; \FF}(\vec{x},\vec{y})=\sum_{\pi \in S_p} \sgn(\pi) \sum_{\vec{k} \in (\N^{*})^p} \prod_{i=1}^p (-1)^{k_i-1} z^{k_i \nu}
        \int \mathbb{W}^{\nu \vec{k}}_{\pi \vec{y}, \vec{x}}(\dd \vec{\omega}) \,\frac{\Xi^{\nu,\zeta; \FF}(\vec{\omega})}{\Xi^{\nu,\zeta; \FF}}\,, \quad \vec{x},\vec{y} \in \Lambda^p\,. 
    \end{equation}
\end{itemize}
\end{lemma}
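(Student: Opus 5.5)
The plan is to run the bosonic derivation of Lemma \ref{Ginibre_representation} with $P_n^-$ in place of $P_n^+$, keeping track of signs. First I will record the standard Feynman--Kac formula for the kernel of $\ee^{-H_n^\nu}$ on $L^2(\Lambda^n)$:
\[
\ee^{-H_n^{\nu}}(\vec{x},\vec{y})=\int \mathbb{W}^{\nu\vec{1}_n}_{\vec{y},\vec{x}}(\dd\vec\omega)\,\exp\Bigl(-\sum_{i<j}\int_0^\nu \dd t\,v(\omega_i(t)-\omega_j(t))\Bigr).
\]
The path-space factor here, a Brownian path run at speed $\nu$ over unit time, is rescaled so that paths have duration $\nu$. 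This matches the $1/\nu$ normalisation in \eqref{loop_interaction}, once we note that $\int_0^1$ of the potential equals $\frac1\nu\int_0^\nu$ after the time change. This formula is insensitive to statistics and is exactly the one used in \cite[Appendix A.2]{FKSS_2023}, with the conventions reconciled as in Appendix \ref{Ginibre_representation_appendix}. Stability, Assumption \ref{Assumption_on_v} (i), justifies it for non-positive $v$, as in \cite[Section 2]{Ginibre}.

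Next, for (i), I will write $\mathrm{Tr}_{\mathcal{H}_n^{\FF}}\ee^{-H_n^\nu}=\frac1{n!}\sum_{\sigma\in S_n}\sgn(\sigma)\int_{\Lambda^n}\dd\vec x\,\ee^{-H_n^\nu}(\vec x,\sigma\vec x)$ and decompose $\sigma$ into cycles. A cycle of length $k$ concatenates $k$ paths of duration $\nu$ into one closed loop of duration $k\nu$. The Markov property \eqref{W^T} turns the integrals over intermediate points into $\mathbb{W}^{k\nu}$. The pair interactions between time slices of the same or different cycles reassemble into \eqref{loop_interaction}--\eqref{self_interaction}, so their total is $\mathcal{V}^\nu(\vec\omega)$. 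The combinatorics of counting permutations with prescribed cycle lengths produces the factor $\frac{1}{m!}\prod_i\frac1{k_i}$, where $m$ is the number of cycles, exactly as in the bosonic case. The only new feature is the sign. By \eqref{sgn_pi}, $\sgn(\sigma)=\prod_i(-1)^{k_i-1}$ factorises over cycles, so it can be absorbed into the single-loop measure. That yields \eqref{single_loop_measure_Fermi}, after writing $\frac{1}{k}=\frac{\nu}{T}$ with $T=k\nu$, and hence \eqref{Ginibre_representation_1_Fermi}.

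For (ii), I will compute the kernel of $\frac{(n+p)!}{n!}\mathrm{Tr}_{p+1,\dots,n+p}\rho_{n+p}^{\FF}$ in the same way. Now the cycles of $\sigma\in S_{n+p}$ split into closed cycles avoiding $\{1,\dots,p\}$, which become loops $\widetilde\omega$, and $p$ open chains. Each open chain starts at some $x_i$, passes through $k_i-1$ traced variables, and ends at $y_{\pi(i)}$, which defines $\pi\in S_p$. Such a chain gives an open path in $\mathbb{W}^{k_i\nu}_{y_{\pi(i)},x_i}$, with no $1/k_i$ factor because its starting point is marked. For the sign, I will complete the chains by $\pi$: then $\sgn(\sigma)=\sgn(\pi)\prod_i(-1)^{k_i-1}\prod_{\text{loops}}(-1)^{\ell-1}$. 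The cleanest check is that contracting each chain to a single point multiplies the sign by $(-1)^{k_i-1}$ and leaves $\pi$ acting on the endpoints. Dividing by $\Xi^{\FF}$ gives \eqref{Ginibre_representation_4_Fermi}.

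I expect the main obstacle to be the combinatorial bookkeeping in (ii). It has two parts: verifying that the factor $\frac{(n+p)!}{n!}$ exactly cancels the overcounting from the choice of which traced labels lie on chains versus loops, and getting the sign decomposition right. I would handle both by citing the corresponding bosonic count in \cite{Ginibre} and \cite[Appendix A.2]{FKSS_2023}, and then checking that each combinatorial class carries the multiplicative sign stated above. Convergence of all the sums, which is needed to reorder them, follows from absolute convergence of the bosonic series using Lemma \ref{lemma_loop_interactions}, since the fermionic integrands differ only by signs.
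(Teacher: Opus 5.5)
Your proposal is correct and follows the same route as the paper's sketch: Feynman--Kac plus cycle decomposition, with the sign $\sgn(\sigma)=\sgn(\pi)\prod_{i}(-1)^{k_i-1}\prod_{\text{loops}}(-1)^{\ell-1}$ absorbed into $\mathbb{L}^{\nu,\zeta;\FF}$ and into the open-path weights, and your chain-contraction check playing the role of the paper's factorisation $C_i=(a_i b_i)(a_i c_1\ldots c_{k_i-1})$. The only slip is cosmetic: your displayed Feynman--Kac formula is missing the factor $\frac{1}{\nu}$ in front of $\int_0^\nu \dd t$, which your own accompanying text correctly says should be there.
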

We give a sketch of the proof of Lemma \ref{Ginibre_representation_Fermi}. For the full details, we refer the reader to \cite[Section 2]{Ginibre}.
\begin{proof}[Sketch of proof of Lemma \ref{Ginibre_representation_Fermi}]
The proof of Lemma \ref{Ginibre_representation_Fermi} is similar to that of Lemma \ref{Ginibre_representation}. Here, one replaces \eqref{n_particle_space}, \eqref{grand_canonical_ensemble_1}, and \eqref{grand_canonical_ensemble} with \eqref{n_particle_space_Fermi}, \eqref{grand_canonical_ensemble_1_Fermi}, and \eqref{grand_canonical_ensemble_Fermi}, respectively. Using \eqref{sgn_pi}--\eqref{single_loop_measure_Fermi}, it follows that
the single-loop measures $\mathbb{L}^{\nu,\zeta;\FF}$ in \eqref{Ginibre_representation_1_Fermi} collectively absorb the signs of all permutations that appear in $\Xi^{\nu,\zeta;\FF}$. This proves \eqref{Ginibre_representation_1_Fermi}. See \cite[Section 4.3]{Ueltschi_2004} for more details.
	
	We now justify \eqref{Ginibre_representation_4_Fermi}. We again decompose permutations appearing in \eqref{antisymmetric_projector}, but only show how the signs of the permutations are distributed amongst those factors in \eqref{Ginibre_representation_4_Fermi} that differ from those in \eqref{Ginibre_representation_4}. To see the complete combinatorics, we again refer the reader to \cite[Section 2]{Ginibre}. More specifically, the relevant decomposition is performed in \cite[Lemma 2.1]{Ginibre}. 

	Recalling \eqref{reduced_p_particle_density_matrix_Fermi}, let us fix $p \in \N^{*}$, $n \in \N_{0}$, and a permutation $\pi \in S_{n+p}$. We first extract from $\pi$ all disjoint cycles that are also disjoint from $\{1, \ldots, p\}$. Referring to \cite[Section 2]{Ginibre} and \cite[Appendix A.2]{FKSS_2020_2}, these cycles correspond to closed loops. Together with their signs, these cycles are collected into $\Xi^{\nu,\zeta; \FF}(\vec{\omega})$ in \eqref{Ginibre_representation_4_Fermi}. Observe that $\Xi^{\nu,\zeta;\FF}$ appears in \eqref{Ginibre_representation_4} due to the normalisation in \eqref{grand_canonical_ensemble_Fermi}.

	We now split the rest of $\pi$ into $p$ (not necessarily disjoint) cycles of the form
	\begin{equation}
		C_i = (a_i c_1 \ldots c_{k_i-1} b_i)
	\end{equation}
	for some $i, k_i \in \N^{*}$, $a_i, b_i \in \{1, \ldots, p\}$, and $c_1, \ldots, c_{k_i-1} \in \{p+1, \ldots, p+n\}$. If $k_i = 1$ we set $C_i = (a_i b_i)$. Note that we allow $a_i = b_i$. Referring again to \cite[Section 2]{Ginibre} and \cite[Appendix A.2]{FKSS_2020_2}, these cycles correspond to open paths. In this particular example, the endpoints are $x_{a_i}$ and $y_{b_i}$, where $\vec{x}, \vec{y} \in \Lambda^{p}$ are fixed by \eqref{Ginibre_representation_4_Fermi}. $C_i$ can be further decomposed into
	\begin{equation}
		C_i = (a_i b_i) (a_i c_1 \ldots c_{k_i-1})\,.
	\end{equation}
	We then have that $\sgn(C_i) = \sgn(a_i b_i) (-1)^{k_{i}-1}$ and
	\begin{equation}
		\sgn(C_1 \cdots C_p) = \sgn(\pi') \prod_{i = 1}^{p} (-1)^{k_i-1}\,,
	\end{equation}
	where $\pi' = \prod_{i = 1}^{p} (a_i b_i) \in S_{p}$. These are exactly the remaining factors in \eqref{Ginibre_representation_4_Fermi} that have not yet been accounted for.
\end{proof}

We are now ready to prove Theorem \ref{partition_function_result_Fermi}.

\begin{proof}[Proof of Theorem \ref{partition_function_result_Fermi}]
	We first note that substituting \eqref{single_loop_measure_Fermi} into \eqref{Ginibre_representation_1_Fermi} yields the analogue of \eqref{Ginibre_representation_2}:
\begin{equation}
\label{Ginibre_representation_2_Fermi}
\Xi^{\nu,\zeta;\FF}=\sum_{n=0}^{\infty} \frac{1}{n!} \sum_{\vec{k} \in (\N^{*})^n} \prod_{i=1}^{n} (-1)^{k_i-1} \frac{z^{k_i\nu}}{k_i} \int \mathbb{W}^{\nu \vec{k}} (\dd \vec{\omega}) \, \exp \bigl(-\mathcal{V}^{\nu}(\vec{\omega})\bigr)\,.
\end{equation}
	Analogously to \eqref{Xi_hat} we define
	\begin{equation}
		\label{Xi_hat_Fermi}
		\widehat{\Xi}^{\nu,\zeta; \FF}\coloneqq \sum_{n=0}^{\infty} \frac{z^{n\nu}}{n!} \int \mathbb{W}^{\nu \vec{1}_n} (\dd \vec{\omega}) \, \exp \bigl(-\mathcal{V}^{\nu}(\vec{\omega})\bigr)\,.
	\end{equation}
	Note that $\widehat{\Xi}^{\nu,\zeta; \FF} \equiv \widehat{\Xi}^{\nu,\zeta; \BB}$, where we recall \eqref{Xi_hat}. Indeed, we are only summing over loops of duration $\nu$ in \eqref{Xi_hat_Fermi}, which have a sign 1.
Using \eqref{Ginibre_representation_2_Fermi} and \eqref{Xi_hat_Fermi} we get the analogue of \eqref{Proposition_2_1_1}:
	\begin{equation}
		\Xi^{\nu,\zeta; \FF} - \widehat{\Xi}^{\nu,\zeta; \FF} = \sum_{n = 1}^{\infty} \frac{1}{n!} \sum_{\substack{\vec{k} \in (\N^{*})^n \\ \vec{k} \neq \vec{1}_{n}}} \prod_{i = 1}^{n} (-1)^{k_i-1} \frac{z^{k_{i}\nu}}{k_{i}} \int \mathbb{W}^{\nu \vec{k}} (\dd \vec{\omega}) \, \exp \left(-\mathcal{V}^{\nu}(\vec{\omega})\right).
	\end{equation}
	By the triangle inequality, and then \eqref{Proposition_2_1_1} we have
	\begin{equation}
	\label{triangle_inequality_consequence}
		\bigl|\Xi^{\nu,\zeta; \FF} - \widehat{\Xi}^{\nu,\zeta; \FF}\bigr| \leq \sum_{n = 1}^{\infty} \frac{1}{n!} \sum_{\substack{\vec{k} \in (\N^{*})^n \\ \vec{k} \neq \vec{1}_{n}}} \prod_{i = 1}^{n} \frac{z^{k_{i}\nu}}{k_{i}} \int \mathbb{W}^{\nu \vec{k}} (\dd \vec{\omega}) \, \exp \left(-\mathcal{V}^{\nu}(\vec{\omega})\right) = \Xi^{\nu,\zeta; \BB} - \widehat{\Xi}^{\nu,\zeta; \BB}\,.
	\end{equation}
	We also used the fact that all terms of \eqref{Proposition_2_1_1} are nonnegative. The rest of the proof follows by arguing as in Proposition \ref{Proposition_1} and Proposition \ref{Proposition_2}. When applying the latter, recall that $\Xi^{0,\zeta;\FF} \equiv \Xi^{0,\zeta;\BB}$.
\end{proof}

\begin{remark}
The key step that allowed us to reduce the proof of Theorem \ref{partition_function_result_Fermi} for fermions to the proof of \ref{partition_function_result} given for bosons was that we could take absolute values and use the triangle inequality in \eqref{triangle_inequality_consequence} above.
\end{remark}

Before proceeding to the proof of Theorem \ref{density_matrix_result_Fermi}, we note the following lemma.

\begin{lemma}
	\label{difference_partition_fun_Bose_Fermi}
	Under the same notation and assumptions as in Theorem \ref{partition_function_result} and Theorem \ref{partition_function_result_Fermi}, we have that
	\begin{equation}
		\label{difference_partition_fun_Bose_Fermi_eq}
		\left| \frac{\Xi^{\nu,\zeta;\FF}}{\Xi^{\nu,\zeta;\BB}} - 1 \right| \lesssim_{d,\zeta,v,\alpha,B} \nu^{\alpha/2} \, \exp[ C_0(d, \zeta, B) L^d]\,.
	\end{equation}
	In particular, we have that
	\begin{gather}
		\Xi^{\nu,\zeta;\FF} \sim_{d,\zeta,v,\alpha,B} \Xi^{\nu,\zeta;\BB} \sim_{d,\zeta,v,\alpha,B} 1
		\label{partition_fun_Bose_Fermi_sim}
		\shortintertext{and}
		\Xi^{\nu,\zeta;\FF} > 0 \notag
	\end{gather}
	for $\nu$ small enough. We only consider such $\nu$ in the remainder of the section.
\end{lemma}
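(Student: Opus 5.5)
The plan is to compare both partition functions with the common intermediate object $\widehat{\Xi}^{\nu,\zeta}\equiv\widehat{\Xi}^{\nu,\zeta;\FF}=\widehat{\Xi}^{\nu,\zeta;\BB}$ from \eqref{Xi_hat}. We write
\begin{equation*}
\frac{\Xi^{\nu,\zeta;\FF}}{\Xi^{\nu,\zeta;\BB}}-1=\frac{\Xi^{\nu,\zeta;\FF}-\Xi^{\nu,\zeta;\BB}}{\Xi^{\nu,\zeta;\BB}}\,,
\end{equation*}
and estimate the numerator by the triangle inequality:
\begin{equation*}
\bigl|\Xi^{\nu,\zeta;\FF}-\Xi^{\nu,\zeta;\BB}\bigr|\leq\bigl|\Xi^{\nu,\zeta;\FF}-\widehat{\Xi}^{\nu,\zeta}\bigr|+\bigl|\widehat{\Xi}^{\nu,\zeta}-\Xi^{\nu,\zeta;\BB}\bigr|\leq 2\bigl(\Xi^{\nu,\zeta;\BB}-\widehat{\Xi}^{\nu,\zeta}\bigr)\,.
\end{equation*}
Here we use \eqref{triangle_inequality_consequence} for the first term. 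For the second term we use the fact that $\Xi^{\nu,\zeta;\BB}-\widehat{\Xi}^{\nu,\zeta}\geq 0$ by \eqref{Proposition_2_1_1}; each summand there is nonnegative since $\exp(-\mathcal{V}^\nu)>0$. Proposition~\ref{Proposition_1} then bounds the numerator by $\lesssim_{d,\zeta,B}\nu^{d/2}\exp[C_1L^d]$. This is in turn $\lesssim \nu^{\alpha/2}\exp[C_0L^d]$, because $\nu\leq 1$, $\alpha\leq 1\leq d$ and $C_0\geq C_1$.

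For the denominator we want a lower bound uniform in $L$. Since $\Xi^{\nu,\zeta;\BB}\geq\widehat{\Xi}^{\nu,\zeta}\geq 1$, the $n=0$ term already gives $\Xi^{\nu,\zeta;\BB}\geq 1$, and \eqref{difference_partition_fun_Bose_Fermi_eq} follows immediately. An alternative route would be to use $\Xi^{0,\zeta}\geq 1$ together with Theorem~\ref{partition_function_result}, but the direct lower bound is cleaner and avoids tracking constants.

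For \eqref{partition_fun_Bose_Fermi_sim}, we note that $1\leq\Xi^{\nu,\zeta;\BB}$ and that Theorem~\ref{partition_function_result} gives $\Xi^{\nu,\zeta;\BB}\leq \Xi^{0,\zeta}+C\nu^{\alpha/2}e^{C_0L^d}$. Moreover $\Xi^{0,\zeta}\leq\exp\bigl(\zeta e^B(2\pi)^{-d/2}L^d\bigr)$ by stability. Hence $\Xi^{\nu,\zeta;\BB}\sim 1$, with implicit constants depending also on $L$, consistent with \eqref{Xi_bounds}. Next, choose $\nu$ small enough (depending on $L$) that the right-hand side of \eqref{difference_partition_fun_Bose_Fermi_eq} is at most $1/2$. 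Then $\tfrac12\Xi^{\nu,\zeta;\BB}\leq\Xi^{\nu,\zeta;\FF}\leq\tfrac32\Xi^{\nu,\zeta;\BB}$, which yields both $\Xi^{\nu,\zeta;\FF}\sim\Xi^{\nu,\zeta;\BB}$ and $\Xi^{\nu,\zeta;\FF}>0$.

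There is no real obstacle here. The only delicate point is the sign structure: the fermionic sum is not termwise nonnegative. This is handled by comparing with $\widehat{\Xi}^{\nu,\zeta}$, which is common to both statistics because loops of duration $\nu$ carry sign $+1$. The other point to keep track of is that the implicit constants in the "$\sim$" statements tacitly depend on $L$ through the exponential factors.
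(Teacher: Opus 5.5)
Your proof is correct, but it takes a different route from the paper's.

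\textbf{The paper's route.} The paper adds and subtracts the classical partition function $\Xi^{0,\zeta;\BB}=\Xi^{0,\zeta;\FF}$. Using $\Xi^{\nu,\zeta;\BB}\geq 1$, it bounds $\bigl|\Xi^{\nu,\zeta;\FF}/\Xi^{\nu,\zeta;\BB}-1\bigr|\leq|\Xi^{\nu,\zeta;\FF}-\Xi^{0,\zeta}|+|\Xi^{0,\zeta}-\Xi^{\nu,\zeta;\BB}|$. It then quotes Theorems~\ref{partition_function_result} and~\ref{partition_function_result_Fermi} as black boxes.

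\textbf{Your route.} You compare both statistics with the common object $\widehat{\Xi}^{\nu,\zeta}$ at positive $\nu$, using \eqref{triangle_inequality_consequence} and the nonnegativity of \eqref{Proposition_2_1_1}. As a result, only Proposition~\ref{Proposition_1} is needed.

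\textbf{What each route buys.} Your argument is more direct and proves more. It gives
\[
\bigl|\Xi^{\nu,\zeta;\FF}/\Xi^{\nu,\zeta;\BB}-1\bigr|\lesssim_{d,\zeta,B}\nu^{d/2}\exp[C_1(d,\zeta,B)L^d],
\]
which has the better rate $\nu^{d/2}$ and uses only the stability of $v$. H\"older continuity enters the paper's argument solely through Proposition~\ref{Proposition_2}, i.e.\ through the passage to the classical limit. That passage is irrelevant to the Bose--Fermi comparison, since both gases share the same duration-$\nu$ part $\widehat{\Xi}^{\nu,\zeta}$. The paper's version is shorter to write given the theorems already proven, but it loses the rate and needs the extra assumption.

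\textbf{The second part.} Your treatment of \eqref{partition_fun_Bose_Fermi_sim} fills in what the paper leaves implicit. Your caveat is also accurate: the implicit constants there, and the threshold on $\nu$ ensuring $\Xi^{\nu,\zeta;\FF}>0$, depend on $L$. This is consistent with \eqref{Xi_bounds}, even though the subscripts in the statement omit $L$. If you also want the upper bound on $\Xi^{\nu,\zeta;\BB}$ to avoid H\"older continuity, use \eqref{Ginibre_representation_1_new_C} instead of Theorem~\ref{partition_function_result}.
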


\begin{proof}
	By adding and subtracting $\Xi^{0,\zeta;\BB}$, and noting that $\Xi^{0,\zeta;\FF} = \Xi^{0,\zeta;\BB}$, we write
	\begin{equation}
		\left| \frac{\Xi^{\nu,\zeta;\FF}}{\Xi^{\nu,\zeta;\BB}} - 1 \right|  = \frac{|\Xi^{\nu,\zeta;\FF} - \Xi^{0,\zeta;\FF} + \Xi^{0,\zeta;\BB} - \Xi^{\nu,\zeta;\BB}|}{|\Xi^{\nu,\zeta;\BB}|} \\
										    \leq |\Xi^{\nu,\zeta;\FF} - \Xi^{0,\zeta;\FF}| + |\Xi^{0,\zeta;\BB} - \Xi^{\nu,\zeta;\BB}|\,.
	\end{equation}
	We also recalled that $\Xi^{\nu,\zeta;\BB} \geq 1$ for all $\nu > 0$ by \eqref{Ginibre_representation_1}. We now conclude from Theorem \ref{partition_function_result} and Theorem \ref{partition_function_result_Fermi} that for $\nu > 0$ sufficiently small,
	\begin{equation}
		\left| \frac{\Xi^{\nu,\zeta;\FF}}{\Xi^{\nu,\zeta;\BB}} - 1 \right| \lesssim_{d,\zeta,v,\alpha,B} \nu^{\alpha/2} \, \exp[C_0(d, \zeta, B) L^d]\,.
	\end{equation}
\end{proof}

We can now prove Theorem \ref{density_matrix_result_Fermi}. Let us first explain how to adapt the definitions of the objects we studied in Section \ref{section_density_matrix_Bose} to the fermionic setting. Analogously to \eqref{ploopkernel}--\eqref{p_loop_correlation_function} we rewrite \eqref{Ginibre_representation_4_Fermi} as
	\begin{equation}
		\label{ploopkernel_Fermi}
		\Gamma_p^{\nu,\zeta;\FF}(\vec{x},\vec{y})
		=\sum_{\pi \in S_p} \sgn(\pi) \sum_{\vec{k}\in (\N^*)^p}\prod_{i=1}^p (-1)^{k_{i}-1} z^{k_i \nu}\int \mathbb{W}^{\nu \vec{k}}_{\pi \vec{y},\vec{x}}(\dd \vec{\omega})\,\gamma_p^{\nu,\zeta;\FF}\,(\vec{\omega})\,,
	\end{equation}
	where
	\begin{equation}
		\label{p_loop_correlation_function_Fermi}
		\gamma_p^{\nu,\zeta; \FF}\,(\vec{\omega})\coloneqq \frac{1}{\Xi^{\nu,\zeta;\FF}}\sum_{n=0}^{\infty} \frac{1}{n!} \sum_{\vec{k} \in (\N^{*})^n} \prod_{i=1}^{n} (-1)^{k_{i}-1} \frac{z^{k_i\nu}}{k_i} \int \mathbb{W}^{\nu \vec{k}} (\dd \vec{\widetilde{\omega}}) \, \exp \bigl(-\mathcal{V}^{\nu}(\vec{\widetilde{\omega}}\vec{\omega})\bigr)\,.
	\end{equation}
	Analogously to \eqref{gamma_hat_definition}--\eqref{Gamma_hat_nu_zeta}, and \eqref{Gamma_check_definition} we respectively define the following three objects:
	\begin{gather}
		\widehat{\gamma}_p^{\nu,\zeta;\FF}(\vec{\omega})\coloneqq \frac{1}{\Xi^{\nu,\zeta;\FF}}\sum_{n=0}^{\infty} \frac{z^{n \nu}}{n!}  \int \mathbb{W}^{\nu \vec{1}_n} (\dd \vec{\widetilde{\omega}}) \, \exp \bigl(-\mathcal{V}^{\nu}(\vec{\widetilde{\omega}}\vec{\omega})\bigr)\,,
		\label{gamma_hat_definition_Fermi} \\
		\widehat{\Gamma}_p^{\nu,\zeta;\FF}(\vec{x},\vec{y}) \coloneqq \sum_{\pi \in S_p} \sgn(\pi) \sum_{\vec{k}\in (\N^{*})^p}\prod_{i=1}^p (-1)^{k_{i}-1} z^{k_i \nu}\int \mathbb{W}^{\nu \vec{k}}_{\pi \vec{y}, \vec{x}}(\dd \vec{\omega})\,\widehat{\gamma}_p^{\nu,\zeta;\FF}\,(\vec{\omega})\,,
		\label{Gamma_hat_nu_zeta_Fermi} \\
    		\widecheck{\Gamma}_{p}^{\nu,\zeta}(\vec{x},\vec{y})\coloneqq 
    		\frac{1}{\Xi^{\nu,\zeta;\FF}}\sum_{\pi \in S_p} \sgn(\pi) \sum_{n=0}^\infty\frac{z^{(n+p) \nu}}{n!}
    		\int \mathbb{W}^{\nu \vec{1}_n}(\dd \vec{\widetilde{\omega}}) 
    		\int \mathbb{W}^{\nu \vec{1}_p}_{\pi \vec{y}, \vec{x}}(\dd \vec{\omega}) 
   		\exp \bigl(-\mathcal{V}^{\nu}(\vec{\widetilde{\omega}}\vec{\omega}))\,.
		\label{Gamma_check_definition_Fermi}
	\end{gather}
	For a fixed $\pi \in S_p$ we define, analogously to \eqref{Gamma_pi_def}--\eqref{hat_Gamma_pi_def}, and \eqref{Gamma_check_pi_definition} respectively,
	\begin{gather}
		\Gamma_{p,\pi}^{\nu,\zeta;\FF}(\vec{x},\vec{y})\coloneqq \sgn(\pi) \sum_{\vec{k}\in (\N^{*})^p}\prod_{i=1}^p (-1)^{k_{i}-1} z^{k_i \nu}\int
		\mathbb{W}^{\nu \vec{k}}_{\pi \vec{y}, \vec{x}} (\dd \vec{\omega})\,\gamma_p^{\nu,\zeta;\FF}\,(\vec{\omega})\,, 
		\label{Gamma_pi_def_Fermi} \\
		\widehat{\Gamma}_{p,\pi}^{\nu,\zeta;\FF}(\vec{x},\vec{y})\coloneqq \sgn(\pi) \sum_{\vec{k}\in (\N^{*})^p}\prod_{i=1}^p (-1)^{k_{i}-1} z^{k_i \nu}\int 
		\mathbb{W}^{\nu \vec{k}}_{\pi \vec{y}, \vec{x}} (\dd \vec{\omega})\,\widehat{\gamma}_p^{\nu,\zeta;\FF}\,(\vec{\omega})\,, 
		\label{hat_Gamma_pi_def_Fermi} \\
		\widecheck{\Gamma}_{p,\pi}^{\nu,\zeta;\FF}(\vec{x},\vec{y})\coloneqq \frac{\sgn(\pi)}{\Xi^{\nu,\zeta;\FF}}\sum_{n=0}^\infty\frac{z^{(n+p) \nu}}{n!} \int 
		\mathbb{W}^{\nu \vec{1}_n}(\dd \vec{\widetilde{\omega}})\,
		\int \mathbb{W}^{\nu \vec{1}_p}_{\pi \vec{y}, \vec{x}} (\dd \vec{\omega}) \,\exp \bigl(-\mathcal{V}^{\nu}(\vec{\widetilde{\omega}}\vec{\omega})\bigr)\,.
		\label{Gamma_check_pi_definition_Fermi}
	\end{gather}
The next lemma provides the fermionic analogues of Lemma \ref{Proposition4_1} and Propositions \ref{Proposition_4_3_A}--\ref{Proposition_4_3}.
\begin{lemma}
	\label{density_matrix_result_aux_lemma_Fermi}
	Under the assumptions of Theorem \ref{density_matrix_result_Fermi} we have the following three bounds.
	\begin{gather}
		\bigl|\gamma_p^{\nu,\zeta;\FF}(\vec{\omega})-\widehat{\gamma}_p^{\nu,\zeta;\FF}(\vec{\omega})\bigr|
		\lesssim_{d,\zeta,v,\alpha,B,L}\nu^{d/2}\exp\left(\frac{B}{\nu}\sum_{i=1}^p T(\omega_i)\right )\,,
		\label{gamma_estimate_Fermi} \\
		\bigl|\Gamma_p^{\nu,\zeta;\FF}(\vec{x},\vec{y})-\widehat{\Gamma}_p^{\nu,\zeta;\FF}(\vec{x},\vec{y})\bigr|\lesssim_{p,d,\zeta,v,\alpha,B,L} \nu^{d/2}\,,
		\label{Gamma_minus_Gamma_hat_estimate_Fermi} \\
		\bigl|\widehat{\Gamma}_{p}^{\nu,\zeta;\FF}(\vec{x},\vec{y})-\widecheck{{\Gamma}}_{p}^{\nu,\zeta;\FF}(\vec{x},\vec{y})\bigr|\lesssim_{p,d,\zeta,v,\alpha,B,L} \nu^{d/2}\,.
		\label{Proposition_4_3_Fermi}
	\end{gather}
\end{lemma}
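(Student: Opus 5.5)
The plan is to follow the bosonic arguments of Lemma \ref{Proposition4_1} and Propositions \ref{Proposition_4_3_A}--\ref{Proposition_4_3} line by line. The only new ingredients are the signs $(-1)^{k_i-1}$ and $\sgn(\pi)$, together with the normalisation by $\Xi^{\nu,\zeta;\FF}$ instead of $\Xi^{\nu,\zeta;\BB}$. Each time a bosonic bound used positivity, I will first take absolute values inside the sums. By the triangle inequality this replaces every signed weight $(-1)^{k_i-1} z^{k_i\nu}/k_i$ by the positive bosonic weight $z^{k_i\nu}/k_i$. In this way each fermionic difference is dominated by the corresponding bosonic expression, which has already been estimated, exactly as in \eqref{triangle_inequality_consequence}. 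The one place where the sign cannot simply be discarded is the prefactor $1/\Xi^{\nu,\zeta;\FF}$. Here I will invoke Lemma \ref{difference_partition_fun_Bose_Fermi}, which guarantees $\Xi^{\nu,\zeta;\FF}\sim 1$ and in particular $\Xi^{\nu,\zeta;\FF}>0$ for $\nu$ small, with constants depending on $d,\zeta,v,\alpha,B,L$.

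For \eqref{gamma_estimate_Fermi}, I write $\Xi^{\nu,\zeta;\FF}\bigl(\gamma_p^{\nu,\zeta;\FF}-\widehat{\gamma}_p^{\nu,\zeta;\FF}\bigr)$ as the analogue of \eqref{proposition_4_1_1}. This is a sum over $\vec{k}\neq\vec{1}_n$ with the extra factors $\prod_i(-1)^{k_i-1}$. Note that the $\vec{k}=\vec{1}_n$ terms carry sign $+1$, so they match \eqref{gamma_hat_definition_Fermi} exactly. Taking absolute values and applying Lemma \ref{lemma_loop_interactions} in the form \eqref{stability_condition_on_correlation_function_1} gives precisely the right-hand side of \eqref{proposition_4_1_2}. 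That quantity is bounded by \eqref{proposition_4_1_3}, and dividing by $\Xi^{\nu,\zeta;\FF}\gtrsim 1$ concludes this step.

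For \eqref{Gamma_minus_Gamma_hat_estimate_Fermi}, I split by permutations as in \eqref{permutation_sum}, now using \eqref{Gamma_pi_def_Fermi}--\eqref{hat_Gamma_pi_def_Fermi}. The signs $\sgn(\pi)\prod_i(-1)^{k_i-1}$ have modulus one, so \eqref{gamma_estimate_Fermi} yields the same bound \eqref{proposition4_2_2}. The estimates \eqref{proposition4_2_3}--\eqref{proposition4_2_5} then apply verbatim.

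For \eqref{Proposition_4_3_Fermi}, I write $\widehat{\Gamma}_{p,\pi}^{\nu,\zeta;\FF}-\widecheck{\Gamma}_{p,\pi}^{\nu,\zeta;\FF}$ as in \eqref{proposition_4_3_2_A}. This is a sum over $\vec{k}\neq\vec{1}_p$ of signed terms in which only duration-$\nu$ closed loops appear, so those loops carry no sign. The terms with $\vec{k}=\vec{1}_p$ have sign $\sgn(\pi)$ and match \eqref{Gamma_check_pi_definition_Fermi}. Taking absolute values, I bound $1/|\Xi^{\nu,\zeta;\FF}|$ by a constant and then repeat \eqref{proposition_4_3_2}--\eqref{proposition_4_3_3_B}. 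Summing over $\pi\in S_p$ finishes the proof.

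The main (mild) obstacle is ensuring that the fermionic normalisation is harmless. The bosonic proofs used $\Xi^{\nu,\zeta;\BB}\geq 1$, whereas $\Xi^{\nu,\zeta;\FF}$ is a signed sum. This is handled entirely by Lemma \ref{difference_partition_fun_Bose_Fermi}, at the cost of restricting to $\nu$ sufficiently small and allowing implicit constants to depend on $L$.
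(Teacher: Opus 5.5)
Your proposal is correct and follows essentially the same route as the paper: take absolute values so that each signed fermionic difference is dominated by its bosonic counterpart, and control the normalisation $1/\Xi^{\nu,\zeta;\FF}$ via Lemma \ref{difference_partition_fun_Bose_Fermi}. The only cosmetic difference is in \eqref{gamma_estimate_Fermi}. There the paper writes the bound as $\frac{\Xi^{\nu,\zeta;\BB}}{\Xi^{\nu,\zeta;\FF}}\bigl(\gamma_p^{\nu,\zeta;\BB}-\widehat{\gamma}_p^{\nu,\zeta;\BB}\bigr)$ and cites Lemma \ref{Proposition4_1}, whereas you rerun \eqref{proposition_4_1_2}--\eqref{proposition_4_1_3} directly.
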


\begin{proof}
	We start by proving \eqref{gamma_estimate_Fermi}. Observe that for any $p \in \N^{*}$ and $\omega \in \Omega^{\vec{T}}$, where $\vec{T} \in (\nu \N^{*})^p$, we have that
	\begin{equation}
		\Xi^{\nu,\zeta,\FF} \bigl(\gamma_p^{\nu,\zeta;\FF}(\vec{\omega}\bigr) - \widehat{\gamma}_p^{\nu,\zeta;\FF}(\vec{\omega})) = 
    		\sum_{n = 1}^{\infty} \frac{1}{n!} \sum_{\substack{\vec{k} \in (\N^{*})^n \\ \vec{k} \neq \vec{1}_n}} \prod_{i = 1}^{n} (-1)^{k_{i}-1} \frac{z^{k_{i}\nu}}{k_{i}} 
		\int \mathbb{W}^{\nu \vec{k}} (\dd \vec{\widetilde{\omega}}) \, \exp \left(-\mathcal{V}^{\nu}(\vec{\widetilde{\omega}}\vec{\omega})\right) \,.
	\end{equation} 
	By the triangle inequality
    	\begin{align}
		\Xi^{\nu,\zeta;\FF}\bigl|\gamma_p^{\nu,\zeta;\FF}(\vec{\omega})-\widehat{\gamma}_p^{\nu,\zeta;\FF}(\vec{\omega})\bigr| & \leq
    		\sum_{n = 1}^{\infty} \frac{1}{n!} \sum_{\substack{\vec{k} \in (\N^{*})^n \\ \vec{k} \neq \vec{1}_n}} \prod_{i = 1}^{n} \frac{z^{k_{i}\nu}}{k_{i}} 
    		\int \mathbb{W}^{\nu \vec{k}} (\dd \vec{\widetilde{\omega}}) \, \exp \left(-\mathcal{V}^{\nu}(\vec{\widetilde{\omega}}\vec{\omega})\right) \\
		& = \Xi^{\nu,\zeta,\BB} \bigl(\gamma_p^{\nu,\zeta;\BB}(\vec{\omega}) - \widehat{\gamma}_p^{\nu,\zeta;\BB}(\vec{\omega})\bigr)\,,
	\end{align}
	where we recalled \eqref{proposition_4_1_1} in the last line. We therefore have that
	\begin{equation}
		\bigl|\gamma_p^{\nu,\zeta;\FF}(\vec{\omega})-\widehat{\gamma}_p^{\nu,\zeta;\FF}(\vec{\omega})\bigr| 
		\leq \frac{\Xi^{\nu,\zeta,\BB}}{\Xi^{\nu,\zeta;\FF}} \bigl(\gamma_p^{\nu,\zeta;\BB}(\vec{\omega}) - \widehat{\gamma}_p^{\nu,\zeta;\BB}(\vec{\omega})\bigr)\,.
	\end{equation}
	The bound in \eqref{gamma_estimate_Fermi} now follows by applying Lemma \ref{Proposition4_1} and Lemma \ref{difference_partition_fun_Bose_Fermi}.

	Next we show \eqref{Gamma_minus_Gamma_hat_estimate_Fermi}. We calculate
	\begin{equation}
		\Gamma_{p, \pi}^{\nu,\zeta;\FF}(\vec{x},\vec{y})-\widehat{\Gamma}_{p,\pi}^{\nu,\zeta;\FF}(\vec{x},\vec{y}) = \sgn(\pi) \sum_{\vec{k}\in (\N^{*})^p}\prod_{i=1}^p 
		(-1)^{k_{i}-1} z^{k_i \nu}\int \mathbb{W}^{\nu \vec{k}}_{\pi \vec{y}, \vec{x}} (\dd \vec{\omega})\,
		\bigl(\gamma_p^{\nu,\zeta;\FF}\,(\vec{\omega}) - \widehat{\gamma}_p^{\nu,\zeta;\FF}\,(\vec{\omega})\bigr) \,.
	\end{equation}
	Taking the absolute value and applying the triangle inequality gives
	\begin{equation}
		\bigl|\Gamma_{p, \pi}^{\nu,\zeta;\FF}(\vec{x},\vec{y})-\widehat{\Gamma}_{p,\pi}^{\nu,\zeta;\FF}(\vec{x},\vec{y})\bigr| 
		\leq \sum_{\vec{k}\in (\N^{*})^p}\prod_{i=1}^p z^{k_i \nu}\int \mathbb{W}^{\nu \vec{k}}_{\pi \vec{y}, \vec{x}} (\dd \vec{\omega})\,
		\bigl|\gamma_p^{\nu,\zeta;\FF}\,(\vec{\omega}) - \widehat{\gamma}_p^{\nu,\zeta;\FF}\,(\vec{\omega})\bigr| \,.
	\end{equation}
	Applying \eqref{gamma_estimate_Fermi} yields
	\begin{equation}
		\bigl|\Gamma_{p, \pi}^{\nu,\zeta;\FF}(\vec{x},\vec{y})-\widehat{\Gamma}_{p,\pi}^{\nu,\zeta;\FF}(\vec{x},\vec{y})\bigr| 
		\lesssim_{d,\zeta,v,\alpha,B,L} \nu^{d/2} \sum_{\vec{k}\in (\N^{*})^p}\prod_{i=1}^p z^{k_i \nu}\int
		\mathbb{W}^{\nu \vec{k}}_{\pi \vec{y}, \vec{x}} (\dd \vec{\omega})\, \exp\left(\frac{B}{\nu}\sum_{i=1}^p T(\omega_i)\right )\,.
	\end{equation}
	Comparing this to \eqref{proposition4_2_2}, we can apply the rest of the proof of Proposition \ref{Proposition_4_3_A}, and \eqref{Gamma_minus_Gamma_hat_estimate_Fermi} follows.

	Lastly, we prove \eqref{Proposition_4_3_Fermi}. As before, we calculate
	\begin{align}
	     & \bigl|\widehat{\Gamma}_{p,\pi}^{\nu,\zeta;\FF}(\vec{x},\vec{y})-\widecheck{\Gamma}_{p,\pi}^{\nu,\zeta;\FF}(\vec{x},\vec{y})\bigr| 
	     \nonumber \\
	     & \quad = \left| \frac{\sgn(\pi)}{\Xi^{\nu,\zeta;\FF}}\sum_{n=1}^\infty\sum_{\substack{\vec{k} \in (\N^{*})^p \\ \vec{k} \neq \vec{1}_{p}}}\frac{z^{n\nu}}{n!}
	     \prod_{i=1}^p (-1)^{k_{i}-1} z^{k_i \nu} \int \mathbb{W}^{\nu \vec{1}_n}(\dd \vec{\widetilde{\omega}})\,
	     \int \mathbb{W}^{\nu \vec{k}}_{\pi \vec{y}, \vec{x}} (\dd \vec{\omega}) \,\exp \bigl(-\mathcal{V}^{\nu}(\vec{\widetilde{\omega}}\vec{\omega})\bigr) \right| 
	     \\
	     & \quad \leq \frac{1}{\Xi^{\nu,\zeta;\FF}}\sum_{n=1}^\infty\sum_{\substack{\vec{k} \in (\N^{*})^p \\ \vec{k} \neq \vec{1}_{p}}}\frac{z^{n\nu}}{n!}
	     \prod_{i=1}^p z^{k_i \nu} \int \mathbb{W}^{\nu \vec{1}_n}(\dd \vec{\widetilde{\omega}})
	     \int \mathbb{W}^{\nu \vec{k}}_{\pi \vec{y}, \vec{x}} (\dd \vec{\omega}) \,\exp \bigl(-\mathcal{V}^{\nu}(\vec{\widetilde{\omega}}\vec{\omega})\bigr) 
	     \\
	     & \quad \sim_{d,\zeta,v,\alpha,B} \frac{1}{\Xi^{\nu,\zeta;\BB}}\sum_{n=1}^\infty\sum_{\substack{\vec{k} \in (\N^{*})^p \\ \vec{k} \neq \vec{1}_{p}}}
	     \frac{z^{n\nu}}{n!} \prod_{i=1}^p z^{k_i \nu} \int \mathbb{W}^{\nu \vec{1}_n}(\dd \vec{\widetilde{\omega}})\,
	     \int \mathbb{W}^{\nu \vec{k}}_{\pi \vec{y}, \vec{x}} (\dd \vec{\omega}) \,\exp \bigl(-\mathcal{V}^{\nu}(\vec{\widetilde{\omega}}\vec{\omega})\bigr)\,,
	     \label{proposition_4_3_Fermi}
	\end{align}
	where we used \eqref{partition_fun_Bose_Fermi_sim} to get to \eqref{proposition_4_3_Fermi}. Comparing \eqref{proposition_4_3_Fermi} to \eqref{proposition_4_3_2_A}, we can apply the rest of the proof of Proposition \ref{Proposition_4_3} to derive \eqref{Proposition_4_3_Fermi}.
\end{proof}

We now have all the necessary ingredients to prove Theorem \ref{density_matrix_result_Fermi}.
\begin{proof}[Proof of Theorem \ref{density_matrix_result_Fermi}]
     We need to show the analogues of \eqref{Theorem_1.5(ii)} and \eqref{Theorem_1.5(ii)_rate_of_convergence_rewritten}. We do this by applying Lemma \ref{density_matrix_result_aux_lemma_Fermi} above. Recall the definition of $\Delta_p^{0,\zeta} \equiv \Delta_p^{0,\zeta;\BB}$ in \eqref{diagonal_contribution}, and recall that $\Xi^{0,\zeta;\FF} \equiv \Xi^{0,\zeta;\BB}$ and $\Xi^{0,\zeta;\FF}(\vec{x}) \equiv \Xi^{0,\zeta;\BB}(\vec{x})$ for every $\vec{x} \in \Lambda^p$. We therefore have that
     \begin{equation}
	     \label{diagonal_contribution_Fermi}
	     \Delta_p^{0,\zeta;\BB}(\vec{x}) = (2\pi)^{-\frac{pd}{2}}\,\zeta^p\,\frac{\Xi^{0,\zeta;\BB}(\vec{x})}{\Xi^{0,\zeta;\BB}} 
	     = (2\pi)^{-\frac{pd}{2}}\,\zeta^p\,\frac{\Xi^{0,\zeta;\FF}(\vec{x})}{\Xi^{0,\zeta;\FF}} \eqqcolon \Delta_p^{0,\zeta;\FF}(\vec{x})\,.
     \end{equation}
     For simplicity, we will henceforth denote \eqref{diagonal_contribution_Fermi} simply by $\Delta_p^{0,\zeta}(\vec{x})$.
     Using \eqref{density_matrix_result_2_Fermi}, \eqref{diagonal_contribution_Fermi}, and recalling \eqref{Pi_{x,y}}, we can rewrite \eqref{density_matrix_result_1_Fermi} as
     \begin{equation}
	     \label{Theorem_1.5(ii)_Fermi}
	     \lim_{\nu \rightarrow 0} \Gamma_p^{\nu,\zeta;\FF}(\vec{x},\vec{y})=\sum_{\pi \in \Pi_{\vec{x},\vec{y}}} \sgn(\pi) \,\Delta_p^{0,\zeta}(\vec{x}) \,,
     \end{equation}
     and \eqref{Theorem_1.5(ii)_rate_of_convergence_Fermi} as
     \begin{multline}
	     \label{Theorem_1.5(ii)_rate_of_convergence_rewritten_Fermi}
	     \Bigl|\Gamma_p^{\nu,\zeta;\FF}(\vec{x},\vec{y})-\sum_{\pi \in \Pi_{\vec{x},\vec{y}}} \sgn(\pi)\,\Delta_p^{0,\zeta}(\vec{x}) \Bigr| \\
	     \lesssim_{p,d,\zeta,v,\alpha,B,L} \Biggl[\nu^{\alpha/2}+\sum_{\tilde{\pi} \in S_p \setminus \Pi_{\vec{x},\vec{y}}} \prod_{i=1}^{p} \Biggl\{\nu^{d/2}+ 
	     \exp \biggl(-\frac{|y_{\tilde{\pi}(i)}-x_i|_{L}^2}{2\nu}\biggr)\Biggr\}\Biggr]\,.
     \end{multline}
     Just as in the proof of Theorem~\ref{density_matrix_result}, we need to consider two cases for the set $\Pi_{\vec{x},\vec{y}}$.\\
     \textbf{Case 1}: $\Pi_{\vec{x},\vec{y}}=\emptyset$.
     Observe that by applying the triangle inequality to \eqref{Gamma_check_definition_Fermi}, and using \eqref{partition_fun_Bose_Fermi_sim} we have
     \begin{equation}
     	\bigl|\widecheck{\Gamma}_p^{\nu,\zeta;\FF}(\vec{x},\vec{y})\bigr| \leq \frac{\Xi^{\nu,\zeta;\BB}}{\Xi^{\nu,\zeta;\FF}} \widecheck{\Gamma}_p^{\nu,\zeta;\BB}(\vec{x},\vec{y}) 
		\sim_{d,\zeta,v,\alpha,B} \widecheck{\Gamma}_p^{\nu,\zeta;\BB}(\vec{x},\vec{y})\,.
     \end{equation}
     We therefore deduce from \eqref{check_Gamma_estimation_case_1}--\eqref{check_Gamma_estimation_case_1_B} that
     \begin{equation}
	     \bigl|\widecheck{\Gamma}_p^{\nu,\zeta;\FF}(\vec{x},\vec{y})\bigr| 
	     \lesssim_{p,d,\zeta,v,\alpha,B,L} \sum_{\pi \in S_p} \prod_{i=1}^p \left\{\nu^{d/2}+\exp\left({-\frac{|x_i-y_{\pi(i)}|^2_L}{2\nu}}\right)\right\}.
     \end{equation}
     Combining this with Lemma \ref{density_matrix_result_aux_lemma_Fermi} yields \eqref{Theorem_1.5(ii)_Fermi}--\eqref{Theorem_1.5(ii)_rate_of_convergence_rewritten_Fermi}.

     \textbf{Case 2}: $\Pi_{\vec{x},\vec{y}} \neq \emptyset$.
     As in Case 1, we apply the triangle inequality and \eqref{partition_fun_Bose_Fermi_sim} to \eqref{Gamma_pi_def_Fermi} to get
     \begin{equation}
	     \bigl|\Gamma_{p,\tilde{\pi}}^{\nu,\zeta;\FF}(\vec{x},\vec{y})\bigr| \leq \frac{\Xi^{\nu,\zeta;\BB}}{\Xi^{\nu,\zeta;\FF}} \Gamma_{p,\tilde{\pi}}^{\nu,\zeta;\BB}(\vec{x},\vec{y})
	     \sim_{d,\zeta,v,\alpha,B} \Gamma_{p,\tilde{\pi}}^{\nu,\zeta;\BB}(\vec{x},\vec{y})
     \end{equation}
     for all $\tilde{\pi} \in S_p \setminus \Pi_{\vec{x},\vec{y}}$. By \eqref{x=pi(y)_1} we then have
     \begin{equation}
	     \label{x=pi(y)_1_Fermi}
	     \sum_{\tilde{\pi} \in S_p \setminus \Pi_{\vec{x},\vec{y}}} \bigl|\Gamma_{p,\tilde{\pi}}^{\nu,\zeta;\FF}(\vec{x},\vec{y})\bigr| 
	     \lesssim_{p,d,\zeta,v,\alpha,B,L} \Biggl[\nu^{d/2}+\sum_{\tilde{\pi} \in S_p \setminus \Pi_{\vec{x},\vec{y}}} \prod_{i=1}^{p} \Biggl\{\nu^{d/2}+ 
	     \exp \biggl(-\frac{|y_{\tilde{\pi}(i)}-x_i|_{L}^2}{2\nu}\biggr)\Biggr\}\Biggr]\,.
     \end{equation}
     Observe that for $\pi \in \Pi_{\vec{x},\vec{y}}$,
     \begin{equation}
	     \widecheck{\Gamma}_{p,\pi}^{\nu,\zeta;\FF}(\vec{x},\vec{y}) = \sgn(\pi) \, \frac{\Xi^{\nu,\zeta;\BB}}{\Xi^{\nu,\zeta;\FF}} \, \widecheck{\Gamma}_{p,\pi}^{\nu,\zeta;\BB}(\vec{x},\vec{y})\,,
     \end{equation}
     and so
     \begin{align}
	     \bigl|\widecheck{\Gamma}_{p,\pi}^{\nu,\zeta; \FF}(\vec{x},\vec{y})- \sgn(\pi) \, \Delta_p^{0,\zeta}(\vec{x}) \bigr| & = 
	     \left| \frac{\Xi^{\nu,\zeta;\BB}}{\Xi^{\nu,\zeta;\FF}} \, \widecheck{\Gamma}_{p,\pi}^{\nu,\zeta; \BB}(\vec{x},\vec{y})- \Delta_p^{0,\zeta}(\vec{x}) \right| 
	     \notag \\
	     & = \left|\frac{\Xi^{\nu,\zeta;\BB}}{\Xi^{\nu,\zeta;\FF}} \left( \widecheck{\Gamma}_{p,\pi}^{\nu,\zeta; \BB}(\vec{x},\vec{y})- \Delta_p^{0,\zeta}(\vec{x}) \right) 
		     + \left(\frac{\Xi^{\nu,\zeta;\BB}}{\Xi^{\nu,\zeta;\FF}} - 1 \right) \Delta_p^{0,\zeta}(\vec{x}) \right| 
	     \notag \\
	     & \leq \left|\frac{\Xi^{\nu,\zeta;\BB}}{\Xi^{\nu,\zeta;\FF}} \right| \, \left| \widecheck{\Gamma}_{p,\pi}^{\nu,\zeta; \BB}(\vec{x},\vec{y})
	     - \Delta_p^{0,\zeta}(\vec{x}) \right| + \left|\frac{\Xi^{\nu,\zeta;\BB}}{\Xi^{\nu,\zeta;\FF}} - 1 \right| \, \bigl|\Delta_p^{0,\zeta}(\vec{x})\bigr|\,.
     \end{align}
     We now apply Lemma \ref{difference_partition_fun_Bose_Fermi} and \eqref{x=pi(y)_2} to get
     \begin{equation}
	     \label{x=pi(y)_2_Fermi}
	     \bigl|\widecheck{\Gamma}_{p,\pi}^{\nu,\zeta; \FF}(\vec{x},\vec{y})- \sgn(\pi)\,\Delta_p^{0,\zeta}(\vec{x}) \bigr| \lesssim_{p,d,\zeta,v,\alpha,B,L} \nu^{\alpha/2}\,.
     \end{equation}
     We also used that $\Delta_p^{0,\zeta}(\vec{x}) \lesssim_{p,d,\zeta,v,B,L} 1$ for all $\vec{x} \in \Lambda^p$. Combining \eqref{Gamma_minus_Gamma_hat_estimate_Fermi}--\eqref{Proposition_4_3_Fermi}, and \eqref{x=pi(y)_2_Fermi}, we obtain that for all $\pi \in \Pi_{\vec{x},\vec{y}}$,
     \begin{equation}
	     \label{x=pi(y)_3_Fermi}
	     \bigl|\Gamma_{p,\pi}^{\nu,\zeta}(\vec{x},\vec{y})- \sgn(\pi) \Delta_p^{0,\zeta}(\vec{x}) \bigr| \lesssim_{p,d,\zeta,v,\alpha,B,L} \nu^{\alpha/2}\,.
     \end{equation}
     We hence conclude \eqref{Theorem_1.5(ii)_Fermi}--\eqref{Theorem_1.5(ii)_rate_of_convergence_rewritten_Fermi} in this case from \eqref{x=pi(y)_1_Fermi} and \eqref{x=pi(y)_3_Fermi} after summing over all permutations in $S_p$.

\end{proof}

\section{The analysis in infinite volume}
\label{section_infinite_volume}

In this section, we study the infinite-volume limit and prove Theorems \ref{free_energy_infinite_volume} and  \ref{Infinite_volume_Theorem_2}. Throughout, we work in the bosonic regime. For an analogue of Theorem \ref{Infinite_volume_Theorem_2} below for Boltzmann and Fermi statistics, see Remark \ref{fermions_infinite_volume}.

Let us first analyse the interaction potential $v^L$ on $\Lambda_L$ given by \eqref{v^L} for $v$ as in Assumption \ref{Assumption_on_v_2}.

\begin{lemma}
\label{v^L_lemma} 
Let $v$ be given by Assumption \ref{Assumption_on_v_2} and let $v^L$ be given by \eqref{v^L} above. 
Then the following results hold.
\begin{itemize}
\item[(i)]
$v^L$ is a well-defined nonnegative function on $\Lambda_L$ and it satisfies Assumption \ref{Assumption_on_v}  with $B=0$ and $\alpha=1$.
\item[(ii)] We have
\begin{equation}
\label{Lemma_4.1_iii}
\|v^L\|_{L^1(\Lambda_L)}=\|v\|_{L^1(\R^d)}<\infty\,.
\end{equation}
\item[(iii)] Furthermore, we have
\begin{equation}
\label{Subcase_2B_4A}
\|v^L\|_{L^{\infty}(\Lambda_L)}=\mathcal{O}(1)\,,\,\, \text{uniformly in $L \geq 1$\,.}
\end{equation}
\item[(iv)] We have 
\begin{equation*}
\lim_{L \rightarrow \infty} \|v-v^L\|_{L^1(\Lambda_L)}=0\,.
\end{equation*}
\item[(v)] We have
\begin{equation*}
\lim_{L \rightarrow \infty} \|v-v^L\|_{L^{\infty}(\Lambda_L)}=0\,.
\end{equation*}

\end{itemize}
\end{lemma}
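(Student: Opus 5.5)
The plan is to work throughout with the decay bound \eqref{v_assumption_iii} and the elementary lattice-sum estimate
\begin{equation*}
\sup_{x \in [-L/2,L/2]^d} \sum_{k \in \Z^d\setminus\{0\}} \langle x+Lk\rangle^{-d-\varepsilon_0} \lesssim_{d,\varepsilon_0} L^{-d-\varepsilon_0}\,,\qquad L\geq 1\,.
\end{equation*}
This holds because for $x$ in the fundamental cell and $k \neq 0$ we have $|x+Lk| \geq L|k|/2$, so each term is at most $(L|k|/2)^{-d-\varepsilon_0}$ and the sum over $k\neq 0$ converges. The $k=0$ term is bounded by $C$. Hence the series \eqref{v^L} converges absolutely and uniformly on compact sets, and in fact uniformly on $\R^d$ by periodicity. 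It therefore defines a continuous, $L\Z^d$-periodic function, i.e.\ a function on $\Lambda_L$. Evenness and nonnegativity pass directly from $v$ to $v^L$. The same estimate gives (iii), $\|v^L\|_{L^\infty}\le C+ C'L^{-d-\varepsilon_0}=\mathcal{O}(1)$, and also (v), since on the cell $|v^L(x)-v(x)| \le \sum_{k\ne0}|v(x+Lk)| \lesssim L^{-d-\varepsilon_0}\to 0$.

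For (ii), we use Tonelli, which applies since $v\ge0$:
\begin{equation*}
\int_{\Lambda_L} v^L = \sum_k \int_{[-L/2,L/2]^d+Lk} v = \|v\|_{L^1(\R^d)}\,.
\end{equation*}
This is finite because $\langle x\rangle^{-d-\varepsilon_0}\in L^1(\R^d)$. For (iv) we interpret $v$ as its restriction to the cell. Then
\begin{equation*}
\|v-v^L\|_{L^1(\Lambda_L)}\le \sum_{k\neq0}\int_{\text{cell}+Lk}|v| = \int_{\R^d\setminus[-L/2,L/2]^d} v\,,
\end{equation*}
which tends to $0$ by dominated convergence. Alternatively, one can simply use $|\Lambda_L|\cdot L^{-d-\varepsilon_0}\to0$.

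For (i), stability with $B=0$ is immediate from $v^L\ge0$. For the Lipschitz property ($\alpha=1$) we differentiate termwise. Since $\|\nabla v(x)\| \le C\langle x\rangle^{-d-\varepsilon_0}$, the same lattice-sum estimate shows that $\sum_k \nabla v(\cdot+Lk)$ converges uniformly, so $v^L$ is $C^1$ with $\|\nabla v^L\|_\infty \le M$, where $M$ is independent of $L$. The one point requiring care, and the main (mild) obstacle, is passing from a gradient bound to a bound in the periodic metric $|\cdot|_L$. Given $x,y$, we choose $m\in\Z^d$ with $|x-y-Lm|=|x-y|_L$. By periodicity $v^L(y)=v^L(y+Lm)$, and the mean value theorem along the segment from $y+Lm$ to $x$ in $\R^d$ gives
\begin{equation*}
|v^L(x)-v^L(y)|\le M|x-y|_L\,.
\end{equation*}
This is exactly Assumption \ref{Assumption_on_v} (ii) with $\alpha=1$.
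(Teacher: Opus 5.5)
Your proposal is correct and follows the paper's proof essentially step by step. Both use the same lattice-tail bound, Tonelli for (ii), the tail $\int_{\R^d\setminus\Lambda_L} v$ for (iv), and, for the Lipschitz bound, the choice of $m$ with $|x-y|_L=|x-y-Lm|$ followed by the mean value theorem. The only difference is that you apply the mean value theorem to $v^L$ after termwise differentiation, whereas the paper applies it to each term $v(x+Lk)-v(y+Lm+Lk)$.

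One cosmetic point: $\nabla v$ is not assumed continuous, so termwise differentiation shows that $v^L$ is differentiable with $\sup\|\nabla v^L\|\le M$, not that it is $C^1$. That is all the mean value theorem needs.
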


\begin{proof}
We first show (i). Let us note that \eqref{v^L} is a convergent sum. Namely, we have that for fixed $x \in \Lambda_L$
\begin{equation}
\label{v^L_2}
\sum_{k \in \Z^d} v(x+Lk)=\mathop{\sum_{k \in \Z^d}}_{|k| \lesssim 1} v(x+Lk)+
\mathop{\sum_{k \in \Z^d}}_{|k| \gg 1} v(x+Lk)\,.
\end{equation}
We only need to consider the second sum on the right-hand side of \eqref{v^L_2}. It is finite because by \eqref{v_assumption_iii} we have 
\begin{equation*}
0 \leq \mathop{\sum_{k \in \Z^d}}_{|k| \gg 1} v(x+Lk) \lesssim  \mathop{\sum_{k \in \Z^d}}_{|k| \gg 1} \frac{1}{\langle x+Lk \rangle^{d+\varepsilon_0}} \lesssim \mathop{\sum_{k \in \Z^d}}_{|k| \gg 1} \frac{1}{\langle Lk \rangle^{d+\varepsilon_0}} \lesssim_{d,\varepsilon_0} 1\,.
\end{equation*}
Hence \eqref{v^L} is a convergent sum. By construction, we have 
\begin{equation*}
v^L(x+Lk)=v^L(x)
\end{equation*}
for all $k \in \Z^d$, so $v^L$ indeed defines a function on $\Lambda_L$.

By Assumption \ref{Assumption_on_v_2} (ii) and \eqref{v^L}, it follows that $v^L$ is pointwise nonnegative, hence it satisfies Assumption \ref{Assumption_on_v} (i) with $B=0$. Let us now show that $v^L$ satisfies Assumption \ref{Assumption_on_v} (ii) with $\alpha=1$. To this end, let $x, y \in \Lambda_L$ be given. By \eqref{periodic_Euclidean_norm}, there exists $m \in \Z^d$ such that 
\begin{equation}
\label{m_choice}
|x-y|_{L}=|x-y-Lm|\,.
\end{equation}
Note that by construction, we have 
\begin{equation*}
|m| \leq \sqrt{d}\,.
\end{equation*}
We can rewrite \eqref{v^L} as 
\begin{equation}
\label{v^L_rewritten}
v^L(y)=\sum_{k \in \Z^d}  v(y+Lm+Lk)\,.
\end{equation}
By \eqref{v^L}, the triangle inequality, the mean-value theorem, Assumption \ref{Assumption_on_v_2} (i), (iii), and \eqref{m_choice}--\eqref{v^L_rewritten}, we have
\begin{multline}
\label{v^L_Lipschitz}
\left|v^L(x)-v^L(y)\right| \leq \sum_{k \in \Z^d} \left| v(x+Lk)-v(y+Lm+Lk) \right| 
\\
\lesssim \mathop{\sum_{k \in \Z^d}}_{|k| \lesssim 1} |x-y|_{L}+\mathop{\sum_{k \in \Z^d}}_{|k| \gg 1} |x-y|_{L}\, \frac{1}{\langle Lk \rangle^{d+\varepsilon_0}}\lesssim_{d,\varepsilon_0} |x-y|_{L}\,.
\end{multline}
Hence, $v^L$ is indeed Lipschitz by \eqref{v^L_Lipschitz}.
Claim (i) now follows.

We now show claim (ii). By Assumption \ref{Assumption_on_v_2} (iii), it follows that $v \in L^1(\R^d)$. We deduce the identity \eqref{Lemma_4.1_iii} from \eqref{v^L} and Assumption \ref{Assumption_on_v_2} (ii). We hence obtain claim (ii).
We note that claim (iii) follows from \eqref{v^L}, since $|v(x)| \lesssim \langle x \rangle^{-d-\varepsilon_0}$ by \eqref{v_assumption_iii}.

In order to show (iv), we note that by the integrability of $v$,
\begin{equation}
\label{Lemma_5.4_(iv)_proof}
\|v-v^L\|_{L^1(\Lambda_L)}=\sum_{k \in  \mathbb{Z}^d \setminus \{0\}} \int_{\Lambda_L} \dd x\,v(x+Lk)=\|v\|_{L^1(\mathbb{R}^d \setminus \Lambda_L)} \rightarrow 0  \text{ as } L \rightarrow \infty\,.
\end{equation}
We show (v) analogously. Using \eqref{v_assumption_iii} and the fact that $|x+Lk| \gtrsim L$ for all $k \in \Z^d \setminus \{0\}$, we have
\begin{equation*}
\|v-v^L\|_{L^{\infty}(\Lambda_L)}=\sup_{x \in \Lambda_L} \sum_{k \in \Z^d \setminus \{0\}} v(x+Lk) \rightarrow 0 \text{ as } L \rightarrow \infty\,.
\end{equation*}
\end{proof}

As noted in the introduction, we emphasise the $L$-dependence of all the quantities (except $\Lambda_L$) by means of a superscript $L$. In particular, we write $\Omega^{L,T}, \Omega^{L,T}_{y,x}$ for the appropriate spaces of paths on $\Lambda_L$, defined 
as in Section \ref{Section_1.3}.
Furthermore, we write $\psi^{t} \equiv \psi^{L,t}$ for the heat kernel \eqref{heat_kernel} on $\Lambda_L$ and $\mathbb{W}_{y,x}^{L,T}(\dd \omega), \mathbb{W}^{L,T}(\dd \omega)$ for the positive measures \eqref{Wiener_measure} and \eqref{W^T_definition} respectively. 

The starting point of the analysis is to rewrite the logarithm of the quantum relative partition function \eqref{quantum_relative_partition_function} as well as the reduced $p$-particle density matrices \eqref{reduced_p_particle_density_matrix} as a power series which is amenable to a cluster expansion. This is done using the Ginibre representation from Lemma \ref{Ginibre_representation}. 
Let us introduce some notation. Let $\mathcal{V}^{\nu,L}(\omega)$ be defined as in \eqref{self_interaction} with interaction potential $v^L$ given by \eqref{v^L}. Let us also consider the following two positive measures defined on 
$\Omega^L \equiv \bigcup_{T > 0} \Omega^{L,T}$.
\begin{align}
\label{measure_mu_1}
\mu^{L}(\dd \omega) &\coloneqq  \nu \sum_{T\in \nu {\N}^*} \frac{z^T}{T} \,  \mathbb{W}^{L,T}(\dd \omega)
\,\ee^{-\mathcal{V}^{\nu,L}(\omega)}\,,
\\
\label{measure_mu_2}
\widehat{\mu}^{L}_{y, x}(\dd \omega) &\coloneqq  \sum_{T \in \nu \N^*} z^T \, \mathbb{W}^{L,T}_{y,x}(\dd \omega)\,\ee^{-\mathcal{V}^{\nu,L}(\omega)}\,.
\end{align}
Given $n \in \N^*$, and recalling \eqref{loop_interaction}, we define the \emph{Ursell function} as 
\begin{equation}
\label{Ursell_function_definition}
\varphi^{L}(\omega_{1},\dots, \omega_{n})
\coloneqq 
\frac{1}{n!} \sum_{\mathcal{G} \in {\mathfrak{G}}^c_n} \prod_{\{i,j\} \in \mathcal{G}} \xi^{L}(\omega_i, \omega_j)\,, 
\quad \xi^L(\omega, \tilde \omega)\coloneqq  \exp\bigl(-\mathcal{V}^{\nu,L}(\omega,\tilde \omega)\bigr) - 1\,.
\end{equation}
In \eqref{Ursell_function_definition}, we denote by ${\mathfrak{G}}^c_n$ the set of all connected graphs on $[n] \equiv \{1,2,\ldots,n\}$. Using Assumption \ref{Assumption_on_v_2} (ii), \eqref{v^L}, \eqref{loop_interaction}, and \eqref{Ursell_function_definition}, we have that
\begin{equation}
\label{xi_bound}
|\xi^L(\omega,\tilde{\omega})| \leq \mathcal{V}^{\nu,L}(\omega,\tilde \omega)\,.
\end{equation}
Similarly, recalling \eqref{self_interaction}, we have from \eqref{measure_mu_1}--\eqref{measure_mu_2} that 
\begin{align}
\label{measure_mu_1_bound}
\mu^{L}(\dd \omega) &\leq  \nu \sum_{T\in \nu {\N}^*} \frac{z^T}{T} \,  \mathbb{W}^{L,T}(\dd \omega) \eqqcolon \mu^{L,0}(\dd \omega)
\,,
\\
\label{measure_mu_2_bound}
\widehat{\mu}^{L}_{y, x}(\dd \omega) &\leq  \sum_{T \in \nu \N^*} z^T \, \mathbb{W}^{L,T}_{y,x}(\dd \omega) \eqqcolon \widehat{\mu}^{L,0}_{y, x}(\dd \omega)\,.
\end{align}

Given  $p \in \N$ and $\omega_1,\ldots,\omega_p \in \Omega$, using \eqref{measure_mu_1}--\eqref{Ursell_function_definition}, let us define 
\begin{equation*}
X^{L}(\omega_1,\ldots,\omega_p) \equiv X^{\nu,\zeta,L}(\omega_1,\ldots,\omega_p)
\end{equation*}
by
\begin{multline}
\label{Ursell_function_representation}
X^{L}(\omega_{1}, \dots, \omega_{p}) \coloneqq  \sum_{n \geq \max\{p,1\}} \frac{n!}{(n-p)!} \int \mu^{L}(\dd \omega_{p+1}) \cdots \mu^L(\dd \omega_n)\, \varphi^{L}(\omega_1, \dots, \omega_n)\,,
\\ 
X^L \coloneqq  X^L (\emptyset)\,.
\end{multline}
For future reference, we also define the quantity
\begin{multline}
\label{Ursell_function_representation_absolute_value}
|X|^{L}(\omega_{1}, \dots, \omega_{p}) \coloneqq  \sum_{n \geq \max\{p,1\}} \frac{n!}{(n-p)!} \int \mu^{L}(\dd \omega_{p+1}) \cdots \mu^L(\dd \omega_n)\, |\varphi^{L}(\omega_1, \dots, \omega_n)|\,,
\\
\quad  |X|^L \coloneqq  |X|^L (\emptyset)\,.
\end{multline}
Note that by the triangle inequality and the positivity of $\mu^L$, we have
\begin{equation}
\label{X^L_absolute_value_bound}
|X^L(\omega_1,\ldots,\omega_p)| \leq |X|^L(\omega_{1}, \dots, \omega_{p})\,.
\end{equation}
We henceforth suppress the $\nu$-dependence in \eqref{measure_mu_1}--\eqref{Ursell_function_representation} for simplicity.
For the remainder of the section, we recall the assumption \eqref{zeta_infinite_volume} on $\zeta$ and consider $\nu \leq 1$.

We can now set up the cluster expansion.
\begin{proposition}[The cluster expansion]
\label{gamma_p_cluster_expansion}
Suppose that \eqref{smallness_v_kappa} holds.
Recalling \eqref{measure_mu_2} and \eqref{Ursell_function_representation}, the following representations hold.
\begin{itemize}
\item[(i)] $\log  \Xi^{\nu,\zeta,L}=X^{L}$.
\item[(ii)]
Let $p \in \N^*$ and $\vec x, \vec y \in \Lambda_L^p$ be given. We have
\begin{equation}
\label{gamma_p_cluster_expansion_claim}
\Gamma^{\nu,\zeta,L}_{p}({\vec x,\vec y}) = \sum_{\pi \in S_p} \int \widehat \mu^{L}_{y_{\pi(1)},x_{1}}(\dd \omega_{1}) \cdots \widehat \mu^{L}_{y_{\pi(p)},x_{p}}(\dd \omega_{p}) \sum_{\Pi \in \mathfrak{P}_{p}} \prod_{\lambda \in \Pi} X^{L}((\omega_{i})_{i \in \lambda})\,.
\end{equation}
Here, $\mathfrak{P}_{p}$ denotes the set of partitions of $[p] = \{1, \ldots, p\}$.
\end{itemize}
\end{proposition}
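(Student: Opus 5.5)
Starting from the Ginibre representation of Lemma \ref{Ginibre_representation}, I would separate the self-interaction of each loop into the measure and the pair interactions into a Mayer expansion. Since $v^L\geq 0$ by Lemma \ref{v^L_lemma} (i), the weight factorises as
\begin{equation*}
\exp\bigl(-\mathcal{V}^{\nu,L}(\vec\omega)\bigr)=\prod_{i}\ee^{-\mathcal{V}^{\nu,L}(\omega_i)}\prod_{i<j}\bigl(1+\xi^L(\omega_i,\omega_j)\bigr).
\end{equation*}
Hence $\Xi^{\nu,\zeta,L}=\sum_n \frac{1}{n!}\int\mu^L(\dd\omega_1)\cdots\mu^L(\dd\omega_n)\prod_{i<j}(1+\xi^L)$. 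This is the partition function of a polymer gas with activity $\mu^L$ and pair compatibility $1+\xi^L\in[0,1]$.

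For (i), I would invoke the abstract cluster expansion theorem in the form of \cite{Ueltschi_2004} (as used on the lattice in \cite[Section 5]{FKSS_2023}). Its conclusion is that $\log\Xi$ equals the sum over connected clusters, i.e.\ $\sum_{n\ge1}\int\mu^L(\dd\omega_1)\cdots\mu^L(\dd\omega_n)\,\varphi^L(\omega_1,\dots,\omega_n)$, which is exactly $X^L$, provided the series converges absolutely.

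For convergence, the Kotecký--Preiss/Ueltschi criterion requires a weight $a(\omega)$ with
\begin{equation*}
\int\mu^L(\dd\tilde\omega)\,|\xi^L(\omega,\tilde\omega)|\,\ee^{a(\tilde\omega)}\le a(\omega).
\end{equation*}
I would take $a(\omega)=c\,T(\omega)/\nu$, proportional to the number of time slices. Using \eqref{xi_bound}, \eqref{measure_mu_1_bound}, translation invariance and \eqref{heat_kernel_integral}, each slice pair contributes $\frac1\nu\int_0^\nu\int v^L=\|v\|_{L^1}$ after integrating over the base point of $\tilde\omega$. This gives the bound
\begin{equation*}
\int\mu^L|\xi^L|\ee^{a}\le \frac{T(\omega)}{\nu}\,\|v\|_{L^1}\sum_{k\ge1}z^{k\nu}\ee^{ck}\,\psi^{k\nu}(0)\cdot(\text{stuff}),
\end{equation*}
where the factor $k$ from the $k$ slices of $\tilde\omega$ cancels the $1/k$ in $\mu^L$. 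By \eqref{fugacity_condition} and Lemma \ref{heat_kernel_estimates}, the sum is bounded by $\sum_k(\zeta\ee^c)^k(\nu^{d/2}+k^{-d/2}\nu^{d(k-1)/2})$, which is $\lesssim \zeta\ee^{c}/(1-\zeta\ee^c)$, with the $L^{-d}$ part handled by $\nu\le1$. Choosing $c=1$ and using $\kappa=-\log\zeta>1$, this becomes $\lesssim 1/(\kappa-1)$. Then \eqref{smallness_v_kappa} makes the left side at most $c\,T(\omega)/\nu$.

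This verification, carried out uniformly in $L$ and $\nu$, is the main obstacle. The delicate part is extracting the $\frac{1}{\kappa-1}$ from the geometric sums; the constant term in $\|v\|_{L^1}+1$ absorbs the slack.

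For (ii), I would start from \eqref{Ginibre_representation_4}, pull $\ee^{-\mathcal{V}^{\nu,L}(\omega_i)}$ of the $p$ open paths into $\widehat\mu^L$, and expand $\Xi(\vec\omega)/\Xi$. The standard cluster-expansion identity for ratios says that $\Xi(\vec\omega)/\Xi$ equals the sum over set partitions $\Pi$ of $[p]$ of $\prod_{\lambda\in\Pi}$ of (clusters of free loops connected to $\{\omega_i\}_{i\in\lambda}$). Those cluster sums are precisely $X^L((\omega_i)_{i\in\lambda})$ by the combinatorial factor $n!/(n-p)!$ in \eqref{Ursell_function_representation}.

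To derive this identity, I would introduce auxiliary activities, i.e.\ replace $\mu^L$ by $\mu^L+\sum_i t_i\delta_{\omega_i}$. Differentiating $\log\Xi$ in $t_1,\dots,t_p$ at $0$ and using the moment–cumulant (partition) formula gives the result. Absolute convergence of these derivatives follows from the same criterion, now applied to $|X|^L$ with external polymers. This justifies exchanging the sums and gives \eqref{gamma_p_cluster_expansion_claim} after summing over $\pi\in S_p$.
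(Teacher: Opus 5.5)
Your proposal is correct and follows essentially the paper's route. For (i) you apply \cite[Theorem 1]{Ueltschi_2004} with $a(\omega)=T(\omega)/\nu$, checking $\int \mu^L(\dd\tilde\omega)\,|\xi^L(\omega,\tilde\omega)|\,\ee^{T(\tilde\omega)/\nu}\lesssim_d \frac{\|v\|_{L^1(\R^d)}}{\kappa-1}\,\frac{T(\omega)}{\nu}$ by summing the geometric series directly, where the paper expands $\ee^{T/\nu}$ in powers and uses Lemma \ref{Lemma_5.16*} (i); for (ii) you rederive the partition formula for $\Xi(\vec\omega)/\Xi$ via auxiliary activities and the moment--cumulant formula, where the paper quotes \cite[Theorem 2]{Ueltschi_2004}, and the only hypothesis you leave implicit is $\int\mu^L(\dd\tilde\omega)\,\ee^{T(\tilde\omega)/\nu}<\infty$ (the paper's \eqref{Cluster_expansion_3}), which follows from the same geometric sum.
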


Proposition \ref{gamma_p_cluster_expansion} allows us to obtain the following uniform bounds on the free energy and reduced density matrices.

\begin{proposition} [Bounds on the free energy and reduced density matrices in a finite volume]
\label{Proposition_5.3}
Under the assumption \eqref{smallness_v_kappa}, the following bounds hold for all $L \in \N^*$.
\begin{itemize}
\item[(i)] The specific Gibbs potential \eqref{specific_Gibbs_potential} of the Bose gas satisfies
\begin{equation}
\label{free_energy_bound_1}
g^{\nu,\zeta,L} =\mathcal {O}_{d,\zeta,v}(1)\,.
\end{equation}
\item[(ii)] For $p \in \N^*$, the $p$-particle reduced density matrix \eqref{reduced_p_particle_density_matrix} satisfies 
\begin{equation} 
\label{operator_norm_bound_1}
\left\|\Gamma^{\nu,\zeta,L}_p\right\|_{L^{\infty}_{\vec{x} } L^1_{\vec{y}}}=\mathcal {O}_{d,\zeta,p,v}(1)\,.
\end{equation}
Moreover, recalling \eqref{Ursell_function_representation_absolute_value} and  defining 
\begin{equation}
\label{gamma_p_cluster_expansion_absolute_value}
|\Gamma|^{\nu,\zeta,L}_{p}({\vec x,\vec y}) \coloneqq \sum_{\pi \in S_p} \int \widehat \mu^{L}_{y_{\pi(1)},x_{1}}(\dd \omega_{1}) \cdots \widehat \mu^{L}_{y_{\pi(p)},x_{p}}(\dd \omega_{p}) \sum_{\Pi \in \mathfrak{P}_{p}} \prod_{\lambda \in \Pi} |X|^{L}((\omega_{i})_{i \in \lambda})\,,
\end{equation}
we have the stronger bound
\begin{equation}
\label{operator_norm_bound_1_*}
\left\||\Gamma|^{\nu,\zeta,L}_p\right\|_{L^{\infty}_{\vec{x} } L^1_{\vec{y}}}=\mathcal {O}_{d,\zeta,p,v}(1)\,.
\end{equation}
In particular, the series in \eqref{gamma_p_cluster_expansion_claim} is absolutely summable in the $L^{\infty}_{\vec{x} } L^1_{\vec{y}}$ norm, uniformly in $L$ in the sense of \eqref{operator_norm_bound_1_*} above.
\end{itemize}
\end{proposition}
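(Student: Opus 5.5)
We will prove Proposition \ref{Proposition_5.3} by the standard tree-graph approach to the loop cluster expansion, in the spirit of \cite{Ueltschi_2004} and \cite[Section 5]{FKSS_2023}. The goal is the Kotecký--Preiss-type bound
\begin{equation*}
|X|^L(\omega_1) \lesssim_{d,v} \frac{T(\omega_1)}{\nu}
\end{equation*}
(and its multi-loop analogue), with the loop measure weighted by $\ee^{a T(\omega)/\nu}$ for some $a\in(0,\kappa-1)$. Part (i) then follows from Proposition \ref{gamma_p_cluster_expansion} (i): we have $|\log \Xi^{\nu,\zeta,L}| \leq |X|^L$, so it suffices to show $|X|^L \lesssim L^d$.

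\textbf{Step 1 (tree bound).} Since $v^L\geq 0$, we have $\xi^L\in[-1,0]$, and the Penrose tree-graph inequality gives
\begin{equation*}
\Bigl|\sum_{\mathcal{G}\in\mathfrak{G}^c_n}\prod_{\{i,j\}\in\mathcal{G}}\xi^L(\omega_i,\omega_j)\Bigr| \leq \sum_{\mathcal{T}\in\mathfrak{T}_n}\prod_{\{i,j\}\in\mathcal{T}}|\xi^L(\omega_i,\omega_j)| \leq \sum_{\mathcal{T}\in\mathfrak{T}_n}\prod_{\{i,j\}\in\mathcal{T}}\mathcal{V}^{\nu,L}(\omega_i,\omega_j),
\end{equation*}
where $\mathfrak{T}_n$ is the set of trees on $[n]$ and the last step uses \eqref{xi_bound}. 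We also drop $\ee^{-\mathcal{V}^{\nu,L}(\omega)}\leq 1$, replacing $\mu^L$ by $\mu^{L,0}$ as in \eqref{measure_mu_1_bound}.

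\textbf{Step 2 (integrating a leaf).} The key estimate is that, for fixed $\omega$,
\begin{equation*}
\int \mu^{L,0}(\dd\tilde\omega)\,\mathcal{V}^{\nu,L}(\omega,\tilde\omega)\,\ee^{aT(\tilde\omega)/\nu}\leq \frac{T(\omega)}{\nu}\,\|v\|_{L^1(\R^d)}\,C_d\sum_{k\geq1}(\zeta\ee^{a})^k\,\nu^{d/2}\,\psi^{L,k\nu}(0)\lesssim_d \frac{T(\omega)}{\nu}\,\frac{\|v\|_{L^1(\R^d)}}{\kappa-a}.
\end{equation*}
To see this, write \eqref{loop_interaction} as a sum over the $T(\omega)/\nu$ segments of $\omega$ and the $k$ segments of $\tilde\omega$. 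Integrating the basepoint of $\tilde\omega$ over $\Lambda_L$ and using translation invariance, each pair of segments contributes $\|v^L\|_{L^1}\,\psi^{L,k\nu}(0)$, and $\|v^L\|_{L^1}=\|v\|_{L^1(\R^d)}$ by Lemma \ref{v^L_lemma} (ii). The factor $k$ coming from the segments of $\tilde\omega$ cancels the $1/k$ in the loop measure. By Lemma \ref{heat_kernel_estimates} (i) together with \eqref{fugacity_condition}, one has $z^{k\nu}\psi^{L,k\nu}(0)\lesssim \zeta^k(1+k^{-d/2})$, which is summable because $\zeta\ee^{a}<1$ whenever $a<\kappa$. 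The extra factor $T(\tilde\omega)/\nu=k$ produced when $\tilde\omega$ is an internal vertex of the tree (its degree grows) is absorbed by the weight $\ee^{aT(\tilde\omega)/\nu}$, using $k^m\leq m!\,\ee^{k}$. This is exactly why the smallness condition \eqref{smallness_v_kappa} involves $\kappa-1$.

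\textbf{Step 3 (assembling the bound).} We integrate the tree from the leaves inward, in the manner of an integration algorithm (a Lemma \ref{Lemma_5.16*}/\ref{integration_algorithm}-type induction). Cayley's formula counts the trees with prescribed degrees, giving $(n-2)!/\prod_i(d_i-1)!$ of them, and combined with the $1/n!$ in the Ursell function this leaves a geometric series in $c_d^{-1}\|v\|_{L^1}/(\kappa-1)$. That series converges under \eqref{smallness_v_kappa} and yields $|X|^L(\omega_1,\ldots,\omega_p)\lesssim \prod_i \ee^{T(\omega_i)/\nu}$ up to constants depending on $p$. For $|X|^L(\emptyset)$ we root the tree at one loop and integrate that loop with $\mu^{L,0}$, which gives a bound $\lesssim L^d\sum_k \zeta^k\nu^{d/2}\psi^{L,k\nu}(0)\lesssim L^d$. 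This proves \eqref{free_energy_bound_1}.

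\textbf{Step 4 (density matrices).} For (ii) we insert the Step 3 bound into \eqref{gamma_p_cluster_expansion_absolute_value}. This leaves
\begin{equation*}
\prod_{i=1}^p\int\widehat\mu^{L,0}_{y_{\pi(i)},x_i}(\dd\omega_i)\,\ee^{T(\omega_i)/\nu},
\end{equation*}
and taking $L^1$ in $\vec y$ uses \eqref{heat_kernel_integral_1} to give $\prod_i\sum_k(\zeta\ee)^k<\infty$, since $\zeta\ee<1$ by \eqref{zeta_infinite_volume}. The sums over $\pi\in S_p$ and over partitions $\Pi\in\mathfrak{P}_p$ are finite. This gives \eqref{operator_norm_bound_1_*}, and then \eqref{operator_norm_bound_1} follows by \eqref{X^L_absolute_value_bound}.

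The main obstacle is Step 2 together with the combinatorics of Step 3. One has to show that the extra factors of $T/\nu$ generated by vertex degrees are beaten by the exponential weight, uniformly in both $\nu$ and $L$. The delicate point is that $\psi^{L,\nu}(0)\sim\nu^{-d/2}$ is large, and it must be compensated exactly by $z^\nu$ through \eqref{fugacity_condition}.
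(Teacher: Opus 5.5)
Your treatment of (i) is essentially the paper's: tree bound, Prüfer counting over degree sequences, and leaves-inward integration towards one root loop integrated against $\mu^{L}$, with the powers of $\nu$ and the factorials cancelling. The only difference is bookkeeping. You absorb the factors $k^{m}$ through $k^{m}\le m!\,\ee^{k}$ and the geometric sum $\sum_{k}(\zeta\ee)^{k}$. The paper instead uses $\sum_{k}\zeta^{k}k^{q}\lesssim q!/\kappa^{q+1}$ (Lemma \ref{polylogarithm}), which gives the sharper per-vertex factor $\kappa^{-\hat\delta_i}$ in Lemma \ref{integration_algorithm}. Both suffice under \eqref{smallness_v_kappa}.

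One misstatement: your announced goal $|X|^{L}(\omega_1)\lesssim T(\omega_1)/\nu$ is not what the tree bound gives, since the terms where $\omega_1$ has degree $m$ scale like $(cT(\omega_1)/\nu)^{m}/m!$. The correct form, and the one you actually use in Step 4, is $|X|^{L}(\omega_1)\lesssim \ee^{T(\omega_1)/\nu}$.

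For (ii) you take a genuinely different route. The paper roots each block $\lambda$ at $a_1$ and integrates the other open paths $\omega_i$, $i\in\lambda\setminus\{a_1\}$, as tree vertices, via Lemma \ref{integration_algorithm} with $\mathbf{O}=\lambda\setminus\{a_1\}$. This goes through Lemma \ref{Lemma_5.16*} (ii) and so costs $\|v^{L}\|_{L^{\infty}(\Lambda_L)}$; see Remark \ref{Lemma_5.4_remark} and Lemma \ref{v^L_lemma} (iii). You instead bound $|X|^{L}((\omega_i)_{i\in\lambda})$ pointwise in the open paths and then integrate them freely against $\widehat\mu^{L,0}$. There $\int\dd y\,\psi^{L,k\nu}(y-x)=1$ and $z^{k\nu}\le\zeta^{k}$ make everything summable, so only $\|v\|_{L^{1}(\R^d)}$ enters, which is a small genuine gain.

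The price is the multi-root bound $|X|^{L}(\omega_1,\ldots,\omega_q)\lesssim_{q}\prod_{i=1}^{q}\ee^{T(\omega_i)/\nu}$ for $q=|\lambda|\ge 2$. You assert it, but it does not follow from leaves-inward integration towards a single root, because $\omega_2,\ldots,\omega_q$ are never integrated. You need an extra step:
\begin{itemize}
\item Root the tree at $\omega_1$. Bound by $1$ the factor $|\xi^{L}|$ on the edge joining each fixed $\omega_i$ ($2\le i\le q$) to its parent, and integrate only the closed loops.
\item Every closed loop keeps its parent edge, so no factor $|\Lambda_L|$ appears. Each integrated loop still contributes $\epsilon\,T(\mathrm{parent})/\nu$, with $\epsilon=C_d\|v\|_{L^1(\R^d)}/(\kappa-1)$, together with a factorial that cancels against the Prüfer count.
\item A fixed $\omega_i$ with $c_i\le\delta_i-1$ non-fixed children is left with $(T(\omega_i)/\nu)^{c_i}/(\delta_i-1)!\le \ee^{T(\omega_i)/\nu}$. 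At the root, where $c_1\le\delta_1$, you lose an extra factor $n$.
\item The combinatorial prefactor becomes $\frac{(n-2)!}{(n-q)!}\le n^{q-2}$. Together with the at most $4^{n}$ degree sequences, these losses are absorbed by $\epsilon^{n-q}$ once $4\epsilon<1$, which is guaranteed by \eqref{smallness_v_kappa} with $c_d$ small.
\end{itemize}
With this step added your argument proves (ii). The paper's route avoids the edge cutting by treating all open paths of a block uniformly as tree vertices, at the cost of the $L^{\infty}$ norm of $v^{L}$.
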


\subsection{Proof of Proposition \ref{gamma_p_cluster_expansion}}
Before proceeding with the proof of Proposition \ref{gamma_p_cluster_expansion}, we note the following result, by which we iteratively integrate out a vertex of a graph (as in \eqref{Ursell_function_definition}, we are considering graphs on $[n]=\{1,\ldots,n\}$, where vertices $i,j$ are connected if the factor $\xi^{L}(\omega_i,\omega_j)$ occurs). This will also be the basis of our integration algorithm described later in Lemma \ref{integration_algorithm} below.

\begin{lemma}[Integrating out a vertex]
\label{Lemma_5.16*}
Let $\omega \in \Omega^{L,T(\omega)}$ with $T(\omega) \in \nu \N^*$, $q \in \N$, and $x \in \Lambda_L$ be given. Recalling \textup{\eqref{measure_mu_1}--\eqref{Ursell_function_definition}} and \eqref{kappa_definition}, we have the following estimates.
\begin{itemize}
\item[(i)] 
$\int \mu^L(\dd \tilde \omega) \,T(\tilde \omega)^q \,|\xi^L(\omega,\tilde{\omega})| \lesssim_d  \nu^{q-1}\,\frac{q!}{\kappa^{q+1}}\,T(\omega)\, \|v\|_{L^1(\R^d)}\,.$
 \item[(ii)] $\int_{\Lambda_L} \dd y\, \int \widehat{\mu}^L_{y,x}(\dd \tilde{\omega})\, T(\tilde{\omega})^q\,|\xi^L(\omega,\tilde{\omega})| \lesssim \nu^{q-1}\,\frac{(q+1)!}{\kappa^{q+2}}\, T(\omega)\, \|v^L\|_{L^{\infty}(\Lambda_L)} \lesssim_d \nu^{q-1}\,\frac{(q+1)!}{\kappa^{q+2}}\, T(\omega)$.
\item[(iii)] $\int \mu^L(\dd \tilde{\omega}) \,T(\tilde{\omega})^q \lesssim_d \nu^q \left(\mathbf{1}_{q=0}+ \mathbf{1}_{q \geq 1} \frac{(q-1)!}{\kappa^q} \right)\,|\Lambda_L|\,.$
\item[(iv)] $\int_{\Lambda_L} \dd y\,\int \widehat{\mu}^{L}_{y,x} (\dd \tilde{\omega})\,T(\tilde{\omega})^q \lesssim_d \nu^q \frac{q!}{\kappa^{q+1}}\,.$
\end{itemize}
\end{lemma}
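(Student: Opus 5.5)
Throughout, write $\tilde T = T(\tilde\omega) = k\nu$ with $k \in \N^*$, and recall $z^{k\nu} = (\zeta\nu^{d/2})^k = \nu^{dk/2}\ee^{-\kappa k}$ by \eqref{fugacity_condition} and \eqref{kappa_definition}. The weights $\ee^{-\mathcal{V}^{\nu,L}(\tilde\omega)}$ are bounded by $1$ since $v^L\geq 0$, so we may replace $\mu^L,\widehat\mu^L$ by $\mu^{L,0},\widehat\mu^{L,0}$ from \eqref{measure_mu_1_bound}--\eqref{measure_mu_2_bound}. The plan is to reduce every claim to one-dimensional sums of the form $\sum_{k\ge1} k^{a}\ee^{-\kappa k}$ times heat-kernel factors, and to control these by $\sum_k k^{a}\ee^{-\kappa k}\lesssim a!/\kappa^{a+1}$ (compare with $\int_0^\infty s^a \ee^{-\kappa s}\,\dd s$, valid since $\kappa>1$).

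For (iii) and (iv), no interaction is present. By \eqref{heat_kernel_integral}, $\int\mu^{L,0}(\dd\tilde\omega)\,\tilde T^q = \sum_k z^{k\nu}(k\nu)^{q-1}\nu\,\psi^{k\nu}(0)|\Lambda_L|$. Using Lemma \ref{heat_kernel_estimates} (i), $z^{k\nu}\psi^{k\nu}(0)\lesssim_d \ee^{-\kappa k}(\nu^{dk/2}+k^{-d/2}\nu^{d(k-1)/2})\lesssim \ee^{-\kappa k}$ for $\nu\le1$, $L\ge1$. This gives $\nu^q|\Lambda_L|\sum_k k^{q-1}\ee^{-\kappa k}$. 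For $q=0$ the sum is $\lesssim 1$, and for $q\ge1$ it is $\lesssim (q-1)!/\kappa^q$. For (iv), integrating in $y$ gives $\int \dd y\,\psi^{k\nu}(y-x)=1$, so we obtain $\sum_k \nu^{dk/2}\ee^{-\kappa k}(k\nu)^q\le \nu^q q!/\kappa^{q+1}$.

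For (i) and (ii), we use \eqref{xi_bound}, $|\xi^L|\le\mathcal{V}^{\nu,L}(\omega,\tilde\omega)$. Expanding \eqref{loop_interaction} gives $\frac1\nu\sum_{s,\tilde s}\int_0^\nu \dd t\, v^L(\omega(s+t)-\tilde\omega(\tilde s+t))$, which has $T(\omega)/\nu\cdot k$ terms. In (i), for fixed $s,\tilde s,t$ we integrate over $\tilde\omega$ with respect to $\mathbb{W}^{L,k\nu}$ (i.e.\ with respect to $\int \dd\tilde x\,\mathbb{W}^{k\nu}_{\tilde x,\tilde x}$). By translation invariance and the Markov property in \eqref{W^T}, the law of $\tilde\omega(\tilde s+t)$ under this measure has density $\psi^{k\nu}(0)$ times Lebesgue measure, uniform over $\Lambda_L$. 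Hence each term contributes $\psi^{k\nu}(0)\|v^L\|_{L^1}=\psi^{k\nu}(0)\|v\|_{L^1(\R^d)}$ by Lemma \ref{v^L_lemma} (ii). Summing over the $T(\omega)/\nu\cdot k$ pairs, integrating $t$ over $[0,\nu]$ and dividing by $\nu$ gives the factor $T(\omega)k/\nu$. Altogether we get $\nu\sum_k \frac{z^{k\nu}}{k\nu}(k\nu)^q\,\psi^{k\nu}(0)\,\frac{T(\omega) k}{\nu}\|v\|_{L^1}\lesssim \nu^{q-1}T(\omega)\|v\|_{L^1}\sum_k k^q\ee^{-\kappa k}$, which is the claim.

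For (ii) the measure $\widehat\mu$ lacks the $1/T$ factor, giving one extra power of $k$, hence $(q+1)!/\kappa^{q+2}$. Here the endpoint is not integrated uniformly: the basepoint $x$ is fixed. So we bound $v^L$ by $\|v^L\|_{L^\infty}$ and use $\int \dd y\,\psi^{k\nu}(y-x)=1$, then conclude with Lemma \ref{v^L_lemma} (iii). The count is again $T(\omega)k/\nu$ terms, giving $\sum_k \nu^{dk/2}\ee^{-\kappa k}(k\nu)^q\,\frac{T(\omega)k}{\nu}\|v^L\|_\infty$.

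The main obstacle is the uniform-marginal computation in (i). One must check, via the finite-dimensional formula \eqref{W^T} and the semigroup property of $\psi$, that after integrating over the basepoint the marginal of $\tilde\omega$ at time $\tilde s+t$ is exactly $\psi^{k\nu}(0)\,\dd u$. Care is also needed with times $\tilde s+t$ beyond $k\nu$, which should be handled by periodic extension of the loop. Once that is in place, the rest is bookkeeping of powers of $\nu$ and $k$.
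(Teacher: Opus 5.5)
Your proposal is correct and follows the paper's proof essentially step by step. You bound $\ee^{-\mathcal{V}^{\nu,L}(\tilde\omega)}\leq 1$ and $|\xi^L|\leq \mathcal{V}^{\nu,L}(\omega,\tilde\omega)$. In (i) you obtain $\psi^{\tilde T}(0)\,\|v\|_{L^1(\R^d)}$ by splitting the loop at time $\tilde s+t$ and applying Chapman--Kolmogorov, which is exactly the paper's concatenation argument. In (ii) you use $\|v^L\|_{L^\infty(\Lambda_L)}$ together with $\int_{\Lambda_L}\psi^{\tilde T}(y-x)\,\dd y=1$. You then conclude with $z^{k\nu}\psi^{k\nu}(0)\lesssim_d \zeta^k$ and the polylogarithm sum.

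Two minor remarks. First, the case $\tilde s+t>k\nu$ never occurs, since $\tilde s\leq (k-1)\nu$ and $t\leq\nu$. Second, $\sum_{k\geq 1}k^a\ee^{-\kappa k}\lesssim a!/\kappa^{a+1}$ does not follow from a plain comparison with $\int_0^\infty s^a\ee^{-\kappa s}\,\dd s$, and $\kappa>1$ does not fix this. The summand is not monotone, so its maximal term must be bounded separately, which forces the constant to depend on $a$ or on $\kappa$. The paper does not prove this step either; it simply imports the estimate as Lemma~\ref{polylogarithm}.
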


We now show Proposition \ref{gamma_p_cluster_expansion} assuming Lemma \ref{Lemma_5.16*}.

\begin{proof}[Proof of Proposition \ref{gamma_p_cluster_expansion}]
Statement (i) follows directly from \cite[Theorem 1]{Ueltschi_2004}. 
Let us justify the application of \cite[Theorem 1]{Ueltschi_2004} here.
By Lemma \ref{Lemma_5.16*} (i), we have that
\begin{multline}
\label{Cluster_expansion_1}
\int \mu^L(\dd \tilde{\omega})\, \ee^{\frac{1}{\nu} T(\tilde{\omega})}\,|\xi^{L}(\omega,\tilde{\omega})|  
=
\sum_{q=0}^{\infty} \int \mu^L(\dd \tilde{\omega})\, \frac{T^q(\tilde{\omega})}{\nu^q\,q!}\,|\xi^{L}(\omega,\tilde{\omega})| 
\lesssim_d  \frac{1}{\nu}\, T(\omega) \,\|v\|_{L^1(\R^d)}\,\sum_{q=0}^{\infty}\frac{1}{\kappa^{q+1}} 
\\
= \left( \frac{1}{\kappa-1}  \,\|v\|_{L^1(\R^d)}\right)\, \frac{1}{\nu}\, T(\omega)\,.
\end{multline}
In \eqref{Cluster_expansion_1}, we used \eqref{kappa_definition}.
Using \eqref{Cluster_expansion_1} and \eqref{smallness_v_kappa} (for a suitably small quantity depending on $d$ on the right hand side), it follows that 
\begin{equation}
\label{Cluster_expansion_2}
\int \mu^L(\dd \tilde{\omega})\, \ee^{\frac{1}{\nu} T(\tilde{\omega})}\,|\xi^{L}(\omega,\tilde{\omega})| \leq \frac{1}{\nu}\, T(\omega)\,.
\end{equation}
We note that \eqref{Cluster_expansion_2} corresponds to \cite[equation (3)]{Ueltschi_2004} when $a(\omega)=\frac{T(\omega)}{\nu}$.
Furthermore, by Lemma \ref{Lemma_5.16*} (iii) and \eqref{kappa_definition}, we deduce that 
\begin{equation}
\label{Cluster_expansion_3}
\int \mu^L(\dd \tilde{\omega})\, \ee^{\frac{1}{\nu} T(\tilde{\omega})} \lesssim_d \left(\sum_{q=0}^{\infty} \frac{1}{\kappa^q}\right)\, |\Lambda_L|<\infty\,.
\end{equation}
Furthermore, by \eqref{Ursell_function_definition} and the nonnegativity of $v$, it follows that 
\begin{equation}
\label{Cluster_expansion_4}
\left|1+\xi^L(\omega,\tilde{\omega})\right| \leq 1\,.
\end{equation}
From \eqref{Cluster_expansion_2}--\eqref{Cluster_expansion_4}, we can indeed justify the application of \cite[Theorem 1]{Ueltschi_2004}.

To prove (ii), we use the Ginibre representation \eqref{Ginibre_representation_4} for the reduced $p$-particle density matrix in conjunction with \eqref{measure_mu_1}--\eqref{measure_mu_2} to obtain
\begin{equation}
    \Gamma^{\nu,\zeta,L}_{p}(\vec{x},\vec{y})=\sum_{\pi \in S_p} \int \widehat \mu^{L}_{y_{\pi(1)},x_1}(\dd \omega_1) \cdots \widehat \mu^{L}_{y_{\pi(p)},x_p}(\dd \omega_p) \,\frac{\tilde X^{L}(\omega_1, \dots, \omega_p)}{\tilde X^{L}}\,,
\end{equation}
where $\tilde X^{L}(\omega_1, \ldots, \omega_p) \equiv \tilde X^{\nu,\zeta,L}(\omega_1, \ldots, \omega_p)$ and $\tilde X^{L} \equiv \tilde X^{\nu,\zeta,L}$ are defined as
\begin{multline*}
\tilde X^{L}(\omega_1, \dots, \omega_p) \coloneqq\sum_{n = p}^\infty \frac{1}{(n - p)!} \int \mu^{L}(\dd \omega_{p+1}) \cdots \mu^{L}(\dd \omega_n) \prod_{1 \leq i<j \leq n} \exp\left(-\mathcal{V}^{\nu,L}(\omega_i, \omega_j)\right),
\\
\tilde X^{L} \coloneqq \tilde X^{L}(\emptyset)\,.
\end{multline*}
Using \cite[Theorem 2]{Ueltschi_2004} we have
\begin{equation}
\label{Xtilde_X_equation}
\frac{\tilde X^{L}(\omega_1, \dots, \omega_p)}{\tilde X^{L}} = \sum_{\Pi \in \mathfrak{P}_p} \prod_{\lambda \in \Pi} X^{L}((\omega_i)_{i \in \lambda})\,.
\end{equation}
The application of \cite[Theorem 2]{Ueltschi_2004} is justified by \eqref{Cluster_expansion_2}--\eqref{Cluster_expansion_4} as in the proof of (i). We hence deduce claim (ii).
\end{proof}

We now show Lemma \ref{Lemma_5.16*}.
For the proof of Lemma \ref{Lemma_5.16*}, we record the following result which follows directly from \cite[Lemma 5.9]{FKSS_2020_2}.

\begin{lemma}
\label{polylogarithm}
Recalling \eqref{zeta_infinite_volume}, we have 
\begin{equation*}
\sum_{k \in \N^*} \zeta^k k^q \lesssim \frac{q!}{\kappa^{q+1}}\,.
\end{equation*}
\end{lemma}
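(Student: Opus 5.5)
Since $\zeta\in(0,1/\ee)$ we have $\zeta=\ee^{-\kappa}$ with $\kappa>1$, so the claim is the estimate
\begin{equation*}
\sum_{k\in\N^*}\ee^{-\kappa k}k^q\lesssim\frac{q!}{\kappa^{q+1}}
\end{equation*}
with an implicit constant independent of $q$ and $\kappa$. The plan is to compare the sum with the integral $\int_0^\infty \ee^{-\kappa t}t^q\,\dd t=q!/\kappa^{q+1}$, which already has the desired form.

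For $q=0$ the sum is $\ee^{-\kappa}/(1-\ee^{-\kappa})\le \frac{1}{\ee-1}\,\ee^{-\kappa}$, using $\kappa>1$. This is $\lesssim 1/\kappa$ because $\kappa\ee^{-\kappa}\le \ee^{-1}$.

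For $q\ge1$ the function $f(t)=t^q\ee^{-\kappa t}$ is unimodal on $[0,\infty)$, increasing up to $t_*=q/\kappa$ and decreasing afterwards. We use the standard comparison for unimodal functions,
\begin{equation*}
\sum_{k\ge1}f(k)\le\int_0^\infty f(t)\,\dd t+2\max f .
\end{equation*}
This holds because each $f(k)$ with $k\le t_*-1$ is bounded by $\int_k^{k+1}f$, each $f(k)$ with $k\ge t_*+1$ is bounded by $\int_{k-1}^k f$, and at most two indices remain.

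The maximum is $\max f=(q/\kappa)^q\ee^{-q}$. We need this to be $\lesssim q!/\kappa^{q+1}$, which is equivalent to $\kappa\,q^q\ee^{-q}\lesssim q!$. Here the factor $\kappa$ is the one real obstacle, since $\kappa$ may be arbitrarily large.

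To handle it, split into two cases.

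If $t_*=q/\kappa\ge1$, then $\kappa\le q$. Stirling gives $q^q\ee^{-q}\le q!/\sqrt{2\pi q}$, so $\kappa\,q^q\ee^{-q}\le \sqrt{q}\,q!$. This is still off by a factor $\sqrt q$. To avoid the loss, we refine the comparison instead: on the decreasing side, bound $f(k)\le\int_{k-1}^k f$ for every $k>t_*$, and on the increasing side bound $f(k)\le\int_k^{k+1}f$ for $k<t_*$. The only term not covered is the single $k$ nearest the peak. Alternatively, use $\ee^{-\kappa k}\le \frac{\ee^{\kappa}-1}{\kappa}\int_{k-1}^{k}\ee^{-\kappa t}\,\dd t\cdot$(correction), but a cleaner route is the exact generating-function identity. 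The sum equals the polylogarithm $\mathrm{Li}_{-q}(\ee^{-\kappa})$, and its known expansion $\mathrm{Li}_{-q}(\ee^{-\kappa})=q!\,\kappa^{-q-1}+\sum_{j\ge0}\frac{\zeta(-q-j)}{j!}(-\kappa)^j$ with $|\kappa|<2\pi$ is exactly the content of \cite[Lemma 5.9]{FKSS_2020_2}.

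So I would simply invoke \cite[Lemma 5.9]{FKSS_2020_2}, which as stated gives this bound directly. The direct argument is the fallback.

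If $t_*<1$, then $f$ is decreasing on $[1,\infty)$. In that case the sum is at most $f(1)+\int_1^\infty f\le \ee^{-\kappa}+q!\kappa^{-q-1}$. Moreover $\ee^{-\kappa}\kappa^{q+1}\le (q+1)^{q+1}\ee^{-(q+1)}\lesssim (q+1)!/\sqrt{q+1}\lesssim q!\sqrt{q+1}$, so again a $\sqrt q$ loss appears. Because of this, the intended proof really is to cite the lemma in \cite{FKSS_2020_2}.
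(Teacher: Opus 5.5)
Your final step, citing \cite[Lemma 5.9]{FKSS_2020_2}, is exactly what the paper does. The reasoning that led you there is wrong, though: the $\sqrt q$ loss you found twice is not an artefact of your comparison argument, it is sharp. Take $q\ge1$ and $\kappa=q+1$, which is admissible since then $\zeta=\ee^{-(q+1)}<1/\ee$. Keep only the $k=1$ term and use $n!\le \ee\sqrt n\,(n/\ee)^n$:
\[
\frac{\kappa^{q+1}}{q!}\sum_{k\in\N^*}\zeta^k k^q\;\ge\;\frac{(q+1)^{q+1}\ee^{-(q+1)}}{q!}\;\ge\;\frac{\sqrt{q+1}}{\ee}\longrightarrow\infty\,.
\]
So the bound you set out to prove, with a constant independent of both $q$ and $\kappa$, is false on the range $\kappa>1$. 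No cited lemma can deliver it, whatever its precise wording, and your claim that the citation ``gives this bound directly'' in that form cannot be right. The polylogarithm expansion you quote also converges only for $\kappa<2\pi$, so it cannot be the content you need for arbitrary $\kappa>1$. For large $\kappa$ the sum is governed by its first term, not by that expansion. The statement must therefore be read with a $q$-dependent constant, which is how the paper actually invokes it (e.g.\ $\lesssim_{\zeta,q}$ in Remark~\ref{5.27_remark}).

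Under that reading, the direct argument you abandoned is already a complete proof, and a more self-contained one than the paper's bare citation. For $q\ge1$ your two cases combine to $\sum_k\zeta^kk^q\le q!\,\kappa^{-q-1}+2\max_{t\ge1}t^q\ee^{-\kappa t}$. You showed that $\kappa^{q+1}q!^{-1}$ times this maximum is at most $\sqrt{q/(2\pi)}$ when $\kappa\le q$, and at most $\sqrt{(q+1)/(2\pi)}$ when $\kappa>q$. Together with your $q=0$ computation (where the constant should be $\ee/(\ee-1)$, not $1/(\ee-1)$, a harmless slip), this gives
\[
\sum_k\zeta^kk^q\le\bigl(1+C\sqrt{q+1}\bigr)\,q!\,\kappa^{-q-1}\,.
\]
That is the lemma with a $q$-dependent constant, and by the example above its $q$-dependence is optimal. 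It also yields the uniform bound $\lesssim(q+1)!/\kappa^{q+1}$. The weaker form still suffices for summations over $q$ such as \eqref{Cluster_expansion_1}, since $\sum_q\sqrt{q+1}\,\kappa^{-q-1}<\infty$ for $\kappa>1$. So keep the direct argument and drop the search for a loss-free refinement: the single uncovered term near the peak is precisely the $\sqrt q$ term. Drop the polylogarithm detour as well.
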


\begin{proof}[Proof of Lemma \ref{Lemma_5.16*}]
Throughout the proof, we repeatedly use the nonnegativity of $v^L$, which follows from Lemma \ref{v^L_lemma} (i).
Let us first prove claim (i). We use \eqref{xi_bound}--\eqref{measure_mu_1_bound} followed by \eqref{loop_interaction} to write 
\begin{multline}
\label{Lemma_5.16*_1}
\int \mu^L(\dd \tilde \omega) \,T(\tilde \omega)^q \,|\xi^L(\omega,\tilde{\omega})|
\leq \sum_{\tilde{T} \in {\nu \N^*}} z^{\tilde{T}}\, \tilde{T}^{q-1}\,\sum_{r \in \nu \N} \vec{1}_{r<T(\omega)}\, \sum_{s \in \nu \N} \vec{1}_{s<\tilde{T}}\, 
\\
\int_0^{\nu} \dd t\, \left[\int_{\Lambda_L} \dd x \, \int \mathbb{W}^{L,\tilde{T}}_{x,x}(\dd \tilde{\omega})\, v^L\bigl(\omega(t+r)-\tilde{\omega}(t+s)\big)\right].
\end{multline}
For fixed $\tilde{T} \in \nu \N^*$, $r,s \in \nu \N$ with $r<T(\omega), s<\tilde{T}$, and $t \in [0,\nu]$, we let $y \coloneqq \omega(t+r)$ 
and we rewrite the expression in square brackets in \eqref{Lemma_5.16*_1} as
\begin{multline}
\label{Lemma_5.16*_2}
\int_{\Lambda_L} \dd x \int_{\Lambda_L} \dd x' \int \mathbb{W}^{L,t+s}_{x',x}(\dd \tilde{\omega}_1)\, \int \mathbb{W}_{x,x'}^{L,\tilde{T}-(t+s)}(\dd \tilde{\omega}_2)\, v^L(y-x')
\\
=\int_{\Lambda_L} \dd x'\, \left\{\int_{\Lambda_L} \dd x\, \int \mathbb{W}^{L,\tilde{T}-(t+s)}_{x,x'}(\dd \tilde{\omega}_2)\, 
\int \mathbb{W}_{x',x}^{L,t+s}(\dd \tilde{\omega}_1) \right\}\, v^L(y-x')
\\
=\int_{\Lambda_L} \dd x' \int \mathbb{W}^{\tilde{T}}_{x',x'}(\dd \tilde{\omega})\, v^L(y-x')
=\int_{\Lambda_L} \dd x' \,\psi^{\tilde{T}}(0)\, v^L(y-x')=\|v^L\|_{L^1(\Lambda_L)} \,\psi^{\tilde{T}}(0) 
\\
= \|v\|_{L^1(\R^d)} \, \psi^{\tilde{T}}(0)\,.
\end{multline}
In \eqref{Lemma_5.16*_2}, we used Fubini's theorem, followed by \eqref{W^T}--\eqref{heat_kernel_integral}, and Lemma \ref{v^L_lemma} (ii). Moreover, $\tilde{\omega}=\tilde{\omega}_2 \oplus \tilde{\omega}_1$ is the concatenation of Brownian paths
\begin{equation}
\label{concatenation_paths}
\tilde{\omega}(\tau) \coloneqq 
\begin{cases}
\tilde{\omega}_2(\tau) & \text{if } \tau \leq T(\tilde{\omega}_2)
\\
\tilde{\omega}_1(\tau-T(\tilde{\omega}_2)) & \text{if } T(\tilde{\omega}_2)  < \tau \leq T(\tilde{\omega}_2)+T(\tilde{\omega}_1)\,.
\end{cases}
\end{equation}
Using \eqref{Lemma_5.16*_2}, Lemma \ref{heat_kernel_estimates} (i), \eqref{fugacity_condition}, and Lemma \ref{polylogarithm}, we get that 
\begin{multline}
\label{Lemma_5.16*_3}
\eqref{Lemma_5.16*_1} \lesssim_d \frac{1}{\nu} \sum_{\tilde{T} \in \nu \N^*} z^{\tilde{T}}\,\tilde{T}^q\, \biggl(
\frac{1}{L^d}+\frac{1}{\tilde{T}^{d/2}}\biggr)\,T(\omega)\, \|v\|_{L^1(\R^d)}
\\
=\frac{1}{\nu} \sum_{k \in \N^*} z^{k\nu} (k\nu)^q \biggl(\frac{1}{L^d}+\frac{1}{(k\nu)^{d/2}} \biggr)\, T(\omega)\,  \|v\|_{L^1(\R^d)}
\\
=\nu^{q-1}\,\sum_{k \in \N^*} \zeta^k k^q\, \nu^{\frac{dk}{2}} \biggl(\frac{1}{L^d}+\frac{1}{(k \nu)^{d/2}} \biggr)\,T(\omega)\, \|v\|_{L^1(\R^d)}
\lesssim \nu^{q-1}\, \sum_{k \in \N^*} \zeta^k k^q \,T(\omega)\, \|v\|_{L^1(\R^d)} 
\\
\lesssim \nu^{q-1}\,\frac{q!}{\kappa^{q+1}}\, T(\omega)\, \|v\|_{L^1(\R^d)}\,.
\end{multline}
In \eqref{Lemma_5.16*_3}, we used 
\begin{equation}
\label{Lemma_5.16*_3_B}
\nu^{\frac{dk}{2}} \biggl(\frac{1}{L^d}+\frac{1}{(k \nu)^{d/2}} \biggr) \lesssim 1
\end{equation}
for all $k \in \N^*$.
Thus, we obtain (i).

We now prove claim (ii). By Lemma \ref{v^L_lemma} (v), it suffices to show the first inequality.
By \eqref{xi_bound}, \eqref{measure_mu_2_bound}, followed by \eqref{loop_interaction}, we have
\begin{multline}
\label{Lemma_5.16*_4}
\int_{\Lambda_L} \dd y\, \int \widehat{\mu}^L_{y,x}(\dd \tilde{\omega})\, T(\tilde{\omega})^q\,|\xi^L(\omega,\tilde{\omega})|
\leq \frac{1}{\nu} \, \sum_{\tilde{T} \in \nu \N^*} z^{\tilde{T}}\, \tilde{T}^{q}\sum_{r \in \nu \N} \vec{1}_{r<T(\omega)} \sum_{s \in \nu \N} \vec{1}_{s<\tilde{T}}
\\
\times
\int_{\Lambda_L} \dd y\,\int \mathbb{W}^{L,\tilde{T}}_{y,x}(\dd \tilde{\omega})\, \int_0^{\nu} \dd t\, v^L\left(\omega(t+r)-\tilde{\omega}(t+s)\right).
\end{multline}
We use \eqref{heat_kernel_integral} followed by \eqref{heat_kernel_integral_1} to deduce that
\begin{equation}
\label{Lemma_5.16*_6}
\int_{\Lambda_L} \dd y\, \int \mathbb{W}^{L,\tilde{T}}_{y,x}(\dd \tilde{\omega})\, v^L\left(\omega(t+r)-\tilde{\omega}(t+s)\right)
\leq \|v^L\|_{L^{\infty}(\Lambda_L)}\,.
\end{equation}
Using \eqref{Lemma_5.16*_6}, followed by \eqref{fugacity_condition}
and Lemma \ref{polylogarithm}, we get that
\begin{multline}
\label{Lemma_5.16*_7}
\eqref{Lemma_5.16*_4} \lesssim \nu^{-2} \, \sum_{\tilde{T} \in \nu \N^*} z^{\tilde{T}} \, \tilde{T}^{q+1}\, T(\omega)\, \|v^L\|_{L^{\infty}(\Lambda_L)} =\nu^{q-1} \sum_{k \in \N^*} z^{k\nu}\,k^{q+1}\, T(\omega)\, \|v\|_{L^{\infty}(\Lambda_L)}
\\
\leq \nu^{q-1}\, \sum_{k \in \N^*} \zeta^k k^{q+1}\, T(\omega)\, \|v^L\|_{L^{\infty}(\Lambda_L)} \lesssim \nu^{q-1}\, \frac{(q+1)!}{\kappa^{q+2}}\, T(\omega)\, \|v^L\|_{L^{\infty}(\Lambda_L)}\,.
\end{multline}
This completes the proof of (ii).

We now prove claim (iii). By using \eqref{measure_mu_1_bound}, followed by  \eqref{heat_kernel_integral} and Lemma \ref{heat_kernel_estimates} (i), we have 

\begin{multline}
\label{Triangle_1}
\int \mu^L(\dd \tilde{\omega})\, T(\tilde{\omega})^q \leq \nu \sum_{\tilde{T} \in \nu \N^*} z^{\tilde{T}}\,\tilde{T}^{q-1}\,\int_{\Lambda_L}
\dd x\, \int \mathbb{W}^{L,\tilde{T}}_{x,x}(\dd \tilde{\omega}) 
\\
\lesssim_d \nu  \sum_{\tilde{T} \in \nu \N^*}
z^{\tilde{T}}\,\tilde{T}^{q-1}\,\left(\frac{1}{L^d}+\frac{1}{\tilde{T}^{d/2}}\right)\,|\Lambda_L|\,,
\end{multline}
which we can rewrite and estimate as
\begin{equation}
\label{Triangle_2}
=\nu^q \sum_{k \in \N^*}\zeta^k \,
k^{q-1}\,\nu^{\frac{dk}{2}}\, \left(\frac{1}{L^d}+\frac{1}{(k \nu)^{d/2}}\right)\,|\Lambda_L|
\lesssim \nu^q \,\frac{(q-1)!}{\kappa^q}\,|\Lambda_L|\,.
\end{equation}
when $q \geq 1$.
Above, we used \eqref{fugacity_condition}, \eqref{Lemma_5.16*_3_B}, 
and Lemma \ref{polylogarithm}. The analysis for $q=0$ is analogous. Claim (iii) now follows.

Finally, we prove claim (iv). By using \eqref{measure_mu_2_bound}, followed by  \eqref{heat_kernel_integral} and \eqref{heat_kernel_integral_1}, we have
\begin{equation*}
\int_{\Lambda_L} \dd y\, \int \widehat{\mu}^{L}_{y,x}(\dd \tilde{\omega})\,T(\tilde{\omega})^q \leq \sum_{\tilde{T} \in \nu \N^*}
z^{\tilde{T}}\,\tilde{T}^q \,\int_{\Lambda_L} \dd y\, \int \mathbb{W}_{y,x}^{L,\tilde{T}}(\dd \tilde{\omega})\,,
\end{equation*}
which by \eqref{fugacity_condition} and Lemma \ref{polylogarithm} is
\begin{equation}
\label{Proof_of_iv}
=\nu^{q} \sum_{k \in \N^*} z^{k\nu}\,k^q =\nu^{q} \sum_{k \in \N^*} \zeta^k\,\nu^{dk/2}\,k^q \leq 
\nu^{q} \sum_{k \in \N^*} \zeta^k\,k^q \lesssim \nu^q\, \frac{q!}{\kappa^{q+1}}\,.
\end{equation}
Claim (iv) now follows.
\end{proof} 
\begin{remark}
\label{5.27_remark}
Calculations similar to those in \eqref{Lemma_5.16*_3} show that for any $q \in \N$, we have 
\begin{equation}
\label{5.27_remark_bound}
\nu \sum_{T \in \nu \N^*}z^T\,T^{q-1-\frac{d}{2}}\lesssim_{\zeta,q} 1\,,
\end{equation}
uniformly in $\nu \in (0,\nu_0]$. More precisely, we use \eqref{fugacity_condition} and Lemma \ref{polylogarithm} to write
\begin{multline}
\label{5.27_remark_bound_proof}
\nu \sum_{T \in \nu \N^*}z^T\,T^{q-1-\frac{d}{2}}=\nu \sum_{k \in \N^*} z^{\nu k}\, (\nu k)^{q-1-\frac{d}{2}}=\nu \sum_{k \in \N^*} \zeta^k \, \nu^{\frac{dk}{2}}\, \nu^{q-1-\frac{d}{2}}\, k^{q-1-\frac{d}{2}}
\\
\leq \nu^q \, \sum_{k \in  \N^*}  \zeta^k \, k^{q} \lesssim_{\zeta,q} 1\,,
\end{multline}
and we obtain \eqref{5.27_remark_bound} from \eqref{5.27_remark_bound_proof}.
\end{remark}

\begin{remark}
\label{Lemma_5.4_remark}
The upper bound in Lemma \ref{Lemma_5.16*} is given in terms of $\|v^L\|_{L^{\infty}(\Lambda_L)}$. The analogous bound on the lattice is given in terms of the $L^1$ norm of $v^L$, analogously to Lemma \ref{Lemma_5.16*} (i); see \cite[Lemma 5.17 (ii)]{FKSS_2023} and \cite[Lemma 5.7 (ii)]{FKSS_2023} for details. Unlike in the case of the lattice, the heat kernel is not pointwise bounded in the continuum. As a result, our methods can only give us an upper bound in terms of $\|v^L\|_{L^{\infty}(\Lambda_L)}$. By Lemma \ref{v^L_lemma} (iii) and (v), this is sufficient for the remainder of the analysis.
\end{remark}

\subsection{Proof of Proposition \ref{Proposition_5.3}}
We now proceed to the proof of Proposition \ref{Proposition_5.3}. We first note several auxiliary results.
The first observation is that, after taking absolute values in our estimates, we can reduce the sum over connected graphs in \eqref{Ursell_function_definition} to the sum over trees. More explicitly, we have the following lemma. 
\begin{lemma}[Tree bound]
\label{tree_bound_estimate}
For $n \in \N^*$ and $\omega_1, \ldots, \omega_n \in \Omega^L$ we have 
\begin{equation}
\label{tree_bound}
|\varphi^L(\omega_1, \dots, \omega_n)| \leq \frac{1}{n!} \sum_{\mathcal T \in {\mathfrak T}_n} \prod_{\{i,j\} \in \mathcal T}|\xi^L(\omega_i,\omega_j)|\,.
\end{equation}
Here, ${\mathfrak T}_n$ denotes the set of all trees on $[n]=\{1,2,\ldots,n\}$.
\end{lemma}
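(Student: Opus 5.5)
This is the classical Penrose tree-graph inequality. We will invoke it through a partition scheme, since the interactions $\mathcal{V}^{\nu,L}(\omega_i,\omega_j)$ are nonnegative by Lemma \ref{v^L_lemma} (i). Write $V_{ij} \coloneqq \mathcal{V}^{\nu,L}(\omega_i,\omega_j) \geq 0$, so that $\xi^L(\omega_i,\omega_j) = \ee^{-V_{ij}} - 1 \in [-1,0]$. After multiplying \eqref{tree_bound} by $n!$, the claim reads
\begin{equation*}
\Biggl|\sum_{\mathcal G \in \mathfrak G^c_n} \prod_{\{i,j\}\in\mathcal G} \bigl(\ee^{-V_{ij}}-1\bigr)\Biggr| \leq \sum_{\mathcal T \in \mathfrak T_n} \prod_{\{i,j\}\in \mathcal T} \bigl(1-\ee^{-V_{ij}}\bigr).
\end{equation*}
For $n=1$ both sides equal $1$, so we may assume $n \geq 2$.

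The first step is to fix a map $R \colon \mathfrak T_n \to \mathfrak G^c_n$ such that the intervals $[\mathcal T, R(\mathcal T)] = \{\mathcal G : \mathcal T \subseteq \mathcal G \subseteq R(\mathcal T)\}$ partition $\mathfrak G^c_n$. We will use Penrose's construction. Given a connected $\mathcal G$, take its BFS tree rooted at vertex $1$, breaking ties by the smallest index; call this $\mathcal T(\mathcal G)$. Then let $R(\mathcal T)$ be $\mathcal T$ together with every edge $\{i,j\}$ that is either between two vertices at the same depth in $\mathcal T$, or between a vertex $j$ at depth $k+1$ and a vertex $i$ at depth $k$ with $i$ larger than the $\mathcal T$-parent of $j$.

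The point to check is that $\mathcal T(\mathcal G)=\mathcal T$ holds if and only if $\mathcal T \subseteq \mathcal G \subseteq R(\mathcal T)$. The forward direction holds because $\mathcal G$ cannot contain edges skipping depths: BFS distances from $1$ in $\mathcal G$ coincide with depths in $\mathcal T$. It also cannot contain an edge to a smaller-indexed vertex at the previous depth, since that vertex would then be the parent. For the converse, adding edges from $R(\mathcal T)\setminus\mathcal T$ preserves BFS distances and the chosen parents. This is the main obstacle, and it is a careful but standard check.

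Given the partition, we get
\begin{equation*}
\sum_{\mathcal G \in \mathfrak G^c_n} \prod_{\{i,j\}\in\mathcal G} \xi_{ij} = \sum_{\mathcal T\in\mathfrak T_n} \prod_{\{i,j\}\in\mathcal T} \xi_{ij} \prod_{\{i,j\}\in R(\mathcal T)\setminus\mathcal T} \bigl(1+\xi_{ij}\bigr),
\end{equation*}
because summing over the subsets $S \subseteq R(\mathcal T)\setminus\mathcal T$ factorizes as $\prod (1+\xi_{ij})$. Each factor satisfies $1+\xi_{ij} = \ee^{-V_{ij}} \in [0,1]$, by \eqref{Cluster_expansion_4}. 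Therefore the inner product is bounded by $1$ in absolute value, and the triangle inequality gives \eqref{tree_bound}.

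Alternatively, one could cite the tree-graph inequality from \cite{Ueltschi_2004} directly, as the cluster expansion there already relies on it. Nonnegativity of $v^L$ is the only property of the interaction that we use.
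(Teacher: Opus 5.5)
Your proof is correct and is essentially the argument the paper relies on. The paper gives no proof of its own but defers to the resummation of \cite[Appendix D.1]{FKSS_2023}, which partitions $\mathfrak G^c_n$ into intervals $[\mathcal T,R(\mathcal T)]$ using Kruskal's (minimum-spanning-tree) algorithm, whereas you use Penrose's breadth-first partition scheme. Either choice gives $\sum_{\mathcal G\in\mathfrak G^c_n}\prod_{\{i,j\}\in\mathcal G}\xi_{ij}=\sum_{\mathcal T\in\mathfrak T_n}\prod_{\{i,j\}\in\mathcal T}\xi_{ij}\prod_{\{i,j\}\in R(\mathcal T)\setminus\mathcal T}(1+\xi_{ij})$, after which only $0\le 1+\xi_{ij}\le 1$, i.e.\ the nonnegativity of $v^L$, is needed.
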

The proof of Lemma \ref{tree_bound_estimate} is based on a resummation argument and is analogous to the one given on the lattice in \cite[Lemma 5.6]{FKSS_2023}. The resummation is based on Kruskal's algorithm and relies only on the assumption that $v$ is nonnegative. We refer the reader to \cite[Appendix D.1]{FKSS_2023} for the full details.

Let us now introduce further notation. Given $n \in \N^*$ and $(\delta_1,\ldots,\delta_n) \in \N^n$, we define
\begin{equation} 
\label{tree_delta}
{\mathfrak{T}}_n^{\delta_1,\ldots,\delta_n} \coloneqq \bigl\{\mathcal  T \in {\mathfrak T}_n\,, \mathrm{deg}(i)=\delta_i\,, i=1\,,\ldots,n \bigr\}\,,
\end{equation}
where in \eqref{tree_delta}, $\mathrm{deg}(i)$ denotes the degree of a vertex $i$ in $\mathcal  T$, i.e., the number of edges adjacent to $i$. Note that ${\mathfrak T}_1^{0}$ is a single vertex, and that ${\mathfrak T}_1^{\delta_1}=\emptyset$ if $\delta_1 \neq 0$. For $n \geq 2$, the set
${\mathfrak T}_n^{\delta_1,\ldots,\delta_n}$ is nonempty if and only if $(\delta_1,\ldots,\delta_n) \in (\N^*)^n$ and $\sum_{i=1}^{n} \delta_i=2(n-1)$. We recall Pr\"{u}fer's theorem which states that
\begin{equation}\label{number_of_labeled_trees_Prufer}
    |{\mathfrak{T}}_n^{\delta_1,\ldots,\delta_n}|=\frac{(n-2)!}{\prod_{i=1}^n(\delta_i-1)!}\,;
\end{equation}
see e.g.\ \cite[Theorem 5.3.4]{Stanley_EC2} for a self-contained discussion.

We now set up the integration algorithm based on Lemma \ref{Lemma_5.16*} above.
\begin{lemma}[Integration algorithm] 
\label{integration_algorithm}
Let $n \! \in \! \N^*$ and $(\delta_1, \ldots, \delta_n) \! \in \! (\N^*)^n$ with $\sum_{i=1}^{n} \delta_i=2(n-1)$ be given. Recalling \eqref{tree_delta}, let $\mathcal{T} \in  {\mathfrak T}_n^{\delta_1,\ldots,\delta_n}$ be given. Furthermore, consider sets $\mathbf{O}, \mathbf{C} \subseteq \{2,3,\ldots,n\}$ such that 
\begin{equation*}
\mathbf{O} \cup \mathbf{C}=\{2,3,\ldots,n\}\,, \quad \mathbf{O} \cap \mathbf{C}=\emptyset \,,
\end{equation*}
and
$(x_i)_{i \in \mathbf{O}} \in \Lambda_L^{\mathbf{O}}$. For $i \in \{2,3,\ldots,n\}$ we let 
\begin{equation}
\label{Theta_measure} 
\hat \delta_i \coloneqq
\begin{cases}
\delta_i & \text{if } i \in \mathbf{C}
\\
\delta_i+1 & \text{if } i \in \mathbf{O}
\end{cases}
\qquad
\Theta_i (\dd \omega) \coloneqq
\begin{cases}
\mu^{L} (\dd \omega) & \text{if } i \in \mathbf{C}
\\
 \,\int_{\Lambda_L} \dd y \,\widehat{\mu}^{L}_{y,x_i} (\dd \omega) & \text{if } i \in \mathbf{O}\,.
\end{cases}
\end{equation}
With this notation, we have that for $\omega_1 \in \Omega^{L,T(\omega_1)}$,
\begin{multline}  
\label{tree_integration_bound}
\int \Theta_2(\dd \omega_2)\,\Theta_3(\dd \omega_3) \cdots \Theta_n(\dd \omega_n)\,\prod_{\{i,j\} \in \mathcal  T} |\xi^L(\omega_i,\omega_j)| 
\\
\leq C_0^{n-1} \left(\|v\|_{L^1(\R^d)}+1\right)^{n-1}\prod_{i=2}^{n} \Bigl(\kappa^{-\hat \delta_i}\nu^{ \delta_i-2}\,(\hat \delta_i-1)!\Bigr)\,T(\omega_1)^{\delta_1}\,,
\end{multline}
for some $C_0 \equiv C_0(d)>0$.
\end{lemma}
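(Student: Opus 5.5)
We will prove \eqref{tree_integration_bound} by induction on $n$, integrating out the leaves of $\mathcal{T}$ one at a time with Lemma \ref{Lemma_5.16*}, always keeping vertex $1$ as the root. For $n=1$ there is nothing to integrate and the bound reads $1\le 1$ (recall $\delta_1=0$ is then forced, but for $n=1$ the tree is a single vertex so the statement is trivial). For $n\geq 2$, a tree has at least two leaves, so we may choose a leaf $\ell \neq 1$; let $j$ be its unique neighbour. Since $\omega_\ell$ appears only in the factor $|\xi^L(\omega_j,\omega_\ell)|$, Fubini lets us first perform $\int \Theta_\ell(\dd\omega_\ell)\,|\xi^L(\omega_j,\omega_\ell)|$ with all other paths fixed.

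The key step is this single-vertex integration. Since $\delta_\ell=1$, we apply Lemma \ref{Lemma_5.16*} with $q=0$: if $\ell\in\mathbf{C}$, part (i) gives a bound $\lesssim_d \nu^{-1}\kappa^{-1}T(\omega_j)\|v\|_{L^1}$, which matches $\kappa^{-\hat\delta_\ell}\nu^{\delta_\ell-2}(\hat\delta_\ell-1)!$ with $\hat\delta_\ell=1$; if $\ell\in\mathbf{O}$, part (ii) gives $\lesssim_d \nu^{-1}\kappa^{-2}\,1!\,T(\omega_j)$, matching $\hat\delta_\ell=2$. In both cases the result is $C_0(\|v\|_{L^1}+1)\kappa^{-\hat\delta_\ell}\nu^{\delta_\ell-2}(\hat\delta_\ell-1)!\,T(\omega_j)$. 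However, to make the induction close we must track accumulated powers of $T(\omega_j)$: after removing leaves, a vertex $j$ carries a weight $T(\omega_j)^{m_j}$ where $m_j$ is the number of children already removed. So we strengthen the induction hypothesis: we prove that integrating a subtree hanging below vertex $j$ (with all its descendants) against the weights produces $T(\omega_j)^{(\text{number of children of } j)}$ times the product of the claimed factors over the descendants. Concretely, when a non-root vertex $i$ with $\delta_i-1$ children has been reduced to the factor $T(\omega_i)^{\delta_i-1}$, we then integrate $\omega_i$ against its parent edge using Lemma \ref{Lemma_5.16*} (i) or (ii) with $q=\delta_i-1$, obtaining $\nu^{\delta_i-2}\frac{(\delta_i-1)!}{\kappa^{\delta_i}}T(\omega_{\mathrm{parent}})\|v\|_{L^1}$ for $i\in\mathbf{C}$, and $\nu^{\delta_i-2}\frac{\delta_i!}{\kappa^{\delta_i+1}}T(\omega_{\mathrm{parent}})$ for $i\in\mathbf{O}$; these are exactly $\kappa^{-\hat\delta_i}\nu^{\delta_i-2}(\hat\delta_i-1)!$ in both cases, times one power of $T$ of the parent.

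So the procedure is: orient $\mathcal{T}$ away from the root $1$, and integrate vertices in an order where each vertex is integrated after all its descendants (e.g.\ decreasing depth). Each integration consumes one edge, produces one factor $C_0(\|v\|_{L^1}+1)$ (absorbing the uniform bound $\|v^L\|_{L^\infty}\lesssim 1$ from Lemma \ref{v^L_lemma} (iii) into the $+1$), and increases the exponent of the parent's $T$ by one. After all $n-1$ non-root vertices are integrated, the root has received one factor of $T(\omega_1)$ per child, i.e.\ $T(\omega_1)^{\delta_1}$, and there are exactly $n-1$ constants, giving \eqref{tree_integration_bound}.

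The main point requiring care is that Lemma \ref{Lemma_5.16*} (i)--(ii) are uniform in the fixed path $\omega$ and the base point $x$, so applying them inside the remaining integrals is legitimate by Fubini/Tonelli (all integrands nonnegative); and that for open vertices the $x_i$-dependence disappears since (ii) is uniform in $x$. I expect no genuine obstacle beyond the bookkeeping of the exponent $q=\delta_i-1$ matching the number of already-integrated children.
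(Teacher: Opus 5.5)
Your proposal is correct and is essentially the paper's argument. The paper also inducts on $n$: it deletes the root's edges to $i_1,\ldots,i_k$ and applies the hypothesis to each subtree $\mathcal{T}_\ell$ rooted at $i_\ell$ to get $T(\omega_{i_\ell})^{\delta_{i_\ell}-1}$. It then integrates $\omega_{i_\ell}$ against $|\xi^L(\omega_1,\omega_{i_\ell})|$ using Lemma~\ref{Lemma_5.16*} with $q=\delta_{i_\ell}-1$, which is exactly your bottom-up order written recursively. Your assignment of part (i) to closed vertices and part (ii) to open vertices is the right one; the paper's proof text has these two labels swapped, which is a typo.
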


Note that in the statement of Lemma \ref{integration_algorithm}, $\mathbf{O}$ denotes the set of labels in $\{2,3,\ldots,n\}$ corresponding to open paths, and $\mathbf{C}$ denotes the set of labels in $\{2,3,\ldots,n\}$ corresponding to closed paths.
\begin{proof}[Proof of Lemma \ref{integration_algorithm}]
We prove the claim by induction on $n$.\\

\noindent\textbf{Base case}. The claim trivially holds when $n=1$.\\ 

\noindent\textbf{Induction step}. Suppose that $n \geq 2$ and that \eqref{tree_integration_bound} holds for all trees on at most $n-1$ vertices. Let $k \coloneqq \delta_1$. Then the vertex with label $1$ is connected to vertices with labels $i_1,\ldots,i_k$ where $1 < i_1 <i_2 < \cdots <i_k \leq n$. Graphically, we represent this as
$\omega_1$ being connected to $\omega_{i_1},\omega_{i_2}, \ldots, \omega_{i_k}$ by means of a factor of $\xi^L(\omega_1,\omega_{i_j})$.
By deleting the edges $\{1,i_1\}, \ldots, \{1,i_k\}$ from $\mathcal  T$, we obtain a forest. Let us denote its corresponding connected components by $\mathcal  T_1 , \ldots , \mathcal  T_k$, where $i_{\ell} \in \mathrm{V}(\mathcal  T_\ell)$ for $\ell=1,\ldots,k$. Here, $\mathrm{V}(G)$ denotes the vertex set of a graph $G$.

For $\ell=1,\ldots,k$, we define $n_{\ell} \coloneqq|\mathrm{V}(\mathcal  T_{\ell})|$. By the induction hypothesis, we have 
\begin{multline}
\label{tree_integration_induction}
\int \prod_{i \in \mathrm{V}(\mathcal  T_{\ell}) \setminus \{i_\ell\}} \Theta_i(\dd \omega_i) \, \prod_{\{i,j\} \in \mathcal  T_\ell} |\xi^L(\omega_i,\omega_j)|
\\
\leq C_0^{n_\ell-1}\,\left(\|v\|_{L^1(\R^d)}+1\right)^{n_\ell-1}\,\prod_{i \in \mathrm{V}(\mathcal  T_{\ell}) \setminus \{i_\ell\}} \Bigl(\kappa^{-\hat \delta_i}\nu^{ \delta_i-2}\,(\hat \delta_i-1)!\Bigr)\,T(\omega_{i_\ell})^{\delta_{i_\ell}-1}\,,
\end{multline}
for all $\ell=1,\ldots,k$. Here, we recall the definition of $\hat{\delta}_i$ from \eqref{Theta_measure}. We now use \eqref{tree_integration_induction} to deduce that the left-hand side of \eqref{tree_integration_bound} is bounded from above by
\begin{multline}
\label{tree_integration_induction_2}
 C_0^{n-k-1}\,\left(\|v\|_{L^1(\R^d)}+1\right)^{n-k-1}\,\prod_{i \in \{2,\ldots,n\} \setminus \{i_1,\ldots,i_k\}} \Bigl(\kappa^{-\hat \delta_i}\nu^{ \delta_i-2}\,(\hat \delta_i-1)!\Bigr)\,
\\
\times
\prod_{\ell=1}^{k} \Biggl(\int \Theta_{i_\ell}(\dd \omega_{i_\ell})\,T(\omega_{i_\ell})^{\delta_{i_\ell}-1}\,|\xi^L(\omega_1,\omega_{i_\ell})|\Biggr)\,.
\end{multline}
We now apply Lemma \ref{Lemma_5.16*} in each of the $k$ factors in \eqref{tree_integration_induction_2}. More precisely, we apply Lemma \ref{Lemma_5.16*} (i) if $i_\ell \in \mathbf{O}$ and Lemma \ref{Lemma_5.16*}  (ii) if $i_\ell \in \mathbf{C}$. We hence deduce the bound claimed in \eqref{tree_integration_bound}.
\end{proof}

\begin{remark}
\label{Lemma_5.9_remark}
If in the proof of Lemma \ref{integration_algorithm} we kept explicit track of which paths $\omega_j$ for $j=2,\ldots,n$ are closed or open (and thus whether we apply the bound from  Lemma \ref{Lemma_5.16*} (i) or Lemma \ref{Lemma_5.16*} (ii) respectively), we obtain 
\begin{multline}  
\label{tree_integration_bound_improved}
\int \Theta_2(\dd \omega_2)\,\Theta_3(\dd \omega_3) \cdots \Theta_n(\dd \omega_n)\,\prod_{(i,j) \in \mathcal  T} |\xi^L(\omega_i,\omega_j)| 
\\
\leq C_0^{n-1} \|v\|_{L^1(\R^d)}^{n-1-|\mathbf{O}|} \prod_{i=2}^{n} \Bigl(\kappa^{-\hat \delta_i}\nu^{ \delta_i-2}\,(\hat \delta_i-1)!\Bigr)\,T(\omega_1)^{\delta_1}\,,
\end{multline}
for some $C_0 \equiv C_0(d)>0$. The bound \eqref{tree_integration_bound_improved} is an improvement of \eqref{tree_integration_bound}. The bound \eqref{tree_integration_bound} is however sufficient for our analysis.
\end{remark}

\begin{proof}[Proof of Proposition \ref{Proposition_5.3}]
We will first prove (i). 
By Proposition \ref{gamma_p_cluster_expansion} (i), \eqref{Ursell_function_representation}, \eqref{Ursell_function_definition}, and the tree bound from Lemma \ref{tree_bound_estimate}, we have
\begin{equation}
\label{logXi_1}
 \Bigl|\log \Xi^{\nu,\zeta,L}\Bigr|  \leq \sum_{n=1}^{\infty} \frac{1}{n!} \sum_{\mathcal  T \in {\mathfrak T}_n}\int \mu^{L}(\dd \omega_{1}) \cdots \mu^{L}(\dd \omega_n) \prod_{\{i,j\} \in \mathcal  T} |\xi^L(\omega_i,\omega_j)|\,.
\end{equation}
By recalling \eqref{tree_delta}, we rewrite the right-hand side of \eqref{logXi_1} as 
\begin{equation}
\label{logXi_2}
\int \mu^L(\dd \omega)+ \sum_{n=2}^{\infty} \frac{1}{n!} \sum_{\substack{(\delta_1,\ldots,\delta_n) \in \N^n \\ \delta_1+\cdots+\delta_n=2(n-1)}} \sum_{\mathcal  T \in {\mathfrak T}_n^{\delta_1,\ldots,\delta_n}}
\int \mu^{L}(\dd \omega_1) \cdots \mu^{L}(\dd \omega_n) \prod_{\{i,j\} \in \mathcal  T} 
|\xi^L(\omega_i,\omega_j)|\,,
\end{equation}
The $n=1$ term in \eqref{logXi_2} is
\begin{equation}
\label{logXi_2_{n=1}}
\int \mu^L(\dd \omega) \lesssim_d |\Lambda_L|\,,
\end{equation}
by using Lemma \ref{Lemma_5.16*} (iii) with $q=0$.
By using \eqref{logXi_1}--\eqref{logXi_2_{n=1}}
 and Lemma \ref{integration_algorithm} for $n \geq 2$ with 
$\mathbf{C}=\{2,\ldots,n\}$ in each term of \eqref{logXi_2}, we deduce that 
\begin{equation} 
\label{logXi_3}
 \Bigl|\log \Xi^{\nu,\zeta,L}\Bigr| 
\lesssim_{d} \sum_{n=1}^{\infty} \frac{1}{n}\,\biggl(\frac{C}{\kappa^2}\,\|v\|_{L^1(\R^d)}\biggr)^{n-1}\,|\Lambda_L|\,.
\end{equation}
Here we used that for $n \geq 2$ and $(\delta_1,\ldots,\delta_n)$ as above
\begin{equation}
\label{logXi_3_proof}
\prod_{i=2}^{n} \Bigl(\kappa^{-\delta_i}\nu^{\delta_i-2}\,(\delta_i-1)!\Bigr)\, \int \mu^{L} (\dd \omega_1)\,T(\omega_1)^{\delta_1} \lesssim_{d} \frac{1}{\kappa^{2n-2}}\, \prod_{i=1}^{n} (\delta_i-1)!\,|\Lambda_L|\,,
\end{equation}
which follows from Lemma \ref{Lemma_5.16*} (iii). 
Note that the power of $\nu$ in the estimate on the right-hand side of \eqref{logXi_3_proof} equals

\begin{equation*}
\sum_{i=2}^{n} (\delta_i-2) + \delta_1=2(n-1)-2(n-1)=0\,.
\end{equation*}

In other words the factors of $\nu$ cancel out by construction.
We deduce \eqref{logXi_3} from \eqref{logXi_3_proof} by using the fact that for $n \geq 2$, there are $\binom{2n-3}{n-1} \leq (2 \ee)^{n-1}$ possible positive integer solutions of $\delta_1+\cdots+\delta_n=2(n-1)$, and by using 
\eqref{number_of_labeled_trees_Prufer}. The claim follows by recalling \eqref{kappa_definition} and \eqref{smallness_v_kappa}.
 
Let us now prove (ii). We show \eqref{operator_norm_bound_1_*}, which implies \eqref{operator_norm_bound_1} by recalling \eqref{X^L_absolute_value_bound}, \eqref{gamma_p_cluster_expansion_claim}, and \eqref{gamma_p_cluster_expansion_absolute_value}. By \eqref{gamma_p_cluster_expansion_absolute_value}, it suffices to show that for all $\vec x \in \Lambda_L^p$, we have
\begin{equation} 
\label{Schur_test_2}
 \,\int_{\Lambda_L^p} \dd \vec{y} \, 
\int \widehat \mu^{L}_{y_1,x_1}(\dd \omega_1) \cdots \widehat \mu^{L}_{y_p,x_p}(\dd \omega_p) \sum_{\Pi \in \mathfrak P_p} \prod_{\lambda \in \Pi} \bigl|X^{L}((\omega_i)_{i \in \lambda})\bigr|= \mathcal  O_{d,\zeta,p,v}(1)\,.
\end{equation}
Suppose that $\vec{x},\vec{y}\in \Lambda_L^p$, and that $\lambda \in \Pi$ with $\lambda=\{a_1,\ldots,a_{|\lambda|}\} \subset [p]$ and $1\leq a_1<a_2<\cdots<a_{|\lambda|}$ are given. Let us define
\begin{equation*} 
\Gamma^{\nu,\zeta,L,\lambda}_p((x_i)_{i \in \lambda},(y_i)_{i \in \lambda}) \coloneqq
\int \prod_{i\in \lambda} \widehat \mu^{L}_{y_i,x_i}(\dd \omega_i)  \,\left|{X}^{L}((\omega_{i})_{{i} \in {\lambda}})\right|\,.
\end{equation*}
Recalling \eqref{Ursell_function_representation} and arguing as in \eqref{logXi_1}--\eqref{logXi_2}, we have
\begin{multline}
\label{gamma_lambda}\Gamma^{\nu,\zeta,L,\lambda}_p((x_i)_{i \in \lambda},(y_i)_{i \in \lambda}) \leq  \sum_{n \geq |\lambda|} \frac{1}{(n-|\lambda|)!} \sum_{\substack{(\delta_1,\ldots,\delta_n) \in \N^n \\ \delta_1+\cdots+\delta_n=2(n-1)}}  \sum_{\mathcal  T \in {\mathfrak T}_n^{\delta_1,\ldots,\delta_n}}
\int \prod_{i\in \lambda } \widehat \mu^{L}_{y_i,x_i}(\dd \omega_i)
\\
\times \mu^{L}(\dd \omega_{a_{|\lambda|}+1}) \cdots \mu^{L}(\dd \omega_n) \prod_{\{i,j\} \in \mathcal  T} |\xi^L(\omega_i,\omega_j)|\,.
\end{multline}
We first integrate \eqref{gamma_lambda} with respect to $(y_i)_{i \in \lambda \setminus \{a_1\}}$.
By Lemma \ref{integration_algorithm} with 
$$
\mathbf{O}=\lambda\setminus\{a_1\}\,,\quad \mathbf{C}=\{a_{|\lambda|}+1,\ldots,n\}$$ 
in each term of \eqref{gamma_lambda}, we deduce that 
\begin{multline} 
\label{gamma_lambda_2}
\int_{\Lambda_L^{|\lambda|-1}}\prod_{i \in \lambda\setminus\{a_1\}}\dd y_i\, \Gamma^{\nu,\zeta,L,\lambda}_p((x_i)_{i \in \lambda},(y_i)_{i \in \lambda}) \leq  \vec{1}_{\{|\lambda|=1\}} \int \widehat{\mu}^{L}_{y_{a_1},x_{a_1}} (\dd \omega_1)
\\
+ \sum_{n \geq |\lambda| \vee 2} \frac{1}{(n-|\lambda|)!}  \sum_{\substack{(\delta_1,\ldots,\delta_n) \in \N^n \\ \delta_1+\cdots+\delta_n=2(n-1)}} \sum_{\mathcal  T \in {\mathfrak T}_n^{\delta_1,\ldots,\delta_n}}  C_0^{n-1} \|v\|_{L^1(\R^d)}^{n-1} 
\\
\times
\prod_{i=2}^{n} \Bigl(\nu^{ \delta_i-2}\kappa^{-\hat \delta_i}\,(\hat \delta_i-1)!\Bigr)\, \int \widehat{\mu}^{L}_{y_{a_1},x_{a_1}} (\dd \omega_1)\,T_1^{\delta_1}\,.
\end{multline}
We now use \eqref{gamma_lambda_2}, Lemma \ref{Lemma_5.16*} (iv), and $\delta_1,\ldots,\delta_n \leq n-1$, and argue as in \eqref{logXi_3} to obtain
\begin{equation} 
\label{gamma_lambda_3}
\int_{\Lambda_L^{|\lambda|}} \prod_{i \in \lambda} \dd y_i\, \Gamma^{\nu,\zeta,L,\lambda}_p((x_i)_{i \in \lambda},(y_i)_{i \in \lambda}) 
\leq \frac{C}{\kappa} + \sum_{n \geq |\lambda| \vee 2} \biggl(\frac{(n-1)^2}{\kappa}\biggr)^{|\lambda|}\,\biggl(\frac{C}{\kappa^2}\,\|v\|_{L^1(\R^d)}\biggr)^{n-1-|\lambda|} \, \biggl(\frac{C}{\kappa^2}\biggr)^{|\lambda|}\,.
\end{equation}

In \eqref{gamma_lambda_3}, we used the observation that 
\begin{equation*}
\frac{(n-2)!}{(n-|\lambda|)!}\, \frac{\prod_{i=2}^{n} (\hat{\delta}_i-1)\,\delta_1!}{\prod_{i=1}^{n} (\delta_i-1)!}
=(n-2) (n-3)  \cdots (n-|\lambda|+1) \,\delta_1  \delta_2\, \cdots  \delta_{|\lambda|} \leq (n-1)^{2|\lambda|}\,.
\end{equation*}
Using \eqref{gamma_lambda_3} we estimate the term on the left-hand side of \eqref{Schur_test_2} coming from $\Pi=\lambda$. We hence obtain \eqref{Schur_test_2} from \eqref{gamma_lambda_3}. This proves claim (ii).
\end{proof}

\subsection{The limit $L \rightarrow \infty$. Contributions of long and far-away paths}
Our goal now is to let $L \rightarrow \infty$ with the aim of proving Theorems \ref{Infinite_volume_Theorem_2} and \ref{free_energy_infinite_volume}. Let us first state two modifications of Lemma \ref{Lemma_5.16*} that we will use henceforth. They are given in Lemmas \ref{Lemma_5.20*} and \ref{Lemma_5.21*} below. These modifications give us bounds similar to that of Lemma \ref{Lemma_5.16*} when we consider long paths and far-away paths respectively.

In what follows, we fix $c>0$. Recalling \eqref{periodic_Euclidean_norm}, we define
\begin{equation}
\label{D^L_c_definition}
\mathscr{D}^L_c\coloneqq\left\{\omega \in \bigcup_{T>0} \Omega^{L,T}\,,\,|\omega(t)-\omega(s)|_L \geq cL \text{ for some }s,t \in [0,T(\omega)]\right\}.
\end{equation}
All of the estimates in the rest of the paper depend on $c$, but we do not keep explicit track of this dependence.
The first modification of Lemma \ref{Lemma_5.16*} tells us that we get small contributions when we integrate over long paths (in the sense of \eqref{D^L_c_definition} above).

\begin{definition} 
\label{epsilon_L}
Let $(A_L)_{L \geq 1}$ be a family of positive numbers indexed by $L \geq 1$. Given $B>0$ and a finite list of parameters $a_1,\ldots,a_k$, we say that 
\begin{equation*}
A_L  \lesssim_{a_1,\ldots,a_k} o_L(B) 
\end{equation*}
if the following conditions hold.
\begin{itemize}
\item[(i)] There exists $C\equiv C(a_1,\ldots,a_k)>0$, independent of $L$ such that $A_L \leq C B$ for all $L \geq 1$.
\item[(ii)] $\lim_{L \rightarrow \infty} A_L=0$.
\end{itemize}
If $k=0$, we write $A_L=o_L(B)$.
\end{definition}
We note that the notation in Definition \ref{epsilon_L} is a bit nonstandard and does not completely follow our earlier conventions from Section \ref{Notation and conventions}. As it is convenient for our later calculations, we apply throughout the sequel.
With notation as in Definition \ref{epsilon_L}, we have the following result.

\begin{lemma}[Integration over long paths]
\label{Lemma_5.20*}
Let $\omega \in \Omega^{L,T(\omega)}$ with $T(\omega) \in \nu \N^*$, $q \in \N$ be given. Recalling \textup{\eqref{measure_mu_1}--\eqref{Ursell_function_definition}} and \eqref{D^L_c_definition}, we have the following estimates.
\begin{itemize}
\item[(i)] 
$\int \mu^L(\dd \tilde{\omega})\,T(\tilde{\omega})^q\,|\xi^L(\omega,\tilde{\omega})|\,\mathbf{1}_{\mathscr{D}^L_c}(\tilde{\omega}) \lesssim_{\zeta,q,d} o_L \left(\nu^{q-1}\,
T(\omega)\,\|v\|_{L^1(\R^d)}\right)$.
\item[(ii)] Let us fix
\begin{equation}
\label{L_0}
L_0 \in (0,L/4]\,.
\end{equation}
Then, for $x \in \Lambda_{L_0}$, we have
\begin{equation*}
\int_{\Lambda_{L_0}} \dd y\,\int \widehat{\mu}^L_{y,x}(\dd \tilde{\omega}) \,T(\tilde{\omega})^q\,|\xi^L(\omega,\tilde{\omega})|\,\mathbf{1}_{\mathscr{D}^L_c}(\tilde{\omega}) \lesssim_{\zeta,q,d} o_L \left(\nu^{q-1}\,T(\omega)\,\right).
\end{equation*}
\item[(iii)] $\frac{1}{\,|\Lambda_L|}\,\int \mu^L(\dd \tilde{\omega})\,T(\tilde{\omega})^q\,\mathbf{1}_{\mathscr{D}^L_c}(\tilde{\omega}) \lesssim_{\zeta,q,d} o_L \left(\nu^q\right).$
\item[(iv)] For $L_0$ as in \eqref{L_0} and  $x \in \Lambda_{L_0}$, we have
\begin{equation*}
\int_{\Lambda_{L_0}} \dd y\, \int \widehat{\mu}^L_{y,x} (\dd \tilde{\omega}) \, T(\tilde{\omega})^q\, \mathbf{1}_{\mathscr{D}^L_c}(\tilde{\omega}) \lesssim_{\zeta,q,d} o_L(\nu^q)\,.
\end{equation*}
\end{itemize}
\end{lemma}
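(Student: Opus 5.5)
The plan is to rerun the proof of Lemma \ref{Lemma_5.16*} with the extra indicator $\mathbf{1}_{\mathscr{D}^L_c}(\tilde\omega)$. The goal is to show that, for each fixed duration $\tilde T$, the Wiener integral restricted to $\mathscr{D}^L_c$ carries an extra factor that vanishes as $L\to\infty$ and is bounded by $1$. The uniform bound in Definition \ref{epsilon_L} (i) is then immediate from Lemma \ref{Lemma_5.16*}, since the indicator is at most $1$. All the work is in the vanishing statement (ii).

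For the vanishing, I will split the duration sum at a threshold $\tilde T\le L$ versus $\tilde T> L$ (any threshold growing slower than $L^2$ would also do).

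\textbf{Large durations.} The terms with $\tilde T=k\nu> L$ are controlled by the tail of the polylogarithmic sums $\sum_k\zeta^k k^{q'}$ over $k> L/\nu\ge L$. Here $\zeta<1/\ee$ and, for (ii) and (iv), $\nu^{dk/2}\le 1$. This tail tends to $0$ as $L\to\infty$, uniformly in $\nu\le 1$.

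\textbf{Small durations.} For $\tilde T\le L$, I will bound the probability that a Brownian bridge (or loop) of duration $\tilde T$ on the torus has oscillation at least $cL$.
\begin{itemize}
\item First lift the periodic Brownian motion to $\R^d$; this is the role of Lemma \ref{Wiener_measure_pi_L} mentioned in the introduction. The event then means that the lifted path has displacement at least $cL$ from its starting point at some time.
\item Under $\mathbb{W}^{L,\tilde T}_{y,x}$, by the Markov property at the hitting time, or by the reflection principle together with a Gaussian tail bound for the bridge, the mass is bounded by roughly $\psi^{\tilde T}$-type factors times $\exp(-c'L^2/\tilde T)\le \exp(-c'L)$.
\end{itemize}

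\textbf{Reinsertion into the estimates.} In (i) and (iii), I will apply this inside \eqref{Lemma_5.16*_2} and \eqref{Triangle_1}. After splitting the loop at the interaction time, the indicator is bounded by the sum of indicators that one of the two segments has a large displacement. Each gives $\psi^{\tilde T}(0)$ times a small factor, uniformly in the base point, so after integrating in $x'$ against $v^L$ the bound is $\|v\|_{L^1}\psi^{\tilde T}(0)\,\epsilon_L$.

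In (ii) and (iv), the endpoints satisfy $x,y\in\Lambda_{L_0}$ with $L_0\le L/4$. In (iv) the heat kernel is integrated over $y$, and the restricted path measure is again small. In (ii) I use $\|v^L\|_\infty=\mathcal O(1)$ from Lemma \ref{v^L_lemma} (iii), exactly as in \eqref{Lemma_5.16*_6}. Summing the sequences with the factor $\epsilon_L\to0$, uniformly in $\nu$ by Lemma \ref{polylogarithm} and Remark \ref{5.27_remark}, gives (ii) of Definition \ref{epsilon_L}.

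\textbf{Main obstacle.} The hardest step is the uniform Brownian bridge large-deviation bound on the torus in the form needed, with the mass compared to $\psi^{\tilde T}(0)\lesssim L^{-d}+\tilde T^{-d/2}$ so that the singular small-$\tilde T$ factor is not lost. I would try first to lift to $\R^d$ via the image sum, which writes $\mathbb{W}^{L,\tilde T}_{y,x}$ as a sum over windings $n\in\Z^d$ of full-space bridges. For $|n|\ge1$ each term already carries $\ee^{-|Ln|^2/(2\tilde T)}$. For $n=0$ the full-space bridge must leave a ball of radius $\sim cL$, which has probability $\lesssim \ee^{-c''L^2/\tilde T}$ by Doob's maximal inequality applied to the bridge martingale.
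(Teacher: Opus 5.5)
Your proposal is correct and follows essentially the paper's route. Like the paper, you lift the torus Wiener measure to $\R^d$ by the image sum of Lemma \ref{Wiener_measure_pi_L}, reduce $\mathscr{D}^L_c$ to the event that the path travels distance $\gtrsim cL$ from a root point, bound the unwound term by a full-space exit event and the winding terms by Gaussian decay, and reinsert everything into the computations of Lemma \ref{Lemma_5.16*}. The differences are minor: you make the vanishing quantitative, via a Gaussian tail bound and a cut of the duration sum at $\tilde T\le L$, where the paper uses dominated convergence (as anticipated in Remark \ref{quantitative_estimate_remark}), and you use a union bound over the two halves of the split loop, recombined via Chapman--Kolmogorov to recover the factor $\psi^{\tilde T}(0)$, where the paper simply re-roots the loop at the interaction point $x'$; both give the same bound.
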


The second modification of Lemma \ref{Lemma_5.16*} tells us that we get small contributions from interacting paths which are far away. To state this claim precisely, let us introduce some notation. Given $c>0$ as in \eqref{D^L_c_definition}, and $\omega \in \Omega^{T(\omega)}, \tilde{\omega} \in \Omega^{T(\tilde{\omega})}$ with $T(\omega), T(\tilde{\omega}) \in \nu \N^*$, we define the interaction $\mathcal{V}^{\nu}(\omega,\tilde{\omega})$ of loops as in \eqref{loop_interaction}, where the interaction potential $v^L \colon \Lambda_L \rightarrow \R$ is now given as
\begin{equation}
\label{v^{L}_{c}_definition}
v^{L}_{c}(x) \coloneqq v^{L}(x)\,\mathbf{1}_{|x|_{L} \geq cL}\,.
\end{equation}
Recalling \eqref{v^L} and using $\|v\|_{L^1(\R^d)}<\infty$ (which follows from Lemma \ref{v^L_lemma} (i)), we obtain from \eqref{v^{L}_{c}_definition} that 
\begin{equation}
\label{v^{L}_c_limit}
\lim_{L \rightarrow \infty} \|v^L_c \|_{L^1(\Lambda_{L})} = 0\,.
\end{equation}
Similarly, using \eqref{v_assumption_iii}--\eqref{v^L}, and \eqref{v^{L}_{c}_definition}, we have that
\begin{equation}
\label{v^{L}_c_limit_*}
\lim_{L \rightarrow \infty} \|v^L_c \|_{L^{\infty}(\Lambda_{L})} = 0\,.
\end{equation}
As in \eqref{Ursell_function_definition}, we define
\begin{equation}
\label{xi^L_c}
\xi^{L}_{c}(\omega,\tilde{\omega}) \coloneqq  \exp\bigl(-\mathcal{V}^{\nu,L}_{c}(\omega,\tilde \omega)\bigr) - 1\,,
\end{equation}
where $\mathcal{V}^{\nu,L}_{c}(\omega,\tilde \omega)$ denotes \eqref{loop_interaction} with interaction potential \eqref{v^{L}_{c}_definition}.
By arguing as in the proof of Lemma \ref{Lemma_5.16*} (i)--(ii), and by using \eqref{v^{L}_c_limit}--\eqref{v^{L}_c_limit_*}, we obtain the following result.
\begin{lemma}[Contributions from far-away paths]
\label{Lemma_5.21*}
Let $\omega \in \Omega^{L,T(\omega)}$ with $T(\omega) \in \nu \N^*$, $q \in \N$, and $x \in \Lambda_L$ be given. Recalling \textup{\eqref{measure_mu_1}--\eqref{measure_mu_2}}, \eqref{v^{L}_{c}_definition}, and \eqref{xi^L_c}, we have the following estimates.
\begin{itemize}
\item[(i)] 
$\int \mu^L(\dd \tilde \omega) \,T(\tilde \omega)^q \,|\xi^L_c(\omega,\tilde{\omega})| \lesssim_{\zeta,q,d}  \nu^{q-1}\,T(\omega)\, \|v^L_c\|_{L^1(\Lambda_L)}=o_L(\nu^{q-1}\, T(\omega))\,.$
\item[(ii)]  $\int_{\Lambda_L} \dd y\, \int \widehat{\mu}^L_{y,x}(\dd \tilde{\omega})\, T(\tilde{\omega})^q\,|\xi^L_c(\omega,\tilde{\omega})| \lesssim_{\zeta,q} \nu^{q-1}\, T(\omega)\, \|v^L_c\|_{L^{\infty}(\Lambda_L)}=o_L(\nu^{q-1}\, T(\omega))$.
\end{itemize}
\end{lemma}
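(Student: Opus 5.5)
The plan is to repeat the proof of Lemma \ref{Lemma_5.16*} (i)--(ii) with $v^L$ replaced by the truncated potential $v^L_c$. The computation transfers verbatim, since it used only two facts about the potential: nonnegativity and the value of its $L^1$ or $L^\infty$ norm. Both facts hold for $v^L_c$, because $0 \le v^L_c \le v^L$ pointwise by \eqref{v^{L}_{c}_definition} and Lemma \ref{v^L_lemma} (i).

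For (i), since $v^L_c\ge 0$ we have $|\xi^L_c(\omega,\tilde\omega)|\le \mathcal V^{\nu,L}_c(\omega,\tilde\omega)$ as in \eqref{xi_bound}. We also use $\mu^L\le\mu^{L,0}$ from \eqref{measure_mu_1_bound}. Expanding $\mathcal V^{\nu,L}_c$ via \eqref{loop_interaction} gives the analogue of \eqref{Lemma_5.16*_1}.

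For fixed $\tilde T,r,s,t$, set $y=\omega(t+r)$. We then split $\tilde\omega$ at time $t+s$ into $\tilde\omega_2\oplus\tilde\omega_1$ as in \eqref{concatenation_paths}. Fubini's theorem and the Chapman--Kolmogorov structure \eqref{W^T}--\eqref{heat_kernel_integral} turn the bracket into $\int_{\Lambda_L}\dd x'\,\psi^{\tilde T}(0)\,v^L_c(y-x') = \|v^L_c\|_{L^1(\Lambda_L)}\psi^{\tilde T}(0)$, exactly as in \eqref{Lemma_5.16*_2}.

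The $r$-sum and $t$-integral contribute a factor $T(\omega)$, and the $s$-sum contributes $\tilde T/\nu$. Then Lemma \ref{heat_kernel_estimates} (i), \eqref{fugacity_condition}, \eqref{Lemma_5.16*_3_B} and Lemma \ref{polylogarithm} give the bound $\nu^{q-1}\frac{q!}{\kappa^{q+1}}T(\omega)\|v^L_c\|_{L^1}$. Here $q!/\kappa^{q+1}$ is absorbed into the implicit constant depending on $\zeta,q$.

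Finally, \eqref{v^{L}_c_limit} shows that $\|v^L_c\|_{L^1(\Lambda_L)}$ is bounded uniformly in $L$ (by $\|v\|_{L^1(\R^d)}$) and tends to $0$. This verifies both conditions of Definition \ref{epsilon_L}, so the bound is $o_L(\nu^{q-1}T(\omega))$.

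For (ii), we follow \eqref{Lemma_5.16*_4}--\eqref{Lemma_5.16*_7}. After bounding $|\xi^L_c|\le\mathcal V^{\nu,L}_c$ and $\widehat\mu^L_{y,x}\le\widehat\mu^{L,0}_{y,x}$, the integral $\int_{\Lambda_L}\dd y\int\mathbb W^{L,\tilde T}_{y,x}(\dd\tilde\omega)\,v^L_c(\omega(t+r)-\tilde\omega(t+s))$ is at most $\|v^L_c\|_{L^\infty(\Lambda_L)}$ by \eqref{heat_kernel_integral} and \eqref{heat_kernel_integral_1}. Summing over $\tilde T$ with the weight $z^{\tilde T}\tilde T^{q+1}\nu^{-2}$, and using \eqref{fugacity_condition} together with Lemma \ref{polylogarithm}, gives $\nu^{q-1}\frac{(q+1)!}{\kappa^{q+2}}T(\omega)\|v^L_c\|_{L^\infty}$. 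Then \eqref{v^{L}_c_limit_*}, together with the uniform bound $\|v^L_c\|_{L^\infty}\le\|v^L\|_{L^\infty}=\mathcal O(1)$ from Lemma \ref{v^L_lemma} (iii), yields $o_L(\nu^{q-1}T(\omega))$.

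No step is a real obstacle; the computation is a transcription of Lemma \ref{Lemma_5.16*}. The only point needing care is the o-notation. One must check that the constant is uniform in $L$, which follows from $v^L_c\le v^L$ and Lemma \ref{v^L_lemma} (ii)--(iii). One must also check that the limits \eqref{v^{L}_c_limit}--\eqref{v^{L}_c_limit_*} hold, which they do by the tail bound \eqref{v_assumption_iii}: $v^L_c$ only sees $v$ at points with $|x|\gtrsim cL$.
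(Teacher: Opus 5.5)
Your proposal is correct and takes the same approach as the paper. The paper's proof simply reruns the argument of Lemma \ref{Lemma_5.16*} (i)--(ii) with $v^L_c$ in place of $v^L$, keeping the factors $\|v^L_c\|_{L^1(\Lambda_L)}$ and $\|v^L_c\|_{L^\infty(\Lambda_L)}$ instead of invoking Lemma \ref{v^L_lemma} (ii); your check of both conditions of Definition \ref{epsilon_L}, via $0\le v^L_c\le v^L$ and \eqref{v^{L}_c_limit}--\eqref{v^{L}_c_limit_*}, is exactly what the paper leaves implicit.
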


Let us now rewrite the Wiener measure on $\Lambda_L$ in terms of a periodisation of the Wiener measure on $\R^d$. In order to do this, we introduce some notation.
Let $\pi_L \colon \R^d \rightarrow \Lambda_L$ denote the canonical projection map given by 
\begin{equation}
\label{pi_L}
\pi_L(x+Lk) \coloneqq x \quad  \text{for } \,\, x \in \Lambda_L\,,  k \in \Z^d\,.
\end{equation}
Given $T>0$, we let $\Omega^{\infty,T}$ denote the set of continuous paths $\omega \colon [0,T] \rightarrow \R^d$. Likewise, given $x,y \in \R^d$, we let $\Omega^{\infty,T}_{y,x}$ denote the set of paths $\omega \in \Omega^{\infty,T}$ such that $\omega(0)=x$ and $\omega(T)=y$. Given $t>0$, we denote by $\psi^{\infty,t} \colon \R^d \rightarrow \R$ the heat kernel 
\begin{equation}
\label{heat_kernel_infty}
\psi^{\infty,t}(x)=\frac{1}{(2\pi t)^{d/2}}\, \ee^{-\frac{|x|^2}{2t}}\,.
\end{equation}
In particular, we can rewrite \eqref{heat_kernel} as 
\begin{equation}
\label{heat_kernel_L_rewritten}
\psi^{L,t}(x)=\sum_{n \in \Z^d} \psi^{\infty,t}(x+Ln)\,.
\end{equation}
Given $x \in \R^d$ and $T>0$, let $\mathbb{P}^{\infty,T}_{x}$ denote the law on $\Omega^{\infty,T}$ of the standard Brownian motion on $\R^d$ equal to $x$ at time $0$, which is characterised as
\begin{multline*}
\int \mathbb{P}^T_{x} (\dd \omega) f(\omega(t_1),\ldots,\omega(t_n))
\\
=\int_{(\R^d)^n} \dd x_1\,\cdots \,\dd x_n\,\psi^{\infty,t_1}(x_1-x)\, \psi^{\infty,t_2-t_1}(x_2-x_1)\,\cdots\,\psi^{\infty,t_n-t_{n-1}}(x_n-x_{n-1})\,f(x_1,\ldots,x_n)
\end{multline*}
for all $n \in \N^*$, $f \colon (\R^d)^n \rightarrow \mathbb{C}$ continuous, and $0 <t_1<\cdots < t_n <T$. 
Moreover, given $x,y \in \R^d$, we denote by $\mathbb{P}^{T,\infty}_{y,x}$ the law of the Brownian bridge on $\R^d$ equal to $x$ at time $0$ and equal to $y$ at time $T$. We define the Wiener measure $\mathbb{W}^{\infty,T}_{y,x}$ analogously to \eqref{Wiener_measure} as
\begin{equation*}
\mathbb{W}^{\infty,T}_{y,x}(\dd \omega)\coloneqq \psi^{\infty,T}(y-x)\,\mathbb{P}^{T,\infty}_{y,x}(\dd \omega)
\end{equation*}
Moreover, for all $n \in \N^*$, $f \colon (\R^d)^n \rightarrow \mathbb{C}$ continuous, and $0 <t_1<\cdots < t_n <T$,
we have
\begin{multline}
\label{W^T_infty}
\int \mathbb{W}^{\infty,T}_{y,x} (\dd \omega) f(\omega(t_1),\ldots,\omega(t_n))
=\int_{(\R^d)^n} \dd x_1\,\cdots\, \dd x_n\,\psi^{\infty,t_1}(x_1-x)\, \psi^{\infty,t_2-t_1}(x_2-x_1)\,\cdots\,
\\
\times \cdots\,
\psi^{\infty,t_n-t_{n-1}}(x_n-x_{n-1})\,\psi^{\infty,T-t_n}(y-x_n)\,
f(x_1,\ldots,x_n)\,,
\end{multline}
as in \eqref{W^T}. 
In particular, we have 
\begin{equation}
\label{heat_kernel_integral_infty}
\int \mathbb{W}_{y, x}^{\infty,T}(\dd \omega)=\psi^{\infty,T}(y-x)\,.
\end{equation}
As in the case of finite $L$, we let $T(\omega)$ denote the duration of the path $\omega$.

\begin{lemma}[Rewriting the Wiener measure on the torus]
\label{Wiener_measure_pi_L}
Recalling \eqref{pi_L}, we have that
\begin{equation*}
\mathbb{W}^{L,T}_{y,x}(\dd \omega)=\sum_{k \in \Z^d} \left(\pi_L\right)_{\sharp} \mathbb{W}^{\infty,T}_{y+Lk,x}(\dd \omega)
\end{equation*}
for all $T>0$ and $x,y \in \Lambda_L$.
\end{lemma}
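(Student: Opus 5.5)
Since both sides are positive measures on $\Omega^{L,T}$, I will identify them through their finite-dimensional distributions, exactly as $\mathbb{W}^{L,T}_{y,x}$ is characterised in \eqref{W^T}. The measure is then determined by these distributions through the Kolmogorov extension and continuity-of-paths argument that underlies \eqref{W^T}. So it suffices to fix $n \in \N^*$, $0<t_1<\cdots<t_n<T$, and a continuous $f \colon \Lambda_L^n \rightarrow \mathbb{C}$, and to show that
\begin{equation*}
\int \mathbb{W}^{L,T}_{y,x}(\dd \omega)\, f(\omega(t_1),\ldots,\omega(t_n)) = \sum_{k \in \Z^d} \int \mathbb{W}^{\infty,T}_{y+Lk,x}(\dd \omega)\, f\bigl(\pi_L\omega(t_1),\ldots,\pi_L\omega(t_n)\bigr)\,.
\end{equation*}
The case $n=0$, i.e.\ $f=1$, gives the total masses. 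These agree by \eqref{heat_kernel_integral}, \eqref{heat_kernel_integral_infty}, and \eqref{heat_kernel_L_rewritten}, since $\sum_k \psi^{\infty,T}(y+Lk-x)=\psi^{L,T}(y-x)$. This also shows that the sum on the right-hand side defines a finite measure.

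For general $n$, I will expand the right-hand side using \eqref{W^T_infty}. This gives an integral over $(\R^d)^n$ of the product $\psi^{\infty,t_1}(u_1-x)\cdots\psi^{\infty,t_n-t_{n-1}}(u_n-u_{n-1})\,\psi^{\infty,T-t_n}(y+Lk-u_n)$ against $f(\pi_L u_1,\ldots,\pi_L u_n)$, summed over $k$. I will then decompose each $u_j=x_j+Lm_j$ with $x_j \in \Lambda_L$ and $m_j\in\Z^d$, which turns $\int_{(\R^d)^n}$ into $\sum_{m_1,\ldots,m_n}\int_{\Lambda_L^n}$. The integrand is nonnegative when $f\ge 0$, so Tonelli allows the rearrangement; general $f$ follows by linearity, since $f$ is bounded and the total mass is finite.

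The key step is the change of summation variables $\ell_1=m_1$, $\ell_j=m_j-m_{j-1}$ for $2\le j\le n$, and $\ell_{n+1}=k-m_n$. This is a bijection of $(\Z^d)^{n+1}$, and it makes each heat kernel factor depend on exactly one index $\ell_j$. Summing each factor over its own index and applying \eqref{heat_kernel_L_rewritten} gives
\begin{equation*}
\psi^{L,t_1}(x_1-x)\,\psi^{L,t_2-t_1}(x_2-x_1)\cdots\psi^{L,T-t_n}(y-x_n)\,,
\end{equation*}
which is exactly the integrand in \eqref{W^T}. This proves the identity of finite-dimensional distributions. The main obstacle is this bookkeeping of lattice indices together with the justification of interchanging sums and integrals; positivity and Tonelli handle the latter.

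Finally, I will conclude equality of the measures. Both sides are finite measures on continuous paths in $\Lambda_L$ with the same finite-dimensional distributions, so they agree on the cylinder $\sigma$-algebra, which generates the Borel $\sigma$-algebra of $\Omega^{L,T}$. The pushforward under $\pi_L$ is well defined because $\pi_L$ is continuous onto the torus, so it maps continuous paths to continuous paths.
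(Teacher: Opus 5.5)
Your proposal is correct and follows essentially the same route as the paper. Both arguments test against finite-dimensional distributions, expand the right-hand side via \eqref{W^T_infty}, tile $\R^d$ by the translates $\Lambda_L+Lm$, justify the interchanges by positivity, and resum the lattice indices using \eqref{heat_kernel_L_rewritten}. Your explicit bijection $\ell_j=m_j-m_{j-1}$, $\ell_{n+1}=k-m_n$ is just a tidier way of writing the paper's iterated summation, first in $k$ and then in $k_n,\ldots,k_1$.
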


\begin{proof}
Let $n \in \N$, $f \colon (\Lambda_L)^n \rightarrow \mathbb{C}$ continuous and $0<t_1<\cdots<t_n<T$ be given.
We then compute
\begin{multline}
\label{Wiener_measure_pi_L_1}
\int_{\Omega^{L,T}} \sum_{k \in \Z^d} \left(\pi_L \right)_{\sharp} \mathbb{W}^{\infty,T}_{y+Lk,x}(\dd \omega)\, f(\omega(t_1), \ldots, \omega(t_n)) 
\\
= \sum_{k \in \Z^d}  \int_{\Omega^{L,T}}\left(\pi_L \right)_{\sharp} \mathbb{W}^{\infty,T}_{y+Lk,x}(\dd \omega)\, f(\omega(t_1), \ldots, \omega(t_n)) 
\\
=
\sum_{k \in \Z^d} \int_{\Omega^{\infty,T}}  \mathbb{W}^{\infty,T}_{y+Lk,x}(\dd \tilde{\omega})\,  f \circ \pi_L (\tilde{\omega}(t_1), \ldots, \tilde{\omega}(t_n))\,.
\end{multline}
When $f \geq 0$, all of the quantities are nonnegative so we can indeed interchange the order of summation and integration by using the monotone convergence theorem in the first equality of \eqref{Wiener_measure_pi_L_1}. In general, we decompose $f$ into its real and imaginary parts, which we then decompose into its nonnegative and negative parts. This is justified since the contribution of every part is finite by \eqref{Wiener_measure_pi_L_3}. In the last expression of \eqref{Wiener_measure_pi_L_1}, we emphasise that the integration on the left-hand side is over $\omega \in \Omega^{L,T}$, whereas the integration on the right-hand side is over $\tilde{\omega} \equiv \pi_L \omega \in \Omega^{\infty,T}$.
Since $f \circ \pi_L \colon (\R^d)^n \rightarrow \mathbb{C}$ is continuous, we can rewrite the right-hand side of \eqref{Wiener_measure_pi_L_1} using \eqref{W^T_infty} as
\begin{multline}
\label{Wiener_measure_pi_L_2}
\sum_{k \in \Z^d} \int_{\R^d} \dd x_1 \cdots \int_{\R^d} \dd x_n \, \psi^{\infty,t_1}(x_1-x)\, \psi^{\infty,t_2-t_1}(x_2-x_1)\, \cdots \psi^{\infty,T-t_n}(y+Lk-x_n)\, 
\\
\times
f(\pi_L x_1,\ldots, \pi_L x_n)
\\
=\sum_{k \in \Z^d} \sum_{k_1,\ldots,k_n \in \Z^d} \int_{\Lambda_L + Lk_1} \dd x_1 \cdots \int_{\Lambda_L+Lk_n} \dd x_n \,\psi^{\infty,t_1}(x_1-x)\, \psi^{\infty,t_2-t_1}(x_2-x_1) \cdots\\
\times
\cdots  \psi^{\infty,t_n-t_{n-1}} (x_n-x_{n-1})\,\psi^{\infty,T-t_n}(y+Lk-x_n)\,f(\pi_L x_1, \ldots, \pi_L x_n)
\\
=\sum_{k \in \Z^d} \sum_{k_1,\ldots,k_n \in \Z^d} \int_{\Lambda_L} \dd x_1 \cdots \int_{\Lambda_L} \dd x_n \,\psi^{\infty,t_1}(x_1+Lk_1-x)\, \psi^{\infty,t_2-t_1}(x_2+Lk_2-x_1-Lk_1) \cdots\\
\times
\cdots  \psi^{\infty,t_n-t_{n-1}} (x_n+Lk_n-x_{n-1}-Lk_{n-1})\,\psi^{\infty,T-t_n}(y+Lk-x_n-Lk_n)\,f(x_1, \ldots,x_n)
\,.
\end{multline}
Above, we wrote 
\begin{equation*}
\Lambda_L + Lk_j \equiv \{w+Lk_j\,, w \in \Lambda_L\}\,, \quad j=1,\ldots,n\,.
\end{equation*}
In \eqref{Wiener_measure_pi_L_2}, we first sum in $k$, and then in $k_n, k_{n-1}, \ldots, k_1$. Using \eqref{heat_kernel_L_rewritten}, we obtain 
\begin{multline}
\label{Wiener_measure_pi_L_3}
\eqref{Wiener_measure_pi_L_2}=\int_{\Lambda_L} \dd x_1\, \cdots \int_{\Lambda_L} \dd x_n\, \psi^{L,t_1}(x_1-x)\, 
\psi^{L,t_2-t_1}(x_2-x_1)\, \cdots\, \psi^{L,t_n-t_{n-1}}(x_n-x_{n-1})\, 
\\
\times 
\psi^{L,T-t_n}(y-x_n)\, f(x_1, \ldots, x_n)
=\int_{\Omega^{L,T}} \mathbb{W}^{L,T}(\dd \omega) \, f(\omega(t_1),\ldots,\omega(t_n))\,. 
\end{multline}
In \eqref{Wiener_measure_pi_L_3}, we used \eqref{W^T}. The claim follows from \eqref{Wiener_measure_pi_L_3} since $\mathbb{W}^T_{y,x}$ is characterised by its finite-dimensional distributions.
\end{proof}

\begin{proof}[Proof of Lemma \ref{Lemma_5.20*}]
We first prove claim (i). By \eqref{xi_bound}, \eqref{measure_mu_1_bound}, and \eqref{loop_interaction}, we have 
\begin{multline}
\label{Lemma_5.20*_1}
\int \mu^L(\dd \tilde{\omega})\,T(\tilde{\omega})^q\,|\xi^L(\omega,\tilde{\omega})|\,\mathbf{1}_{\mathscr{D}^L_c}(\tilde{\omega})
\leq \sum_{\tilde{T} \in \nu \N^*} z^{\tilde{T}}\,\tilde{T}^{q-1}\,\sum_{r \in \nu \N} \mathbf{1}
_{r<T(\omega)}\,\sum_{s \in \nu \N} \mathbf{1}_{s<\tilde{T}}\,\int_0^{\nu}\dd t\, 
\\
\left[\int_{\Lambda_L} \dd x\, \int \mathbb{W}^{L,\tilde{T}}_{x,x}(\dd \tilde{\omega}) \mathbf{1}_{\mathscr{D}^L_c}(\tilde{\omega})\,
v^L \left(\omega(t+r)-\tilde{\omega}(t+s)\right)
\right].
\end{multline}
We now estimate the expression in the square brackets in \eqref{Lemma_5.20*_1} for fixed $\tilde{T} \in \nu \N^*$, $t \in [0, \nu)$, $r \in \nu \N$ with $r<T(\omega)$, and $s \in \nu \N$ with $s<\tilde{T}$. To this end, for a given $y \in \Lambda_L$, we recall \eqref{concatenation_paths} and \eqref{D^L_c_definition} to write
\begin{multline}
\label{Lemma_5.20*_2}
\int_{\Lambda_L} \dd x\, \int \mathbb{W}^{L,\tilde{T}}_{x,x}(\dd \tilde{\omega}) \mathbf{1}_{\mathscr{D}^L_c}(\tilde{\omega})\,
v^L \left(y-\tilde{\omega}(t+s)\right)
\\
=\int_{\Lambda_L} \dd x\,\int_{\Lambda_L} \dd x'\,\int \mathbb{W}^{L,t+s}_{x',x}(\dd \tilde{\omega}_1) \int \mathbb{W}^{L,\tilde{T}-(t+s)}_{x,x'}(\dd \tilde{\omega}_2)\,\mathbf{1}_{\mathscr{D}^L_c}(\tilde{\omega}_1 \oplus \tilde{\omega}_2)\,v^L(y-x')
\\
=\int_{\Lambda_L} \dd x'\,\int_{\Lambda_L} \dd x\,\int \mathbb{W}^{L,\tilde{T}-(t+s)}_{x,x'}(\dd \tilde{\omega}_2) \int \mathbb{W}^{L,t+s}_{x',x}(\dd \tilde{\omega}_1)\,\mathbf{1}_{\mathscr{D}^L_c}(\tilde{\omega}_2 \oplus \tilde{\omega}_1)\,v^L(y-x')
\\
=\int_{\Lambda_L}\dd x'\, v^L(y-x')\,\left\{\int \mathbb{W}^{L,\tilde{T}}_{x',x'}(\dd \tilde{\omega})\,\mathbf{1}_{\mathscr{D}^L_c}(\tilde{\omega})\right\}.
\end{multline}
Let us now estimate the quantity in curly brackets in \eqref{Lemma_5.20*_2} for fixed $x' \in \Lambda_L$.
To this end, we recall \eqref{D^L_c_definition} and note that, by the triangle inequality, we have
\begin{equation}
\label{D^L_c_inclusion_1}
\mathscr{D}^L_c \subset \widehat{\mathscr{D}}^L_c(x')\,,
\end{equation}
where for $a \in \Lambda_L$ we define
\begin{equation}
\label{hat_D^L_c(z)_definition}
\widehat{\mathscr{D}}^L_c(a)\coloneqq\left\{\omega \in \bigcup_{T>0} \Omega^{L,T}\,,\,|\omega(s)-a|_L \geq \frac{cL}{2} \,\,\,\text{for some} \,\,\,s\in [0,T(\omega)]\right\}.
\end{equation}
Furthermore, we define 
\begin{equation}
\label{D^L_c(z)_definition}
\mathscr{D}^L_c(a)\coloneqq\left\{\omega \in \bigcup_{T>0} \Omega^{\infty,T}\,,\,|\omega(s)-a| \geq \frac{cL}{2} \,\,\,\text{for some} \,\,\,s\in [0,T(\omega)]\right\}.
\end{equation}
Recalling \eqref{periodic_Euclidean_norm} and \eqref{pi_L}, it follows from \eqref{hat_D^L_c(z)_definition}--\eqref{D^L_c(z)_definition} that 
\begin{equation}
\label{D^L_c_inclusion_2}
\widehat{\mathscr{D}}^L_c(a) \subset \pi_L \left( \mathscr{D}^L_c(a) \right).
\end{equation}
Combining Lemma \ref{Wiener_measure_pi_L}, \eqref{D^L_c_inclusion_1}, \eqref{D^L_c_inclusion_2}, and recalling \eqref{heat_kernel_integral_infty}, we deduce that 
\begin{multline}
\label{Lemma_5.20*_3}
\int \mathbb{W}^{L,\tilde{T}}_{x',x'}(\dd \tilde{\omega})\,\mathbf{1}_{\mathscr{D}^L_c}(\tilde{\omega}) \leq 
\int \mathbb{W}^{\infty,\tilde{T}}_{x',x'}(\dd \tilde{\omega})\,\mathbf{1}_{\mathscr{D}^L_c(x')}(\tilde{\omega})+
\sum_{k \in \Z^d \setminus \{0\}} \int \mathbb{W}^{\infty,\tilde{T}}_{x'+Lk,x'}(\dd \tilde \omega)\,\mathbf{1}_{\mathscr{D}^L_c(x')}(\tilde{\omega})
\\
\leq 
\int \mathbb{W}^{\infty,\tilde{T}}_{x',x'}(\dd \tilde{\omega})\,\mathbf{1}_{\mathscr{D}^L_c(x')}(\tilde{\omega})+ \sum_{k \in \Z^d \setminus \{0\}} \psi^{\infty,\tilde{T}}
(Lk)\,.
\end{multline}
For \eqref{Lemma_5.20*_3}, we used $\mathbf{1}_{\mathscr{D}^L_c(x')}(\tilde{\omega}) \leq 1$ in the sum over $k \in \Z^{d} \setminus \{0\}$.

By \eqref{D^L_c(z)_definition}, \eqref{heat_kernel_integral_infty}, Definition \ref{epsilon_L}, the dominated convergence theorem, and  \eqref{heat_kernel_infty}, we estimate the first term in \eqref{Lemma_5.20*_3} as
\begin{equation}
\label{Lemma_5.20*_3_1}
\int \mathbb{W}^{\infty,\tilde{T}}_{x',x'}(\dd \tilde{\omega})\,\mathbf{1}_{\mathscr{D}^L_c(x')}(\tilde{\omega})=\int \mathbb{W}^{\infty,\tilde{T}}_{0,0}(\dd \tilde{\omega})\,\mathbf{1}_{\mathscr{D}^L_c(0)}(\tilde{\omega})=o_L 
\left(\psi^{\infty,\tilde{T}}(0)\right) \lesssim_d o_L \left( \frac{1}{\tilde{T}^{d/2}}\right).
\end{equation}
We recall \eqref{heat_kernel_infty} and estimate the second term in \eqref{Lemma_5.20*_3} using
\begin{equation}
\label{Lemma_5.20*_3_2}
\sum_{k \in \Z^d \setminus \{0\}} \psi^{\infty,\tilde{T}}(Lk) \lesssim \sum_{k \in \Z^d \setminus \{0\}} \frac{1}{\tilde{T}^{d/2}}\,\ee^{-\frac{|Lk|^2}{2 \tilde{T}}} \lesssim \sum_{k \in \Z^d \setminus \{0\}} \frac{1}{\tilde{T}^{d/2}}\,\frac{\tilde{T}^{\frac{d+2}{2}}}{|Lk|^{d+2}} \lesssim_d \frac{\tilde{T}}{L^{d+2}}\,.
\end{equation}
Using \eqref{Lemma_5.20*_3}--\eqref{Lemma_5.20*_3_2}, followed by Lemma \ref{v^L_lemma} (ii), we obtain that 
\begin{equation}
\label{Lemma_5.20*_4A}
\eqref{Lemma_5.20*_2} \lesssim_d o_L \left(\left(\frac{1}{\tilde{T}^{d/2}}+\tilde{T}\right)\,\|v\|_{L^1(\R^d)} \right).
\end{equation}
By using \eqref{Lemma_5.20*_4A}, it follows that
\begin{multline*}
\eqref{Lemma_5.20*_1} \lesssim_{d} o_L \left( 
\sum_{\tilde{T} \in \nu \N^*} z^{\tilde{T}}\, \tilde{T}^{q-1}\, \sum_{r \in \nu \N} 
\mathbf{1}_{r<T(\omega)}\,\sum_{s \in \nu \N} \mathbf{1}_{s<\tilde{T}} \, \int_0^{\nu}\, \dd t\, \left(\frac{1}{\tilde{T}^{d/2}}+\tilde{T}\right)\, \|v\|_{L^1(\R^d)}
\right)
\\
=o_L \left( 
T(\omega)\, \|v\|_{L^1(\R^d)}\,\sum_{\tilde{T} \in \nu \N^*} z^{\tilde{T}}\, \tilde{T}^{q}\,\nu^{-1}\,\left(\frac{1}{\tilde{T}^{d/2}}+\tilde{T}\right) \right)
\lesssim_{\zeta,q,d} o_L \left(
\nu^{q-1}\,
T(\omega)\,\|v\|_{L^1(\R^d)}
\right),
\end{multline*}
where, in the line above, we used \eqref{5.27_remark_bound_proof} from Remark \ref{5.27_remark} to estimate 
\begin{equation*}
\sum_{\tilde{T} \in \nu \N^*} z^{\tilde{T}}\,\tilde{T}^{q-\frac{d}{2}} \, \nu^{-1} \lesssim_{\zeta,q} \nu^{q-1}\,, \qquad
\sum_{\tilde{T} \in \nu \N^*} z^{\tilde{T}} \, \tilde{T}^{q+1} \, \nu^{-1} \lesssim_{\zeta,q} \nu^{q+\frac{d}{2}}\,.
\end{equation*}
Claim (i) now follows.

We now prove claim (ii). By \eqref{xi_bound}, \eqref{measure_mu_2_bound}, \eqref{loop_interaction}, Lemma \ref{v^L_lemma}  (iii), and \eqref{D^L_c_inclusion_1},  we have 

\begin{multline}
\label{Lemma_5.20*_4}
\int_{\Lambda_{L_0}} \dd y\,\int \widehat{\mu}^L_{y,x}(\dd \tilde{\omega}) \,T(\tilde{\omega})^q\,|\xi^L(\omega,\tilde{\omega})|\,\mathbf{1}_{\mathscr{D}^L_c}(\tilde{\omega}) 
\leq \frac{1}{\nu} \sum_{\tilde{T} \in \nu \N^*} z^{\tilde{T}}\,\tilde{T}^q\,\sum_{r \in \nu \N} \mathbf{1}_{r<T(\omega)} \sum_{s \in \nu \N} \mathbf{1}_{s<\tilde{T}} \int_0^{\nu} \dd t\, 
\\
\times \int_{\Lambda_{L_0}} \dd y\,\int \mathbb{W}^{L,\tilde{T}}_{y,x}(\dd \tilde{\omega})\, v^L \left(\omega(t+r)-\tilde{\omega}(t+s) \right)\,\mathbf{1}_{\mathscr{D}^L_c}(\tilde{\omega})
\\
\lesssim \frac{1}{\nu}\,\sum_{\tilde{T} \in \nu \N^*} z^{\tilde{T}}\,\tilde{T}^q\,\sum_{r \in \nu \N} \mathbf{1}_{r<T(\omega)} \sum_{s \in \nu \N} \mathbf{1}_{s<\tilde{T}} \int_0^{\nu} \dd t\, \left[\int_{\Lambda_{L_0}} \dd y\,\int \mathbb{W}^{L,\tilde{T}}_{y,x}(\dd \tilde{\omega})\,\mathbf{1}_{\widehat{\mathscr{D}}^L_c(x)}(\tilde{\omega})\right].
\end{multline}
For fixed $\tilde{T} \in \nu \N^*$, $r \in \nu \N$ with $r<T(\omega)$, and $s \in \nu \N$ with $s<\tilde{T}$, we now estimate the expression in square brackets in \eqref{Lemma_5.20*_4}.
Using Lemma \ref{Wiener_measure_pi_L}, \eqref{D^L_c_inclusion_2}, and arguing as in \eqref{Lemma_5.20*_3}, we have
\begin{multline}
\label{Lemma_5.20*_5}
\int_{\Lambda_{L_0}} \dd y\,\int \mathbb{W}^{L,\tilde{T}}_{y,x}(\dd \tilde{\omega})\,\mathbf{1}_{\widehat{\mathscr{D}}^L_c(x)}(\tilde{\omega}) \leq 
\int_{\Lambda_{L_0}} \dd y\,\int \mathbb{W}^{\infty,\tilde{T}}_{y,x}(\dd \tilde{\omega})\,\mathbf{1}_{\mathscr{D}^L_c(x)}(\tilde{\omega})
\\
+\sum_{k \in \Z^d \setminus \{0\}}\int_{\Lambda_{L_0}} \dd y\,\int \mathbb{W}^{\infty,\tilde{T}}_{y+Lk,x}(\dd \tilde{\omega})\,.
\end{multline}
We note that 
\begin{equation}
\label{Lemma_5.20*_5_1}
\int_{\Lambda_{L_0}} \dd y\,\int \mathbb{W}^{\infty,\tilde{T}}_{y,x}(\dd \tilde{\omega})\,\mathbf{1}_{\mathscr{D}^L_c(x)}(\tilde{\omega})=o_L(1)\,,
\end{equation}
uniformly in $x \in \Lambda_{L_0}$.
In order to see \eqref{Lemma_5.20*_5_1}, we first use translation invariance of the Wiener measure on $\R^d$ and \eqref{D^L_c(z)_definition} to write 
\begin{multline*}
\int_{\Lambda_{L_0}} \dd y\,\int \mathbb{W}^{\infty,\tilde{T}}_{y,x}(\dd \tilde{\omega})\,\mathbf{1}_{\mathscr{D}^L_c(x)}(\tilde{\omega})=\int_{\Lambda_{L_0}} \dd y\,\int \mathbb{W}^{\infty,\tilde{T}}_{y-x,0}(\dd \tilde{\omega})\,\mathbf{1}_{\mathscr{D}^L_c(0)}(\tilde{\omega}) 
\\
\leq \int_{\R^d} \dd y\,\int \mathbb{W}^{\infty,\tilde{T}}_{y,0}(\dd \tilde{\omega})\,\mathbf{1}_{\mathscr{D}^L_c(0)}(\tilde{\omega})\,.
\end{multline*}
By \eqref{heat_kernel_integral_infty}, we have that
\begin{equation*}
\int_{\R^d}\dd y\,\int \mathbb{W}^{\infty,\tilde{T}}_{y,0}(\dd \tilde{\omega})
=
\int_{\R^d} \dd y \,\psi^{\infty,\tilde{T}}(y) = 1\,.
\end{equation*}
Finally, we apply the dominated convergence theorem to deduce that \eqref{Lemma_5.20*_5_1} holds uniformly in $x \in \Lambda_{L_0}$.

We now estimate the second term in \eqref{Lemma_5.20*_5}.
Let us observe that by \eqref{L_0}, it follows that 
\begin{equation}
\label{x-y-Lk}
|y+Lk-x| \geq |Lk|-|x|-|y| \gtrsim L\,.
\end{equation}
Using \eqref{heat_kernel_integral_infty} and \eqref{x-y-Lk}, followed by \eqref{heat_kernel_infty}, the second term in \eqref{Lemma_5.20*_5} is
\begin{multline}
\label{Lemma_5.20*_5_2}
\sum_{k \in \Z^d \setminus \{0\}}\int_{\Lambda_{L_0}} \dd y\,\psi^{\infty,\tilde{T}}(y+Lk-x) 
\leq  \int_{\R^d} \dd y \,\mathbf{1}_{|y| \gtrsim L}\, \psi^{\infty,\tilde{T}}(y)
\lesssim \int_{\R^d} \dd y \,\mathbf{1}_{|y| \gtrsim L}\, \frac{1}{\tilde{T}^{d/2}}\,\ee^{-\frac{|y|^2}{2 \tilde{T}}}
\\ \lesssim \int_{\R^d} \dd y\,\mathbf{1}_{|y| \gtrsim \frac{L}{\sqrt{\tilde{T}}}} \,\ee^{-|y|^2}
\lesssim_d \int_0^{\infty} \dd r\,\mathbf{1}_{r \gtrsim \frac{L}{\sqrt{\tilde{T}}}}\,r^{d-1}\,\ee^{-r^2}
\leq \frac{\tilde{T}}{L^2}\,\int_0^{\infty} \dd r\,\mathbf{1}_{r \gtrsim \frac{L}{\sqrt{\tilde{T}}}}\,r^{d+1}\,\ee^{-r^2}
\lesssim_{d} \frac{\tilde{T}}{L^2}\,,
\end{multline}
uniformly in $x \in \Lambda_{L_0}$.
From \eqref{Lemma_5.20*_5_1} and \eqref{Lemma_5.20*_5_2}, it follows that
\begin{equation}
\label{Lemma_5.20*_5_3}
\eqref{Lemma_5.20*_5} \lesssim_d o_L(1+\tilde{T})\,.
\end{equation}
Using \eqref{Lemma_5.20*_5_3} in \eqref{Lemma_5.20*_4}, and arguing as in \eqref{Lemma_5.16*_7}, we deduce claim (ii).

Let us now prove claim (iii). As in \eqref{Triangle_1}, we have
\begin{equation}
\label{Triangle_1_*}
\int \mu^L(\dd \tilde{\omega})\,T(\tilde{\omega})^q\,\mathbf{1}_{\mathscr{D}^L_c}(\tilde{\omega})  \leq \nu \sum_{\tilde{T} \in \nu \N^*} z^{\tilde{T}}\,\tilde{T}^{q-1}\,\int_{\Lambda_L}
\dd x\, \int \mathbb{W}^{L,\tilde{T}}_{x,x}(\dd \tilde{\omega}) \,\mathbf{1}_{\mathscr{D}^L_c}(\tilde{\omega})\,.
\end{equation}
We now apply \eqref{Lemma_5.20*_3}--\eqref{Lemma_5.20*_3_2} in \eqref{Triangle_1_*} and deduce the claim by arguing as in \eqref{Triangle_1}--\eqref{Triangle_2}.

To conclude, we prove claim (iv). To this end, we first note that by \eqref{measure_mu_2_bound} and \eqref{D^L_c_inclusion_1}, we have
\begin{equation}
\label{Lemma_5.20*_7}
\int_{\Lambda_{L_0}} \dd y \int \widehat{\mu}^L_{y,x} (\dd \tilde{\omega})\, T(\tilde{\omega})^q\, \mathbf{1}_{\mathscr{D}^L_c}(\tilde{\omega})
\leq \sum_{\tilde{T} \in \nu \N^*} z^{\tilde{T}}\,\tilde{T}^q\, \left[\int_{\Lambda_{L_0}} \dd y\, \int \mathbb{W}^{L,\tilde{T}}_{y,x}(\dd \tilde{\omega})\,\mathbf{1}_{\widehat{\mathscr{D}}^L_c(x)}(\tilde{\omega})\right].
\end{equation}
We estimate the quantity in square brackets in \eqref{Lemma_5.20*_7} by using \eqref{Lemma_5.20*_5} and \eqref{Lemma_5.20*_5_3}. Hence
\begin{equation}
\label{Lemma_5.20*_8}
\eqref{Lemma_5.20*_7} \lesssim_d o_L \left(\sum_{\tilde{T} \in \nu \N^*} z^{\tilde{T}}\,\tilde{T}^q (1+\tilde{T})\right) \lesssim_{\zeta,q} o_L(\nu^q)\,.
\end{equation}
For the last estimate in \eqref{Lemma_5.20*_8}, we argued as in \eqref{Proof_of_iv}. We now deduce claim (iv).

\end{proof}

\begin{remark}
\label{quantitative_estimate_remark}
One can obtain a more precise quantitative estimate on \eqref{Lemma_5.20*_3_1} and \eqref{Lemma_5.20*_5_1} in terms of $L$. More precisely, to estimate \eqref{Lemma_5.20*_3_1}, we first note that $\mathscr{D}^L_c(0)$ in \eqref{D^L_c(z)_definition} is a tail event of the random variable $\sup_{t \in [0,\widetilde{T}]}|B_t|$, where $(B_t)_{t \in [0,\tilde{T}]}$ is a $d$-dimensional Brownian bridge with initial and terminal site $0$ with time horizon $[0,\tilde{T}]$. If $d=1$, then we may write the Brownian bridge measure as
\begin{equation}\label{Brownian_bridge_bound}
    \mathbb{W}_{0,0}^{\infty,\tilde{T}}(\mathscr{D}^L_c(0))=\frac{\mathbb{P}_0(\sup_{t \in [0,\widetilde{T}]}|W_t|\ge cL/2 , W_{\tilde{T}}\in \dd y)}{\dd y}\Bigg |_{y=0},
\end{equation}
where $(W_t)_{t \ge 0}$ is a standard Brownian motion. We can now obtain an upper bound for \eqref{Brownian_bridge_bound} using \cite[(1.15.8), Page 180]{BS02}. If $d>1$, we may use dimension reduction arguments, since the components of the $d$-dimensional vector $B_t$ are independent, one-dimensional Brownian bridges. Hence, one may obtain analogous tail bounds, up to dimension-dependent constants. We use analogous arguments to estimate \eqref{Lemma_5.20*_5_1}. However, our analysis does not directly rely on these improved estimates, and the above qualitative estimates suffice for our purposes.
\end{remark}

\begin{proof}[Proof of Lemma \ref{Lemma_5.21*}]
The proof of Lemma \ref{Lemma_5.21*} is analogous to that of Lemma \ref{Lemma_5.16*} (i)--(ii). The only difference is that we do not use Lemma \ref{v^L_lemma} (ii). Instead, we keep the factor $\|v^L_c\|_{L^1(\Lambda_L)}$ and obtain the estimates stated above.
\end{proof}

\begin{remark}
\label{long_far_away_path_comment}
By the same proofs, we can modify the results of Lemmas \ref{Lemma_5.20*} and  \ref{Lemma_5.21*} as follows. Let $1 \leq \tilde{L} < L = \infty$ be given. Here, we take $|\cdot|_{\infty} \equiv |\cdot|$ to be the Euclidean norm on $\R^d$, and for the interaction potential, we take $v^{\infty}\equiv v$. In Lemma \ref{Lemma_5.20*}, we can replace the indicator $\mathbf{1}_{\mathscr{D}^L_c}(\tilde{\omega})$ by 
$\mathbf{1}_{\mathscr{D}^{\infty,\tilde{L}}_c}(\tilde{\omega})$, where 
\begin{equation}
\label{D^{L,tilde_L}_c}
\mathscr{D}^{\infty,\tilde{L}}_c\coloneqq\left\{\omega \in \bigcup_{T>0} \Omega^{\infty,T}\,,\,|\omega(t)-\omega(s)| \geq c\tilde{L} \,\,\,\text{for some} \,\,\,s,t \in [0,T(\omega)]\right\}.
\end{equation}
Finally, we define $\mu^{\infty}(\dd \omega)$ and $\widehat{\mu}_{y,x}^{\infty}(\dd \omega)$ by setting $L=\infty$ in \eqref{measure_mu_1} and \eqref{measure_mu_2}, respectively.
Then, all of the upper bounds in Lemma \ref{Lemma_5.20*} change from $o_L(\cdots)$ to $o_{\tilde{L}}(\cdots)$. In other words, we have the following estimates.

\begin{itemize}
\item[(i)] 
$\int \mu^{\infty}(\dd \tilde{\omega})\,T(\tilde{\omega})^q\,|\xi^{\infty}(\omega,\tilde{\omega})|\,\mathbf{1}_{\mathscr{D}^{\infty,\tilde{L}}_c}(\tilde{\omega}) \lesssim_{\zeta,q,d} o_{\tilde{L}} \left(\nu^{q-1}\,
T(\omega)\,\|v\|_{L^1(\R^d)}\right)$.
\item[(ii)] 
$\int_{\R^d} \dd y\,\int \widehat{\mu}^{\infty}_{y,x}(\dd \tilde{\omega}) \,T(\tilde{\omega})^q\,|\xi^{\infty}(\omega,\tilde{\omega})|\,\mathbf{1}_{\mathscr{D}^{\infty,\tilde{L}}_c}(\tilde{\omega}) \lesssim_{\zeta,q,d} o_{\tilde{L}} \left(\nu^{q-1}\,T(\omega)\,\right)$.
\item[(iii)] 
Instead of Lemma \ref{Lemma_5.20*} (iii), we have the estimate
\begin{equation}
\label{Remark_5.14_iii}
\frac{1}{|\Lambda_{\tilde{L}}|}\,\int_{\Lambda_{\tilde{L}}} \dd \tilde{x}\,\nu \sum_{\tilde{T} \in \nu \N^*} \frac{z^{\tilde{T}}}{\tilde{T}}\, \int \mathbb{W}^{\infty,\tilde{T}}_{\tilde{x},\tilde{x}}(\dd \tilde{\omega})\, 
\mathrm{e}^{-\mathcal{V}^{\nu,\infty}(\tilde{\omega})}\,\mathbf{1}_{\mathscr{D}^{\infty,\tilde{L}}_c}(\tilde{\omega})\,\tilde{T}^q  \lesssim_{\zeta,q,d} o_{\tilde{L}}\left(\nu^q\right).
\end{equation}
\item[(iv)] 
$\int_{\R^d} \dd y\, \int \widehat{\mu}^{\infty}_{y,x} (\dd \tilde{\omega}) \, T(\tilde{\omega})^q\, \mathbf{1}_{\mathscr{D}^{\infty,\tilde{L}}_c}(\tilde{\omega}) \lesssim_{\zeta,q,d} o_{\tilde{L}}(\nu^q)$. $\hfill\refstepcounter{equation}(\theequation)\label{Remark_5.14_iv}$
\end{itemize}
Note that for \eqref{Remark_5.14_iii} in part (iii) above, we are integrating over the finite volume $\Lambda_{\tilde{L}}$ in $\tilde{x}$.

Suppose that instead of \eqref{v^{L}_{c}_definition}, we consider
\begin{equation}
\label{v^{L,tilde_L}_{c}_definition}
v^{\infty,\tilde{L}}_{c}(x)\coloneqq v(x)\,\mathbf{1}_{|x| \geq c\tilde{L}}\,, 
\end{equation}
and instead of \eqref{xi^L_c} we consider 
\begin{equation*}
\xi^{\infty,\tilde{L}}_{c}(\omega,\tilde{\omega}) \coloneqq  \exp\bigl(-\mathcal{V}^{\nu,\infty,\tilde{L}}_{c}(\omega,\tilde \omega)\bigr) - 1\,,
\end{equation*}
where $\mathcal{V}^{\nu,\infty,\tilde{L}}_{c}(\omega,\tilde \omega)$ denotes \eqref{loop_interaction} with interaction potential \eqref{v^{L,tilde_L}_{c}_definition}.
Then, the factors of $\|v^L_c\|_{L^1(\Lambda_L)}$ in the upper bounds in Lemma \ref{Lemma_5.21*} change to $\|v^{\infty,\tilde{L}}_c\|_{L^1(\Lambda_L)}$. In other words, instead of Lemma \ref{Lemma_5.21*}, we have the following estimates.
\begin{itemize}
\item[(i')] $\int \mu^{\infty}(\dd \tilde \omega) \,T(\tilde \omega)^q \,|\xi^{\infty,\tilde{L}}_c(\omega,\tilde{\omega})| \lesssim_{\zeta,q,d}  \nu^{q-1}\,T(\omega)\, \|v^{\infty,\tilde{L}}_c\|_{L^1(\R^d)} =o_{\tilde{L}}(\nu^{q-1}\, T(\omega))$.
\item[(ii')] $\int_{\R^d} \dd y\, \int \widehat{\mu}^{\infty}_{y,x}(\dd \tilde{\omega})\, T(\tilde{\omega})^q\,|\xi^{\infty,\tilde{L}}_c(\omega,\tilde{\omega})| \lesssim_{\zeta,q} \nu^{q-1}\, T(\omega)\, \|v^{\infty,\tilde{L}}_c\|_{L^{\infty}(\R^d)}= o_{\tilde{L}}(\nu^{q-1}\, T(\omega))$.
\end{itemize}
Here, we used \eqref{v^{L,tilde_L}_{c}_definition} to deduce that the above quantities are indeed of the magnitude $o_{\tilde{L}}(\nu^{q-1}\, T(\omega))$.
\end{remark}

\begin{remark}
Let us note that, in the proof of Lemmas \ref{Lemma_5.16*}, \ref{Lemma_5.20*}, and \ref{Lemma_5.21*}, we only used $\kappa>0$ and not the full assumption $\kappa>1$ from \eqref{zeta_infinite_volume}--\eqref{kappa_definition}. 
\end{remark}

In what follows, we prove Theorems \ref{free_energy_infinite_volume} and \ref{Infinite_volume_Theorem_2}. The proofs follow similar arguments, based on the ideas we have developed in this section so far. We first prove Theorem  \ref{Infinite_volume_Theorem_2} in Section \ref{Proof_of_Theorem_{Infinite_volume_Theorem_2}}, as its proof is slightly simpler. We then prove Theorem \ref{free_energy_infinite_volume} in Section \ref{Proof_of_Theorem_free_energy_infinite_volume}.

\subsubsection{Analysis of reduced density matrices in the infinite volume. Proof of Theorem \ref{Infinite_volume_Theorem_2}}
\label{Proof_of_Theorem_{Infinite_volume_Theorem_2}}

Let us now prove Theorem \ref{Infinite_volume_Theorem_2}. 
Before proceeding with the proof, we note an explicit representation of reduced density matrices in the infinite volume.
To this end, we use Lemma \ref{Lemma_5.16*} (i)--(ii) and (iv), as well as 
the integration algorithm from Lemma \ref{integration_algorithm}, and argue as in the proof of Proposition \ref{Proposition_5.3} (ii) adapted for $L=\infty$, to deduce the following.
\begin{corollary}[Reduced density matrices in the infinite volume]
\label{operator_norm_bound_corollary}
 Let $p \in \N^*$ and be given. Under the assumption \eqref{smallness_v_kappa}, we can take $L=\infty$ in \eqref{gamma_p_cluster_expansion_claim} and \eqref{gamma_p_cluster_expansion_absolute_value} to obtain operators $\Gamma^{\nu,\zeta,\infty}_p$ and $|\Gamma|^{\nu,\zeta,\infty}_p$ acting in the variables $(\R^{d})^{\otimes p}$  with kernels 
\begin{equation}
\label{kernel_gamma_infty}
\Gamma^{\nu,\zeta,\infty}_{p}({\vec x,\vec y}) \coloneqq \sum_{\pi \in S_p} \int \widehat \mu^{\infty}_{y_{\pi(1)},x_{1}}(\dd \omega_{1}) \cdots \widehat \mu^{\infty}_{y_{\pi(p)},x_{p}}(\dd \omega_{p}) \sum_{\Pi \in \mathfrak{P}_{p}} \prod_{\lambda \in \Pi} X^{\infty}((\omega_{i})_{i \in \lambda})
\end{equation}
and
\begin{equation}
\label{gamma_p_cluster_expansion_absolute_value_infty}
|\Gamma|^{\nu,\zeta,\infty}_{p}({\vec x,\vec y}) \coloneqq \sum_{\pi \in S_p} \int \widehat \mu^{L}_{y_{\pi(1)},x_{1}}(\dd \omega_{1}) \cdots \widehat \mu^{\infty}_{y_{\pi(p)},x_{p}}(\dd \omega_{p}) \sum_{\Pi \in \mathfrak{P}_{p}} \prod_{\lambda \in \Pi} |X|^{\infty}((\omega_{i})_{i \in \lambda})\,,
\end{equation}
In \textup{\eqref{kernel_gamma_infty}--\eqref{gamma_p_cluster_expansion_absolute_value_infty}}, we take $X^{\infty}, |X|^{\infty}$ as in \textup{\eqref{Ursell_function_representation}--\eqref{Ursell_function_representation_absolute_value}} with $L=\infty$ respectively.
Then, we have 
\begin{equation}
\label{gamma_p_cluster_expansion_absolute_value_infty_bound}
\left\||\Gamma|^{\nu,\zeta,\infty}_p\right\|_{L^{\infty}_{\vec{x} } L^1_{\vec{y}}}=\mathcal {O}_{d,\zeta,p,v}(1)
\end{equation}
and hence 
\begin{equation*}
\left\|\Gamma^{\nu,\zeta,\infty}_p\right\|_{L^{\infty}_{\vec{x} } L^1_{\vec{y}}}=\mathcal {O}_{d,\zeta,p,v}(1)
\end{equation*}
In particular, the series in \eqref{kernel_gamma_infty} is absolutely summable in the $L^{\infty}_{\vec{x}} L^1_{\vec{y}}$ norm in the sense of \eqref{gamma_p_cluster_expansion_absolute_value_infty_bound} above.
\end{corollary}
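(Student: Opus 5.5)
The plan is to repeat the proof of Proposition \ref{Proposition_5.3} (ii) with $L=\infty$. The only finite-volume inputs used there are Lemma \ref{Lemma_5.16*}, the tree bound of Lemma \ref{tree_bound_estimate}, and the integration algorithm of Lemma \ref{integration_algorithm}. So the first step is to establish the infinite-volume versions of Lemma \ref{Lemma_5.16*} (i), (ii), (iv), with $\Lambda_L$ replaced by $\R^d$, $\psi^{L,t}$ by $\psi^{\infty,t}$, and $v^L$ by $v$:
\begin{itemize}
\item For (i), the computation \eqref{Lemma_5.16*_2} goes through verbatim, giving $\|v\|_{L^1(\R^d)}\,\psi^{\infty,\tilde T}(0)$. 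The bound $\psi^{\infty,\tilde T}(0)\lesssim \tilde T^{-d/2}$ is even better than Lemma \ref{heat_kernel_estimates} (i), since there is no $L^{-d}$ term. Combined with \eqref{fugacity_condition}, \eqref{Lemma_5.16*_3_B} and Lemma \ref{polylogarithm}, this gives the same estimate.
\item For (ii) and (iv), one only needs $\int_{\R^d}\dd y\,\psi^{\infty,t}(y-x)=1$ together with $\|v\|_{L^\infty(\R^d)}<\infty$, which follows from \eqref{v_assumption_iii}.
\end{itemize}
Note that (iii) cannot hold uniformly in infinite volume, since it grows like $|\Lambda_L|$. However, it is not needed for density matrices: in every cluster the root is one of the open paths.

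Second, the tree bound, Lemma \ref{tree_bound_estimate}, is purely combinatorial and uses only $\xi^\infty\in[-1,0]$, which follows from $v\ge0$. Likewise, Lemma \ref{integration_algorithm} is an induction that uses only the vertex bounds. Both therefore hold with $L=\infty$.

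Next I would rerun \eqref{gamma_lambda}--\eqref{gamma_lambda_3} for each block $\lambda$ of a partition $\Pi$:
\begin{itemize}
\item Bound $|X|^\infty$ by the sum over trees.
\item Apply the integration algorithm with $\mathbf{O}=\lambda\setminus\{a_1\}$ and $\mathbf{C}$ the closed loops, rooted at $\omega_{a_1}$.
\item Integrate the root using (iv), which produces $\nu^{\delta_1}\,\delta_1!/\kappa^{\delta_1+1}$, so the powers of $\nu$ cancel as before.
\item Count trees with Prüfer's formula \eqref{number_of_labeled_trees_Prufer}.
\end{itemize}
Under \eqref{smallness_v_kappa} this yields a geometric series. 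Taking the product over blocks and summing over $\Pi\in\mathfrak P_p$ and $\pi\in S_p$ gives \eqref{gamma_p_cluster_expansion_absolute_value_infty_bound}, uniformly in $\vec x\in(\R^d)^p$, after integrating in $\vec y$. Since $|X^\infty|\le|X|^\infty$, as in \eqref{X^L_absolute_value_bound}, the bound for $\Gamma^{\nu,\zeta,\infty}_p$ and the absolute summability follow.

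The main obstacle I expect is well-definedness, i.e.\ that the series defining $X^\infty((\omega_i)_{i\in\lambda})$ converges absolutely for fixed open paths. This must be checked without appeal to \cite{Ueltschi_2004}, because $\mu^\infty$ has infinite total mass. The tree bound settles it: every closed loop in a connected cluster is attached through a $\xi^\infty$ factor, so by (i) its integral against $\mu^\infty$ is finite. This is exactly the estimate that makes the series converge, pointwise in the $\omega_i$ and then after the $\widehat\mu^\infty$ integrations.
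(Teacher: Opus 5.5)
Your proposal is correct and follows the paper's own route. The paper justifies the corollary in one line: it invokes Lemma~\ref{Lemma_5.16*} (i), (ii), (iv) at $L=\infty$ together with the integration algorithm of Lemma~\ref{integration_algorithm}, and then repeats the proof of Proposition~\ref{Proposition_5.3} (ii) with the root always an open path, so (iii) is never used. Your extra remark is also accurate: absolute convergence of the series defining $X^{\infty}$ has to come from the tree bound rather than from \cite{Ueltschi_2004}, since $\mu^\infty$ has infinite total mass, and this makes explicit something the paper leaves implicit.
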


We now have all of the tools to prove Theorems \ref{free_energy_infinite_volume} and \ref{Infinite_volume_Theorem_2}. Before proceeding with the proof, let us first outline the main challenges in the analysis.
Our starting point is the Ginibre representation \eqref{gamma_p_cluster_expansion_claim}. When studying the infinite volume limit, it is important to use that $v^L$ converges to $v$ in $L^1(\Lambda)$ and $L^\infty(\Lambda)$; see Lemma \ref{v^L_lemma} (iv)--(v). Furthermore, Brownian motion on $\Lambda_L$ is subject to periodic boundary conditions, while Brownian motion on $\R^{d}$ is not. To reconcile this difference, we limit our analysis to a smaller box in the interior of $\Lambda_{L}$, which allows us to ignore the effects of boundary conditions. This reduction is justified by the negligible contribution of long and far-away paths in the Ginibre representation, and is done in \eqref{A^{L}_B^{L}}--\eqref{(ii)_star_proof1}, based on Lemmas \ref{Lemma_5.20*} and \ref{Lemma_5.21*} above.

\begin{proof}[Proof of Theorem \ref{Infinite_volume_Theorem_2}]
Let us first show that $\Gamma_{p}^{\nu,\zeta,L}$ converges to $\Gamma_{p}^{\nu,\zeta,\infty}$
with respect to \eqref{L_0_seminorm} uniformly in $\nu \in (0,\nu_0]$. Here we recall \eqref{nu_0} from Assumption \ref{Choice_of_z}, and \eqref{L_0}.
We work with the explicit representations of $\Gamma_{p}^{\nu,\zeta,L}$ and $\Gamma_{p}^{\nu,\zeta,\infty}$ in \eqref{gamma_p_cluster_expansion_claim}.

Recalling \eqref{L_0_seminorm}, let us now estimate the quantity 
\begin{equation}
\label{Gamma_difference}
\|\Gamma_{p}^{\nu,\zeta,L} - \Gamma_{p}^{\nu,\zeta,\infty}\|_{L_0,p}\,. 
\end{equation}
When estimating \eqref{Gamma_difference} it suffices to use the triangle inequality and consider the term coming from \eqref{gamma_p_cluster_expansion_claim} when $\pi=\vec{1}$ is the identity permutation.
By using \eqref{operator_norm_bound_1_*}, \eqref{gamma_p_cluster_expansion_absolute_value_infty_bound}, and the dominated convergence theorem, it suffices to show that for all $\vec{x}, \vec{y} \in (\Lambda_{L_{0}})^{p}$ and for all $n \in \N$ we have that
\begin{multline}
\label{(ii)_star}
\lim_{L \rightarrow \infty}  \,\biggl\|\int \widehat \mu^{L}_{y_1,x_1}(\dd \omega_1) \cdots \widehat \mu^{L}_{y_p,x_p}(\dd \omega_p)\,\mu^{L}(\dd \omega_{p+1}) \cdots \mu^{L}(\dd \omega_n)\,\varphi^{L}(\omega_1,\ldots,\omega_n)
\\
-\int \widehat \mu^{\infty}_{y_1,x_1}(\dd \omega_1) \cdots \widehat \mu^{\infty}_{y_p,x_p}(\dd \omega_p)\,\mu^{\infty}(\dd \omega_{p+1}) \cdots \mu^{\infty}(\dd \omega_n)\,\varphi^{\infty}(\omega_1,\ldots,\omega_n)
 \biggr\|_{L_0,p} = 0\,.
\end{multline}
When $L = \infty$, we consider \eqref{measure_mu_1}--\eqref{Ursell_function_definition} with appropriate modifications; see \eqref{Ursell_function_definition_modification_1} as well.
In \eqref{(ii)_star}, we also recall
\eqref{Ursell_function_representation_absolute_value}, \eqref{gamma_p_cluster_expansion_claim}, \eqref{gamma_p_cluster_expansion_absolute_value}, and \eqref{kernel_gamma_infty}--\eqref{gamma_p_cluster_expansion_absolute_value_infty}. In \eqref{(ii)_star} and in the remainder of the proof, all of the convergence claims are interpreted as being uniform in $\nu \in (0,\nu_0]$.

Let us now prove \eqref{(ii)_star}. We henceforth assume that $L \geq 1$, fix $L_{0}$ satisfying \eqref{L_0}, and set
\begin{equation}
\label{L_1_choice}
L_1 \coloneqq L/3 \,.
\end{equation}
It is crucial that $L_{1} < \tfrac{L}{2}$ since this guarantees that $x-y \in \Lambda_{L}$ for any $x, y \in \Lambda_{L_{1}}$, where the difference is interpreted in the non-periodic sense. More precisely, $|x-y|_{\infty} < \tfrac{L}{2}$ for any $x, y \in \R^{d}$ such that $|x|_{\infty} < \tfrac{L_{1}}{2}$ and $|y|_{\infty} < \tfrac{L_{1}}{2}$.
Finally, define
\begin{equation}
\label{A^{L}_B^{L}}
\mathscr{A}^{L} \coloneqq \biggl\{\omega \in \bigcup_{T > 0} \Omega^{\infty,T},\, \omega(t) \in \Lambda_{L_1}\, \forall t \in [0,T(\omega)]\biggr\}\,,\quad \mathscr{B}^{L} \coloneqq  \bigcup_{T > 0} \Omega^{\infty,T} \setminus \mathscr{A}^{L}\,.
\end{equation}
Let us first show that for all $k \in \{1,\ldots,n\}$,
\begin{equation}
\label{(ii)_star_proof1}
\lim_{L \rightarrow \infty} \,\biggl\|\int \widehat \mu^{L}_{y_1,x_1}(\dd \omega_1) \cdots \widehat \mu^{L}_{y_p,x_p}(\mathrm{d} \omega_p) \,\mu^{L}(\mathrm{d} \omega_{p+1}) \cdots \mu^{L}(\mathrm{d} \omega_n)\,\vec1_{\mathscr{B}^{L}}(\omega_k)\,\varphi^{L}(\omega_1,\ldots,\omega_n)\biggr\|_{L_0,p}=0\,.
\end{equation} 
Lemma \ref{tree_bound_estimate} allows us to deduce \eqref{(ii)_star_proof1} provided we show that for all $k \in \{1,\ldots,n\}$,
\begin{equation}
\label{(ii)_star_proof2}
\lim_{L \rightarrow \infty} \,\biggl\|\int \widehat \mu^{L}_{y_1,x_1}(\mathrm{d} \omega_1) \cdots \widehat \mu^{L}_{y_p,x_p}(\mathrm{d} \omega_p)\,\mu^{L}(\mathrm{d} \omega_{p+1}) \cdots  \mu^{L}(\mathrm{d} \omega_n)\,\vec{1}_{\mathscr{B}^{L}}(\omega_k)\,\prod_{\{i,j\} \in \mathcal{T}} |\xi^{L}(\omega_i,\omega_j)|\biggr\|_{L_0,p}=0
\end{equation}
for a fixed tree $\mathcal {T} \in {\mathfrak{T}}_n$.
Let us now show that \eqref{(ii)_star_proof2} follows by applying the triangle inequality together with Lemmas \ref{Lemma_5.20*} and \ref{Lemma_5.21*}.
Observe first that by \eqref{1.37*}--\eqref{L_0_seminorm}, the nonzero contribution on the left-hand side of \eqref{(ii)_star_proof2} comes from $x_1 \in \Lambda_{L_0}$. For such $x_1$, we recall that by \eqref{L_0} and \eqref{L_1_choice}--\eqref{A^{L}_B^{L}}, there exists a point on $\omega_k$, denoted by $a=\omega(t_a)$ for some $t_a \in [0,T(\omega_k)]$, such that 
\begin{equation}
\label{a-x_1}
|a-x_1|_{L} \geq c_1L\,, 
\end{equation}
where $c_1>0$ is a constant. In particular,  recalling \eqref{L_0} and \eqref{L_1_choice} one can take
$c_{1} = \tfrac{1}{6} - \tfrac{1}{8} = \tfrac{1}{24}$. It is crucial here that $L_{0} < L_{1}$.
By the triangle inequality, at least one of the following cases occurs with suitable $c\gtrsim \frac{1}{n}$ for the paths over which we are integrating in \eqref{(ii)_star_proof2}. 
\begin{itemize}
\item[(1)] There exists $i \in \{1,\ldots,n\}$ such that $\omega_i \in \mathscr{D}^{L}_{c}$. Here, we recall \eqref{D^L_c_definition}.
\item[(2)] There exists $\{i,j\} \in \mathcal {T}$ such that $\xi^{L}(\omega_i,\omega_j)=\xi^{L}_{c}(\omega_i,\omega_j)$. Here, we recall \eqref{xi^L_c}.
\end{itemize}
To see that (1) or (2) must hold, let $P_{1,k}\coloneqq(1=u_1\to u_2\to \cdots \to u_m=k)$ be the unique path in $\mathcal{T}$ that connects vertex $1$ with vertex $k$. Then, for arbitrary $t_{i} \in [0, T(\omega_{u_i})]$, $i=1,\ldots,m$, we have 
\begin{equation}
\label{tree_triangle_inequality_1}
    c_1L\leq |\omega_1(0)-\omega_k(t_a)|_{L}\leq \sum_{i=1}^{m-1}|\omega_{u_{i+1}}(t_{i+1})-\omega_{u_i}(t_{i})|_{L}+|\omega_1(0)-\omega_1(t_1)|_{L}+|\omega_{k}(t_{m})-\omega_k(t_a)|_{L}\,.
\end{equation}
Since \eqref{tree_triangle_inequality_1} holds for all $t_{i} \in [0, T(\omega_{u_i})]$, $i=1,\ldots,m$, we may choose $t_{i}$ such that
\begin{equation}
\label{distance_of_paths}
    |\omega_{u_{i+1}}(t_{i+1})-\omega_{u_i}(t_{i})|_{L}=\mathrm{dist} (\omega_{u_{i+1}},\omega_{u_{i}}), \qquad i \in \{1,\ldots,m-1\}\,.
\end{equation}
In \eqref{distance_of_paths}, we recalled that the paths we are considering are continuous. We deduce from \eqref{tree_triangle_inequality_1}--\eqref{distance_of_paths} that at least one of (1) and (2) must hold. See Remark \ref{open_path_proof_remark} for a further comment.

We can now show \eqref{(ii)_star_proof2} by arguing as in the proof of Proposition \ref{Proposition_5.3} (ii). In particular we modify the proof to address cases (1) and (2) given above. If (1) occurs, we apply Lemma \ref{Lemma_5.20*} when integrating over $\omega_i$. Namely, if $\omega_i$ is a closed loop we apply Lemma \ref{Lemma_5.20*} (i) or (iii) (we apply (iii) if $i$ corresponds to a root), otherwise, we apply Lemma \ref{Lemma_5.20*} (ii) or (iv) (we apply (iv) if $i$ corresponds to a root), as determined by the integration algorithm from Lemma \ref{integration_algorithm}. 

If (2) occurs, we apply Lemma \ref{Lemma_5.21*}  when integrating over $\omega_i$ or $\omega_j$, whichever appears earlier as determined by the algorithm from Lemma \ref{integration_algorithm}. Let us assume without loss of generality that this is label $i$. Note that by construction $i$ does not correspond to a root. In particular, we apply Lemma \ref{Lemma_5.21*} (i) if $\omega_i$ is a closed path and Lemma \ref{Lemma_5.21*} (ii) if $\omega_i$ is an open path. In this case, we also recall \eqref{v^{L}_c_limit}--\eqref{v^{L}_c_limit_*} to note that the estimates we get from Lemma \ref{Lemma_5.21*} tend to zero as $L \rightarrow \infty$. We hence deduce \eqref{(ii)_star_proof1}. All of the estimates are uniform in $\nu \in (0,\nu_0]$ by construction, that is by arguing as in the proof of Proposition \ref{Proposition_5.3} (ii).

By arguing as for \eqref{(ii)_star_proof1}, we get that for all $k \in \{1,\ldots,n\}$,
\begin{equation}
\label{(ii)_star_proof3}
\lim_{L \rightarrow \infty} \,\biggl\|\int \widehat \mu^{\infty}_{y_1,x_1}(\mathrm{d} \omega_1) \cdots \widehat \mu^{\infty}_{y_p,x_p}(\mathrm{d} \omega_p)\,\mu^{\infty}(\mathrm{d} \omega_{p+1}) \cdots \mu^{\infty}(\mathrm{d} \omega_n)\,\vec{1}_{\mathscr{B}^{L}}(\omega_k)\,\varphi^{\infty}(\omega_1,\ldots,\omega_n)\biggr\|_{L_0,p}=0\,.
\end{equation}
By \eqref{A^{L}_B^{L}}, \eqref{(ii)_star_proof1}, and \eqref{(ii)_star_proof3}, we get \eqref{(ii)_star} provided that we show
\begin{multline}
\label{(ii)_star_proof4}
\lim_{L \rightarrow \infty} \,\biggl\|\int \widehat \mu^{L}_{y_1,x_1}(\mathrm{d} \omega_1) \cdots \widehat \mu^{L}_{y_p,x_p}(\mathrm{d} \omega_p)\,\mu^{L}(\mathrm{d} \omega_{p+1}) \cdots \mu^{L}(\mathrm{d} \omega_n)\,\prod_{k=1}^{n}\vec{1}_{\mathscr{A}^{L}}(\omega_k)\,\varphi^{L}(\omega_1,\ldots,\omega_n)
\\
-\int \widehat \mu^{\infty}_{y_1,x_1}(\mathrm{d} \omega_1) \cdots \widehat \mu^{\infty}_{y_p,x_p}(\mathrm{d} \omega_p)\,\mu^{\infty}(\mathrm{d} \omega_{p+1}) \cdots \mu^{\infty}(\mathrm{d} \omega_n)\,\prod_{k=1}^{n}\vec{1}_{\mathscr{A}^{L}}(\omega_k)\,\varphi^{\infty}(\omega_1,\ldots,\omega_n)
\biggr\|_{L^{\infty}_{\vec{x}}L^1_{\vec{y}}} \! = 0\,.
\end{multline}
Let us now give the proof of \eqref{(ii)_star_proof4}. Recalling \eqref{self_interaction}, we first introduce the following modifications of the measures \eqref{measure_mu_1}--\eqref{measure_mu_2}: 
\begin{align}
\label{measure_mu_1_star}
\mu^{L,*}(\mathrm{d} \omega) &\coloneqq  \nu \sum_{T\in \nu {\N}^*} \frac{z^T}{T} \,  \mathbb{W}^{L,T}(\mathrm{d} \omega)
\,\ee^{-\mathcal{V}^{\nu,\infty}(\omega)}\,,
\\
\label{measure_mu_2_star}
\widehat{\mu}^{L,*}_{y, x}(\mathrm{d} \omega) &\coloneqq  \sum_{T \in \nu \N^*} z^T \, \mathbb{W}^{L,T}_{y,x}(\mathrm{d} \omega)\,\ee^{-\mathcal{V}^{\nu,\infty}(\omega)}\,.
\end{align}
In \eqref{measure_mu_1_star}--\eqref{measure_mu_2_star}, $\mathcal{V}^{\nu,\infty}(\omega)$ denotes \eqref{self_interaction} with interaction potential $v^{\infty} \equiv v$.
Moreover, we recall the definition of the measures $\mu^{L,0}$ and $\widehat{\mu}^{L,0}_{y, x}$ from \eqref{measure_mu_1_bound}--\eqref{measure_mu_2_bound} above. These are modifications of \eqref{measure_mu_1}--\eqref{measure_mu_2}, where we omit the self-interaction term.

Note that from \eqref{measure_mu_1}--\eqref{measure_mu_2},  \eqref{measure_mu_1_bound}--\eqref{measure_mu_2_bound}, and \eqref{measure_mu_1_star}--\eqref{measure_mu_2_star}, we that have by the nonnegativity of $\mathcal{V}^{\nu,L}(\omega)$ and $\mathcal{V}^{\nu,\infty}(\omega)$,
\begin{equation}
\label{mu_bounds}
    \mu^L(\mathrm{d} \omega),\, \mu^{L,*}(\mathrm{d} \omega)\leq \mu^{L,0}(\mathrm{d} \omega),\qquad \widehat{\mu}_{y,x}^{L}(\mathrm{d} \omega),\, \widehat{\mu}_{y,x}^{L,*}(\mathrm{d} \omega)\leq \widehat{\mu}_{y,x}^{L,0}(\mathrm{d} \omega).
\end{equation}
In addition, we note that the indicator $\vec{1}_{\mathcal{A}^L}$ erases the boundary effects on $\Lambda_L$. That is, for all integrable functions $F \colon \bigcup_{T>0}\Omega^{L,T}\to \R$ , we have
\begin{equation}\label{mu_1_boundary_effects}
    \int \mu^{L,*}(\mathrm{d} \omega)F(\omega)\vec{1}_{\mathcal{A}^L}(\omega)=\int \mu^{\infty}(\mathrm{d} \omega)F(\omega)\vec{1}_{\mathcal{A}^L}(\omega)\, ,
\end{equation}
and 
\begin{equation}\label{mu_2_boundary_effects}
    \int \widehat{\mu}_{y,x}^{L,*}(\mathrm{d} \omega)F(\omega)\vec{1}_{\mathcal{A}^L}(\omega)=\int \widehat{\mu}_{y,x}^{\infty}(\mathrm{d} \omega)F(\omega)\vec{1}_{\mathcal{A}^L}(\omega)\, .
\end{equation}
We can therefore deduce \eqref{(ii)_star_proof4} provided that we show 
\begin{multline}
\label{(ii)_star_proof5}
\lim_{L \rightarrow \infty} \,\biggl\|\int \widehat \mu^{L}_{y_1,x_1}(\mathrm{d} \omega_1) \cdots \widehat \mu^{L}_{y_p,x_p}(\mathrm{d} \omega_p)\,\mu^{L}(\mathrm{d} \omega_{p+1}) \cdots \mu^{L}(\mathrm{d} \omega_n)\,\prod_{k=1}^{n}\vec{1}_{\mathscr{A}^{L}}(\omega_k)\,\varphi^{L}(\omega_1,\ldots,\omega_n)
\\
-\!\int \!\widehat \mu^{L,*}_{y_1,x_1}(\mathrm{d} \omega_1) \cdots \widehat \mu^{L,*}_{y_p,x_p}(\mathrm{d} \omega_p)\,\mu^{L,*}(\mathrm{d} \omega_{p+1}) \cdots \mu^{L,*}(\mathrm{d} \omega_n)
\prod_{k=1}^{n}\vec{1}_{\mathscr{A}^{L}}(\omega_k)\,\varphi^{L}(\omega_1,\ldots,\omega_n)
\biggr\|_{L^{\infty}_{\vec{x}}L^1_{\vec{y}}} \! = 0\,,
\end{multline}
and 
\begin{multline}
\label{(ii)_star_proof6}
\lim_{L \rightarrow \infty} \,\biggl\|\int \widehat \mu^{L,*}_{y_1,x_1}(\mathrm{d} \omega_1) \cdots \widehat \mu^{L,*}_{y_p,x_p}(\mathrm{d} \omega_p)\,\mu^{L,*}(\mathrm{d} \omega_{p+1}) \cdots \mu^{L,*}(\mathrm{d} \omega_n)\,\prod_{k=1}^{n}\vec{1}_{\mathscr{A}^{L}}(\omega_k)\,\varphi^{L}(\omega_1,\ldots,\omega_n)
\\
-\int \widehat \mu^{\infty}_{y_1,x_1}(\mathrm{d} \omega_1) \cdots \widehat \mu^{\infty}_{y_p,x_p}(\mathrm{d} \omega_p)\,\mu^{\infty}(\mathrm{d} \omega_{p+1}) \cdots \mu^{\infty}(\mathrm{d} \omega_n)\,\prod_{k=1}^{n}\vec{1}_{\mathscr{A}^{L}}(\omega_k)\,\varphi^{\infty}(\omega_1,\ldots,\omega_n)
\biggr\|_{L^{\infty}_{\vec{x}}L^1_{\vec{y}}} \! = 0\,.
\end{multline}
Let us first prove \eqref{(ii)_star_proof5}. By recalling \eqref{measure_mu_1}--\eqref{measure_mu_2}, \eqref{measure_mu_1_star}--\eqref{measure_mu_2_star}, as well as the bounds \eqref{mu_bounds}, and using a telescoping argument combined with the mean-value theorem, it suffices to show that for all $\ell \in \{2,\ldots,n\}$,
\begin{multline}
\label{(ii)_star_proof7}
    \lim_{L \to \infty}\Biggr\|\int \widehat \mu^{L,0}_{y_1,x_1}(\mathrm{d} \omega_1) \cdots \widehat \mu^{L,0}_{y_p,x_p}(\mathrm{d} \omega_p)\,\mu^{L,0}(\mathrm{d} \omega_{p+1}) \cdots \mu^{L,0}(\mathrm{d} \omega_n)\,\prod_{k=1}^{n}\vec{1}_{\mathscr{A}^{L}}(\omega_k)\,\varphi^{L}(\omega_1,\ldots,\omega_n)\\ \times
    |\mathcal{V}^{\nu,L}(\omega_\ell)-\mathcal{V}^{\nu,\infty}(\omega_\ell)|\Biggl\|_{L^{\infty}_{\vec{x}}L^1_{\vec{y}}}=0\,.
\end{multline}
We note that telescoping was applied above in order to replace the measure 
\begin{equation*}
\widehat \mu^{L}_{y_1,x_1}(\mathrm{d} \omega_1) \cdots \widehat \mu^{L}_{y_p,x_p}(\mathrm{d} \omega_p)\,\mu^{L}(\mathrm{d} \omega_{p+1}) \cdots \mu^{L}(\mathrm{d} \omega_n)
\end{equation*} 
with 
\begin{equation*}
\widehat \mu^{L,*}_{y_1,x_1}(\mathrm{d} \omega_1) \cdots \widehat \mu^{L,*}_{y_p,x_p}(\mathrm{d} \omega_p)\,\mu^{L,*}(\mathrm{d} \omega_{p+1}) \cdots \mu^{L,*}(\mathrm{d} \omega_n)
\end{equation*}
one factor at a time.

Using Lemma \ref{tree_bound_estimate}, we can prove \eqref{(ii)_star_proof7} if we show that for every $\mathcal{T}\in \mathfrak{T}_n$,
\begin{multline}
\label{(ii)_star_proof8}
    \lim_{L \to \infty}\Biggr\|\int \widehat \mu^{L,0}_{y_1,x_1}(\dd \omega_1) \cdots \widehat \mu^{L,0}_{y_p,x_p}(\dd \omega_p)\,\mu^{L,0}(\dd \omega_{p+1}) \cdots \mu^{L,0}(\dd \omega_n)\,\prod_{k=1}^{n}\vec{1}_{\mathscr{A}^{L}}(\omega_k)\,\prod_{\{i,j\} \in \mathcal{T}}|\xi^L(\omega_i,\omega_j)|\\ \times
    |\mathcal{V}^{\nu,L}(\omega_\ell)-\mathcal{V}^{\nu,\infty}(\omega_\ell)|\Biggl\|_{L^{\infty}_{\vec{x}}L^1_{\vec{y}}}=0\,.
\end{multline}
Furthermore, since $|\xi^L(\omega_i,\omega_j)|\leq 1$, \eqref{(ii)_star_proof8} follows if we show
\begin{multline}
\label{(ii)_star_proof9}
    \lim_{L \to \infty}\Biggr\|\int \widehat \mu^{L,0}_{y_1,x_1}(\dd \omega_1) \cdots \widehat \mu^{L,0}_{y_p,x_p}(\dd \omega_p)\,\mu^{L,0}(\dd \omega_{p+1}) \cdots \mu^{L,0}(\dd \omega_n)\,\prod_{k=1}^{n}\vec{1}_{\mathscr{A}^{L}}(\omega_k)\,\prod_{\substack{\{i,j\}\in \mathcal{T}\\i,j \neq\ell}}|\xi^L(\omega_i,\omega_j)|\\ \times
    |\mathcal{V}^{\nu,L}(\omega_\ell)-\mathcal{V}^{\nu,\infty}(\omega_\ell)|\Biggl\|_{L^{\infty}_{\vec{x}}L^1_{\vec{y}}}=0\,.
\end{multline}
To prove \eqref{(ii)_star_proof9}, we first integrate in $\omega_\ell$ by means of the following estimates.
\begin{equation}
\begin{cases}
\label{(ii)_star_proof10}
       \int_{\Lambda_L}\dd y\int \widehat \mu^{L,0}_{y,x}(\dd \omega_\ell)|\mathcal{V}^{\nu,L}(\omega_\ell)-\mathcal{V}^{\nu,\infty}(\omega_\ell)|\,\vec{1}_{\mathscr{A}^{L}}(\omega_{\ell})\lesssim \frac{1}{\kappa^2}\|v-v^L\|_{L^{\infty}(\Lambda_{L})}, \text{ if } 1\leq \ell \leq p\,.\\
        \\\int  \mu^{L,0}(\dd \omega_\ell)|\mathcal{V}^{\nu,L}(\omega_\ell)-\mathcal{V}^{\nu,\infty}(\omega_\ell)|\,\vec{1}_{\mathscr{A}^{L}}(\omega_{\ell}) \lesssim_d \frac{1}{\kappa}\|v-v^L\|_{L^1(\Lambda_{L})}, \text{ if } p+1\leq \ell \leq n\,.
    \end{cases}
\end{equation}
In \eqref{(ii)_star_proof10}, we obtained the bounds by arguing as in Lemma \ref{Lemma_5.16*} (i)--(ii). Here, we recall \eqref{self_interaction} and \eqref{L_1_choice} to see that the argument of $v-v^L$ is always taken in $\Lambda_L$. Let now 
\begin{equation*}
\mathcal{T}_1\sqcup\cdots\sqcup\mathcal{T}_m 
\end{equation*}
be the forest obtained by removing from $\mathcal{T}$ the vertex $\ell$ and its adjacent edges. Then, \eqref{(ii)_star_proof9} follows by using the integration algorithm from Lemma \ref{integration_algorithm} on each connected component, $\mathcal{T}_i$, followed by an application of Lemma \ref{v^L_lemma} (iv) and (v).

Let us now prove \eqref{(ii)_star_proof6}. By \eqref{mu_1_boundary_effects}--\eqref{mu_2_boundary_effects}, it is enough to show 
\begin{multline}
\label{(ii)_star_proof11}
    \lim_{L \to \infty}\Biggl \|\int \widehat \mu^{\infty}_{y_1,x_1}(\dd \omega_1) \cdots \widehat \mu^{\infty}_{y_p,x_p}(\dd \omega_p)\,\mu^{\infty}(\dd \omega_{p+1}) \cdots \mu^{\infty}(\dd \omega_n)\,\prod_{k=1}^{n}\vec{1}_{\mathscr{A}^{L}}(\omega_k)\, \\\times\left(\varphi^{L}(\omega_1,\ldots,\omega_n)-\varphi^{\infty}(\omega_1,\ldots,\omega_n)\right)\Biggr\|_{L^{\infty}_{\vec{x}}L^1_{\vec{y}}}=0\,. 
\end{multline}
Recalling \eqref{Ursell_function_definition}, \eqref{(ii)_star_proof11} follows if we show
\begin{multline}
\label{(ii)_star_proof12}
    \lim_{L \to \infty}\Biggl \|\int \widehat \mu^{\infty}_{y_1,x_1}(\dd \omega_1) \cdots \widehat \mu^{\infty}_{y_p,x_p}(\dd \omega_p)\,\mu^{\infty}(\dd \omega_{p+1}) \cdots \mu^{\infty}(\dd \omega_n)\,\prod_{k=1}^{n}\vec{1}_{\mathscr{A}^{L}}(\omega_k)\, \\\times\left(\prod_{\{i,j\}\in \mathcal{G}}\xi^{L}(\omega_i,\omega_j)-\prod_{\{i,j\}\in \mathcal{G}}\xi^{\infty}(\omega_i,\omega_j)\right)\Biggr\|_{L^{\infty}_{\vec{x}}L^1_{\vec{y}}}=0\,,
\end{multline}
for all $\mathcal{G} \in \mathfrak{G}_n^c$. Let $\prec$ be a total order on the edges in $\mathcal{G}$. We write an edge $e$ as $\{i_e,j_e\}$. Using a telescoping argument, we have 
\begin{multline}
\label{xi_telescoping_1}
    \left|\prod_{\{i,j\}\in \mathcal{G}}\xi^{L}(\omega_i,\omega_j)-\prod_{\{i,j\}\in \mathcal{G}}\xi^{\infty}(\omega_i,\omega_j)\right|\\
    \leq \sum_{e \in \mathcal{G}}\prod_{\substack{\{i,j\} \in \mathcal{G}\\\{i,j\} \prec e}}|\xi^{L}(\omega_i,\omega_j)|\,|\xi^{L}(\omega_{i_e},\omega_{j_e})-\xi^{\infty}(\omega_{i_e},\omega_{j_e})|\prod_{\substack{\{i,j\} \in \mathcal{G}\\ e \prec \{i,j\}}}|\xi^{\infty}(\omega_i,\omega_j)|\,.
\end{multline}
For a given $e \in \mathcal{G}$, let $\mathcal{T}_e$ be an arbitrary spanning tree of $\mathcal{G}$ that contains $e$. Since $|\xi^{L}(\omega,\tilde{\omega})|\leq 1$ for all $L\in (0,\infty]$, we can bound the right-hand side of \eqref{xi_telescoping_1} from above by
\begin{equation}
\label{xi_telescoping_1_B}
    \sum_{e \in \mathcal{G}}\prod_{\substack{\{i,j\} \in \mathcal{T}_e\\\{i,j\} \prec e}}|\xi^{L}(\omega_i,\omega_j)|\,|\xi^{L}(\omega_{i_e},\omega_{j_e})-\xi^{\infty}(\omega_{i_e},\omega_{j_e})|\prod_{\substack{\{i,j\} \in \mathcal{T}_e\\ e \prec \{i,j\}}}|\xi^{\infty}(\omega_i,\omega_j)|\,.
\end{equation}
Furthermore, since $v\ge 0$, we have by the mean-value theorem that 
\begin{multline}
\label{xi_telescoping_2}
    |\xi^{L}(\omega,\tilde{\omega})-\xi^{\infty}(\omega,\tilde{\omega})|\leq |\mathcal{V}^{\nu,L}(\omega,\tilde{\omega})-\mathcal{V}^{\nu,\infty}(\omega,\tilde{\omega})|\\
    \leq \frac{1}{\nu} \sum_{s \in \nu \N} \vec{1}_{s<T(\omega)} \sum_{\tilde{s} \in \nu \N} \vec{1}_{\tilde{s}<T(\tilde{\omega})} \int_{0}^{\nu}\dd t\,| v^L-v|\bigl(\omega(s+t)-\tilde{\omega}(\tilde{s}+t)\bigr) \,.
\end{multline}
Recall $L_1$ in \eqref{L_1_choice}. As noted earlier, we obtain by this choice of $L_1$ and the triangle inequality that $\omega(\cdot)-\tilde{\omega}(\cdot)$ lies within $\Lambda_L$ whenever $\omega(\cdot),\, \tilde{\omega}(\cdot) \in \mathcal{A}^{L}$. Hence, using \eqref{xi_telescoping_2} and the definition \eqref{A^{L}_B^{L}} we have
\begin{multline}
\label{xi_telescoping_2_restricted}
   |\xi^{L}(\omega,\tilde{\omega})-\xi^{\infty}(\omega,\tilde{\omega})|\,\vec{1}_{\mathcal{A}^L}(\omega)\,\vec{1}_{\mathcal{A}^L}(\tilde{\omega})\leq |\mathcal{V}^{\nu,L}(\omega,\tilde{\omega})-\mathcal{V}^{\nu,\infty}(\omega,\tilde{\omega})|\\
    \leq \frac{1}{\nu} \sum_{s \in \nu \N} \vec{1}_{s<T(\omega)} \sum_{\tilde{s} \in \nu \N} \vec{1}_{\tilde{s}<T(\tilde{\omega})} \int_{0}^{\nu}\dd t\, |v^L-v|\,\vec{1}_{\Lambda_L} \left(\omega(s+t)-\tilde{\omega}(\tilde{s}+t)\right).
\end{multline}
This justifies replacing any factors of $|\xi^{L}(\omega_{i}, \omega_{j})|$ in Lemma \ref{integration_algorithm} with the left-hand side of \eqref{xi_telescoping_2_restricted}. Indeed, this is akin to taking $(v^L-v)\,\vec{1}_{\Lambda_L}$ instead of $v$ in \eqref{loop_interaction} and \eqref{Ursell_function_definition}. Similarly, for $\dag = L$ or $\dag = \infty$ we have
\begin{multline}
\label{xi_telescoping_2_B}
   |\xi^{\dag}(\omega,\tilde{\omega})| \vec{1}_{\mathcal{A}^L}(\omega)\vec{1}_{\mathcal{A}^L}(\tilde{\omega}) \leq
   \left(|\xi^{L}(\omega,\tilde{\omega})|+|\xi^{\infty}(\omega,\tilde{\omega})|\right)\vec{1}_{\mathcal{A}^L}(\omega)\vec{1}_{\mathcal{A}^L}(\tilde{\omega})\leq \\
    \leq \frac{1}{\nu} \sum_{s \in \nu \N} \vec{1}_{s<T(\omega)} \sum_{\tilde{s} \in \nu \N} \vec{1}_{\tilde{s}<T(\tilde{\omega})} \int_{0}^{\nu}\mathrm{d} t\,\left[\left|v^L\,\vec{1}_{\Lambda_L}\right|+\bigl|v\,\vec{1}_{\Lambda_L}\bigr|\right] \bigl(\omega(s+t)-\tilde{\omega}(\tilde{s}+t)\bigr)\,,  
\end{multline}
which justifies replacing any factors of $|\xi^{L}(\omega_{i}, \omega_{j})|$ in Lemma \ref{integration_algorithm} with $|\xi^{\dag}(\omega_{i},\omega_{j})|\vec{1}_{\mathcal{A}^L}(\omega_{i})\vec{1}_{\mathcal{A}^L}(\omega_{j})$.
By \eqref{xi_telescoping_1}--\eqref{xi_telescoping_2_B}, combined with the integration algorithm
from Lemma \ref{integration_algorithm} and Lemma \ref{Lemma_5.16*} (iv) where in both we set $L=\infty$ (which one obtains by the same arguments as for finite $L$), it follows that
\begin{multline}
\label{5.107}
    \Biggl \|\int \widehat \mu^{\infty}_{y_{1},x_{1}}(\mathrm{d} {\omega}_{1}) \cdots \widehat \mu^{\infty}_{y_p,x_p}(\mathrm{d} \omega_p)\,\mu^{\infty}(\mathrm{d} \omega_{p+1}) \cdots \mu^{\infty}(\mathrm{d} \omega_n)\,\prod_{k=1}^{n}\vec{1}_{\mathscr{A}^{L}}(\omega_k)\, \\\times\left(\prod_{\{i,j\}\in \mathcal{G}}\xi^{L}(\omega_i,\omega_j)-\prod_{\{i,j\}\in \mathcal{G}}\xi^{\infty}(\omega_i,\omega_j)\right)\Biggr\|_{L^{\infty}_{\vec{x}}L^1_{\vec{y}}}
\\
\lesssim_{d,\zeta,n}(\|v^L\|_{L^1(\Lambda_L)}+\|v\|_{L^1(\Lambda_L)}+\|v^L\|_{L^{\infty}(\Lambda_L)}+\|v\|_{L^{\infty}(\Lambda_L)})^{n-1}\, 
\\
\times
(\|v^L-v\|_{L^1(\Lambda_L)}+\|v^L-v\|_{L^{\infty}(\Lambda_L)})\,.
\end{multline}
The claim \eqref{(ii)_star} now follows from \eqref{5.107} by using Lemma \ref{v^L_lemma} (ii)--(v).

In order to conclude the proof of Theorem \ref{Infinite_volume_Theorem_2}, we note the following.
\begin{itemize}
\item[(I)] We have 
\begin{equation}
\label{convergence_item_1}
\lim_{L \rightarrow \infty} \Gamma_{p}^{\nu,\zeta,L} = \Gamma_{p}^{\nu,\zeta,\infty}
\end{equation}
exists with respect to \eqref{L_0_seminorm} uniformly in $\nu \in (0,\nu_0]$. Here we recall \eqref{nu_0} from Assumption \ref{Choice_of_z}. Let us note that \eqref{convergence_item_1} follows from \eqref{(ii)_star} shown above. 
\item[(II)] For all $L \geq 1$, we have
\begin{equation}
\label{convergence_item_2}
\lim_{\nu \rightarrow 0} \Gamma_{p}^{\nu,\zeta,L} = \Gamma_{p}^{0,\zeta,L}\,,
\end{equation}
with respect to \eqref{L_0_seminorm}. Here, we recall \eqref{density_matrix_result_2}--\eqref{density_matrix_result_3}.
We note that \eqref{convergence_item_2} follows from Theorem \ref{density_matrix_result}. Here, we use Lemma \ref{v^L_lemma} (i) to note that $v^L$ satisfies Assumption \ref{Assumption_on_v} with $B=0$ and $\alpha=1$. In particular, we deduce that the convergence in \eqref{convergence_item_2} holds with respect to \eqref{L_0_seminorm} by using the rate of convergence given by \eqref{Theorem_1.5(ii)_rate_of_convergence} above.
By \eqref{density_matrix_result_2}--\eqref{density_matrix_result_3} and \eqref{L_0_seminorm}, we have that 
\begin{equation}
\label{convergence_item_2*}
\|\Gamma_{p}^{0,\zeta,L}\|_{L_0,p}=0\,.
\end{equation}
In particular, from \eqref{convergence_item_2}--\eqref{convergence_item_2*}, we deduce that 
\begin{equation}
\label{convergence_item_2**}
\lim_{\nu \rightarrow 0} \Gamma_{p}^{\nu,\zeta,L} =0\,,
\end{equation}
with respect to \eqref{L_0_seminorm}.

\end{itemize}

Under conditions (I)--(II) above, we are able to exchange limits and deduce the following claims; see also \cite[Lemma 5.13]{FKSS_2023}.
\begin{itemize}
\item[(i)] The limit 
\begin{equation}
\label{convergence_item_3}
\lim_{L \rightarrow \infty} \Gamma_{p}^{0,\zeta,L} \eqqcolon \Gamma_{p}^{0,\zeta,\infty}
\end{equation}
exists with respect to \eqref{L_0_seminorm}.
\item[(ii)] For $\Gamma_{p}^{0,\zeta,\infty}$ given by \eqref{convergence_item_3}, we have
\begin{equation}
\label{convergence_item_4}
\lim_{\nu \rightarrow 0} \Gamma_{p}^{\nu,\zeta,\infty}=0\,, 
\end{equation}
with respect to \eqref{L_0_seminorm}.
\end{itemize}
We note that \eqref{convergence_item_3}--\eqref{convergence_item_4} imply \eqref{Infinite_volume_Theorem_2_claim}.

\end{proof}

\begin{remark}
\label{non_uniformity_in_L_0}
We note that \eqref{convergence_item_2} is obtained by \eqref{Theorem_1.5(ii)_rate_of_convergence}, which does not give us uniform convergence in $L_0$.
\end{remark}

\begin{remark}
\label{open_path_proof_remark}
If $k \leq p$ in \eqref{(ii)_star_proof2}, we can argue as for \eqref{a-x_1} to obtain that $|a - x_{k}|_{L} \geq \tfrac{L}{24}$ for some point $a$ on $\omega_k$. In particular, we can automatically deduce that case (1) in the proof of \eqref{(ii)_star_proof2} occurs with $i=k$ and $\omega_i$ being an open path. Hence, in this case, the argument above simplifies. For $k \in \{p+1,\ldots,n\}$ we have to consider cases (1) and (2) in the proof of \eqref{(ii)_star_proof2} as explained above.
\end{remark}

\subsubsection{Analysis of the free energy in the infinite volume. Proof of Theorem \ref{free_energy_infinite_volume}}
\label{Proof_of_Theorem_free_energy_infinite_volume}

In this section, we prove our main result on the free energy in the infinite-volume limit, given by Theorem \ref{free_energy_infinite_volume} above. The idea of the proof is similar to that of Theorem \ref{Infinite_volume_Theorem_2}. The challenge now is that all Brownian paths in the Ginibre representation \eqref{Ginibre_representation_1} are closed, meaning we cannot fix the endpoints of the open paths as before. This requires some new ingredients in the analysis. We will outline the main differences in the proof given below.

\begin{proof}[Proof of Theorem \ref{free_energy_infinite_volume}]
Let us note that claim (ii) follows from claim (i). Namely, we argue as in \eqref{convergence_item_1}--\eqref{convergence_item_2} in the proof of Theorem \ref{Infinite_volume_Theorem_2} above. Instead of  \eqref{convergence_item_1}, we use \eqref{free_energy_infinite_volume_i}. Instead of \eqref{convergence_item_2}, we recall \eqref{specific_Gibbs_potential}, \eqref{classical_specific_Gibbs_potential}, and use Theorem \ref{partition_function_result} to obtain that
\begin{equation}
\label{convergence_item_2_*}
\lim_{\nu \rightarrow 0} g^{\nu,\zeta,L} = g^{\mathrm{cl},\zeta,L}\,.
\end{equation}
For \eqref{convergence_item_2_*}, we again use Lemma \ref{v^L_lemma} (i) to note that $v^L$ satisfies Assumption \ref{Assumption_on_v} with $B=0$ and $\alpha=1$. See also \cite[Lemma 5.13]{FKSS_2023} and set $\rho(L,\nu) \equiv g^{\nu,\zeta,L}$ with notation as in \cite{FKSS_2023}.

We now prove claim (i).
Let us first use Proposition \ref{gamma_p_cluster_expansion} (i) and recall \eqref{measure_mu_1} and \eqref{Ursell_function_representation}, as well as \eqref{W^T_definition} to write 
\begin{multline}
\label{free_energy_infinite_volume_1}
g^{\nu,\zeta,L}=\frac{1}{|\Lambda_L|} \log \Xi^{\nu,\zeta,L}=
\frac{1}{|\Lambda_L|} \int_{\Lambda_L} \dd x_1\,\sum_{n=1}^{\infty} \nu \sum_{T_1 \in \nu \N^*} \frac{z^{T_1}}{T_1}\, \int \mathbb{W}^{L,T_1}_{x_1,x_1}(\dd \omega_1)\, 
\\
\times \mathrm{e}^{-\mathcal{V}^{\nu,L}(\omega_1)}\, \mu^L(\dd \omega_2) \cdots \mu^L(\dd \omega_n)\, \varphi^L(\omega_1,\ldots,\omega_n)\,.
\end{multline}
We analyse \eqref{free_energy_infinite_volume_1} using several steps.
\paragraph{\textbf{\emph{Step 1: Reduce to integrating in $x_1$ away from the boundary of $\Lambda_L$}}}
We first show that removing a relatively small strip at the boundary of $\Lambda_L$ from the integral over $x_1$ has a negligible effect. Combined with Step 2, this will allow us to manipulate the boundary conditions of Brownian motion. 

Given $\delta>0$ small, we write the average in $x_1$ in \eqref{free_energy_infinite_volume_1} as
\begin{equation}
\label{x_1_average}
\frac{1}{|\Lambda_L|} \int_{\Lambda_L} \dd x_1=
\frac{1}{|\Lambda_L|} \int_{(1-\delta)\Lambda_L} \dd x_1+\frac{1}{|\Lambda_L|} \int_{\Lambda_L \setminus (1-\delta)\Lambda_L} \dd x_1\,.
\end{equation}
In \eqref{x_1_average} and onwards, we write $(1-\delta)\Lambda_L \equiv \Lambda_{(1-\delta)L,d}$.
By arguing as in the proof of Lemma \ref{Lemma_5.16*} (iii), we obtain that 
\begin{multline}
\label{free_energy_infinite_volume_2}
\int_{\Lambda_L \setminus (1-\delta) \Lambda_L} \nu \sum_{T \in \nu \N^*} \frac{z^T}{T} \int \mathbb{W}^{L,T}_{x,x}(\dd 
\omega) \, T^q \lesssim_{d} \nu^q\, \frac{(q-1)!}{\kappa^q}\ \left| \Lambda_L \setminus (1-\delta) \Lambda_L \right|
\\
= \nu^q\, \frac{(q-1)!}{\kappa^q}\ |\Lambda_L|\,\left(1-(1-\delta)^d\right),
\end{multline}
for all $q \in \N$. Here, we also recall \eqref{kappa_definition}. More precisely, we get \eqref{free_energy_infinite_volume_2} from \eqref{Triangle_1}--\eqref{Triangle_2} above by replacing the spatial domain $\Lambda_L$ with $\Lambda_L \setminus (1-\delta) \Lambda_L$.
We now apply the integration algorithm from Lemma \ref{integration_algorithm} to the remaining integral over $\Lambda_L \setminus (1-\delta) \Lambda_L$ in \eqref{free_energy_infinite_volume_1}. We do this as in the proof of Proposition \ref{Proposition_5.3} (i) to integrate out the root of the tree, but using \eqref{free_energy_infinite_volume_2} instead of Lemma \ref{Lemma_5.16*}  (iii). 
By \eqref{free_energy_infinite_volume_1}--\eqref{free_energy_infinite_volume_2}, we can then write
\begin{equation}
\label{free_energy_infinite_volume_3}
g^{\nu,\zeta,L}=\mathrm{I}^{\nu,\zeta,L}+ \mathcal{O}_{d,\zeta,v} (\delta)\,,
\end{equation}
where
\begin{multline}
\label{free_energy_infinite_volume_4}
\mathrm{I}^{\nu,\zeta,L} \coloneqq \
\frac{1}{|\Lambda_L|} \int_{(1-\delta)\Lambda_L} \dd x_1\,\sum_{n=1}^{\infty} \nu \sum_{T_1 \in \nu \N^*} \frac{z^{T_1}}{T_1}\, \int \mathbb{W}^{L,T_1}_{x_1,x_1}(\dd \omega_1)\, 
\\
\times \mathrm{e}^{-\mathcal{V}^{\nu,L}(\omega_1)}\, \mu^L(\dd \omega_2) \cdots \mu^L(\dd \omega_n)\, \varphi^L(\omega_1,\ldots,\omega_n)\,. 
\end{multline}

\paragraph{\textbf{\emph{Step 2: Reduce to integrating over loops $\omega_j$ which are close to $x_1$}}}
To complement Step 1, we now show that ignoring Brownian bridges that are relatively close to the boundary of $\Lambda_L$ has a negligible effect. This will allow us to manipulate the boundary conditions of Brownian motion. 

Given $n \in \N^*$, let us define the set 
\begin{multline}
\label{free_energy_infinite_volume_5}
\mathcal{R}_{n}^{L,\delta} \coloneqq \Bigl\{(\omega_1, \ldots,\omega_n) \in (\Omega^L)^n,\, \,\, |\omega_i(t_i)-\omega_j(t_j)|_L =|\omega_i(t_i)-\omega_j(t_j)| \leq L \delta/2\,
\\
\forall 1 \leq i,j \leq n\,,\,\, \forall t_i \in [0,T(\omega_i)]\,,\,\, \forall t _j \in [0,T(\omega_j)]\Bigr\}\,.
\end{multline}
In \eqref{free_energy_infinite_volume_5} we used the observation that for $\delta>0$ small, we have $|w|_L \leq L\delta/2$ if and only if $|w| \leq L\delta/2$. See Figure \ref{Loops_example}.

\begin{figure}[!ht]
\begin{center}
\includegraphics[scale=0.5]{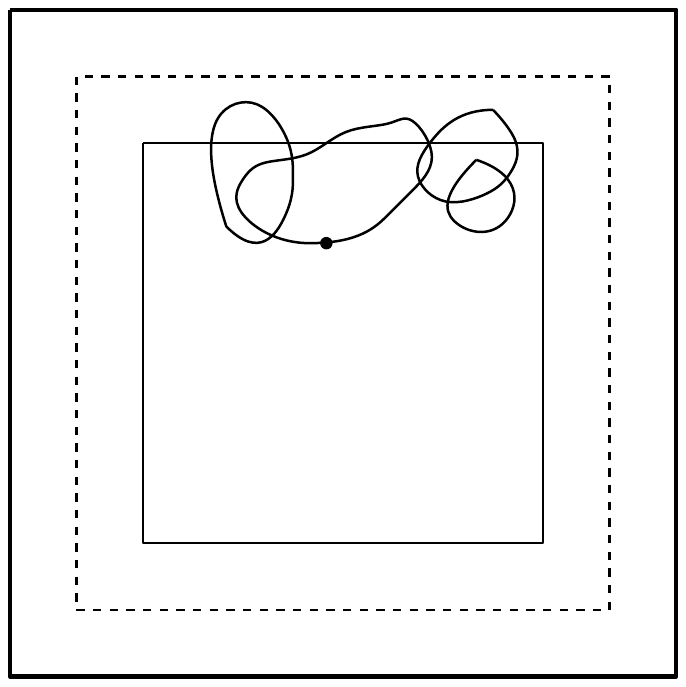}
\end{center}
\caption{Here, we take $d=2$ and $n=4$. The concentric rectangles are $(1-\delta) \Lambda_L$, $(1-\delta/2) \Lambda_L$, and $\Lambda_L$. The marked point is $x_1$, which lies on the loop $\omega_1$. The points on all the loops $\omega_1,\omega_2,\omega_3,\omega_4$ are at a distance of at most $L\delta/2$ from $x_1$.
\label{Loops_example}
}
\end{figure}

Using \eqref{free_energy_infinite_volume_4}--\eqref{free_energy_infinite_volume_5}, and arguing as in \eqref{(ii)_star_proof1}--\eqref{(ii)_star_proof2}, we have that
\begin{equation}
\label{free_energy_infinite_volume_6}
\mathrm{I}^{\nu,\zeta,L}=\mathrm{II}^{\nu,\zeta,L}+ C_1(\nu,L,\delta)\,, \qquad |C_1(\nu,L,\delta)| \lesssim_{d,\zeta,v} o_{L}^{(\delta)}(1)\,,
\end{equation}
where
\begin{multline}
\label{free_energy_infinite_volume_7}
\mathrm{II}^{\nu,\zeta,L} \coloneqq \
\frac{1}{|\Lambda_L|} \int_{(1-\delta)\Lambda_L} \dd x_1\,\sum_{n=1}^{\infty}  \nu \sum_{T_1 \in \nu \N^*} \frac{z^{T_1}}{T_1}\, \int \mathbb{W}^{L,T_1}_{x_1,x_1}(\dd \omega_1)\, 
\\
\times \mathrm{e}^{-\mathcal{V}^{\nu,L}(\omega_1)}\, \mu^L(\dd \omega_2) \cdots \mu^L(\dd \omega_n)\, \varphi^L(\omega_1,\ldots,\omega_n)\,\mathbf{1}_{\mathcal{R}_{n}^{L,\delta}}(\omega_1,\ldots,\omega_n)\,.
\end{multline}
In \eqref{free_energy_infinite_volume_6} and onwards, $o_{L}^{(\delta)}(1)$ denotes a quantity as in Definition \ref{epsilon_L} where the convergence to zero as $L \rightarrow \infty$ depends on $\delta$ (which is now fixed).
More precisely, for fixed $n \geq 1$, we bound $|\varphi^L(\omega_1,\ldots,\omega_n)|$ using \eqref{tree_bound} from Lemma \ref{tree_bound_estimate}. If 
\begin{equation*}
(\omega_1, \ldots,\omega_n) \in (\Omega^L)^n \setminus \mathcal{R}_{n}^{L,\delta}\,
\end{equation*} 
then one of the following cases must occur for a fixed tree $\mathcal{T}$ on the right-hand side of \eqref{tree_bound}.
\begin{itemize}
\item[(1)] There exists $i \in \{1,\ldots,n\}$ such that $\omega_i \in \mathscr{D}^{L}_{c}$.
\item[(2)] There exists $\{i,j\} \in \mathcal {T}$ such that $\xi^{L}(\omega_i,\omega_j)=\xi^{L}_{c}(\omega_i,\omega_j)$.
\end{itemize}
Here, we recall \eqref{D^L_c_definition} and \eqref{xi^L_c} above.
We can hence use Lemma \ref{Lemma_5.20*} (i) and (iii) as well as Lemma \ref{Lemma_5.21*} (i) in the appropriate steps of the integration algorithm from Lemma \ref{integration_algorithm}, and argue as for \eqref{(ii)_star_proof1}--\eqref{(ii)_star_proof2} in the proof of Theorem \ref{Infinite_volume_Theorem_2} (where now $p=0$ and there are no open paths). 
We therefore deduce \eqref{free_energy_infinite_volume_6}--\eqref{free_energy_infinite_volume_7} as claimed.

\paragraph{\textbf{\emph{Step 3: Replace the interaction potential $v^L$ by $v$}}}
By arguing as in \eqref{(ii)_star_proof5}--\eqref{(ii)_star_proof6}, and by recalling \eqref{(ii)_star_proof10} and \eqref{5.107}, we have that
\begin{equation}
\label{free_energy_infinite_volume_8}
\mathrm{II}^{\nu,\zeta,L}=\mathrm{III}^{\nu,\zeta,L}+\mathcal{O}_{d,\zeta,v}\left(\|v^L-v\|_{L^1(\Lambda_L)}\right),
\end{equation}
where
\begin{multline}
\label{free_energy_infinite_volume_9}
\mathrm{III}^{\nu,\zeta,L} \coloneqq \
\frac{1}{|\Lambda_L|} \int_{(1-\delta)\Lambda_L} \dd x_1\,\sum_{n=1}^{\infty}  \nu \sum_{T_1 \in \nu \N^*} \frac{z^{T_1}}{T_1}\, \int \mathbb{W}^{L,T_1}_{x_1,x_1}(\dd \omega_1)\, 
\\
\times \mathrm{e}^{-\mathcal{V}^{\nu,\infty}(\omega_1)}\, \mu^{L,*}(\dd \omega_2) \cdots \mu^{L,*}(\dd \omega_n)\, \varphi^{\infty}(\omega_1,\ldots,\omega_n)\, \mathbf{1}_{\mathcal{R}_{n}^{L,\delta}}(\omega_1,\ldots,\omega_n)\,.
\end{multline}
Here, we also recall \eqref{measure_mu_1_star}.
For \eqref{free_energy_infinite_volume_8}--\eqref{free_energy_infinite_volume_9}, we used (as in \eqref{(ii)_star_proof5}--\eqref{(ii)_star_proof6} in the proof of Theorem \ref{Infinite_volume_Theorem_2}) the observation that 
\begin{equation*}
\omega_i(\cdot)-\omega_j(\cdot) \in \Lambda_L \qquad \forall 1 \leq i, j \leq n\,,
\end{equation*} 
whenever $(\omega_1,\ldots,\omega_n) \in \mathcal{R}_{n}^{L,\delta}$, as defined in \eqref{free_energy_infinite_volume_5}. 
Since all the paths $\omega_j$ are closed, we are only using the second estimate in \eqref{(ii)_star_proof10} above when repeating the argument from \eqref{(ii)_star_proof5}--\eqref{(ii)_star_proof6}. Therefore, the error term in \eqref{free_energy_infinite_volume_8} is given only in terms of $\|v^L-v\|_{L^1(\Lambda_L)}$.

Using \eqref{free_energy_infinite_volume_5} and the assumption that $\omega_1(0)=x_1 \in (1-\delta) \Lambda_L$, the continuity of the $\omega_i$, and the assumption that $\delta$ is chosen to be small, we obtain that
\begin{equation}
\label{free_energy_infinite_volume_10}
\omega_i \colon [0,T(\omega_i)] \rightarrow (1-\delta/2) \Lambda_L \quad \forall i \in \N^*
\end{equation}
in the integral \eqref{free_energy_infinite_volume_9}. In other words, all loops $\omega_i$ lie within $(1-\delta/2) \Lambda_L$. Recalling \eqref{measure_mu_1_star}, we can use \eqref{free_energy_infinite_volume_10} to erase the boundary effects in \eqref{free_energy_infinite_volume_9} and write
\begin{multline}
\label{free_energy_infinite_volume_11}
\mathrm{III}^{\nu,\zeta,L} 
=
\frac{1}{|\Lambda_L|} \sum_{n=1}^{\infty}  \int_{(1-\delta)\Lambda_L} \dd x_1\, \int_{\Lambda_L} \dd x_2\, \cdots \int_{\Lambda_L} \dd x_n
\\
\prod_{j=1}^{n} \left[\nu \sum_{T_j \in \nu \N^*} \frac{z^{T_j}}{T_j}\, \int \mathbb{W}^{\infty,T_j}_{x_j,x_j}(\dd \omega_j)\, 
\mathrm{e}^{-\mathcal{V}^{\nu,\infty}(\omega_j)} \right]
\,\varphi^{\infty}(\omega_1,\ldots,\omega_n)\,
\mathbf{1}_{\mathcal{R}_{n}^{L,\delta}}(\omega_1,\ldots,\omega_n) \,.
\end{multline}
\paragraph{\textbf{\emph{Step 4: Relax the assumption that the loops $\omega_j$ are close to $x_1$}}}
We now reintroduce the paths in $(\Omega^L)^{n} \setminus \mathcal{R}_{n}^{L,\delta}$ that we removed in Step 2, to again obtain integrals over all Brownian bridges. 
Let us note that
\begin{multline}
\label{free_energy_infinite_volume_12_C}
\frac{1}{|\Lambda_L|} \sum_{n=1}^{\infty} \int_{(1-\delta)\Lambda_L} \dd x_1\, \int_{\Lambda_L} \dd x_2\, \cdots \int_{\Lambda_L} \dd x_n
\\
\prod_{j=1}^{n} \left[\nu \sum_{T_j \in \nu \N^*} \frac{z^{T_j}}{T_j}\, \int \mathbb{W}^{\infty,T_j}_{x_j,x_j}(\dd \omega_j)\, 
\mathrm{e}^{-\mathcal{V}^{\nu,\infty}(\omega_j)} \right]
\,\left|\varphi^{\infty}(\omega_1,\ldots,\omega_n)\right|\,
\left(1-\mathbf{1}_{\mathcal{R}_{n}^{L,\delta}}(\omega_1,\ldots,\omega_n)\right)
\\
\lesssim_{d,\zeta,v} o_{L}^{(\delta)}(1)\,.
\end{multline}
In \eqref{free_energy_infinite_volume_12_C}, we recall the notation from \eqref{free_energy_infinite_volume_6} above.
To show \eqref{free_energy_infinite_volume_12_C}, we first fix $n \geq 1$ and apply the tree bound \eqref{tree_bound} to $\left|\varphi^{\infty}(\omega_1,\ldots,\omega_n)\right|$. When estimating the contribution from the trees, we set $\omega_1$ as the root.
In particular, when integrating with respect to $\omega_1$ in the integration algorithm from Lemma \ref{integration_algorithm}, we use one of the following estimates.
\begin{equation}
\label{omega_1_root_1}
\int_{(1-\delta)\Lambda_L} \dd x_1\,\nu \sum_{T_1 \in \nu \N^*} \frac{z^{T_1}}{T_1}\, \int \mathbb{W}^{\infty,T_1}_{x_1,x_1}(\dd \omega_1)\, 
\mathrm{e}^{-\mathcal{V}^{\nu,\infty}(\omega_1)}\,T_1^q \lesssim_{\zeta,q,d} |(1-\delta)\Lambda_L|\,.
\end{equation}
\begin{equation}
\label{omega_1_root_2}
\frac{1}{|(1-\delta)\Lambda_L|}\,\int_{(1-\delta)\Lambda_L} \dd x_1\,\nu \sum_{T_1 \in \nu \N^*} \frac{z^{T_1}}{T_1}\, \int \mathbb{W}^{\infty,T_1}_{x_1,x_1}(\dd \omega_1)\, 
\mathrm{e}^{-\mathcal{V}^{\nu,\infty}(\omega_1)}\,\mathbf{1}_{\mathscr{D}^{\infty,L}_{c\delta}}(\omega_1)\,T_1^q  \lesssim_{\zeta,q,d} o^{(\delta)}_{L}\left(1\right).
\end{equation}
Here, we recall \eqref{D^{L,tilde_L}_c}.
The estimate \eqref{omega_1_root_1} follows by arguing as for Lemma \ref{Lemma_5.16*} (iii). The estimate \eqref{omega_1_root_2} follows by arguing as for Lemma \ref{Lemma_5.20*} (iii), with the modifications given in 
Remark \ref{long_far_away_path_comment} with $\tilde{L}=L$; see \eqref{Remark_5.14_iii} above\footnote{As in \eqref{Remark_5.14_iii}, it is important to note that we are integrating over the finite volume $(1-\delta)\Lambda_L$ in $x_1$.}. 

Now, by \eqref{free_energy_infinite_volume_5}, if $(\omega_1,\ldots,\omega_n) \notin \mathcal{R}_{n}^{L,\delta}$ there exist $1 \leq i,j \leq n$ and $t_i \in [0,T(\omega_i)], \, t _j \in [0,T(\omega_j)]$ such that 
\begin{equation}
\label{L_delta_2_distance_paths}
|\omega_i(t_i)-\omega_j(t_j)| > L \delta/2\,.
\end{equation}
Using \eqref{L_delta_2_distance_paths}, followed by \eqref{omega_1_root_1}--\eqref{omega_1_root_2} and the modifications of Lemmas \ref{Lemma_5.16*}, \ref{Lemma_5.20*}, and \ref{Lemma_5.21*} from Remark \ref{long_far_away_path_comment} with $\tilde{L}=L$, the arguments for \eqref{free_energy_infinite_volume_6}--\eqref{free_energy_infinite_volume_7} imply \eqref{free_energy_infinite_volume_12_C}.

Using \eqref{free_energy_infinite_volume_12_C}, it follows that
\begin{equation}
\label{free_energy_infinite_volume_12}
\mathrm{III}^{\nu,\zeta,L}=\mathrm{IV}^{\nu,\zeta,L}+ C_2(\nu,L,\delta)\,, \qquad |C_2(\nu,L,\delta)| \lesssim_{d,\zeta,v} o_{L}^{(\delta)}(1)\,,
\end{equation}
where
\begin{multline}
\label{free_energy_infinite_volume_12_B}
\mathrm{IV}^{\nu,\zeta,L} \coloneqq \frac{1}{|\Lambda_L|} \sum_{n=1}^{\infty}  \int_{(1-\delta)\Lambda_L} \dd x_1\, \int_{\Lambda_L} \dd x_2\, \cdots \int_{\Lambda_L} \dd x_n
\\
\prod_{j=1}^{n} \left[\nu \sum_{T_j \in \nu \N^*} \frac{z^{T_j}}{T_j}\, \int \mathbb{W}^{\infty,T_j}_{x_j,x_j}(\dd \omega_j)\, 
\mathrm{e}^{-\mathcal{V}^{\nu,\infty}(\omega_j)} \right]
\,\varphi^{\infty}(\omega_1,\ldots,\omega_n)\,.
\end{multline}

\paragraph{\textbf{\emph{Step 5: Integrate in space over all of $\R^d$}}}
Let us now define 
\begin{multline}
\label{free_energy_infinite_volume_12_D}
\mathrm{V}^{\nu,\zeta,L} \coloneqq \frac{1}{|\Lambda_L|} \sum_{n=1}^{\infty}  \int_{(1-\delta)\Lambda_L} \dd x_1\, \int_{\R^d} \dd x_2\, \cdots \int_{\R^d} \dd x_n
\\
\prod_{j=1}^{n} \left[\nu \sum_{T_j \in \nu \N^*} \frac{z^{T_j}}{T_j}\, \int \mathbb{W}^{\infty,T_j}_{x_j,x_j}(\dd \omega_j)\, 
\mathrm{e}^{-\mathcal{V}^{\nu,\infty}(\omega_j)} \right]
\,\varphi^{\infty}(\omega_1,\ldots,\omega_n)\,.
\end{multline}
By \eqref{free_energy_infinite_volume_12_B}--\eqref{free_energy_infinite_volume_12_D}, a telescoping argument, and enlarging the integration domain, we deduce that
\begin{multline}
\label{free_energy_infinite_volume_12_E}
\left| \mathrm{IV}^{\nu,\zeta,L} -\mathrm{V}^{\nu,\zeta,L} \right| \leq  \frac{1}{|\Lambda_L|} \sum_{n=1}^{\infty} (n-1)\, \int_{(1-\delta)\Lambda_L} \dd x_1\, \int_{\R^d \setminus \Lambda_L} \dd x_2\, \int_{\R^d} \dd x_3\, \cdots \int_{\R^d} \dd x_n
\\
\prod_{j=1}^{n} \left[\nu \sum_{T_j \in \nu \N^*} \frac{z^{T_j}}{T_j}\, \int \mathbb{W}^{\infty,T_j}_{x_j,x_j}(\dd \omega_j)\, 
\mathrm{e}^{-\mathcal{V}^{\nu,\infty}(\omega_j)} \right]
\,\left|\varphi^{\infty}(\omega_1,\ldots,\omega_n)\right|\,.
\end{multline}
Note that in the integral on the right-hand side of \eqref{free_energy_infinite_volume_12_E}, we have 
\begin{equation}
\label{free_energy_infinite_volume_12_F}
|\omega_1(0)-\omega_2(0)|=|x_1-x_2| \geq L \delta\,.
\end{equation}
Using \eqref{free_energy_infinite_volume_12_F} in \eqref{free_energy_infinite_volume_12_E}, and arguing as for \eqref{free_energy_infinite_volume_12_C}, we obtain that
\begin{equation}
\label{free_energy_infinite_volume_12_G}
\left| \mathrm{IV}^{\nu,\zeta,L} -\mathrm{V}^{\nu,\zeta,L} \right| \lesssim_{d,\zeta,v} o_{L}^{(\delta)}(1)\,.
\end{equation}
\paragraph{\textbf{\emph{Step 6: Use translation invariance to conclude the proof}}}
We now use translation invariance and $\frac{|(1-\delta)\Lambda_L|}{|\Lambda_L|}=(1-\delta)^d$ to rewrite \eqref{free_energy_infinite_volume_12_D} as
\begin{multline}
\label{free_energy_infinite_volume_12_H}
\mathrm{V}^{\nu,\zeta,L} = (1-\delta)^d\, \sum_{n=1}^{\infty} \int_{\R^d} \dd x_2\, \cdots \int_{\R^d} \dd x_n\,\left[\nu \sum_{T_1 \in \nu \N^*} \frac{z^{T_1}}{T_1}\, \int \mathbb{W}^{\infty,T_1}_{0,0}(\dd \omega_1)\,\mathrm{e}^{-\mathcal{V}^{\nu,\infty}(\omega_1)}\right] 
\\
\prod_{j=2}^{n} \left[\nu \sum_{T_j \in \nu \N^*} \frac{z^{T_j}}{T_j}\, \int \mathbb{W}^{\infty,T_j}_{x_j,x_j}(\dd \omega_j)\, 
\mathrm{e}^{-\mathcal{V}^{\nu,\infty}(\omega_j)} \right]
\,\varphi^{\infty}(\omega_1,\ldots,\omega_n)\,.
\end{multline}
By Proposition \ref{Proposition_5.3} (i), we know that $g^{\nu,\zeta,L}$ is bounded uniformly in $\nu \in (0,\nu_0]$. By \eqref{free_energy_infinite_volume_3}, \eqref{free_energy_infinite_volume_6}, \eqref{free_energy_infinite_volume_8}, \eqref{free_energy_infinite_volume_12}, and \eqref{free_energy_infinite_volume_12_G}, it follows that the same is true for \eqref{free_energy_infinite_volume_12_H}.
We now use \eqref{free_energy_infinite_volume_3}, \eqref{free_energy_infinite_volume_6}, \eqref{free_energy_infinite_volume_8}, \eqref{free_energy_infinite_volume_12} (combined with Lemma \ref{v^L_lemma} (iv)), \eqref{free_energy_infinite_volume_12_G}, and \eqref{free_energy_infinite_volume_12_H} 
to deduce that
\begin{multline}
\label{free_energy_infinite_volume_conclusion}
\lim_{L \rightarrow \infty} g^{\nu,\zeta,L}=\sum_{n=1}^{\infty} \int_{\R^d} \dd x_2\, \cdots \int_{\R^d} \dd x_n\,\left[\nu \sum_{T_j \in \nu \N^*} \frac{z^{T_1}}{T_1}\, \int \mathbb{W}^{\infty,T_1}_{0,0}(\dd \omega_1)\mathrm{e}^{-\mathcal{V}^{\nu,\infty}(\omega_1)}\right] 
\\
\prod_{j=2}^{n} \left[\nu \sum_{T_j \in \nu \N^*} \frac{z^{T_j}}{T_j}\, \int \mathbb{W}^{\infty,T_j}_{x_j,x_j}(\dd \omega_j)\, 
\mathrm{e}^{-\mathcal{V}^{\nu,\infty}(\omega_j)} \right]
\,\varphi^{\infty}(\omega_1,\ldots,\omega_n)\,,
\end{multline}
uniformly in $\nu \in (0,\nu_0]$. This concludes the proof of claim (i) and the result follows.
\end{proof}

\begin{remark}[Theorem \ref{Infinite_volume_Theorem_2} for the Fermi and Boltzmann statistics]
\label{fermions_infinite_volume}
The methods used to analyse the infinite-volume limits in the case of bosons extend to fermions and the Boltzmann statistics as well. We do not pursue this in detail, but we give a very brief justification.

Recall Remark \ref{Remark_Boltzmann_gas_partition_function}. To get the Ginibre representation of the partition function for the Boltzmann statistics, one redefines the measures \eqref{measure_mu_1}--\eqref{measure_mu_2}, \eqref{measure_mu_1_bound}--\eqref{measure_mu_2_bound}, and \eqref{measure_mu_1_star}--\eqref{measure_mu_2_star} by only keeping the $k=1$ term in their defining sums. One then uses the same arguments as for the Bose gas.

To derive the Ginibre representation of the Fermi gas, one adds appropriate permutation signs to \eqref{measure_mu_1}--\eqref{measure_mu_2}, \eqref{measure_mu_1_bound}--\eqref{measure_mu_2_bound}, and \eqref{measure_mu_1_star}--\eqref{measure_mu_2_star} as in \eqref{single_loop_measure_Fermi}, and to \eqref{gamma_p_cluster_expansion_claim} as in \eqref{Ginibre_representation_4_Fermi}. Note however that in \eqref{gamma_p_cluster_expansion_claim}, the permutation signs that correspond to $\prod_{i=1}^p (-1)^{k_i-1}$ in \eqref{gamma_p_cluster_expansion_claim} are already contained in the measures \eqref{measure_mu_2}.

The cluster expansion of \cite{Ueltschi_2004} applies to complex measures with finite total variation, so Proposition \ref{gamma_p_cluster_expansion} still holds. Most of the estimates in Section \ref{section_infinite_volume} are done on nonnegative integrands, for example those containing $|\varphi^{L}(\omega_{1}, \ldots, \omega_{n})|$. In these cases, we get the desired estimates for fermions by applying the triangle inequality, and then using the existing estimates for bosons. One exception are equations where a telescoping argument is applied to signed sums, for example \eqref{(ii)_star_proof5}. When dealing with these cases for fermions, we apply the telescoping argument before the triangle inequality.

In short, at no point do we use cancellations between individual summands in \eqref{measure_mu_1}--\eqref{measure_mu_2}, \eqref{measure_mu_1_bound}--\eqref{measure_mu_2_bound}, and \eqref{measure_mu_1_star}--\eqref{measure_mu_2_star}. Therefore, the arguments for bosons extend to fermions.
\end{remark}

\appendix

\section{Heat kernel estimates. Proofs of Lemmas \ref{heat_kernel_estimates} and \ref{Riemann_sums_estimate}.}
\label{appendix_heat_kernel_estimates}
In this appendix, we prove Lemmas \ref{heat_kernel_estimates} and \ref{Riemann_sums_estimate}, which give us quantitative estimates on the heat kernel on $\Lambda_L$ given by \eqref{heat_kernel} above.

\begin{proof}[Proof of Lemma \ref{heat_kernel_estimates}]
We give a self-contained proof of (i) and (ii) for completeness.
The proof of (iii) relies on the estimate (i) and is given in the proof of \cite[Lemma 5.4 (ii)]{FKSS_2022}. We note that in \cite{FKSS_2022}, one considers $L=1$, but the estimate for general $L$ given in \eqref{FKSS_2022_estimate} follows analogously, once we know the estimate \eqref{heat_kernel_estimates_1} from part (i).

We prove (i) inductively on the dimension $d$. The nonnegativity of $\psi^t(x)$ follows immediately from \eqref{heat_kernel}, so we just need to show the upper bound in \eqref{heat_kernel_estimates_1}. For $d = 1$, we expand \eqref{heat_kernel} to obtain
\begin{equation} \label{heat-psi-d=1}
		\psi^{t}(x) = \frac{1}{(2\pi t)^{1/2}}\,\ee^{-\frac{x^{2}}{2t}}+ \sum_{n \in \Z \setminus \{0\}} \frac{1}{(2 \pi t)^{1/2}} \,\ee^{-\frac{(x-Ln)^{2}}{2t}}\,.
	\end{equation}
	Since $\ee^{-\frac{x^{2}}{2t}} \leq 1$, we only need to analyse the second term in  \eqref{heat-psi-d=1}. By interpreting \eqref{heat-psi-d=1} as a lower Riemann sum of a translated Gaussian function with mesh size $L t^{-1/2}$, we get
	\begin{equation}
	 \label{heat-psi-d=1_B}
		\sum_{n \in \Z \setminus \{0\}} \frac{1}{(2\pi t)^{1/2}} \,\ee^{-\frac{(x-Ln)^{2}}{2t}} \lesssim \frac{1}{L} \int_{-\infty}^{\infty} \ee^{-\frac{(x-y)^{2}}{2}}\, \dd y \lesssim \frac{1}{L}\,.
	\end{equation}
	Substituting \eqref{heat-psi-d=1_B}
 into \eqref{heat-psi-d=1} yields \eqref{heat_kernel_estimates_1} when $d=1$.

For the induction step, let us suppose that \eqref{heat_kernel_estimates_1} holds up to dimension $d - 1$ with $d \geq 2$. By \eqref{heat_kernel}, we can factorise $\psi^{t}(x)$ as
	\begin{equation}
	 \label{heat-psi_1}
		\psi^{t}(x) = \left( \sum_{n \in \Z} \frac{1}{(2 \pi t)^{1/2}} \,\ee^{-\frac{(x_1-n)^2}{2t}} \right) \left( \sum_{n' \in \Z^{d-1}} \frac{1}{(2 \pi t)^{(d-1)/2}}\, \ee^{-\frac{|x'-n'|^2}{2t}} \right).
\end{equation}
where $x'=(x_2,\ldots,x_n) \in \R^{n-1}$.
Using the induction hypothesis in \eqref{heat-psi_1}, we obtain \eqref{heat_kernel_estimates_1}.

We now prove (ii). By~\eqref{heat_kernel_estimates_tilde_A} and~\eqref{heat_kernel} it suffices to show the upper bound in \eqref{heat_kernel_estimates_tilde}. Using \eqref{heat_kernel_estimates_tilde_A} again and noting that the number of $n \in \Z^d$ with $|n| \in [k, k+1)$ is $\mathcal{O}_d(k^{d-1})$, it follows that 
\begin{equation}
\label{heat_kernel_estimates_tilde_B}
\sum_{n \in \Z^d \setminus \{0\}} \frac{1}{(2\pi t)^{d/2}} \,\ee^{-\frac{|Ln|^2}{2t}} \lesssim_d \sum_{k=1}^{\infty}  \frac{1}{(2\pi t)^{d/2}} \,\ee^{-\frac{L^2 k^2}{2t}} \,k^{d-1}\sim_d \sum_{k=1}^{\infty}   \ee^{-\frac{L^2 k^2}{2t}} \,\biggl(\frac{k}{\sqrt{t}}\biggr)^{d-1}\,\frac{1}{\sqrt{t}}\,.
\end{equation}
By considering Riemann sums of mesh size $t^{-1/2}$, the expression in \eqref{heat_kernel_estimates_tilde_B} is
\begin{equation}
\label{heat_kernel_estimates_tilde_B_integral}
\int_0^{+\infty} \dd y\, \ee^{-\frac{L^2 y^2}{2}}\, y^{d-1} \lesssim_d \frac{1}{L^d}\,.
\end{equation}
In \eqref{heat_kernel_estimates_tilde_B_integral}, we used the change of variables $y'=Ly$ to deduce the upper bound.
We hence obtain (ii).
\end{proof}

\begin{proof}[Proof of Lemma \ref{Riemann_sums_estimate}]
Recalling \eqref{heat_kernel} and the definition of $n_0$ given by \eqref{n_0_choice}, we write
\begin{equation}
\label{Riemann_sum_estimate_2}
\psi^t(x)= \frac{1}{(2\pi t)^{d/2}}\,\ee^{-\frac{|x-Ln_0|^2}{2t}}+\sum_{n \in \Z^d\backslash\{n_0\}}\frac{1}{(2\pi t)^{d/2}}\,\,\ee^{-\frac{|x-Ln|^2}{2t}}\,.
\end{equation}
We obtain \eqref{Riemann_sums_estimate_1B} and hence \eqref{Riemann_sums_estimate_1} from \eqref{Riemann_sum_estimate_2} provided we show 
\begin{equation}
\label{Riemann_sum_estimate_3}
\sum_{n \in \Z^d\backslash\{n_0\}}\frac{1}{(2\pi t)^{d/2}}\,\,\ee^{-\frac{|x-Ln|^2}{2t}} =
\sum_{n \in \Z^d\backslash\{0\}}\frac{1}{(2\pi t)^{d/2}}\,\,\ee^{-\frac{|x-Ln_{0}-Ln|^2}{2t}}
\lesssim_d \frac{1}{L^d}\,.
\end{equation}
We get the second sum from the first by introducing a new variable $n - n_0$. To show \eqref{Riemann_sum_estimate_3}, we use that for all $n \in \Z^d\backslash\{0\}$,
\begin{equation}
\label{Riemann_sum_estimate_4}
|x-Ln_0-Ln| \geq \frac{L|n|}{2\sqrt{d}}\,.
\end{equation}
We first prove \eqref{Riemann_sum_estimate_4}. By \eqref{n_0_choice}, we have
\begin{equation}
\label{Riemann_sum_estimate_4A}
|x-Ln_0| = |x|_L \leq \max_{y\in\Lambda_L} |y|_L = \frac{L\sqrt{d}}{2} \,.
\end{equation}
We also have that for all $n \in \Z^{d} \setminus \{0\}$,
\begin{equation}
\label{Riemann_sum_estimate_4B}
|x-Ln_0-Ln| \geq \frac{L}{2} \,.
\end{equation}
Indeed, suppose that $|x-Ln_0-Ln| < \frac{L}{2}$. Since $n \in \Z^d \setminus \{0\}$, we then have that
\begin{equation*}
|x-Ln_{0}| \geq L|n| - |x-Ln_{0}-Ln| > L-\frac{L}{2} = \frac{L}{2} \,.
\end{equation*}
This contradicts the definition of $n_{0}$, namely that $|x|_L = |x-Ln_0| = \min_{k \in \Z^{d}} |x-Lk|$ by \eqref{n_0_choice} and \eqref{periodic_Euclidean_norm}. Hence, \eqref{Riemann_sum_estimate_4B} holds as claimed.

Let now $n \in \Z^{d} \setminus \{0\}$ be arbitrary. If $|n| \leq \sqrt{d}$, then by \eqref{Riemann_sum_estimate_4B},
\begin{equation}
\label{Riemann_sum_estimate_4C}
|x-Ln_{0}-Ln| \geq \frac{L}{2} \geq \frac{L|n|}{2\sqrt{d}} \,.
\end{equation}
If on the other hand $|n| > \sqrt{d}$ we have by \eqref{Riemann_sum_estimate_4A} that
\begin{equation}
\label{Riemann_sum_estimate_4D}
|x-Ln_{0}-Ln| \geq L|n| - |x-Ln_0| \geq L|n| - \frac{L\sqrt{d}}{2} > \frac{L|n|}{2} \geq \frac{L|n|}{2\sqrt{d}} \,.
\end{equation}
We deduce \eqref{Riemann_sum_estimate_4} from \eqref{Riemann_sum_estimate_4C}--\eqref{Riemann_sum_estimate_4D}.

We now show \eqref{Riemann_sum_estimate_3}. By recalling  \eqref{heat_kernel_estimates_tilde_A}, and using \eqref{Riemann_sum_estimate_4} followed by Lemma~\ref{heat_kernel_estimates}~(ii), we have
\begin{equation}
\sum_{n \in \Z^d\backslash\{0\}}\frac{1}{(2\pi t)^{d/2}}\,\,\ee^{-\frac{|x-Ln_{0}-Ln|^2}{2t}} \leq 
\sum_{n \in \Z^d\backslash\{0\}}\frac{1}{(2\pi t)^{d/2}}\,\,\ee^{-\frac{|Ln|^2}{2t \cdot 4d}} = (4d)^{d/2} \eta^{4d t} \lesssim_{d} \frac{1}{L^d} \,.
\end{equation}
This concludes the proof Lemma \ref{Riemann_sums_estimate}.
\end{proof}

\section{Further remarks on the Ginibre representation}
\label{Ginibre_representation_appendix} 
In this appendix, we explain how to obtain Lemma \ref{Ginibre_representation} from the arguments used to prove \cite[Proposition 2.3]{FKSS_2023}. The proof of the latter is based on \cite{Ginibre} and is given in \cite[Appendix A.2]{FKSS_2023}. The conventions used in \cite{FKSS_2023} for the interaction potential, the $N$-body Hamiltonian, and the self-interactions are slightly different from the ones we use in the current work, so they need to be reconciled. Through this section, the superscript `$\mathrm{old}$' denotes objects defined in \cite{FKSS_2023}.

\begin{proof}[Summary of the argument]
We briefly summarise how to obtain the claim in Lemma \ref{Ginibre_representation} (i) from the arguments in \cite[Appendix A.2]{FKSS_2023}. The claim in (ii) follows by analogous arguments and we omit the details.
Let $N \in \N^*$ be given. On the $N$-particle space defined as in \eqref{n_particle_space}, the convention for the 
$N$-body Hamiltonian in \cite{FKSS_2023} is given by
\begin{equation}
\label{n_body_Hamiltonian_old}
H_N^{\nu,\mathrm{old}}\coloneqq -\frac{\nu}{2} \sum_{i=1}^{N} \Delta_i +\frac{1}{2} \sum_{i,j=1}^{N} v(x_i-x_j)\,;
\end{equation}
see \cite[(1.18)]{FKSS_2023}. In particular, from \eqref{n_body_Hamiltonian} and \eqref{n_body_Hamiltonian_old}, we obtain that 
\begin{equation}
\label{n_body_Hamiltonian_difference}
H_N^{\nu}=H_N^{\nu,\mathrm{old}}-\frac{N}{2} v(0)\,.
\end{equation}
In \eqref{n_body_Hamiltonian_difference}, we also used that $v$ is even by Assumption \ref{Assumption_on_v}.
The grand-canonical partition function considered in \cite{FKSS_2023} is different from the one in \eqref{grand_canonical_ensemble} and is given by
\begin{equation}
\label{grand_canonical_partition_old}
\Xi^{\nu,\zeta,\mathrm{old}}\coloneqq 1 + \sum_{N=1}^{\infty} \mathrm{Tr}_{\mathcal{H}_N} \left(\ee^{-H_{N}^{\nu,\mathrm{old}}}\,z^{N \nu} \right).
\end{equation}
In \cite{FKSS_2023}, the interaction $\mathcal{V}^{\nu,\infty}(\omega,\tilde{\omega})$ between two loops $\omega \in \Omega^{T(\omega)}, \tilde{\omega} \in \Omega^{T(\tilde{\omega})}$ with $T(\omega), T(\tilde{\omega}) \in \nu \N^*$ is given as in \eqref{loop_interaction} above. The convention for the self-interactions is different from that given by \eqref{self_interaction}. In particular, 
given $\omega \in \Omega^{T(\omega)}$ with $T(\omega) \in \nu \N$, its self-interaction in \cite{FKSS_2023} is given by $\mathcal{V}^{\nu,\infty}(\omega,\omega)$. From \eqref{loop_interaction} and \eqref{self_interaction}, it follows that for $\omega \in \Omega^{T(\omega)}$ with $T(\omega) \in \nu \N^*$, we have
\begin{equation}
\label{self_interaction_difference}
\frac{1}{2} \mathcal{V}^{\nu,\infty}(\omega,\omega)=\mathcal{V}^{\nu,\infty}(\omega)+\frac{T(\omega)}{2\nu} v(0)\,.
\end{equation}
Given $n \in \N^*$ and $\vec{\omega}=(\omega_1,\ldots,\omega_n) \in \Omega^{\vec{T}}$ such that $T_i=T(\omega_i) \in \nu \N^*$ for $i = 1, \ldots, n$, instead of \eqref{loop_interaction_2}, in \cite[(1.4)]{FKSS_2023} one considers
\begin{equation}
\label{loop_interaction_2_old}
\mathcal{V}^{\nu,\mathrm{old}}(\vec{\omega})\coloneqq \frac{1}{2} \sum_{i,j=1}^{n} \mathcal{V}^{\nu,\infty}(\omega_i,\omega_j)\,.
\end{equation}
The result of \cite[Proposition 2.3]{FKSS_2023} lets us rewrite \eqref{grand_canonical_partition_old} as 
\begin{equation}
\label{Ginibre_representation_1_old}
\Xi^{\nu,\zeta,\mathrm{old}}= 1 + \sum_{n=1}^{\infty} \frac{1}{n!} \int \mathbb{L}^{\nu,\zeta}(\dd \omega_1) \cdots\, \mathbb{L}^{\nu,\zeta}(\dd \omega_n) \,\exp\bigl(-\mathcal{V}^{\nu,\mathrm{old}}(\vec{\omega})\bigr)\,,
\end{equation}
where $\mathbb{L}^{\nu,\zeta}(\dd \omega)$ is given by \eqref{single_loop_measure}.
From \eqref{loop_interaction_2}, \eqref{self_interaction_difference}, and \eqref{loop_interaction_2_old}, we deduce that
\begin{equation}
\label{loop_interaction_difference}
\mathcal{V}^{\nu,\mathrm{old}}(\vec{\omega})=\mathcal{V}^{\nu,\infty}(\vec{\omega})+\frac{1}{2} \sum_{i=1}^{n} \frac{T(\omega_i)}{\nu} v(0)\,.
\end{equation}
Recalling \eqref{grand_canonical_ensemble} and \eqref{n_body_Hamiltonian_difference}, we can write 
\begin{equation}
\label{grand_canonical_partition_neq}
\Xi^{\nu,\zeta}=1 + \sum_{N=1}^{\infty} \mathrm{Tr}_{\mathcal{H}_N} \left(\ee^{-H_{N}^{\nu,\mathrm{old}}}\,\ee^{\frac{N}{2} v(0)}\,z^{N \nu} \right).
\end{equation}
Using \eqref{grand_canonical_partition_old}, \eqref{Ginibre_representation_1_old}, and \eqref{grand_canonical_partition_neq} with
\begin{equation}
\label{N_choice}
N=\frac{T(\omega_1)}{\nu}+\cdots+\frac{T(\omega_n)}{\nu}\,,
\end{equation}
and applying the arguments in \cite[Appendix A.2]{FKSS_2023}, one can rewrite \eqref{grand_canonical_partition_neq} as
\begin{equation}
\label{Ginibre_representation_1_new}
\Xi^{\nu,\zeta}= 1 + \sum_{n=1}^{\infty} \frac{1}{n!} \int \mathbb{L}^{\nu,\zeta}(\dd \omega_1) \cdots\, \mathbb{L}^{\nu,\zeta}(\dd \omega_n) \,\exp\left(-\mathcal{V}^{\nu,\mathrm{old}}(\vec{\omega})\right)\,\exp \left[\left(\frac{T(\omega_1)}{2\nu} + \cdots + \frac{T(\omega_n)}{2\nu}\right) v(0)\right].
\end{equation}
Note that in \eqref{Ginibre_representation_1_new}, in the term coming from the trace over the $N$-particle space, we are adding a factor of $\ee^{\frac{N}{2} v(0)}$, and we are using \eqref{N_choice}.
Substituting \eqref{loop_interaction_difference} into \eqref{Ginibre_representation_1_new}, we conclude that
\begin{equation}
\label{Ginibre_representation_1_new_B}
\Xi^{\nu,\zeta}= 1 + \sum_{n=1}^{\infty} \frac{1}{n!} \int \mathbb{L}^{\nu,\zeta}(\dd \omega_1) \cdots\, \mathbb{L}^{\nu,\zeta}(\dd \omega_n) \,\exp\bigl(-\mathcal{V}^{\nu}(\vec{\omega})\bigr)\,,
\end{equation}
as was claimed. We note that \eqref{Ginibre_representation_1_new_B} is indeed a convergent sum. To see this, we use Lemma \ref{lemma_loop_interactions}, consider $\tilde{z}$ as in \eqref{definition_of_z_tilde}, and argue as in \eqref{Proposition_2_1_2} and \eqref{fin-vol-telescoping-bound} to write
\begin{multline}
\label{Ginibre_representation_1_new_C}
1 + \sum_{n=1}^{\infty} \frac{1}{n!} \int \mathbb{L}^{\nu,\zeta}(\dd \omega_1) \cdots\, \mathbb{L}^{\nu,\zeta}(\dd \omega_n) \,\exp\bigl(-\mathcal{V}^{\nu}(\vec{\omega})\bigr)
\\
\leq 
1+\sum_{n = 1}^{\infty} \frac{1}{n!} \sum_{\vec{k} \in (\N^*)^n} \prod_{i = 1}^{n} \frac{z^{k_{i}\nu}}{k_{i}} \int \mathbb{W}^{\nu \vec{k}} (\dd \vec{\omega})\exp\!\left(B\sum_{i=1}^nk_i\right)
\\ 
\leq 
1+ \sum_{n=1}^{\infty} \frac{1}{n!} \left(\sum_{\ell=1}^{\infty} \frac{\tilde{z}^{\ell \nu}}{\ell} \int_{\Lambda} \dd x \int \mathbb{W}_{x, x}^{\ell \nu}(\dd \omega)\right)^n \leq C(d,\zeta,B,L)\,.
\end{multline}
For the final bound in \eqref{Ginibre_representation_1_new_C}, we recalled \eqref{fin-vol-bound1}--\eqref{fin-vol-bound2}, which in turn rely on \eqref{fugacity_condition}.
\end{proof}

\section{Tuning the particle density}
\label{tuning_particle_density}
We briefly explain how one can tune the particle density of the interacting Bose gas in the context of the large-mass limit. The latter is necessary in the study of the mean-field limit in dimensions $d>1$; see \cite{FKSS_2020_1} for a further discussion. As is seen in Theorem \ref{partition_function_result}, Corollary \ref{relative_partition_function_convergence},  and Theorem \ref{density_matrix_result} above, there is no need to tune the particle density in the large-mass limit. We study this only in the context of the relative partition function in a fixed finite volume, as in Corollary \ref{relative_partition_function_convergence}. 
Our other results can also be rephrased in this context; we omit the details. 

In order to state a precise result, we follow the setup given in \cite[Section 2]{FKSS_2020_2}. 
In particular, we work in the \emph{bosonic Fock space} defined as 
\begin{equation}
\label{Fock_space}
\mathcal{F}\coloneqq \bigoplus_{n \in \N} \mathcal{H}_n\,,
\end{equation}
where we recall \eqref{n_particle_space}. We denote elements of $\mathcal{F}$ as vectors $\Psi=(\Psi^{(n)})_{n \in \mathbb{N}}$. One obtains that \eqref{Fock_space} is a Hilbert space if given $\Psi_1=(\Psi_1^{(n)})_{n \in \mathbb{N}}, \Psi_2=(\Psi_2^{(n)})_{n \in \mathbb{N}} \in \mathcal{F}$, we define their inner product as
\begin{equation*}
 \langle \Psi_1, \Psi_2 \rangle\coloneqq \sum_{n \in \N} \big\langle \Psi_1^{(n)}, \Psi_2^{(n)} \big \rangle_{\mathcal{H}_n}\,. 
\end{equation*}
Recalling \eqref{n_body_Hamiltonian}, we consider the \emph{many-body Hamiltonian} $\mathbb{H}^{\nu}$ acting on $\mathcal{F}$ given by 
\begin{equation}
\label{H_sgv_1}
\mathbb{H}^{\nu}=\bigoplus_{n \in \N} H^{\nu}_n\,.
\end{equation}

Let us now rewrite \eqref{H_sgv_1} in second quantised form. 
To this end, given  $f \in L^2(\Lambda)$, we define the bosonic annihilation and creation operators $a(f)$ and $a^{*}(f)$ as follows.
\begin{align}
\label{a(f)}
(a(f) \Psi)^{(n)}(x_1,\ldots,x_n)&\coloneqq \sqrt{n+1} \int_{\Lambda} \dd x\, \overline{f(x)}\,\Psi^{(n+1)}(x,x_1, \ldots, x_n)\,,
\\
\label{a^*(f)}
(a^*(f) \Psi)^{(n)}(x_1,\ldots,x_n)&\coloneqq \frac{1}{\sqrt{n}} \sum_{j=1}^{n} f(x_j)\,\Psi^{(n-1)}(x_1, \ldots, x_{j-1},x_{j+1}, \ldots,x_n)\,.
\end{align}
The operators \eqref{a(f)}--\eqref{a^*(f)} are unbounded and closed on $\mathcal{F}$. Under suitable choice of domain, they are each other's adjoints. Furthermore, they satisfy the \emph{canonical commutation relations}, i.e.\ for all $f_1, f_2 \in L^2(\Lambda)$, we have
\begin{equation}
\label{Canonical_commutation_relations_1}
[a(f_1),a^{*}(f_2)]=\langle f_1, f_2 \rangle_{L^2}\,, \qquad [a(f_1),a(f_2)]=[a^{*}(f_1),a^{*}(f_2)]=0\,,
\end{equation}
where $[X,Y] \equiv XY -YX$ denotes the commutator.

With $\nu>0$ fixed and $f \in L^2(\Lambda)$, we use \eqref{a(f)}--\eqref{a^*(f)} and henceforth consider the \emph{rescaled} annihilation and creation operators, defined as 
\begin{equation}
\label{rescaled_annihilation_creation}
a_{\nu} (f)\coloneqq \sqrt{\nu} a(f)\,,\qquad a_{\nu}^{*}(f)\coloneqq \sqrt{\nu} a^*(f)\,.
\end{equation}
We can view $a_{\nu}, a_{\nu}^{*}$ defined in \eqref{rescaled_annihilation_creation} as being operator-valued distributions.
In this framework, we can rewrite \eqref{rescaled_annihilation_creation} in terms of their distribution kernels $a_{\nu}(x), a_{\nu}^{*}(x)$ as
\begin{equation}
a_{\nu} (f)=\int_{\Lambda} \dd x\,\overline{f(x)}\, a_{\nu}(x)\,,\qquad a_{\nu}^{*} (f)=\int_{\Lambda} \dd x\,f(x)\, a_{\nu}^{*}(x)\,.
\end{equation}
Furthermore, from \eqref{Canonical_commutation_relations_1} one has that for all $x,y \in \Lambda$,
\begin{equation}
\label{Canonical_commutation_relations}
[a_{\nu}(x),a_{\nu}^*(y)]=\nu \delta(x-y)\,,\qquad [a_{\nu}(x),a_{\nu}(y)]=[a_{\nu}^{*}(x),a_{\nu}^{*}(y)]=0\,.
\end{equation}
An explicit calculation using \eqref{a(f)}--\eqref{a^*(f)} allows us to rewrite \eqref{H_sgv_1} as 
\begin{equation}
\label{H_sgv}
\mathbb{H}^{\nu}=\frac{1}{2} \int_{\Lambda} \dd x\, \nabla a_{\nu}^{*}(x)\, \nabla a_{\nu}(x)+\frac{1}{2 \nu^2} \int_{\Lambda} \dd x\, \int_{\Lambda} \dd y\, a_{\nu}^{*}(x)\, a_{\nu}^{*}(y)\, v(x-y)\, a_{\nu}(x)\,a_{\nu}(y)\,.
\end{equation}
Given parameters $\kappa_0, \rho>0$, we define
\begin{multline}
\label{H^nu_kappa_rho}
\mathbb{H}^{\nu}(\kappa_0,\rho)\coloneqq \frac{1}{2}\int_{\Lambda} \dd x\, \nabla a_{\nu}^{*}(x)\, \nabla a_{\nu}(x)+\kappa_0 \int_{\Lambda} \dd x\, a_{\nu}^{*}(x)\,a_{\nu}(x)
\\
+\frac{1}{2\nu^2} \int_{\Lambda} \dd x\, \int_{\Lambda} \dd y\,\bigl(a_{\nu}^{*}(x)a_{\nu}(x)-\rho\bigr)\,v(x-y)\, \bigl(a_{\nu}^{*}(y)a_{\nu}(y)-\rho\bigr)\,.
\end{multline}
Note that in \eqref{H^nu_kappa_rho}, one is tuning the (pointwise) quantum density $a_{\nu}^{*}(x)a_{\nu}(x)$ by subtracting the parameter $\rho$.
Using \eqref{Canonical_commutation_relations} and the fact that $v$ is even (by Assumption \ref{Assumption_on_v}), we can write 
\begin{multline}
\label{H^nu_kappa_rho_B}
\frac{1}{2\nu^2} \int_{\Lambda} \dd x\, \int_{\Lambda} \dd y\,\bigl(a_{\nu}^{*}(x)a_{\nu}(x)-\rho\bigr)\,v(x-y)\, \bigl(a_{\nu}^{*}(y)a_{\nu}(y)-\rho\bigr)
\\
=\frac{1}{2\nu^2} \int_{\Lambda} \dd x\, \int_{\Lambda} \dd y\,\bigl(a_{\nu}^{*}(x)a_{\nu}(x)-\rho\bigr)\,v(x-y)\, \bigl(a_{\nu}^{*}(y)a_{\nu}(y)-\rho\bigr)
\\
=\frac{1}{2 \nu^2} \int_{\Lambda} \dd x\, \int_{\Lambda} \dd y\, a_{\nu}^{*}(x)\, a_{\nu}^{*}(y)\, v(x-y)\, a_{\nu}(x)\,a_{\nu}(y)
\\
+\biggl[\frac{1}{2\nu}\,v(0)-\frac{\rho}{\nu^2}\,\widehat{v}(0)\biggr]\,\int_{\Lambda}\dd x\, a_{\nu}^{*}(x)\,a_{\nu}(x)
+\frac{\rho^2}{2\nu^2}\,\widehat{v}(0)\,L^d\,.
\end{multline}
In \eqref{H^nu_kappa_rho_B} and onwards, we are using the convention of the Fourier transform given by 
\begin{equation*}
\widehat{u}(k) \equiv \int_{\Lambda} \dd x\, u(x)\, \ee^{-2 \pi \ii x \cdot k/L}\,.
\end{equation*}
Combining \eqref{H_sgv}--\eqref{H^nu_kappa_rho_B}, we can write
\begin{equation}
\label{Hamiltonian_rewritten}
\mathbb{H}^{\nu}(\kappa_0,\rho)=\mathbb{H}^{\nu}+\biggl[\kappa_0
+\frac{1}{2\nu}\,v(0)-\frac{\rho}{\nu^2}\,\widehat{v}(0)\biggr]\,\int_{\Lambda}\dd x\, a_{\nu}^{*}(x)\,a_{\nu}(x)
+\frac{\rho^2}{2\nu^2}\,\widehat{v}(0)\,L^d\,.
\end{equation}
From \eqref{n_body_Hamiltonian}, \eqref{H_sgv_1}, and \eqref{Hamiltonian_rewritten}, we can write
\begin{equation*}
\mathbb{H}^{\nu}(\kappa_0,\rho)=\bigoplus_{n \in \N} \mathbb{H}^{\nu}_n(\kappa_0,\rho)\,,
\end{equation*}
where for $n \in \N$,
\begin{equation}
\label{Hamiltonian_rewritten_2}
\mathbb{H}^{\nu}_n(\kappa_0,\rho)\coloneqq -\frac{\nu}{2} \sum_{i=1}^{n} \Delta_i +\sum_{1 \leq i < j \leq n} v(x_i-x_j)+\biggl[\kappa_0
+\frac{1}{2\nu}\,v(0)-\frac{\rho}{\nu^2}\,\widehat{v}(0)\biggr]\,n \nu+\frac{\rho^2}{2\nu^2}\,\widehat{v}(0)\,L^d\,.
\end{equation}
With notation as in \eqref{Hamiltonian_rewritten_2}, we define 
\begin{equation}
\label{Xi_shifted_particle_density}
\Xi^{\nu,\zeta}(\kappa_0,\rho)\coloneqq 1+\sum_{n=1}^{\infty} \mathrm{Tr}_{\mathcal{H}_n} \ee^{-\mathbb{H}^{\nu}_{n}(\kappa_0,\rho)}\,.
\end{equation}
We can now state the analogue of Corollary \ref{relative_partition_function_convergence} for convergence of relative partition functions when the many-body Hamiltonian is given by \eqref{H^nu_kappa_rho}.
\begin{proposition}
\label{partition_function_tuned_particle_density}
Given $\zeta>0$, suppose that $\kappa_0, \rho >0$ (possibly depending on $\nu$) are chosen such that

\begin{equation}
\label{fugacity_condition*}
\exp\biggl\{-\nu\biggl(\kappa_0+\frac{1}{2\nu} v(0)-\frac{\rho}{\nu^2} \widehat{v}(0)\biggr)\biggr\}\,\nu^{-d/2}=\zeta\,.
\end{equation}
With notation as in \eqref{Xi_shifted_particle_density}, we let
\begin{equation}
\label{relative_partition_function_tuned_particle_density}
\mathcal{Z}^{\nu,\zeta}(\kappa_0,\rho)\coloneqq \frac{\Xi^{\nu,\zeta}(\kappa_0,\rho)}{\Xi^{\nu,\zeta}_{(0)}(\kappa_0,\rho)}\,,\qquad 
\Xi^{\nu,\zeta}_{(0)}(\kappa_0,\rho)\coloneqq \Xi^{\nu,\zeta}(\kappa_0,\rho)\bigl|_{v=0}\,.
\end{equation}
We then have 
\begin{equation}
\label{relative_partition_function_tuned_particle_density_convergence}
\lim_{\nu \rightarrow 0}\mathcal{Z}^{\nu,\zeta}(\kappa_0,\rho)=\mathcal{Z}^{0,\zeta}\,,
\end{equation}
for $\mathcal{Z}^{0,\zeta}$ as in \eqref{relative_partition_function_convergence_1}.
\end{proposition}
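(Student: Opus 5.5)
The plan is to reduce Proposition \ref{partition_function_tuned_particle_density} to Corollary \ref{relative_partition_function_convergence}, i.e.\ to Theorem \ref{partition_function_result}, by showing that the tuning only shifts the chemical potential and multiplies by a constant.

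Write $\mu^*\coloneqq -\nu\bigl(\kappa_0+\frac{1}{2\nu}v(0)-\frac{\rho}{\nu^2}\widehat v(0)\bigr)$ and $c\coloneqq \frac{\rho^2}{2\nu^2}\widehat v(0)L^d$. Then \eqref{Hamiltonian_rewritten_2} gives
\begin{equation*}
\mathbb{H}^{\nu}_n(\kappa_0,\rho)=H^\nu_n-\mu^* n + c\,,
\end{equation*}
since $n\nu\cdot[\cdots]=-\mu^* n$. Hence
\begin{equation*}
\Xi^{\nu,\zeta}(\kappa_0,\rho)=\ee^{-c}\Bigl(1+\sum_{n\ge1}\mathrm{Tr}_{\mathcal H_n}\ee^{-H^\nu_n}\ee^{\mu^* n}\Bigr)\,.
\end{equation*}
Condition \eqref{fugacity_condition*} reads $\ee^{\mu^*}\nu^{-d/2}=\zeta$, which is exactly Assumption \ref{Choice_of_z} with $\mu=\mu^*$. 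Since $\ee^{\mu^* n}=z^{n\nu}$ with $z$ as in \eqref{fugacity}, the bracket equals $\Xi^{\nu,\zeta}$ from \eqref{grand_canonical_ensemble}. Thus $\Xi^{\nu,\zeta}(\kappa_0,\rho)=\ee^{-c}\,\Xi^{\nu,\zeta}$.

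The main subtlety is the free partition function. Setting $v=0$ in \eqref{Hamiltonian_rewritten_2} kills both the shift by $\frac{1}{2\nu}v(0)-\frac{\rho}{\nu^2}\widehat v(0)$ and the constant $c$. The free chemical potential is therefore $-\nu\kappa_0$, not $\mu^*$, and it need not satisfy \eqref{fugacity_condition*}. So $\Xi^{\nu,\zeta}_{(0)}(\kappa_0,\rho)$ is not literally $\ee^{-c}\Xi^{\nu,\zeta}_{(0)}$ under the literal reading. I would handle this by adopting the convention, as in the mean-field setting of \cite{FKSS_2020_1}, that ``$|_{v=0}$'' is taken at fixed effective chemical potential, i.e.\ with \eqref{fugacity_condition*} held fixed. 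In other words, the constraint determines $\kappa_0$ as a function of $(\nu,\zeta,\rho,v)$, and the free ensemble uses the same fugacity $z$.

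With that convention, the free partition function is $\Xi^{\nu,\zeta}_{(0)}(\kappa_0,\rho)=\ee^{-c}\Xi^{\nu,\zeta}_{(0)}$, or $\Xi^{\nu,\zeta}_{(0)}$ with $c=0$ if the constant is dropped at $v=0$. In the first case the factor $\ee^{-c}$ cancels, and $\mathcal Z^{\nu,\zeta}(\kappa_0,\rho)=\mathcal Z^{\nu,\zeta}$. The conclusion \eqref{relative_partition_function_tuned_particle_density_convergence} then follows from Corollary \ref{relative_partition_function_convergence}, using that $\Xi^{0,\zeta}_{(0)}>0$ and Theorem \ref{partition_function_result} applied at $v=0$.

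If the constant is instead dropped at $v=0$, a residual factor $\ee^{-c}$ remains. I would then check whether $c$ must be absorbed by normalising the relative partition function, and state the result accordingly.

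The main obstacle is therefore interpretive: fixing the convention for the $v=0$ reference state so that the fugacity condition holds for both numerator and denominator. The analytic content is entirely inherited from Theorem \ref{partition_function_result}.
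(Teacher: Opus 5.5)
Your proposal is correct and is essentially the paper's argument. The paper likewise notes that $\mathbb{H}^{\nu}_n(\kappa_0,\rho)$ is $H^\nu_n$ with the chemical potential shifted, so that \eqref{fugacity_condition*} replaces \eqref{fugacity_condition}, plus the constant $\frac{\rho^2}{2\nu^2}\widehat v(0)L^d$, which it asserts cancels in the ratio. It therefore tacitly uses exactly your convention that the $v=0$ reference keeps the same fugacity and the same constant, so your fallback case of dropping $c$ need not be pursued.

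One small point you share with the paper: the identity $\Xi^{\nu,\zeta}(\kappa_0,\rho)=\ee^{-c}\,\Xi^{\nu,\zeta}$ requires the $n=0$ term to be $\ee^{-c}$. That is what the Fock-space trace $\mathrm{Tr}_{\mathcal F}\,\ee^{-\mathbb{H}^{\nu}(\kappa_0,\rho)}$ gives, whereas the definition \eqref{Xi_shifted_particle_density} literally writes $1$ there.
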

\begin{proof}
The proof is analogous to that of Corollary \ref{relative_partition_function_convergence}. We note that the constant factors $\frac{\rho^2}{2\nu^2}\,\widehat{v}(0)\,L^d$ cancel out in the ratio given in \eqref{relative_partition_function_tuned_particle_density}. Furthermore, the condition \eqref{fugacity_condition} is replaced by \eqref{fugacity_condition*} above.
\end{proof}

\section{Justification of Remark \ref{Dirichlet_boundary_condition_remark} and proof of Proposition \ref{Infinite_volume_Theorem_1}}
\label{Dirichlet_boundary_conditions}

In this section, we explain Remark \ref{Dirichlet_boundary_condition_remark} and prove Proposition \ref{Infinite_volume_Theorem_1}.
We first justify Remark \ref{Dirichlet_boundary_condition_remark}.
\begin{proof}[Justification of Remark \ref{Dirichlet_boundary_condition_remark}]
In \cite[Chapter 3]{Ginibre}, Ginibre shows that in dimensions $d \geq 2$, \eqref{Ginibre_representation_1_bar} corresponds to the grand-canonical partition function \eqref{grand_canonical_ensemble} when Dirichlet boundary conditions are considered in \eqref{n_body_Hamiltonian}, barring one change: the potential Ginibre considers that is closest to our potential is not continuous at the origin, but is allowed to diverge to infinity there. We give a brief justification that this singularity is not essential to the analysis, which will make our potential admissible. We adopt Ginibre's notation in the remainder of this remark.

Assume the potential $\Phi$ in case A) in \cite[p.\ 346]{Ginibre} is continuous on an open set $\Lambda \subset \R^{\nu}$. In addition, recall that $\Phi$ is a real, even function, bounded from below by some constant $b \in \R$. Recall that $\Lambda \subset \R^{\nu}$, and recall the following notation from \cite{Ginibre}.
\begin{gather*}
	x = (x_{1}, \ldots, x_{m}) \in \Lambda^{m}
	\intertext{is a configuration of $m$ particles, where $x_{j} \in \Lambda$ is the position of the $j$-th particle,}
	U(x) \coloneq \sum_{1 \leq i < j \leq m} \Phi(x_{i} - x_{j})
	\intertext{is the total interaction potential of the particles, and}
	S_{a} \coloneq \{ x \in \R^{m \nu} : |x_{i} - x_{j}| > a \ \text{for all pairs $(i,j)$} \},
\end{gather*}
where $a \geq 0$ is fixed, is the set of admissible configurations; when $a>0$ the configurations in $\Lambda^{m} \setminus S_{a}$ are forbidden by the hard-core interaction. In the case of continuous $\Phi$, there is no need to introduce $S_{a}$, and the set of admissible configurations is $\Lambda^{m}$. In particular, we can take $a=0$.

We define the integral kernel $W^{t}(x, y)$ as in \cite[(1.37)]{Ginibre}. The interaction $U$ is still bounded from below, so the integrand is well-defined. Let $\mathcal{H} \coloneq L^{2}(\Lambda^{m})$ for a fixed $m \in \N^{*}$. Then $W^{t}(x, y)$ defines a bounded self-adjoint operator on $\mathcal{H}$, denoted by $W^{t}$. The properties of $W^{t}(x, y)$ and $W^{t}$ stated in \cite[Lemma 1.4]{Ginibre} and in the paragraph preceeding it only require that $U$ is bounded from below, real, and continuous, and therefore still hold, but with $\Lambda^{m} \cap S_{a}$ replaced by $\Lambda^{m}$.

Finally, we observe that the analysis in \cite[Lemma 1.5]{Ginibre} is done on compact subsets of $\Lambda^{m} \cap S_{a}$ where the interaction $U$ is bounded and continuous. Namely, these sets are $Q_{\varepsilon}$ and $Q_{c/2}$, and by definition lie at a positive distance away from the boundary of $\Lambda^{m} \cap S_{a}$. Note that the potentials Ginibre considers only have singularities on $\Lambda^{m} \setminus S_{a}$. On the other hand, the potentials we consider are continuous on the entire $\Lambda^{m}$. Therefore, the arguments in the proof of \cite[Lemma 1.5]{Ginibre} immediately extend to our case, where we replace $\Lambda^{m} \cap S_{a}$ by $\Lambda^{m}$.
\end{proof}

Let us now prove Proposition \ref{Infinite_volume_Theorem_1}.
\begin{proof}[Proof of Proposition \ref{Infinite_volume_Theorem_1}]
Let us first prove claim (i). 
Arguing as in the proof of Proposition \ref{Proposition_5.3} (i), we get that 
\begin{equation*}
\sup_{L \geq 1} \frac{\log \overline{\Xi}^{\,\nu,\zeta,L}}{|\Lambda_L|} <\infty\,.
\end{equation*}
Hence, we just need to show that the limit \eqref{Theorem_1.16_ii} exists.
Given $R>0$, we define the truncated interaction $v_{(R)} \colon \R^d \rightarrow \R$ as
\begin{equation}
\label{v_{(R)}}
v_{(R)} \coloneqq v\,\mathbf{1}_{|x|<R}\,.
\end{equation}
By \eqref{Lemma_4.1_iii} and \eqref{v_{(R)}}, it follows that 
\begin{equation}
\label{v_{(R)}_convergence_1}
\lim_{R \rightarrow \infty} \|v_{(R)}-v\|_{L^1(\R^d)}=0\,.
\end{equation}
Moreover, from \eqref{v_assumption_iii}, it follows that 
\begin{equation}
\label{v_{(R)}_convergence_2}
\lim_{R \rightarrow \infty} \|v_{(R)}-v\|_{L^{\infty}(\R^d)}=0\,.
\end{equation}
In the rest of the section, let $\mathcal{V}^{\nu,\infty}_{(R)}(\vec{\omega})$ denote \eqref{loop_interaction_2} with interaction potential \eqref{v_{(R)}}  in \eqref{loop_interaction}--\eqref{self_interaction}.
Recalling \eqref{single_loop_measure_bar}, we define 
\begin{equation}
\label{Ginibre_representation_1_bar_R}
\overline{\Xi}^{\,\nu,\zeta,L}_{(R)} \coloneqq  1 + \sum_{n=1}^{\infty} \frac{1}{n!} \int \overline{\mathbb{L}}^{\,\nu,\zeta,L}(\dd \omega_1) \cdots\, \overline{\mathbb{L}}^{\,\nu,\zeta,L}(\dd \omega_n) \,\exp\left(-\mathcal{V}^{\nu,\infty}_{(R)}(\vec{\omega})\right).
\end{equation}
Let us first note that 
\begin{equation}
\label{Ginibre_representation_1_bar_R_convergence}
\frac{\log \overline{\Xi}^{\,\nu,\zeta,L}}{|\Lambda_L|}-\frac{\log \overline{\Xi}^{\,\nu,\zeta,L}_{(R)}}{|\Lambda_L|}=\mathcal{O}_{d,\zeta,v}\left(\|v-v_{(R)}\|_{L^{\infty}(\R^d)}\right)+\mathcal{O}_{d,\zeta,v}\left(\|v-v_{(R)}\|_{L^1(\R^d)}\right),
\end{equation}
uniformly in $L$. 
The significance of \eqref{Ginibre_representation_1_bar_R_convergence} is that it lets us work with a finite-range interaction potential \eqref{v_{(R)}}, instead of with $v$. This is crucially used to obtain \eqref{Infinite_volume_Thm_1_Star_3_B} below.

To prove \eqref{Ginibre_representation_1_bar_R_convergence}, we first recall \eqref{I^{L}} and define the following modifications of the measure \eqref{measure_mu_1}:
\begin{align}
\label{measure_mu_1_A}
\overline{\mu}^{L}(\dd \omega) &\coloneqq  \nu \sum_{T\in \nu {\N}^*} \frac{z^T}{T} \, \int_{\Lambda_L}\dd x\, \int \mathbb{W}^{\infty,T}_{x,x}(\dd \omega)\,\ee^{-\mathcal{V}^{\nu,\infty}(\omega)}\,\mathbf{1}_{ \mathscr{I}_L}(\omega)\,,
\\
\label{measure_mu_1_B}
\overline{\mu}^{L}_{(R)}(\dd \omega) &\coloneqq \nu \sum_{T\in \nu {\N}^*} \frac{z^T}{T} \, \int_{\Lambda_L}\dd x\, \int \mathbb{W}^{\infty,T}_{x,x}(\dd \omega)\,\ee^{-\mathcal{V}^{\nu,\infty}_{(R)}(\omega)}\,\mathbf{1}_{ \mathscr{I}_L}(\omega)\,,
\\
\label{measure_mu_1_C}
\overline{\mu}^{L,0}(\dd \omega) &\coloneqq  \nu \sum_{T\in \nu {\N}^*} \frac{z^T}{T} \, \int_{\Lambda_L}\dd x\, \int \mathbb{W}^{\infty,T}_{x,x}(\dd \omega)\,\mathbf{1}_{ \mathscr{I}_L}(\omega)\,.
\end{align}
By \eqref{measure_mu_1_A}--\eqref{measure_mu_1_C} and the nonnegativity of $v$, we obtain 
\begin{equation}
\label{measure_mu_1_D}
\overline{\mu}^{L}(\dd \omega) \,,\overline{\mu}^{L}_{(R)}(\dd \omega) \leq \overline{\mu}^{L,0}(\dd \omega)\,.
\end{equation}
The arguments from the proof of Proposition \ref{gamma_p_cluster_expansion} (i) allow us to deduce that 
\begin{equation}
\label{R_cluster_expansion_1}
\frac{\log \overline{\Xi}^{\,\nu,\zeta,L}}{|\Lambda_L|}=\sum_{n \geq 1} \frac{1}{|\Lambda_L|} \,\int \overline{\mu}^L(\dd \omega_1)\,\cdots\,\overline{\mu}^L(\dd \omega_n)\,\varphi^{\infty}(\omega_1,\ldots,\omega_n)
\end{equation}
and 
\begin{equation}
\label{R_cluster_expansion_2}
\frac{\log \overline{\Xi}^{\,\nu,\zeta,L}_{(R)}}{|\Lambda_L|}=\sum_{n \geq 1} \frac{1}{|\Lambda_L|} \,\int \overline{\mu}^L_{(R)}(\dd \omega_1)\,\cdots\,\overline{\mu}^L_{(R)}(\dd \omega_n)\,\varphi^{\infty}_{(R)}(\omega_1,\ldots,\omega_n)\,.
\end{equation}
In \eqref{R_cluster_expansion_1}--\eqref{R_cluster_expansion_2}, we defined the following modifications of \eqref{Ursell_function_definition}:
\begin{equation}
\label{Ursell_function_definition_modification_1}
\varphi^{\infty}(\omega_{1},\dots, \omega_{n})
\coloneqq 
\frac{1}{n!} \sum_{\mathcal{G} \in {\mathfrak{G}}^c_n} \prod_{\{i,j\} \in \mathcal{G}} \xi^{\infty}(\omega_i, \omega_j)\,, \quad \xi^{\infty}(\omega, \tilde \omega)\coloneqq  \exp\bigl(-\mathcal{V}^{\nu,\infty}(\omega,\tilde \omega)\bigr) - 1\,,
\end{equation}
\begin{equation}
\label{Ursell_function_definition_modification_2}
\varphi^{\infty}_{(R)}(\omega_{1},\dots, \omega_{n})
\coloneqq 
\frac{1}{n!} \sum_{\mathcal{G} \in {\mathfrak{G}}^c_n} \prod_{\{i,j\} \in \mathcal{G}} \xi^{\infty}_{(R)}(\omega_i, \omega_j)\,, \quad \xi^{\infty}_{(R)}(\omega, \tilde \omega)\coloneqq  \exp\left(-\mathcal{V}^{\nu,\infty}_{(R)}(\omega,\tilde \omega)\right) - 1\,.
\end{equation}
The estimate \eqref{Ginibre_representation_1_bar_R_convergence} follows if we prove that 
\begin{equation}
\label{R_cluster_expansion_3}
\left| \frac{\log \overline{\Xi}^{\,\nu,\zeta,L}}{|\Lambda_L|}-\sum_{n \geq 1} \frac{1}{|\Lambda_L|} \,\int \overline{\mu}^L_{(R)}(\dd \omega_1)\,\cdots\,\overline{\mu}^L_{(R)}(\dd \omega_n)\,\varphi^{\infty}(\omega_1,\ldots,\omega_n)\right|= \mathcal{O}_{d,\zeta,v}\left(\|v-v_{(R)}\|_{L^{\infty}(\R^d)}\right)
\end{equation}
and 
\begin{equation}
\label{R_cluster_expansion_4}
\left|\sum_{n \geq 1} \frac{1}{|\Lambda_L|} \,\int \overline{\mu}^L_{(R)}(\dd \omega_1)\,\cdots\,\overline{\mu}^L_{(R)}(\dd \omega_n)\,\varphi^{\infty}(\omega_1,\ldots,\omega_n)-\frac{\log \overline{\Xi}^{\,\nu,\zeta,L}_{(R)}}{|\Lambda_L|}\right|= \mathcal{O}_{d,\zeta,v}\left(\|v-v_{(R)}\|_{L^1(\R^d)}\right),
\end{equation}
uniformly in $L$.

By the dominated convergence theorem, \eqref{R_cluster_expansion_1}, a telescoping argument, \eqref{Ursell_function_definition_modification_1}, and Lemma \ref{tree_bound_estimate}, the estimate \eqref{R_cluster_expansion_3} follows provided we show that for all $n \in \N^*$ and for all trees $\mathcal{T}$ on $[n]=\{1,\ldots,n\}$, we have
\begin{multline}
\label{R_cluster_expansion_5}
\frac{1}{|\Lambda_L|}\,\int \left(\overline{\mu}^L(\dd \omega_1)-\overline{\mu}^{L}_{(R)}(\dd \omega_1)\right)\,\overline{\mu}^{L}_{\sharp}(\dd \omega_2)\,\cdots\,\overline{\mu}^{L}_{\sharp}(\dd \omega_n)\,
\prod_{\{i,j\} \in \mathcal{T}} \xi^{\infty}(\omega_i,\omega_j)
\\
=\mathcal{O}_{d,\zeta,v,n,\mathcal{T}}\left(\|v-v_{(R)}\|_{L^{\infty}(\R^d)}\right),
\end{multline}
uniformly in $L$. In \eqref{R_cluster_expansion_5}, $\overline{\mu}^{L}_{\sharp}(\dd \omega_j)$ denotes either $\overline{\mu}^{L}(\dd \omega_j)$ or $\overline{\mu}^{L}_{(R)}(\dd \omega_j)$ given by  \eqref{measure_mu_1_A} or \eqref{measure_mu_1_B} respectively.
By using \eqref{measure_mu_1_D} and the integration algorithm from Lemma \ref{integration_algorithm} (as in the proof of Proposition \ref{Proposition_5.3} (i) above) with measure $\mu^L(\dd \omega)$ in \eqref{measure_mu_1} replaced by $\overline{\mu}^{L,0}(\dd \omega)$ in \eqref{measure_mu_1_C}, \eqref{R_cluster_expansion_5} follows if we show that for all $q \in \N$, we have 
\begin{equation}
\label{R_cluster_expansion_6}
\frac{1}{|\Lambda_L|}\,\int \left|\overline{\mu}^L(\dd \omega_1)-\overline{\mu}^{L}_{(R)}(\dd \omega_1)\right|\,T(\omega_1)^q=\mathcal{O}_{\zeta,q,d} \left(\|v-v_{(R)}\|_{L^{\infty}(\R^d)}\right),
\end{equation}
uniformly in $L$. 
In order to show \eqref{R_cluster_expansion_6}, we recall \eqref{measure_mu_1_A}--\eqref{measure_mu_1_B}, and use  \eqref{measure_mu_1_D}, the mean-value theorem and the nonnegativity of $\mathcal{V}^{\nu,\infty}(\omega_1)$ and $\mathcal{V}^{\nu,\infty}_{(R)}(\omega_1)$ to see that \eqref{R_cluster_expansion_6} follows if we show
\begin{equation}
\label{R_cluster_expansion_7}
\frac{1}{|\Lambda_L|}\,\int \overline{\mu}^{L,0}(\dd \omega_1) \left|\mathcal{V}^{\nu,\infty}(\omega_1)-\mathcal{V}^{\nu,\infty}_{(R)}(\omega_1)\right|\,T(\omega_1)^q=\mathcal{O}_{d,q} \left(\|v-v_{(R)}\|_{L^{\infty}(\R^d)}\right),
\end{equation}
uniformly in $L$. 
By \eqref{self_interaction}, we have that 
\begin{equation}
\label{R_cluster_expansion_8}
\left|\mathcal{V}^{\nu,\infty}(\omega_1)-\mathcal{V}^{\nu,\infty}_{(R)}(\omega_1)\right| \lesssim \frac{T(\omega_1)^2}{\nu^2}\,\|v-v_{(R)}\|_{L^{\infty}(\R^d)}\,.
\end{equation}
We deduce \eqref{R_cluster_expansion_7} by using \eqref{measure_mu_1_C}, \eqref{R_cluster_expansion_8}, and observing that by 
 \eqref{fugacity_condition}, \eqref{heat_kernel_infty}, \eqref{heat_kernel_integral_infty}, and Lemma \ref{polylogarithm}
we have 
\begin{multline}
\label{R_cluster_expansion_9}
\nu \sum_{T \in \nu \N^*} z^T\,\frac{T^{q+1}}{\nu^2}\,\psi^{\infty,T}(0)
\lesssim \sum_{k \in \N^*} z^{k \nu} \,\frac{(k\nu)^{q+1}}{\nu}\, (k\nu)^{-d/2}
=\nu^q \sum_{k \in \N^*} \zeta^k\,\nu^{dk/2}\, \, \nu^{-d/2} \,k^{q+1-d/2}
\\
\leq  \nu^q \sum_{k \in \N^*} \zeta^k\, k^{q+1-d/2}
= \mathcal{O}_{\zeta,q,d}(1)\,.
\end{multline}
Note that \eqref{R_cluster_expansion_9} is slightly different from the bound in \eqref{5.27_remark_bound}.
We hence conclude \eqref{R_cluster_expansion_3}.

We show \eqref{R_cluster_expansion_4} using similar methods. More precisely, we use \eqref{R_cluster_expansion_2} to write 
\begin{multline}
\label{R_cluster_expansion_10}
\sum_{n \geq 1} \frac{1}{|\Lambda_L|} \,\int \overline{\mu}^L_{(R)}(\dd \omega_1)\,\cdots\,\overline{\mu}^L_{(R)}(\dd \omega_n)\,\varphi^{\infty}(\omega_1,\ldots,\omega_n)-\frac{\log \overline{\Xi}^{\,\nu,\zeta,L}_{(R)}}{|\Lambda_L|}
\\
=
\sum_{n \geq 1} \frac{1}{|\Lambda_L|} \,\int \overline{\mu}^L_{(R)}(\dd \omega_1)\,\cdots\,\overline{\mu}^L_{(R)}(\dd \omega_n)\,\left(\varphi^{\infty}(\omega_1,\ldots,\omega_n)-\varphi^{\infty}_{(R)}(\omega_1,\ldots,\omega_n)\right).
\end{multline}
Using \eqref{measure_mu_1_D} and \eqref{R_cluster_expansion_10}, we note that
\eqref{R_cluster_expansion_4} follows if we show that
\begin{multline}
\label{R_cluster_expansion_11}
\sum_{n \geq 1} \frac{1}{|\Lambda_L|} \,\int \overline{\mu}^{L,0}(\dd \omega_1)\,\cdots\,\overline{\mu}^{L,0}(\dd \omega_n)\,\left|\varphi^{\infty}(\omega_1,\ldots,\omega_n)-\varphi^{\infty}_{(R)}(\omega_1,\ldots,\omega_n)\right|
\\
=\mathcal{O}_{d,\zeta}\left(\|v-v_{(R)}\|_{L^1(\R^d)}\right),
\end{multline}
uniformly in $L$.

Let us now show \eqref{R_cluster_expansion_11}. Recalling \eqref{Ursell_function_definition_modification_1}--\eqref{Ursell_function_definition_modification_2},  the integration algorithm from Lemma \ref{integration_algorithm} with measure $\mu^L(\dd \omega)$ in \eqref{measure_mu_1} replaced by $\overline{\mu}^{L,0}(\dd \omega)$ in \eqref{measure_mu_1_C}, and using the dominated convergence theorem, it suffices to show that for all $n \geq 2$ and $\mathcal{G} \in {\mathfrak{G}}^c_n$, we have
\begin{multline}
\label{R_cluster_expansion_12}
\frac{1}{|\Lambda_L|} \,\int \overline{\mu}^{L,0}(\dd \omega_1)\,\cdots\,\overline{\mu}^{L,0}(\dd \omega_n)\,\left|\prod_{\{i,j\} \in \mathcal{G}}\xi^{\infty}(\omega_i,\omega_j)-\prod_{\{i,j\} \in \mathcal{G}}\xi^{\infty}_{(R)}(\omega_i,\omega_j)\right|
\\
=\mathcal{O}_{d,\zeta,v,n}\left(\|v-v_{(R)}\|_{L^1(\R^d)}\right),
\end{multline}
uniformly in $L$.

As in \eqref{xi_telescoping_1} above, let $\prec$ be a total order of the edges in $\mathcal{G}$, and we write an edge $e$ as $\{i_e,j_e\}$. Using a telescoping argument and the same notational conventions as in \eqref{xi_telescoping_1}--\eqref{xi_telescoping_1_B}, we have
\begin{multline}
\label{R_cluster_expansion_12_B}    
\left|\prod_{\{i,j\}\in \mathcal{G}}\xi^{\infty}(\omega_i,\omega_j)-\prod_{\{i,j\}\in \mathcal{G}}\xi^{\infty}_{(R)}(\omega_i,\omega_j)\right|\\
    \leq  \sum_{e \in \mathcal{G}}\prod_{\substack{\{i,j\} \in \mathcal{T}_e\\\{i,j\} \prec e}}\Bigl|\xi^{\infty}(\omega_i,\omega_j)\Bigr|\,\left|\xi^{\infty}(\omega_{i_e},\omega_{j_e})-\xi^{\infty}_{(R)}(\omega_{i_e},\omega_{j_e})\right|\prod_{\substack{\{i,j\} \in \mathcal{T}_e\\ e \prec \{i,j\}}}\left|\xi^{\infty}_{(R)}(\omega_i,\omega_j)\right|\,.
\end{multline}
As in \eqref{xi_telescoping_1_B}, for each $e \in \mathcal{G}$ in \eqref{R_cluster_expansion_12_B}, $\mathcal{T}_e$ is an arbitrary spanning tree of $\mathcal{G}$ containing $e$.
Similarly to \eqref{xi_telescoping_2}--\eqref{xi_telescoping_2_B}, we have
\begin{multline}
\label{R_cluster_expansion_14}       
\left|\xi^{\infty}(\omega,\tilde{\omega})-\xi^{\infty}_{(R)}(\omega,\tilde{\omega})\right |\leq \left|\mathcal{V}^{\nu,\infty}(\omega,\tilde{\omega})-\mathcal{V}^{\nu,\infty}_{(R)}(\omega,\tilde{\omega})\right|\\
    \leq \frac{1}{\nu} \sum_{s \in \nu \N} \vec{1}_{s<T(\omega)} \sum_{\tilde{s} \in \nu \N} \vec{1}_{\tilde{s}<T(\tilde{\omega})} \int_{0}^{\nu}\dd t\,\left| v-v_{(R)}\right|\bigl(\omega(s+t)-\tilde{\omega}(\tilde{s}+t)\bigr) 
\end{multline}
and 
\begin{multline}
\label{R_cluster_expansion_15}      
\Bigl|\xi^{\infty}(\omega,\tilde{\omega})\Bigr|+\left|\xi^{\infty}_{(R)}(\omega,\tilde{\omega})\right|
\\
\leq \frac{1}{\nu} \sum_{s \in \nu \N} \vec{1}_{s<T(\omega)} \sum_{\tilde{s} \in \nu \N} \vec{1}_{\tilde{s}<T(\tilde{\omega})} \int_{0}^{\nu}\mathrm{d} t\,\left[\bigl|v\bigr|+\bigl|v_{(R)}\bigr|\right] \bigl(\omega(s+t)-\tilde{\omega}(\tilde{s}+t)\bigr)
\end{multline}
respectively. Using \eqref{R_cluster_expansion_12_B}--\eqref{R_cluster_expansion_15} and 
 the integration algorithm from Lemma \ref{integration_algorithm} with measure $\overline{\mu}^{L,0}(\dd \omega)$ in \eqref{measure_mu_1} replaced by $\overline{\mu}^{L,0}(\dd \omega)$ in \eqref{measure_mu_1_C}, we deduce \eqref{R_cluster_expansion_11}, and hence \eqref{R_cluster_expansion_4}. Finally, we obtain \eqref{Ginibre_representation_1_bar_R_convergence} from \eqref{R_cluster_expansion_3}--\eqref{R_cluster_expansion_4}.

By \eqref{v_{(R)}} and Assumption \ref{Assumption_on_v_2}, we have that the right-hand side of \eqref{Ginibre_representation_1_bar_R_convergence} converges to zero as $R \rightarrow \infty$. In particular, using \eqref{Ginibre_representation_1_bar_R_convergence} we obtain \eqref{Theorem_1.16_ii} and conclude the proof of claim (i) if we prove that for $R>0$ fixed, the limit 
\begin{equation}
\label{Infinite_volume_Thm_1_Star_3_B}
\lim_{L \rightarrow \infty} \frac{\log \overline{\Xi}_{(R)}^{\,\nu,\zeta,L}}{|\Lambda_L|}
\end{equation}
exists. In order to show \eqref{Infinite_volume_Thm_1_Star_3_B}, we introduce a generalisation of \eqref{I^{L}}--\eqref{Ginibre_representation_1_bar} corresponding to a Lebesgue measurable set $\Lambda \subset \R^d$ which is not necessary equal to $\Lambda_L$. Namely, we define the set
\begin{equation}
\label{I^{L}_*}
\mathscr{I}_{\Lambda} \coloneqq \biggl\{\omega \in \bigcup_{T > 0} \Omega^{\infty,T}\,,\,\omega(t) \in \Lambda\,\, \forall \, t \in [0,T(\omega)]\biggr\}\,,
\end{equation}
and the loop measure
\begin{equation}
\label{single_loop_measure_bar_*}
\overline{\mathbb{L}}^{\,\nu,\zeta,\Lambda}(\dd \omega)\coloneqq \nu \sum_{T \in \nu \N^*} \frac{z^T}{T} \, \int_{\Lambda} \dd x\,\int \mathbb{W}^{\infty,T}_{x,x}(\dd \omega)\,\mathbf{1}_{\mathscr{I}_{\Lambda}}(\omega)\,.
\end{equation}
From \eqref{I^{L}_*}--\eqref{single_loop_measure_bar_*}, we define
\begin{equation}
\label{Ginibre_representation_1_bar_*}
\overline{\Xi}^{\,\nu,\zeta,\Lambda}_{(R)} \coloneqq  1 + \sum_{n=1}^{\infty} \frac{1}{n!} \int \overline{\mathbb{L}}^{\,\nu,\zeta,\Lambda}(\dd \omega_1) \cdots\, \overline{\mathbb{L}}^{\,\nu,\zeta,\Lambda}(\dd \omega_n) \,\exp\bigl(-\mathcal{V}^{\nu,\infty}_{(R)}(\vec{\omega})\bigr)\,.
\end{equation}
From \eqref{Ginibre_representation_1_bar_*}, we have that 
\begin{equation}
\label{Ginibre_representation_1_bar_*_1}
\overline{\Xi}^{\,\nu,\zeta,\Lambda_L}_{(R)} =\overline{\Xi}^{\,\nu,\zeta,L}_{(R)}\,,
\end{equation}
where we recall \eqref{Ginibre_representation_1_bar_R}. Moreover, by translation invariance, we have that 
\begin{equation}
\label{Ginibre_representation_1_bar_*_2}
\overline{\Xi}^{\,\nu,\zeta,\Lambda+y}_{(R)}=\overline{\Xi}^{\,\nu,\zeta,\Lambda}_{(R)} \qquad \forall \,y \in \R^d\,.
\end{equation}
Let us consider Lebesgue measurable sets 
\begin{equation}
\label{Lambda^{(1)}_Lambda^{(2)}}
\Lambda^{(1)}, \Lambda^{(2)} \subset \R^d\,,\qquad \mathrm{dist} (\Lambda^{(1)}, \Lambda^{(2)}) \geq R\,.
\end{equation}
For such $\Lambda^{(1)}, \Lambda^{(2)}$, let us note that
\begin{equation}
\label{Lambda^{(1)}_Lambda^{(2)}_A}
\overline{\Xi}^{\,\nu,\zeta,\Lambda^{(1)} \cup \Lambda^{(2)}}_{(R)} \geq \overline{\Xi}^{\,\nu,\zeta,\Lambda^{(1)}}_{(R)}\,\overline{\Xi}^{\,\nu,\zeta,\Lambda^{(2)}}_{(R)}\,.
\end{equation}
In order to show \eqref{Lambda^{(1)}_Lambda^{(2)}_A}, we start from 
\begin{equation}
\label{Lambda^{(1)}_Lambda^{(2)}_B}
\overline{\Xi}^{\,\nu,\zeta,\Lambda^{(1)} \cup \Lambda^{(2)}}_{(R)} =1 + \sum_{n=1}^{\infty} \frac{1}{n!} \int \overline{\mathbb{L}}^{\,\nu,\zeta,\Lambda^{(1)} \cup \Lambda^{(2)}}(\dd \omega_1) \cdots\, \overline{\mathbb{L}}^{\,\nu,\zeta,\Lambda^{(1)} \cup \Lambda^{(2)}}(\dd \omega_n) \,\exp\bigl(-\mathcal{V}^{\nu,\infty}_{(R)}(\vec{\omega})\bigr)\,.
\end{equation}
By continuity and by \eqref{Lambda^{(1)}_Lambda^{(2)}}, each of the loops $\omega_j$ appearing in the integral \eqref{Lambda^{(1)}_Lambda^{(2)}_B} lies either in $\Lambda^{(1)}$ or $\Lambda^{(2)}$. Using this observation as well as \eqref{single_loop_measure_bar_*}, it follows that 
\begin{multline}
\label{Lambda^{(1)}_Lambda^{(2)}_C}
\eqref{Lambda^{(1)}_Lambda^{(2)}_B} \geq \sum_{n=0}^{\infty} \sum_{k=0}^{n} \binom{n}{k}\, \frac{1}{n!}\, 
\int \overline{\mathbb{L}}^{\,\nu,\zeta,\Lambda^{(1)}}(\dd \omega'_1)\, \cdots \overline{\mathbb{L}}^{\,\nu,\zeta,\Lambda^{(1)}}(\dd \omega'_k)  
\\
\times \overline{\mathbb{L}}^{\,\nu,\zeta,\Lambda^{(2)}}(\dd \omega''_1)\, \cdots \overline{\mathbb{L}}^{\,\nu,\zeta,\Lambda^{(2)}}(\dd \omega''_{n-k})  \,\exp\bigl(-\mathcal{V}^{\nu,\infty}_{(R)}(\vec{\omega' \omega''})\bigr)\,.
\end{multline}
where $\vec{\omega' \omega''} \equiv (\omega'_1, \ldots, \omega'_k, \omega''_1, \ldots, \omega''_{n-k})$. 
Note that, when passing from \eqref{Lambda^{(1)}_Lambda^{(2)}_B} to \eqref{Lambda^{(1)}_Lambda^{(2)}_C}, we reduce spatial integrations over $\Lambda^{(1)} \cup \Lambda^{(2)}$ to either $\Lambda^{(1)}$ or $\Lambda^{(2)}$, hence we only have an inequality.

Using \eqref{v_{(R)}}, \eqref{Lambda^{(1)}_Lambda^{(2)}}, and recalling the construction \eqref{loop_interaction}--\eqref{self_interaction} and \eqref{loop_interaction_2} of $\mathcal{V}^{\nu,\infty}_{(R)}(\cdot)$, it follows that 
\begin{equation*}
\mathcal{V}^{\nu,\infty}_{(R)}(\vec{\omega' \omega''})=\mathcal{V}^{\nu,\infty}_{(R)}(\vec{\omega'})\,\mathcal{V}^{\nu,\infty}_{(R)}(\vec{\omega''})
\end{equation*}
for loops as in the integral given by \eqref{Lambda^{(1)}_Lambda^{(2)}_C}. Hence, we deduce that
\begin{multline}
\label{Lambda^{(1)}_Lambda^{(2)}_D}
\eqref{Lambda^{(1)}_Lambda^{(2)}_C} =\sum_{n=0}^{\infty} \sum_{k=0}^{n} \frac{1}{k!}\, \frac{1}{(n-k)!}\, 
\int \overline{\mathbb{L}}^{\,\nu,\zeta,\Lambda^{(1)}}(\dd \omega'_1)\, \cdots \overline{\mathbb{L}}^{\,\nu,\zeta,\Lambda^{(1)}}(\dd \omega'_k)  \,\exp\bigl(-\mathcal{V}^{\nu,\infty}_{(R)}(\vec{\omega'})\bigr)
\\
\times \overline{\mathbb{L}}^{\,\nu,\zeta,\Lambda^{(2)}}(\dd \omega''_1)\, \cdots \overline{\mathbb{L}}^{\,\nu,\zeta,\Lambda^{(2)}}(\dd \omega''_{n-k})  \,\exp\bigl(-\mathcal{V}^{\nu,\infty}_{(R)}(\vec{\omega''})\bigr)
=\overline{\Xi}^{\,\nu,\zeta,\Lambda^{(1)}}_{(R)}\,\overline{\Xi}^{\,\nu,\zeta,\Lambda^{(2)}}_{(R)}\,.
\end{multline}
We hence deduce \eqref{Lambda^{(1)}_Lambda^{(2)}_A}
 from \eqref{Lambda^{(1)}_Lambda^{(2)}_D}.
 
Consider now $L > l \geq1$. We write 
\begin{equation}
\label{L_ell_1}
L=k(l+R) + a\,,\quad k \in \N^*\,,\, 0 \leq a <l+R\,.
\end{equation}
Using \eqref{L_ell_1}, in $\Lambda_L$, we find boxes $\Lambda^{(i)}, i=1, \ldots, k^d$ of size $l$ separated by strips of width $R$ as in Figure \ref{Rectangle_example} below.

\begin{figure}[!ht]
\begin{center}
\includegraphics[scale=0.35]{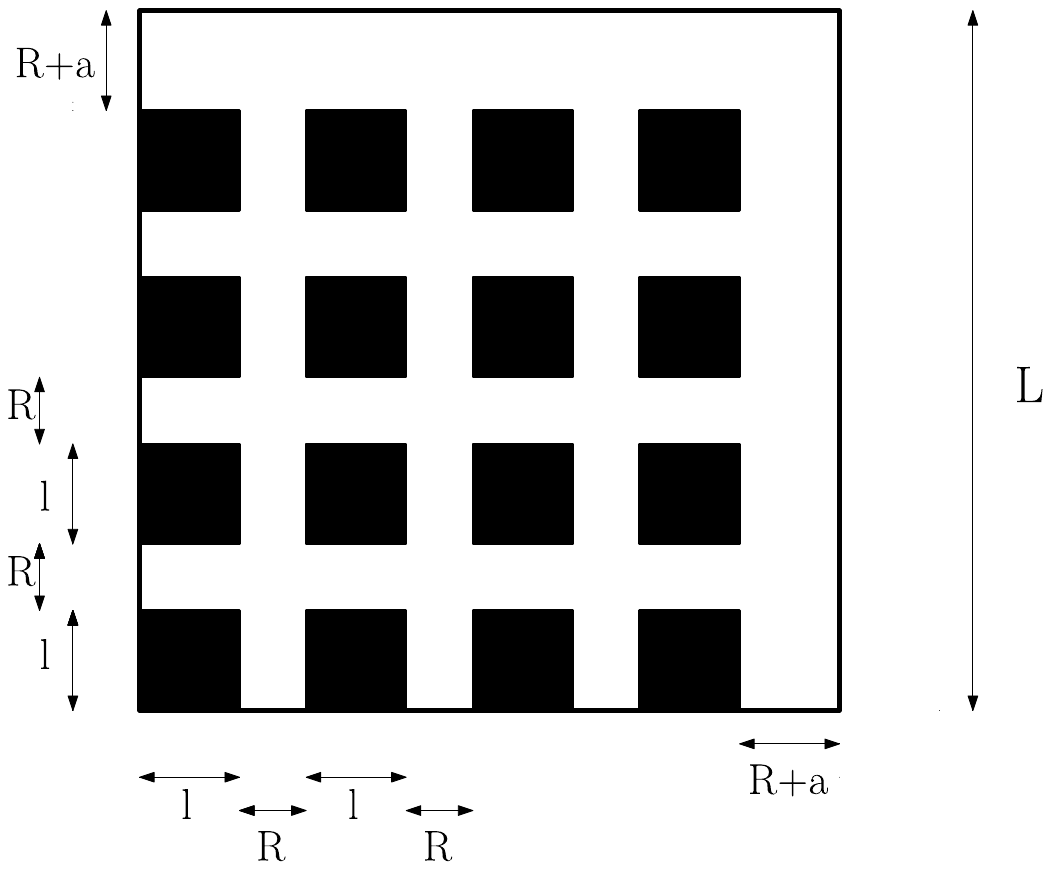}
\end{center}
\caption{Here, we take $d=2$ and $k=4$. The general argument works in the same way. $\Lambda^{(1)},\ldots,\Lambda^{(k^d)}$ are the shaded smaller boxes of sidelength $l$ contained in the big box of sidelength $L$. The unshaded region can be covered by a union of $O_d(k^d)$ rectangles of volume $(l+R)^{d-1} R$ and by $d$ rectangles of volume $a L^{d-1} \leq (l+R)L^{d-1}$. Here, we recall \eqref{L_ell_1}.
\label{Rectangle_example}
}
\end{figure}

Using \eqref{single_loop_measure_bar_*}--\eqref{Ginibre_representation_1_bar_*}, and iteratively applying \eqref{Lambda^{(1)}_Lambda^{(2)}_A}, followed by \eqref{Ginibre_representation_1_bar_*_2},  we have
\begin{multline}
\label{Lambda^{(1)}_Lambda^{(2)}_E}
\frac{\log \overline{\Xi}_{(R)}^{\,\nu,\zeta,L}}{|\Lambda_L|}=\frac{1}{L^d}\,\log \overline{\Xi}_{(R)}^{\,\nu,\zeta,\Lambda_L} 
\geq \frac{1}{L^d}\,\log \overline{\Xi}_{(R)}^{\,\nu,\zeta, \Lambda^{(1)} \cup \cdots \cup \Lambda^{(k^d)}} 
\geq \frac{1}{L^d}\,\log \left[\prod_{i=1}^{k^d} \overline{\Xi}_{(R)}^{\,\nu,\zeta, \Lambda^{(i)}} \right]
\\
=\frac{1}{L^d}\,\log \left[\left(\overline{\Xi}_{(R)}^{\,\nu,\zeta, l}\right)^{k^d} \right]=\frac{k^d}{L^d}\,\log \overline{\Xi}_{(R)}^{\,\nu,\zeta, l}=\frac{k^d l^d}{L^d}\,\frac{\log \overline{\Xi}_{(R)}^{\,\nu,\zeta, l}}{|\Lambda_{l}|}\,.
\end{multline}
Recalling \eqref{L_ell_1}, we have that
\begin{equation}
\label{Lambda^{(1)}_Lambda^{(2)}_F}
1 \geq \frac{k^d l^d}{L^d} \geq 1-\frac{C_d k^d (l+R)^{d-1}R + d (l+R)L^{d-1}}{L^d} \stackrel{L \rightarrow \infty}{\longrightarrow} 1-\frac{dR}{l+R}\,.
\end{equation}
In \eqref{Lambda^{(1)}_Lambda^{(2)}_F}, we used the observation that for a suitable constant $C_d>0$ depending on $d$, we have that
\begin{equation}
\label{Lambda^{(1)}_Lambda^{(2)}_F_2}
k^d l^d +C_d k^d(l+R)^{d-1}R+d(l+R) L^{d-1} \geq L^d\,.
\end{equation}
For a justification of \eqref{Lambda^{(1)}_Lambda^{(2)}_F_2}, see Figure \ref{Rectangle_example} below.

By \eqref{Lambda^{(1)}_Lambda^{(2)}_E}--\eqref{Lambda^{(1)}_Lambda^{(2)}_F}, it follows that 
\begin{equation}
\label{Lambda^{(1)}_Lambda^{(2)}_G}
\liminf_{L \rightarrow \infty} \frac{\log \overline{\Xi}_{(R)}^{\,\nu,\zeta,L}}{|\Lambda_L|} \geq \left(1-\frac{dR}{l+R}\right)\,\frac{\log \overline{\Xi}_{(R)}^{\,\nu,\zeta, l}}{|\Lambda_{l}|}\,.
\end{equation}
From \eqref{Lambda^{(1)}_Lambda^{(2)}_G}, we deduce that 
\begin{equation}
\label{Lambda^{(1)}_Lambda^{(2)}_H}
\liminf_{L \rightarrow \infty} \frac{\log \overline{\Xi}_{(R)}^{\,\nu,\zeta,L}}{|\Lambda_L|} \geq \limsup_{l \rightarrow \infty}\frac{\log \overline{\Xi}_{(R)}^{\,\nu,\zeta, l}}{|\Lambda_{l}|}\,.
\end{equation}
We then deduce the existence of the limit \eqref{Infinite_volume_Thm_1_Star_3_B} from \eqref{Lambda^{(1)}_Lambda^{(2)}_H}. Claim (i) now follows.

Let us now prove claim (ii). 
Let us define
\begin{equation}
\label{Ginibre_representation_1_check}
\widecheck{\Xi}^{\nu,\zeta,L} \coloneqq 1 + \sum_{n=1}^{\infty} \frac{1}{n!} \int \mathbb{L}^{\nu,\zeta,L}(\dd \omega_1) \cdots\, \mathbb{L}^{\nu,\zeta,L}(\dd \omega_n) \,\exp\bigl(-\mathcal{V}^{\nu,\infty}(\vec{\omega})\bigr)\,.
\end{equation}
By repeating Steps 1--4 of the proof of Theorem \ref{free_energy_infinite_volume}, we see that 
\begin{equation}
\label{Ginibre_representation_1_check_convergence}
\lim_{L\rightarrow \infty} \left(\frac{\log \widecheck{\Xi}^{\,\nu,\zeta,L}}{|\Lambda_L|}-\frac{\log \Xi^{\,\nu,\zeta,L}}{|\Lambda_L|}\right)=0
\end{equation}
uniformly in $\nu \in (0,\nu_0]$. Therefore, by \eqref{Ginibre_representation_1_check_convergence}, we have that \eqref{Infinite_volume_Theorem_1_Star_1_B} follows if we prove that
\begin{equation}
\label{Infinite_volume_Theorem_1_Star_1_B}
\lim_{L \rightarrow \infty} \left(\frac{\log \widecheck{\Xi}^{\nu,\zeta,L}}{|\Lambda_L|}-\frac{\log \overline{\Xi}^{\,\nu,\zeta,L}}{|\Lambda_L|}\right)=0\,,
\end{equation}
uniformly in $\nu \in (0,\nu_0]$.

We now show \eqref{Infinite_volume_Theorem_1_Star_1_B}. To this end, let us define the following loop measures.
\begin{align}
\label{check_measure_mu_1_A}
\widecheck{\mu}^{L}(\dd \omega) &\coloneqq  \nu \sum_{T\in \nu {\N}^*} \frac{z^T}{T} \, \int_{\Lambda_L}\dd x\, \int \mathbb{W}^{L,T}_{x,x}(\dd \omega)\,\ee^{-\mathcal{V}^{\nu,\infty}(\omega)}= \nu \sum_{T\in \nu {\N}^*} \frac{z^T}{T} \,  \mathbb{W}^{L,T}(\dd \omega)
\,\ee^{-\mathcal{V}^{\nu,\infty}(\omega)}\,.
\\
\label{check_measure_mu_1_C}
\mu^{L,0}(\dd \omega) &\coloneqq  \nu \sum_{T\in \nu {\N}^*} \frac{z^T}{T} \, \int_{\Lambda_L}\dd x\, \int \mathbb{W}^{L,T}_{x,x}(\dd \omega)=\nu \sum_{T\in \nu {\N}^*} \frac{z^T}{T} \,  \mathbb{W}^{L,T}(\dd \omega)\,.
\end{align}
Note that \eqref{check_measure_mu_1_A} differs from \eqref{measure_mu_1} in the self-interaction term $\ee^{-\mathcal{V}^{\nu,\infty}(\omega)}$ (it is also equal to \eqref{measure_mu_1_star} above). Furthermore, \eqref{check_measure_mu_1_C} corresponds to \eqref{check_measure_mu_1_A} with $v=0$.
By \eqref{check_measure_mu_1_A}--\eqref{check_measure_mu_1_C} and the nonnegativity of $v$, we obtain 
\begin{equation}
\label{check_measure_mu_1_D}
\widecheck{\mu}^{L}(\dd \omega) \leq \mu^{L,0}(\dd \omega)\,.
\end{equation}
Similarly to \eqref{R_cluster_expansion_1}, the arguments from the proof of Proposition \ref{gamma_p_cluster_expansion} (i) allow us to deduce that 
\begin{equation}
\label{check_cluster_expansion_1}
\frac{\log \widecheck{\Xi}^{\,\nu,\zeta,L}}{|\Lambda_L|}=\sum_{n \geq 1} \frac{1}{|\Lambda_L|} \,\int \widecheck{\mu}^L(\dd \omega_1)\,\cdots\,\widecheck{\mu}^L(\dd \omega_n)\,\varphi^{\infty}(\omega_1,\ldots,\omega_n)\,.
\end{equation}

Analogously to the reduction from \eqref{R_cluster_expansion_3} to \eqref{R_cluster_expansion_5}, we use the dominated convergence theorem, the cluster expansions  \eqref{R_cluster_expansion_1} and \eqref{check_cluster_expansion_1}, as well as a telescoping argument to see that \eqref{Infinite_volume_Theorem_1_Star_1} follows if we show that for all $n \in \N^*$ and for all trees $\mathcal{T}$ on $[n]=\{1,\ldots,n\}$, we have
\begin{equation}
\label{check_cluster_expansion_5}
\left|\frac{1}{|\Lambda_L|}\,\int \left(\widecheck{\mu}^L(\dd \omega_1)-\overline{\mu}^{L}(\dd \omega_1)\right)\,\widecheck{\mu}^{L}_{\sharp}(\dd \omega_2)\,\cdots\,\widecheck{\mu}^{L}_{\sharp}(\dd \omega_n)\,
\prod_{\{i,j\} \in \mathcal{T}} \xi^{\infty}(\omega_i,\omega_j)\right| \lesssim_{d,\zeta,v,\mathcal{T}}o_L(1)
\,,
\end{equation}
uniformly in $L$. In \eqref{check_cluster_expansion_5}, $\widecheck{\mu}^{L}_{\sharp}(\dd \omega_j)$ denotes either $\widecheck{\mu}^{L}(\dd \omega_j)$ or $\overline{\mu}^{L}(\dd \omega_j)$ given by \eqref{check_measure_mu_1_A} or \eqref{measure_mu_1_A} respectively.
As in the reduction from \eqref{R_cluster_expansion_5} to \eqref{R_cluster_expansion_6}, we use \eqref{measure_mu_1_D}, \eqref{check_measure_mu_1_D} and the integration algorithm from Lemma \ref{integration_algorithm}, where the measure $\mu^L(\dd \omega)$ in \eqref{measure_mu_1} is replaced by $\overline{\mu}^{L,0}(\dd \omega)$ in \eqref{measure_mu_1_C} or $\mu^{L,0}(\dd \omega)$ in \eqref{check_measure_mu_1_C} accordingly. We then see \eqref{check_cluster_expansion_5} follows if we show that for all $q \in \N$, we have 
\begin{equation}
\label{check_cluster_expansion_6}
\frac{1}{|\Lambda_L|}\,\int \left|\widecheck{\mu}^L(\dd \omega_1)-\overline{\mu}^{L}(\dd \omega_1)\right|\,T(\omega_1)^q \lesssim_{\zeta,q,d} o_L(1)\,,
\end{equation}
uniformly in $L$. We now show \eqref{check_cluster_expansion_6}.

Using \eqref{measure_mu_1_A} and \eqref{check_measure_mu_1_A}, the left-hand side of \eqref{check_cluster_expansion_6} can be rewritten as 
\begin{equation*}
\frac{\nu}{|\Lambda_L|} \sum_{T \in \nu \N^*} \frac{z^T}{T}\,\int_{\Lambda_L}\dd x\, 
\left|\int \mathbb{W}^{L,T}_{x,x}(\dd \omega)\,\ee^{-\mathcal{V}^{\nu,\infty}(\omega)}-
\int \mathbb{W}^{\infty,T}_{x,x}(\dd \omega)\,\ee^{-\mathcal{V}^{\nu,\infty}(\omega)}\,\,\mathbf{1}_{\mathscr{I}_L}(\omega)\right|\,T^q\,,
\end{equation*}
which by Lemma \ref{Wiener_measure_pi_L} is 
\begin{multline}
\label{check_cluster_expansion_7}
\leq 
\frac{\nu}{|\Lambda_L|} \sum_{T \in \nu \N^*} \frac{z^T}{T}\,\int_{\Lambda_L}\dd x\, 
\left|\int \mathbb{W}^{\infty,T}_{x,x}(\dd \omega)\,\ee^{-\mathcal{V}^{\nu,\infty}(\pi_L \circ \omega)}-
\int \mathbb{W}^{\infty,T}_{x,x}(\dd \omega)\,\ee^{-\mathcal{V}^{\nu,\infty}(\omega)}\,\,\mathbf{1}_{\mathscr{I}_L}(\omega)\right|\,T^q
\\
+ \frac{\nu}{|\Lambda_L|} \sum_{T \in \nu \N^*} \frac{z^T}{T}\,\int_{\Lambda_L}\dd x\, \sum_{k \in \Z^d \setminus \{0\}}
\int \mathbb{W}^{\infty,T}_{x+Lk,x}(\dd \omega)\,\ee^{-\mathcal{V}^{\nu,\infty}(\pi_L \circ \omega)}\,T^q \eqqcolon I+II\,.
\end{multline}
We now estimate $I \equiv I^{\nu,\zeta,L,q}$ and $II  \equiv II^{\nu,\zeta,L,q}$ from \eqref{check_cluster_expansion_7} separately.

\paragraph{\emph{\textbf{Estimate of $I$ from \eqref{check_cluster_expansion_7}}}}
Let us first estimate $I$ from \eqref{check_cluster_expansion_7}. By \eqref{I^{L}} and \eqref{pi_L}, we have that 
\begin{equation}
\label{pi_L_on_I^L}
\pi_L \circ \omega=\omega \quad \forall \omega \in \mathscr{I}_L\,.
\end{equation}
By \eqref{pi_L_on_I^L} and the nonnegativity of $v$, it follows that 
\begin{equation}
\label{I_check_cluster_expansion_7_1}
I \leq \frac{\nu}{|\Lambda_L|}\,\sum_{T \in \nu \N^*} \frac{z^T}{T}\,\int_{\Lambda_L}\dd x\,\int \mathbb{W}^{\infty,T}_{x,x}(\dd \omega)\,\left(1-\mathbf{1}_{\mathscr{I}_L}(\omega)\right)\,T^q\,.
\end{equation}
Given $\delta>0$ small, we use the splitting \eqref{x_1_average} in \eqref{I_check_cluster_expansion_7_1} and obtain
\begin{multline}
\label{I_check_cluster_expansion_7_2}
I \leq \frac{\nu}{|\Lambda_L|}\,\sum_{T \in \nu \N^*} \frac{z^T}{T}\,\int_{(1-\delta) \Lambda_L}\dd x\,\int \mathbb{W}^{\infty,T}_{x,x}(\dd \omega)\,\left(1-\mathbf{1}_{\mathscr{I}_L}(\omega)\right)\,T^q
\\
+\frac{\nu}{|\Lambda_L|}\,\sum_{T \in \nu \N^*} \frac{z^T}{T}\,\int_{\Lambda_L \setminus (1-\delta) \Lambda_L}\dd x\,\int \mathbb{W}^{\infty,T}_{x,x}(\dd \omega)\,T^q \eqqcolon I_1 + I_2\,.
\end{multline}
Recalling \eqref{I^{L}} and \eqref{D^{L,tilde_L}_c}, we have
\begin{equation}
\label{I_check_cluster_expansion_7_2_1}
I_1 \leq \frac{\nu}{|\Lambda_L|}\,\sum_{T \in \nu \N^*} \frac{z^T}{T}\,\int_{(1-\delta) \Lambda_L}\dd x\,\int \mathbb{W}^{\infty,T}_{x,x}(\dd \omega)\,\mathbf{1}_{\mathscr{D}^{\infty,L}_{\delta/2}}(\omega)\,T^q \,.
\end{equation}
We use translation invariance to write 
\begin{multline*}
\eqref{I_check_cluster_expansion_7_2_1}= \frac{|(1-\delta)\Lambda_L|}{|\Lambda_L|}\,\nu \sum_{T \in \nu \N^*} \frac{z^T}{T}\,\int \mathbb{W}^{\infty,T}_{0,0}(\dd \omega)\,\mathbf{1}_{\mathscr{D}^{\infty,L}_{\delta/2}}(\omega)\,T^q
\\
\leq \nu \sum_{T \in \nu \N^*} z^T\,T^{q-1}\,\int \mathbb{W}^{\infty,T}_{0,0}(\dd \omega)\,\mathbf{1}_{\mathscr{D}^{\infty,L}_{\delta/2}}(\omega)\,,
\end{multline*}
which, by \eqref{heat_kernel_infty}, \eqref{W^T_infty}, \eqref{D^{L,tilde_L}_c}, and the dominated convergence theorem is
\begin{equation}
\label{I_check_cluster_expansion_7_2_B}
 \lesssim_{\zeta,q,d} o_{L}^{(\delta)}(1)\,.
\end{equation}
Here, we also used \eqref{5.27_remark_bound} and \eqref{heat_kernel_infty} to see that
$\nu \sum_{T \in \nu \N^*} z^T\,T^{q-1}\,\psi^{\infty,T}(0)$ is bounded uniformly in $\nu \in (0,\nu_0]$, and we recalled the notation from \eqref{free_energy_infinite_volume_6}.
By translation invariance, we have
\begin{equation}
\label{I_check_cluster_expansion_7_2_C}
I_2 \leq \frac{|\Lambda_L \setminus (1-\delta) \Lambda_L|}{|\Lambda_L|}\,\nu \sum_{T \in \nu \N^*} \frac{z^T}{T}\,\int \mathbb{W}^{\infty,T}_{0,0}(\dd \omega)\,T^q \lesssim_{\zeta,q,d} \delta\,,
\end{equation}
by arguing similarly as for \eqref{I_check_cluster_expansion_7_2_B}.
We now substitute \eqref{I_check_cluster_expansion_7_2_B}--\eqref{I_check_cluster_expansion_7_2_C} into \eqref{I_check_cluster_expansion_7_2} to deduce that 
\begin{equation}
\label{check_cluster_expansion_7_I}
\lim_{L \rightarrow \infty} I^{\nu,\zeta,L,q}=0
\end{equation}
uniformly in $\nu \in (0,\nu_0]$.
\paragraph{\emph{\textbf{Estimate of $II$ from \eqref{check_cluster_expansion_7}}}}
Let us now estimate $II$ from \eqref{check_cluster_expansion_7}. Using the nonnegativity of $v$, \eqref{heat_kernel_integral_infty}, and arguing similarly to \eqref{Lemma_5.20*_5_2}, we deduce that
\begin{equation}
\label{II_check_cluster_expansion_7}
II \leq \nu \sum_{T \in \nu \N^*} z^T\,T^{q-1}\,\sum_{k \in \Z^d \setminus \{0\}} \psi^{\infty,T} (Lk) \lesssim \nu \sum_{T \in \nu \N^*} z^T \,T^{q-1}\, \frac{T}{L^2}=\mathcal{O}_{\zeta,q,d} \left(\frac{1}{L^2}\right).
\end{equation}
In order to deduce the final bound in \eqref{II_check_cluster_expansion_7}, we used \eqref{5.27_remark_bound}.
We conclude that 
\begin{equation}
\label{check_cluster_expansion_7_II}
\lim_{L \rightarrow \infty} II^{\nu,\zeta,L,q}=0
\end{equation}
uniformly in $\nu \in (0,\nu_0]$. Substituting \eqref{check_cluster_expansion_7_I} and \eqref{check_cluster_expansion_7_II} into \eqref{check_cluster_expansion_7}, we deduce \eqref{check_cluster_expansion_6} and claim (ii) follows.
\end{proof}


\begin{thebibliography}{References}
\bibitem{AFP24} Z. Ammari, S. Farhat, S. Petrat, \emph{Expansion of the Many-body Quantum Gibbs State of the Bose-Hubbard Model on a Finite Graph}, Doc. Math. \textbf{30} (2025), no. 2, 475--496.
\bibitem{AR21} Z. Ammari,  A. Ratsimanetrimanana, \emph{High temperature convergence of the KMS boundary conditions: The Bose–Hubbard model on a finite graph}. Commun. Contemp. Math. {\bf 23}
(2021), no. 5, article no. 2050035, 18 pp.
\bibitem{BS02} A.N.~{Borodin},  P~{Salminen}, \emph{Handbook of Brownian Motion - Facts and Formulae}, Birkhäuser Basel (2002).
\bibitem{Bourgain_1994} J. Bourgain, \emph{Periodic nonlinear Schr\"{o}dinger equation and invariant measures}, Comm. Math. Phys. {\bf 166} (1994), no. 1, 1--26.
\bibitem{Bourgain_1996} J. Bourgain, \emph{Invariant measures for the $2D$-defocusing nonlinear Schr\"{o}dinger equation}, Comm. Math. Phys. \textbf{176} (1996), no. 8, 421--445.
\bibitem{Bourgain_1997} J. Bourgain, \emph{Invariant measures for the Gross-Pitaevskii equation}, J. Math. Pures Appl. (9) \textbf{76} (1997), no. 8, 649--702.
\bibitem{CKRT} C. Caraci, A. Knowles, A. Ranallo, P. Torres-Giesteira, \emph{The Euclidean $\Phi^4_2$ theory as a limit of an inhomogeneous Bose gas}, Preprint arXiv:2603.12241 (2026).
\bibitem{Dinh_Rougerie} V. D. Dinh, N. Rougerie, \emph{From bosonic canonical ensembles to non-linear Gibbs measures}, Preprint arXiv: 2412.13597 (2024). 
\bibitem{FKSS17} J. Fr\"{o}hlich, A. Knowles, B. Schlein, V. Sohinger, \emph{Gibbs measures of nonlinear Schr\"odinger equations as limits of many-body quantum states in dimensions} $d \leq 3$, Comm. Math. Phys., {\bf 356} (2017), no. 3, 883--980.
\bibitem{FKSS18} J. Fr\"{o}hlich, A. Knowles, B. Schlein, V. Sohinger, \emph{A microscopic derivation of time-dependent correlation functions of the 1D cubic nonlinear Schr\"{o}dinger equation}, Advances in Mathematics (2019), no. 353, 67--115.
\bibitem{FKSS_2020_1} J. Fr\"{o}hlich, A. Knowles, B. Schlein, and V. Sohinger, \emph{The mean-field limit of quantum Bose gases at positive temperature}, J. Amer. Math. Soc., \textbf{35} (2022), no. 4, 955\textendash1030.
\bibitem{FKSS_2020_2} J. Fr\"{o}hlich, A. Knowles, B. Schlein, and V. Sohinger,  \emph{A path-integral analysis of interacting Bose gases and loop gases}, J. Stat. Phys. \textbf{180} (2020), no. 1\textendash6, 810\textendash831.
\bibitem{FKSS_2022} J. Fr\"{o}hlich, A. Knowles, B. Schlein, and V. Sohinger, \emph{The Euclidean $\varphi^4_2$ theory as a limit of an interacting Bose gas}, J. Eur. Math. Soc. \textbf{27} (2025), no. 11, 4399--4468.
\bibitem{FKSS_2023} J. Fr\"{o}hlich, A. Knowles, B. Schlein, V. Sohinger, \emph{Interacting loop ensembles and Bose gases}, Ann. Henri Poincar\'{e} \textbf{24} (2023), no. 5, 1439\textendash1503.
\bibitem{Ginibre1} J.~{Ginibre}, \emph{Reduced density matrices of quantum gases. I. Limit of infinite volume}, J. Math. Phys. \textbf{6} (1965), 238\textendash251.
\bibitem{Ginibre2} J.~{Ginibre}, \emph{Reduced density matrices of quantum gases. II. Cluster property}, J. Math. Phys. \textbf{6} (1965), 251\textendash261.
\bibitem{Ginibre3} J.~{Ginibre}, \emph{Reduced density matrices of quantum gases. III. Hardcore potentials}, J. Math. Phys. \textbf{6} (1965), 1432-1446. 
\bibitem{Ginibre} J. Ginibre, \emph{Some applications of functional integration in statistical mechanics}, M\'{e}canique statistique et th\'{e}orie quantique des champs, Les Houches (1971), 327\textendash427.
\bibitem{Jougla_Rougerie} L. Jougla, N. Rougerie, \emph{$\Phi^4_2$ limit of a many-body bosonic free energy}, Preprint arXiv: 2512.10704 (2024). 
\bibitem{Knowles} A. Knowles, \emph{Limiting dynamics in large quantum systems}, ETH Z\"urich Doctoral Thesis, 2009, ETHZ e-collection 18517.
\bibitem{LNR15} M. Lewin, P.-T. Nam, N. Rougerie, \emph{Derivation of nonlinear Gibbs measures from many-body quantum mechanics}, Journal de l'\'{E}cole Polytechnique-Math\'{e}matiques \textbf{2} (2015), 65--115.
\bibitem{LNR18} M. Lewin, P.-T. Nam, N. Rougerie, \emph{Gibbs measures based on 1D (an)harmonic oscillators as mean-field limits}, J. Math. Phys. \textbf{59} (2018), no. 4, 041901.
\bibitem{LNR18b} M. Lewin, P.-T. Nam, N. Rougerie, \emph{Classical field theory limit of 2D many-body quantum
Gibbs states}, Preprint arXiv: 1810.08370v1 (2018).
\bibitem{LNR19} M. Lewin, P.-T. Nam, N. Rougerie, \emph{Derivation of renormalized Gibbs measures from equilibrium many-body quantum Bose gases}, J. Math. Phys, \textbf{60} (2019), no.6, 061901, 11 pp.
\bibitem{LNR21}  M. Lewin, P.-T. Nam, N. Rougerie, \emph{Classical field theory limit of many-body quantum {G}ibbs states in 2D and 3D}, Invent. Math. \textbf{224} (2021), no. 2, 315--444.
\bibitem{Nam_Zhu_Zhu} P.-T. Nam, R. Zhu, X. Zhu, \emph{$\Phi^4_3$ theory from many-body quantum Gibbs states}, Preprint arXiv: 2502.04884 (2025). 
\bibitem{Nam_Zhu_Zhu_2} P.-T. Nam, R. Zhu, X. Zhu, \emph{Derivation of Gibbs measure from Gibbs state with the fractional Bessel interaction in Two Dimensions}, Preprint arXiv: 2604.21583 (2026).
\bibitem{RS22} A. Rout, V. Sohinger, \emph{A microscopic derivation of Gibbs measures for the 1D focusing cubic nonlinear Schr\"{o}dinger equation}, Commun. Partial Differ. Equ. \textbf{48} (2023), no. 7--8, 1008--1055.
\bibitem{RS23} A. Rout, V. Sohinger, \emph{A microscopic derivation of Gibbs measures for the 1D focusing quintic nonlinear Schr\"{o}dinger equation}, SIAM J. Math. Anal. \textbf{7} (2025), no. 5, 4680--4755.
\bibitem{Salmhofer} M. Salmhofer, \emph{Functional Integral and Stochastic Representations for Ensembles of Identical Bosons on a Lattice}, Comm. Math. Phys. \textbf{385} (2021), no. 2, 1163--1211.
\bibitem{Sohinger_2019} V. Sohinger, \emph{A microscopic derivation of Gibbs measures for nonlinear Schr\"odinger equations with unbounded interaction potentials}, Int. Math. Res. Not. IMRN (2022), no. 19, 14964--15063.
\bibitem{Stanley_EC2} R.~{Stanley}, \emph{Enumerative Combinatorics, Volume 2}, Cambridge Studies in Advanced Mathematics \textbf{62} (1999).
\bibitem{Ueltschi_2004} D.~{Ueltschi}, \emph{Cluster expansions and correlation functions}, Moscow Math J. \textbf{4} (2004), no.~2, 511--522.
\end{thebibliography}
\end{document}